\numberwithin{equation}{section} 
\newtheorem{theorem}{Theorem}[section]
\newtheorem{proposition}[theorem]{Proposition}
\newtheorem{remark}[theorem]{Remark}
\newtheorem{lemma}[theorem]{Lemma}
\newtheorem{corollary}[theorem]{Corollary}
\newtheorem{definition}[theorem]{Definition}
\newcommand\1{{\ensuremath {\mathds 1} }} 
\def\C{{\mathbb C}}
\def\bbI{{\mathbb I}}
\def\N{{\mathbb N}}
\def\R{{\mathbb R}}
\def\RR{{\mathbb R}}
\def\Z{{\mathbb Z}}
\def\ba {{ \bold a}}
\def\bb{{\bold b}}
\def\bc{{\bold c}}
\def\bk{{\bold k}}
\def\bm{{\bold m}}
\def\bQ{{\bold Q}}
\def\bq{{\bold q}}
\def\bR{{\bold R}}
\def\bx{{\bold x}}
\def\by{{\bold y}}
\def\bz{{\bold z}}
\def\bnull{{\bold 0}}
\def\rd{{\mathrm{d}}}
\def\re{{\mathrm{e}}}
\def\ri{{\mathrm{i}}}
\def\Re{{\mathrm{Re }\, }}
\def\Tr{{\rm Tr}}
\def\VTr{ \underline{ \rm Tr \,}}
\def\cB{{\mathcal B}}
\def\cC{{\mathcal C}}
\def\cD{{\mathcal D}}
\def\cE{{\mathcal E}}
\def\cF{{\mathcal F}}
\def\cH{{\mathcal H}}
\def\cJ{{\mathcal J}}
\def\cK{{\mathcal K}}
\def\cL{{\mathcal L}}
\def\cN{{\mathcal N}}
\def\cP{{\mathcal P}}
\def\cQ{{\mathcal Q}}
\def\cS{{\mathcal S}}
\def\cT{{\mathcal T}}
\def\cZ{{\mathcal Z}}
\def\fa{{\mathfrak{a}}}
\def\fm{{\mathfrak{m}}}
\def\fS{{\mathfrak S}}
\newcommand{\sS}{\mathscr{S}}
\newcommand{\sC}{\mathscr{C}}
\newcommand{\bra}{\langle}
\newcommand{\ket}{\rangle}
\newcommand{\BZ}{{\Gamma^\ast}} 
\newcommand{\WS}{\Gamma} 
\newcommand{\RLat}{\mathcal{R}^*} 
\newcommand{\Lat}{\mathcal{R}} 
\newcommand{\per}{\mathrm{per}}
\newcommand{\loc}{\mathrm{loc}}
\newcommand{\supp}{\mathrm{supp}}
\newcommand{\gap}{\mathrm{gap}}
\def\sqw{\hbox{\rlap{\leavevmode\raise.3ex\hbox{$\sqcap$}}$%
\sqcup$}}
\def\cqfd{\ifmmode\sqw\else{\ifhmode\unskip\fi\nobreak\hfil
\penalty50\hskip1em\null\nobreak\hfil\sqw
\parfillskip=0pt\finalhyphendemerits=0\endgraf}\fi}
\renewcommand{\eqref}[1]{(\ref{#1})}
\newcommand{\av}[1]{\left| #1 \right|}
\begin{document}

\title{Supercell calculations in the reduced Hartree-Fock model for crystals with local defects.}

\author{
  David Gontier, Salma Lahbabi
}

\date{}

\maketitle

\begin{abstract}
In this article, we study the speed of convergence of the supercell reduced Hartree-Fock~(rHF) model towards the whole space rHF model in the case where the crystal contains a local defect.  We prove that, when the defect is charged, the defect energy in a supercell model converges to the full rHF defect energy with speed $L^{-1}$, where $L^3$ is the volume of the supercell. The convergence constant is identified as the  Makov-Payne correction term when the crystal is isotropic cubic. The result is extended to the non-isotropic case.
\end{abstract}


\section{Introduction}

The numerical simulation of crystals is a very active area of research in solid state physics, material science and nano-electronics. Although the simulation of perfect crystals is well-understood nowadays, the treatment of local defects in solids is still a major issue~\cite{Pisani1994, Stoneham2001}. 

\medskip

The state-of-the-art method to simulate crystals (with or without local defects) is the \emph{supercell} method. It consists in considering a large box $\Gamma_L = L \Gamma$, with periodic boundary conditions. When the crystal is an insulator or a semi-conductor without defects, the supercell model converges rapidly to the whole space model, as it has been numerically observed in the work of Monkhorst and Pack~\cite{Monkhorst1976}. Actually, in~\cite{GL2015}, we consider  the reduced Hartree-Fock model~\cite{Solovej1991}, which is obtained from the generalized Hartree-Fock model~\cite{LiebSim1977} by removing the exchange term, and we prove that the convergence is exponential in this case. Precisely, for a nuclear charge density $\mu_\per$, which is a periodic function, if we denote by $I_{\mu_\per}$ the energy per unit volume of the whole space model and $I^L_{\mu_\per}$ the energy of the crystal restrained to the box $\Gamma_L$, then 
there exist constants $C \ge 0$ and $A > 0$
such that
 \[
\left| \frac{1}{\av{\Gamma_L}} I_{\mu_\per}^L - I_{\mu_\per} \right| \le C \re^{-AL}.
 \]

When the crystal contains local defects, the nuclear charge density of the crystal is of the form $\mu_\per + \nu$, where  $\nu$ is a smooth function with compact support representing the charge density of the defect. 
The defect energy $J_\nu$ is formally defined as the difference between the energy of the crystal with the defect and the energy of the crystal without the defect. In the supercell model, it is given by
\[
	J_\nu^L := I^L_{\mu_\per + \nu} - I^L_{\mu_\per}.
\]
It has been proved in~\cite{Cances2008} that $J_\nu^L$ converges to  $J_\nu$. When the defect is charged ($q:=\int_{\RR^3}\nu\neq 0$), the convergence of $J^L_{\nu}$ to $J_\nu$ is slow with respect to the size of the supercell $L$. Numerically, one finds that the convergence rate is of the order $L^{-1}$. This slow convergence comes from two effects. First, the supercell method induces spurious interactions between the defect and its periodic images. Then, one always needs to add a jellium background to compensate charged defects in order to satisfy the periodic boundary conditions imposed on the electrostatic potential, so that the reference ``zero energy'' is shifted as $L$ goes to infinity.

\medskip

The main result of this paper is that when the defect is small (see Theorem~\ref{th:chargedDefects} for the exact assumption), it holds that
\begin{equation} \label{eq:speed_cv_Lm1}
	 J_\nu = J_\nu^L + \dfrac{\beta q^2}{L} + O(\left\| \nu \right\| \exp^{-\alpha L} , \left\| \nu \right\|^2 L^{-3} , \left\| \nu \right\|^3),
\end{equation}
where the constant $\beta$ can be computed explicitly as a by-product of the supercell calculation. Approximating $J_\nu$ by $J_\nu^L+\beta q^2/L$ rather that only $J_\nu^L$ therefore speeds up the convergence at a negligible computational cost. When the crystal is isotropic cubic (see Definition~\ref{def:isotropic_cubic}), this constant is of the form $\beta \approx \fm / 2\epsilon$, where  $\epsilon$ is the macroscopic dielectric constant of the perfect crystal (see also~\cite{Adler1962, Wiser1963, Cances2010}), and $\fm$ is  the Madelung constant of the crystal. We recover the term predicted by Leslie and Gillan in~\cite{Leslie1985}, further developed by Makov and Payne~\cite{Makov1995}, and observed numerically in simulations~\cite{Komsa2012}. We therefore give a rigorous mathematical proof to the ``phenomenological approach of Leslie and Gillan in which the potential is reduced by the dielectric constant''~\cite{Makov1995}. These last articles have been the starting point to a large variety of methods to improve supercell calculations in the presence of charged defects. Let us mention the Freysolt, Neugebauer and Van de Walle method~\cite{Freysoldt2009, Freysoldt2010}, the Lany and Zunger method~\cite{Lany2008, Lany2009} and the Taylor and Bruneval method~\cite{Taylor2011}.
In the non isotropic case, the correcting term was proposed in~\cite{Murphy2013} using physical considerations. \\

The article is organized as follows. In Section~\ref{sec:rHF_MainResults}, we recall the reduced Hartree-Fock model for both the perfect crystal and the crystal with local defects, together with their corresponding supercell models, and we state our main results. In Section~\ref{sec:linear}, we identify the linear response of the crystal with respect to the defect. In Section~\ref{sec:BlochRegularity}, we recall the theory of Bloch transform, and use it to sketch the main steps of the proof. The details of the proofs are presented in Section~\ref{sec:detailsProofs}. 
Some complementary results about the convergence of Riemann sums are given in the appendices.


\section{Presentation of the models and main results}
\label{sec:rHF_MainResults}

\subsection{The rHF model for perfect crystals}
\label{ssec:rHF}

We introduce in this section the rHF model for a perfect crystal following the work in~\cite{Catto1998, Catto2001, Cances2008}. A perfect crystal is a periodic arrangement of atoms. We denote by $\Lat = \ba_1 \Z + \ba_2 \Z + \ba_3 \Z$ the underlying periodic lattice (in $\R^3$), where $(\ba_1, \ba_2, \ba_3)$ are linearly independent vectors in $\R^3$. The reciprocal lattice is denoted by $\RLat = \ba_1^* \Z + \ba_2^* \Z + \ba_3^* \Z$, where the vectors $\ba_k^* \in \R^3$, $1 \le k \le 3$, are chosen such that $\ba_k \cdot \ba_l^* = 2 \pi \delta_{kl}$. The unitary cell is $\WS :=  \ba_1 [-1/2, 1/2) + \ba_2 [-1/2, 1/2) + \ba_3 [-1/2, 1/2)$, and the reciprocal unitary cell is $\BZ :=  \ba_1^* [-1/2, 1/2) + \ba_2^* [-1/2, 1/2) + \ba_3^* [-1/2, 1/2)$. We denote by $P_j := - \ri \partial_{x_j}$, $j \in \{ 1,2,3\}$,  the $j$-th momentum operator, and by $(-\Delta) := \sum_{j=1}^3 P_j^2$ the Laplacian operator on the usual complex valued Lebesgue space $L^2(\R^3)$, seen here as an Hilbert space with its natural inner product. We also introduce the usual complex valued $\Lat$-periodic Sobolev and Lebesgue spaces
\[
	H^s_\per(\WS) := \left\{ f \in H^s_\loc, \ f \ \text{is $\Lat$-periodic} \right\} \quad \text{and} \quad 
	L^p_\per(\WS) := \left\{ f \in L^p_\loc, \ f \ \text{is $\Lat$-periodic} \right\}.
\]
If $f \in L^2_\per(\WS)$, the normalized Fourier coefficients of $f$ are denoted by $c_\bk(f)$. Formally,
\[
	\forall \bk \in \RLat, \quad c_\bk(f)  = \dfrac{1}{| \WS |^{1/2} } \int_{\WS} f(\bx) \re^{-\ri \bk \cdot \bx} \rd \bx,
\]
and it holds
\[
	\| f \|_{L^2_\per(\WS)}^2 = \sum_{\bk \in \RLat} | c_\bk |^2
	\quad \text{and} \quad
	f(\bx) = \dfrac{1}{\left| \WS \right|^{1/2}} \sum_{\bk \in \RLat} c_\bk(f) \re^{\ri \bk \cdot \bx} \quad \text{a.e. and in} \ L^2_\per(\WS).
\]

\medskip

The charge density of the nuclei (together with the core electrons in mean-field models) of a perfect crystal is well-approximated by an $\Lat$-periodic real-valued function $\mu_\per$. In this article, we will assume that $\mu_\per$ is regular (say $\mu_\per \in L^2_\per(\WS)$), but more singular functions may be also treated~\cite{Blanc2003}. 

\medskip

Bulk properties of the crystal can be understood by studying the so-called reduced Hatree-Fock (rHF) model. This model has been rigorously derived from the rHF model for finite molecular systems by means of thermodynamic limit procedure by Catto, Le Bris and Lions~\cite{Catto2001}. Later, Cancès, Deleurence and Lewin~\cite{Cances2008} proved that this model is also the limit of the supercell rHF model (see Section~\ref{sec:supercellRHF}). 
\medskip

For $\bR \in \Lat$, we denote by $\tau_\bR$ the translation operator by the vector $\bR$: $\tau_\bR f(\bx) = f(\bx - \bR)$. We introduce the set of admissible density matrices
\begin{equation} \label{eq:VTr1}
	\cP := \left\{ \gamma \in \cS(L^2(\R^3)), \ 0 \le \gamma \le 1, \ \forall \bR \in \Lat, \ \tau_\bR \gamma = \gamma \tau_\bR , \ \VTr \left( \gamma \right) +\VTr\left( -\Delta\gamma \right)< \infty \right\},
\end{equation}
where $\VTr(-\Delta \gamma)$ is a short-hand notation for $\sum_{j=1}^3 \VTr(P_j \gamma P_j)$, $\cS(\cH)$ denotes the space of the bounded self-adjoint operators on the Hilbert space $\cH$ and $\VTr$ denotes the trace per unit volume, defined for any locally trace class operator $A$ that commutes with $\Lat$-translations, by (see also Equation~\eqref{eq:VTr} below for an alternative definition)
\begin{equation} \label{eq:VTr_def1}
\VTr\left(  A \right):= \lim_{L \to \infty} \dfrac{1}{L^3} \Tr\left( \1_{L\Gamma} A \1_{L\Gamma} \right).
\end{equation}
Any $\gamma \in \cP$ is locally trace-class, and can be associated an $\Lat$-periodic density $\rho_{\gamma} \in L^2_\per(\WS)$. For $\gamma \in \cP$, the reduced Hartree-Fock energy is given by
\begin{equation} \label{eq:cE_per}
	\cE_{\mu_\per}(\gamma) := \dfrac12 \VTr \left( -\Delta \gamma \right) + \dfrac12 D_1 \left( \rho_\gamma - \mu_\per, \rho_\gamma - \mu_\per \right).
\end{equation}
The first term of~\eqref{eq:cE_per} corresponds to the kinetic energy, and the second term represents the Coulomb energy per unit volume. To describe the latter term, we introduce the Green kernel of the $\Lat$-periodic Poisson equation \cite{Lieb1977}, denoted by $G_1$ and satisfying the equation
\[
	\left\{ \begin{array}{c}
		\displaystyle -\Delta G_1 =  4 \pi \left( \sum_{\bk \in \Lat} \delta_\bk - \left| \Gamma \right|^{-1} \right) \\
		G_1 \ \text{ is $\Lat$-periodic.}
		\end{array} \right. 
\]
The expression of $G_1$ is given in the Fourier basis by
\begin{equation} \label{eq:G1Fourier}
	G_1(\bx) = c_1 + \dfrac{4 \pi}{| \WS |} \sum_{\bk \in \RLat \setminus \{ \bnull\} } \dfrac{\re^{\ri \bk \cdot \bx}}{| \bk |^2},
\end{equation}
where $c_1 = | \WS |^{-1} \int_\WS G_1$ can be \textit{a priori} any fixed constant. In one of the first article on the topic~\cite{Lieb1977}, the authors chose to set $c_1 = 0$, but other choices are equally valid (see~\cite{Cances2008} for instance). We will set $c_1 = 0$ for simplicity, and highlight the role of $c_1$ in the main results (see Remark~\ref{rem:role_of_cL}). The Coulomb energy per unit volume is defined, for $f, g \in L^2_\per(\WS)$, by
\begin{equation} \label{eq:defD1}
	D_1 (f,g) := \int_\WS (f\ast_\Gamma G_1) (\bx) g(\bx) \rd \bx = 4 \pi \sum_{\bk \in \RLat \setminus \{ \bnull \} } \dfrac{\overline{c_\bk(f)} c_\bk(g)}{| \bk |^2}.
\end{equation}
where $(f\ast_\Gamma G_1) (\bx):= \int_\WS f(\by)G_1(\bx-\by) \rd \by$. \\
Finally, the periodic rHF ground state energy is given by
\begin{equation*} 
	 \inf \left\{ \cE_{\mu_\per}(\gamma), \ \gamma \in \cP_\per, \ \int_\WS \rho_\gamma = \int_\WS \mu_\per \right\}.
\end{equation*}
It has been proved in \cite[Theorem 1]{Cances2008} that this minimization problem admits a unique minimizer $\gamma_0$, which is the solution to the self-consistent equation
\begin{equation} \label{eq:sc}
	\left\{ \begin{array}{lll}
		\gamma_0 & = & \mathds{1}(H_0\leq \varepsilon_F) \\
		 H_{0} & = & - \frac12 \Delta + V_0 \\
		V_0 & = & (\rho_{\gamma_0} - \mu_\per) \ast_\WS G_1.
		\end{array} \right.
\end{equation}
Here, $\varepsilon_F$, called the Fermi level or the Fermi energy, is the Lagrange multiplier corresponding to the charge constraint $\int_\WS \rho_{\gamma_0} = \int_\WS \mu_\per$. Throughout the article, we make the following assumption:
\[
	\boxed{ \text{\textbf{(A1)} The system is an insulator, in the sense that $H_0$ has a spectral gap around $\varepsilon_F$.}}
\]
Without loss of generality, we assume that $\varepsilon_F$ is located in the middle of the gap, and we denote by $g > 0$ the size of the gap. In this case, the minimizer $\gamma_0$ defined in~\eqref{eq:sc} is also the unique minimizer of the grand canonical ensemble problem (see~\cite[Theorem 1]{Cances2008})
\begin{equation} \label{eq:Iper}
	I_{\mu_\per} := \inf \left\{ \cE_{\mu_\per}(\gamma) - \varepsilon_F \VTr(\gamma) , \ \gamma \in \cP \right\}.
\end{equation}


\subsection{Assumption on the defect}
\label{sec:defect}

In this article, we are interested in cases where the crystal contains a local defect. The nuclear charge distribution of such a system is taken of the form $\mu_\per + \nu$, where $\nu$ is a compactly supported function that models the nuclear charge of the local defect. More specifically, we fix $L^\supp \in \N^*$, and consider defects $\nu \in L^2(\R^3)$ with support in $L^\supp\WS$. We introduce, for $\eta > 0$, the set
\begin{equation} \label{eq:ball}
	\cN( \eta) := \left\{ \nu \in L^2(\R^3), \ \nu \mathds{1}_{L^\supp \WS} = \nu, \ \| \nu \|_{L^2(\R^3)} \le \eta \right\}.
\end{equation}
Other types of defects may be considered, but they add extra mathematical complications, and do not provide any further insight on the nature of our main result.


\subsection{The rHF energy of a local defect}

The rHF model of a local defect embedded into a reference perfect crystal was introduced and studied in~\cite{Cances2008}. The main idea of this article is to decompose the ground state density matrix as
\begin{align}\label{eq:local_defect}
	\gamma=\gamma_0+Q_\nu,
\end{align}
and rewrite the formal minimization problem in a problem where the variable is $Q_\nu$. The authors have then proved that the corresponding energy was indeed the limit of the supercell rHF energy of the defect (see Section~\ref{sec:supercellRHFDefects}). Following~\cite{Cances2008}, we introduce the convex set
\begin{equation}\label{eq:K}
	\begin{array}{c}
	 \cK=\left\{Q \in \cS(L^2(\R^3)),\ -\gamma_0 \leq Q \leq 1-\gamma_0 , \ (1-\Delta)^\frac12 Q \in \fS_2(L^2(\R^3)),\right.\\
		\left. \ (1 -\Delta)^\frac12 Q^{\pm\pm} (1-\Delta)^\frac12 \in \fS_1(L^2(\R^3)) \right\},
	\end{array}
\end{equation}
where $Q^{++}:=(1-\gamma_0)Q(1-\gamma_0)$, $Q^{--}:=\gamma_0 Q \gamma_0$,
and $\fS_p$ denotes the $p$-th Schatten class~\cite{Simon2005}. In particular $\fS_1$ is the set of trace class operators and $\fS_2$ is the set of Hilbert-Schmidt operators. It has been proved in~\cite{Cances2008} that although a generic $Q$ in $\cK$ in not trace class, it can be associated a density $\rho_Q \in \cC(\R^3)\cap L^2(\R^3)$, where $\cC(\R^3)$ is the Coulomb space 
\begin{equation} \label{eq:CoulombSpace}
	\cC := \left\{ f \in \mathscr{S}'(\R^3), \ \| f \|_{\cC(\R^3)}^2 := D(f,f) := 4 \pi \int_{\R^3} \dfrac{ | \widehat{f}(\bk)| ^2}{ |\bk|^2} \rd \bk < \infty\right\},
\end{equation}
where $\sS'$ is the Schwartz space of tempered distributions, and $\widehat{f}$ is the normalized Fourier transform of $f$, defined for $f \in L^1(\R^3)$ by
\[
	\widehat{f}(\bk) := \frac{1}{(2\pi)^{3/2}}\int_{\R^3} f(\bx) \re^{-\ri \bk \cdot \bx} \rd \bx.
\]
It holds $L^{6/5}(\R^3) \hookrightarrow \cC$, and taking $L^2(\R^3)$ as the pivoting space, the dual of $\cC$ is the Beppo-Levi space
\begin{equation} \label{eq:BLSpace}
	\cC' := \left\{ V \in L^6(\R^3), \ \nabla V \in \left( L^2(\R^3) \right)^3 \right\},
\end{equation}
which can also be seen as an Hilbert space when endowed with the inner product $\bra f , g \ket_{\cC'} := \int_{\R^3} \overline{\nabla f} \cdot \nabla g$.

For $Q \in \cK$, we introduce
\begin{equation}\label{eq:defectEnergy} 
	\cF_\nu(Q)=\Tr  \left( | H_0-\varepsilon_F |  (Q^{++}-Q^{--}) \right) +\frac{1}{2}D(\rho_Q-\nu,\rho_Q-\nu) - \int_{\R^3} V_0 \nu,
\end{equation} 
where $H_0$ and $V_0$ were both defined in~\eqref{eq:sc}. Note that $\nu \in \cC$ since $L^2(L^\supp \WS) \hookrightarrow L^{6/5} (L^\supp \WS) \hookrightarrow  L^{6/5}(\R^3) \hookrightarrow \cC$, so that the Coulomb energy in~\eqref{eq:defectEnergy} is well-defined. The rHF energy of the defect $\nu$ is then defined by the minimization problem
\begin{equation} \label{eq:Jnu}
	J_\nu := \inf  \left\{ \cF_\nu(Q), \ Q \in \cK \right\}.
\end{equation}
The existence of minimizers for this problem was proved in~\cite[Theorem 2]{Cances2008}\cite[Lemma 5]{Cances2010}.

\begin{theorem}[Existence of ground states for crystals with local defects]
\label{th:Qnu}
Assume that \textbf{(A1)} holds true. There exists $\eta > 0$ such for all $\nu \in \cN(\eta)$, the following holds. There exists a unique minimizer $Q_\nu \in \cK$ of the problem~\eqref{eq:Jnu}. Moreover, this minimizer satisfies the equality $Q_\nu^2 = Q_\nu^{++} - Q_\nu^{--}$,  the equality $\Tr ( Q_\nu^{++}+Q_\nu^{--} ) = 0$ and the Euler-Lagrange equation

\begin{equation}\label{eq:SCF_defect}
 \left\{
	\begin{array}{lll}
		Q_\nu & =& \mathds{1} ( H_\nu \leq \varepsilon_F) - \gamma_0 ,\\
 		H_\nu & =& H_0+ V_\nu ,\\
		V_\nu & = & (\rho_{Q_\nu}-\nu) \ast \dfrac{1}{ | \cdot |}.
	\end{array} \right.
\end{equation}
Finally, the operator $H_\nu$, which acts on $L^2(\R^3)$, is gapped around $\varepsilon_F$ with $\left| H_\nu - \varepsilon_F \right| \ge g/2$.

\end{theorem}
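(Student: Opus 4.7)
The plan is to follow the strategy of Cancès--Deleurence--Lewin~\cite{Cances2008} and Cancès--Lewin~\cite{Cances2010}, combining the direct method of the calculus of variations with a perturbative argument that is valid only when $\|\nu\|_{L^2}$ is small enough. The energy functional $\cF_\nu$ is well-defined on $\cK$ once we observe that the kinetic/spectral term can be rewritten as $\Tr(|H_0-\varepsilon_F|(Q^{++}-Q^{--}))$ with $Q^{++}\ge 0$ and $Q^{--}\le 0$ (from $-\gamma_0\le Q\le 1-\gamma_0$), and that $\rho_Q\in\cC(\R^3)$ for every $Q\in\cK$ so that the Coulomb term and the linear term $\int_{\R^3} V_0\nu$ make sense (the latter because $V_0\in L^\infty$ as a convolution of an $L^2_\per$ function with $G_1$ and $\nu\in L^2$ has compact support).

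First, I would establish existence of a minimizer. Coercivity follows from the gap assumption \textbf{(A1)}, which gives $|H_0-\varepsilon_F|\ge g/2$ and hence
\[
\cF_\nu(Q)\ge \tfrac{g}{4}\Tr(Q^{++}-Q^{--}) + \tfrac{1}{4}\|\rho_Q-\nu\|_\cC^2 - C\|\nu\|_{L^2},
\]
so any minimizing sequence $(Q_n)$ is bounded in the natural norm of $\cK$ and its densities $\rho_{Q_n}$ are bounded in $\cC$. Extracting a weakly-$*$ convergent subsequence (in $\fS_2$ for $(1-\Delta)^{1/2}Q_n$ and in $\fS_1$ for $(1-\Delta)^{1/2}Q_n^{\pm\pm}(1-\Delta)^{1/2}$), the limit $Q_\nu$ lies in $\cK$. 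Lower semi-continuity of $Q\mapsto \Tr(|H_0-\varepsilon_F|(Q^{++}-Q^{--}))$ is obtained as in~\cite{Cances2008} (it is a convex function of $Q$), while the Coulomb term is even strictly convex and weakly lower semi-continuous on $\cC$; this passes the infimum to $Q_\nu$.

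Next I would derive the self-consistent equation. Any admissible perturbation $Q+t(Q'-Q)$ with $Q'\in\cK$ and $t\in[0,1]$ yields, after computing $\frac{d}{dt}\cF_\nu(Q+t(Q'-Q))_{|t=0^+}\ge 0$, the variational inequality $\Tr\bigl((H_\nu-\varepsilon_F)(Q'-Q_\nu)\bigr)\ge 0$ for all $Q'\in\cK$, with $H_\nu=H_0+V_\nu$ and $V_\nu=(\rho_{Q_\nu}-\nu)*|\cdot|^{-1}$. By the standard bang-bang argument for fermionic one-body density matrices, this forces $Q_\nu+\gamma_0$ to be the spectral projector $\mathds{1}(H_\nu\le\varepsilon_F)$ on any interval where $H_\nu-\varepsilon_F$ has no kernel. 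Idempotency of $Q_\nu+\gamma_0$ then gives directly, upon expanding $(Q_\nu+\gamma_0)^2=Q_\nu+\gamma_0$ and using $\gamma_0^2=\gamma_0$, the algebraic identity $Q_\nu^2 = Q_\nu^{++}-Q_\nu^{--}$; the trace identity $\Tr(Q_\nu^{++}+Q_\nu^{--})=0$ follows by writing $Q_\nu+\gamma_0$ and $\gamma_0$ as a pair of projectors whose difference is locally trace class, and using the relation $\Tr(Q_\nu^{++}+Q_\nu^{--})=\Tr(Q_\nu^2) - \Tr(Q_\nu^2) = 0$ through the idempotency identity combined with the $\fS_2$-control of the off-diagonal blocks, as done in~\cite[Lemma 2]{Cances2008}.

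The main obstacle is the preservation of the spectral gap, which is what requires the smallness assumption $\nu\in\cN(\eta)$. I would handle this via a fixed-point argument on $\nu$: the map $\Phi:\nu\mapsto V_\nu$ is well-defined once $H_\nu$ is gapped around $\varepsilon_F$, and one checks using the Kato--Seiler--Simon-type estimates developed in~\cite{Cances2008} that for $\|\nu\|_{L^2}\le\eta$ small enough, the induced potential $V_\nu$ is bounded in $L^\infty(\R^3)$ by a constant strictly smaller than $g/2$, so that analytic perturbation theory for self-adjoint operators gives $|H_\nu-\varepsilon_F|\ge g/2$. This closes the bootstrap, ensures the minimizer is indeed of the projector form asserted in~\eqref{eq:SCF_defect}, and in the same sweep yields uniqueness: for $\eta$ small, the composed map $\nu\mapsto \rho_{Q_\nu}\mapsto V_\nu$ is a strict contraction on the ball of radius $g/4$ in $L^\infty$, and two minimizers would correspond to two fixed points. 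This is the step where the numerical value of $\eta$ is ultimately determined, and whose quantitative version will be reused repeatedly in the subsequent sections of the paper.
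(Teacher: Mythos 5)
The paper does not actually prove Theorem~\ref{th:Qnu}; it is imported verbatim from~\cite[Theorem 2]{Cances2008} and~\cite[Lemma 5]{Cances2010}, and a comparison can only be made to those references. Your overall strategy (direct method for existence, Euler--Lagrange/bang-bang for the self-consistent equation, perturbation theory to preserve the gap for small $\nu$) is the same as in those works, but one step of your sketch is genuinely wrong and another needs substantial repair.

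The derivation of $\Tr(Q_\nu^{++}+Q_\nu^{--})=0$ is circular: you write $\Tr(Q_\nu^{++}+Q_\nu^{--})=\Tr(Q_\nu^2)-\Tr(Q_\nu^2)=0$, which is a tautology and proves nothing. Moreover, it conflates two genuinely different quantities. Since $Q_\nu^2=Q_\nu^{++}-Q_\nu^{--}$ and both blocks are trace class, $\Tr(Q_\nu^2)=\Tr(Q_\nu^{++})-\Tr(Q_\nu^{--})=\|Q_\nu\|_{\fS_2}^2\ge 0$ (and is strictly positive as soon as $Q_\nu\neq 0$), whereas $\Tr(Q_\nu^{++}+Q_\nu^{--})=\Tr(Q_\nu^{++})+\Tr(Q_\nu^{--})$ is the relative index of the pair of projectors $(\gamma_0+Q_\nu,\gamma_0)$. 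The correct route is: (i) establish the identity $\Tr(Q_\nu^{++}+Q_\nu^{--})=\Tr(Q_\nu^3)$, which follows from the two relations $Q_\nu^{++}-(Q_\nu^{++})^2=Q_\nu^{+-}Q_\nu^{-+}$ and $-Q_\nu^{--}-(Q_\nu^{--})^2=Q_\nu^{-+}Q_\nu^{+-}$ valid for any difference of orthogonal projectors; (ii) invoke the Avron--Seiler--Simon index theorem, which shows $\Tr(Q_\nu^3)$ is an integer; (iii) use a norm-continuity argument in $\nu$ (enabled by the gap of $H_\nu$) starting from $\nu=0$, $Q_0=0$, to identify this integer as $0$. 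Your sketch omits all three steps.

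Two further points. First, you assert that $V_\nu$ is uniformly small in $L^\infty$, but $\rho_{Q_\nu}-\nu$ is only known to be in $L^2\cap\cC$ and is not compactly supported, so $V_\nu=(\rho_{Q_\nu}-\nu)*|\cdot|^{-1}$ is naturally in $\cC'\subset L^6$; an $L^\infty$ bound does not come for free. The standard device in~\cite{Cances2008,Cances2010} is to control $V_\nu(1-\Delta)^{-1/2}$ via the Kato--Seiler--Simon inequality and treat $V_\nu$ as a relatively bounded perturbation of $H_0$ with small relative bound, then use resolvent perturbation theory to keep the gap. Second, uniqueness of the minimizer does not require your contraction argument: the Coulomb term of $\cF_\nu$ is strictly convex in $\rho_Q$, so any two minimizers have the same density, hence the same $V_\nu$ and $H_\nu$, hence (by the preserved gap and the self-consistent equation) the same $Q_\nu$. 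This convexity argument is the one used in the cited references and is cleaner than a fixed-point approach.
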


\subsection{The supercell rHF model for perfect crystals}
\label{sec:supercellRHF}

The rHF model for crystals with and without local defects is the thermodynamic limit~\cite{Cances2008} of the supercell rHF model, where the system is confined to a box $\WS_L := L \WS$ with periodic boundary conditions. The corresponding lattice is $\Lat_L := L \Lat$, and the corresponding reciprocal lattice is $\RLat_L := L^{-1} \RLat$. We denote by $L^2_\per(\WS_L)$ the Hilbert space of locally square integrable functions that are $L\Lat$-periodic. The normalized Fourier coefficients of a function $f \in L^2_\per ( \WS_L)$ are defined by
\[
	\forall \bk \in L^{-1} \RLat , \quad c_\bk^L (f) =\frac{1}{| \Gamma_L|^{1/2}} \int_{\Gamma_L} f(\bx) \re^{-\ri \bk \cdot \bx} \rd \bx,
\]
so that,
\[
\| f \|_{L^2_\per(\WS_L)}^2 = \sum_{\bk \in L^{-1}\RLat} |c_\bk^L | ^2
\quad \text{and} \quad
	f(\bx) = \frac{1}{| \Gamma_L|^{1/2}} \sum_{\bk \in L^{-1} \RLat} c_\bk^L(f) \re^{\ri \bk \cdot \bx} \quad \text{ a.e. and in } L^2_\per(\WS_L).
\]

We denote by $P_{j}^L$, $j\in\left\{1,2,3\right\}$, the self-adjoint operator on $L^2_\per (\Gamma_L)$ defined by $c_\bk^L(P_{j}^L f)= k_j c_\bk^L(f)$, and by $(-\Delta^L) := \sum_{j=1}^3 \left( P_{j}^L \right)^2$, so that $c_\bk^L(-\Delta^L f)= | \bk |^2 c_\bk^L(f)$. The set of admissible electronic states for the supercell model is
\[
	\cP^L := \left\{ \gamma^L \in \cS(L^2_\per(\WS_L)), \ 0 \le \gamma^L \le 1,  \  \Tr_{L^2_\per(\WS_L)} \left(  \gamma^L \right) +\Tr_{L^2_\per(\WS_L)}\left(  -\Delta^L \gamma^L \right) < \infty \right\},
\]
where $\Tr_{L^2_\per(\WS_L)}\left(  -\Delta^L \gamma^L \right)$ is a shorthand notation for 
$\sum_{j=1}^3\Tr_{L^2_\per(\WS_L)}\left(  P_{j}^L \gamma^L P_{j}^L \right)$. Any $\gamma^L \in \cP^L$ is locally trace-class, and can be associated a density $\rho_{\gamma^L} \in L^2_\per(\WS_L)$. 

\medskip

For a charge density $\mu \in L^2_\loc(\R^3)$, we denote by $ \mu_L $ the $L \Lat$-periodic nuclear distribution which is equal to $\mu$ on $\Gamma_L$, and by $\cE_{\mu}^L$ the energy functional defined on $\cP^L$ by 
\begin{equation} \label{eq:EL}
	\cE^L_\mu(\gamma^L) :=  \dfrac12  \Tr_{L^2_\per(\WS_L)} (- \Delta^L \gamma^L) + \dfrac12 D_L(\rho_{\gamma^L} - \mu_L, \rho_{\gamma^L} - \mu_L).
\end{equation}
The first term corresponds to the kinetic energy, and the second to the supercell Coulomb energy. To define the latter one, we introduce the $L \Lat$-periodic Green kernel $G_L$ of the Poisson interaction, solution of
\[
	\left\{ \begin{array}{c}
		-\Delta G_L =  4 \pi \left( \sum\limits_{\bk \in L \Lat} \delta_\bk - \left| \WS_L \right|^{-1} \right) \\
		G_L \text{ is } L\Lat\text{-periodic.}
		\end{array} \right. 
\]
The expression of $G_L$ is given in the Fourier basis by
\begin{equation} \label{eq:GLFourier}
	G_L (\bx) = c_L + \frac{4 \pi}{| \WS_L|} \sum_{\bk \in L^{-1} \RLat \setminus \{ \bnull \}} \dfrac{\re^{\ri \bk \cdot \bx}}{| \bk |^2},
\end{equation}
where $c_L = | \WS_L|^{-1} \int_{\WS_L} G_L$. As in~\eqref{eq:G1Fourier}, the constant $c_L$ can be any fixed constant, and we choose to set $c_L = 0$ for simplicity. The supercell Coulomb energy is defined, for $f,g \in \L^2_\per(\WS_L)$ by
\begin{equation} \label{eq:defDL}
	D_L (f,g) := \int_{\WS_L} (f\ast_{\Gamma_L} G_L) (\bx) g(\bx) \rd \bx = 4 \pi \sum_{\bk \in L^{-1}\RLat \setminus \{ \bnull \} } \dfrac{\overline{c_\bk^L(f)} c_\bk^L(g)}{| \bk |^2}.
\end{equation}

\medskip

Finally, the supercell rHF energy of the system in the box of size $L$ with Fermi level $\varepsilon_F$ is given by the minimization problem 
\begin{equation} \label{eq:finite_problem}
	I_{\mu}^L = \inf \left\{ \cE_{\mu}^L (\gamma^L) - \varepsilon_F \Tr_{L^2_\per(\WS_L)} (\gamma^L), \, \gamma^L \in \cP^L\right\}.
\end{equation}

\begin{remark}
	This problem is set on the grand canonical ensemble. The Fermi level $\varepsilon_F$ is the one defined in the previous section (see Assumption~\textbf{(A1)}). This model is therefore different than the ones usually considered in numerical codes, where the charge of $\gamma^L$ is constrained. It is unclear to us what is the relationship between the two models when $\mu \neq \mu_\per$.
\end{remark}

When $\mu = \mu_\per$, then $\mu_\per = \mu_{\per,L}$ for all $L \in \N^*$. In this case, we can state precise results. The following theorem was proved in~\cite[Theorem 4]{Cances2008} and~\cite[Theorem 3.3]{GL2015}.

\begin{theorem}[Thermodynamic limit for perfect crystals]
\label{th:expCV}
	If $\mu = \mu_\per$, then there exists $L^\gap \in \N^*$ such that the following holds true. For $L \ge L^\gap$, the minimization problem $I^L_{\mu_\per}$ defined in~\eqref{eq:finite_problem} has a unique solution $\gamma_{0}^L \in \cP^L$ which commutes with $\Lat$-translations. This minimizer 
	satisfies the self-consistent equation
	\begin{equation} \label{eq:scL}
		\left\{ \begin{aligned}
			\gamma_{0}^L & = \mathds{1} (H_{0}^L < \varepsilon_F) \\
			H_{0}^L & = - \frac12 \Delta^L + V_{0}^L \\
			V_{0}^L & = (\rho_{\gamma_{0}^L} - \mu_\per) \ast_{\WS_L} G_L.
		\end{aligned}
		\right.
	\end{equation}
	Moreover, there exist $C \in \R^+$ and $\alpha > 0$ such that,
	\[
		\forall L \ge L^\gap, \quad \left| I_{\mu_\per} - L^{-3} I_{\mu_\per}^L \right| \le C \re^{-\alpha L} 
		\quad \text{and} \quad
		\left\| \rho_{\gamma_0} - \rho_{\gamma_{0}^L} \right\|_{L^\infty_\per(\WS)} \le C \re^{- \alpha L}.
	\]
	Finally, for $L \ge L^\gap$, the operator $H_{0}^L$, which acts on $L^2_\per(\WS_L)$, satisfies $ \left| H_{0}^L - \varepsilon_L \right| > 3g/4$.
\end{theorem}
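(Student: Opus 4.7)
The plan is to proceed in three stages: first establish existence, uniqueness and $\Lat$-translation invariance of $\gamma_0^L$; second, derive the Euler--Lagrange characterisation; third, use Bloch decomposition to compare the supercell minimizer with the reference crystal minimizer $\gamma_0$ and extract the exponential rate.

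For existence, the functional $\gamma^L \mapsto \cE^L_{\mu_\per}(\gamma^L) - \varepsilon_F \Tr_{L^2_\per(\WS_L)}(\gamma^L)$ is weakly lower semi-continuous and bounded below on $\cP^L$ (the lower bound comes from dominating $-\varepsilon_F \Tr(\gamma^L)$ by a fraction of the kinetic energy plus a constant proportional to $|\WS_L|$). Weak compactness of $\cP^L$ in the natural topology then yields a minimizer by the direct method, and strict convexity of $D_L$ gives uniqueness of the minimizing density $\rho_{\gamma_0^L}$. Since $\mu_\per$ is $\Lat$-periodic, the functional commutes with $\tau_\bR$ for every $\bR \in \Lat$, so any minimizer can be $\Lat$-symmetrized; once the gap of $H_0^L$ is established below, the minimizer is the unique spectral projector below $\varepsilon_F$ and therefore must coincide with its symmetrizations, so commutes with $\Lat$-translations. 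Under the tentative gap, the Euler--Lagrange condition is exactly~\eqref{eq:scL}.

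The third and main stage exploits Bloch decomposition. Because both $\gamma_0$ and $\gamma_0^L$ commute with $\Lat$-translations, $H_0$ fibres on $L^2(\R^3)$ as a direct integral $H_0 = \int^{\oplus}_{\BZ} H_0(\xi)\,\rd\xi$ with each fibre acting on $L^2_\per(\WS)$, and $H_0^L$ fibres as a \emph{finite} direct sum over the grid $\BZ \cap L^{-1}\RLat$ built from the same type of fibre operators. Assumption \textbf{(A1)} combined with Kato's analytic perturbation theory yields that $\xi \mapsto H_0(\xi)$ and the spectral projector $P_0(\xi) := \mathds{1}(H_0(\xi) \le \varepsilon_F)$ extend analytically to a complex tubular neighbourhood of $\BZ$ of some width $\alpha > 0$ controlled by the gap $g$. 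The density $\rho_{\gamma_0}$ and the energy per unit volume in the whole-space problem are $\BZ$-integrals of quantities built from these analytic fibres, while the corresponding supercell objects are Riemann sums of these same integrands at the grid $\BZ \cap L^{-1}\RLat$, up to an exponentially small correction coming from the comparison of $G_L$ with $G_1$ on $\WS$. I would then close the argument via a Banach fixed-point scheme on the supercell SCF map in a small $L^\infty_\per$-ball around the restriction of $\rho_{\gamma_0}$: periodic functions analytic in a strip of width $\alpha$ have Riemann sums converging to their integrals with error $O(\re^{-\alpha L})$ (a Paley--Wiener / Poisson-summation estimate, isolated in the appendices), which shows that the SCF map sends the ball into itself, while the gap supplies a contraction constant through analyticity of the spectral projector in the potential. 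The fixed point is $\rho_{\gamma_0^L}$ at exponentially small $L^\infty_\per$-distance from $\rho_{\gamma_0}$; the gap bound $|H_0^L - \varepsilon_F| > 3g/4$ then follows from the exponentially small potential difference between $H_0$ and $H_0^L$, and the energy estimate is obtained by inserting the two densities into their respective functionals and comparing termwise.

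The main obstacle is the quantitative coupling between the width of the strip of analyticity of $\xi \mapsto P_0(\xi)$ and the nonlinear SCF iteration: one needs a Riemann-sum error bound with a rate independent of the iterate, which in turn requires a uniform lower bound on the gap throughout the iteration. Establishing this uniform gap and combining it with the analytic-in-$\xi$ control to obtain genuine exponential convergence rather than algebraic decay is the technical heart of the argument.
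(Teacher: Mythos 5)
The paper does not actually prove Theorem~\ref{th:expCV}; it is cited from~\cite[Theorem 4]{Cances2008} (existence, uniqueness, $\Lat$-translation invariance, and the self-consistent characterisation of $\gamma_0^L$) and~\cite[Theorem 3.3]{GL2015} (the two exponential estimates and the gap bound). That said, your sketch does track the strategy of~\cite{GL2015} fairly faithfully: direct method plus strict convexity of $D_L$ for existence and uniqueness, symmetrisation for $\Lat$-invariance, the gap-driven Euler--Lagrange equation, Bloch fibration together with analyticity in a complex strip around $\BZ$, exponential Riemann-sum estimates, and a fixed-point bootstrap for the nonlinear self-consistency. So the architecture of the argument is right.

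Two imprecisions are worth flagging. First, there is no ``exponentially small correction coming from the comparison of $G_L$ with $G_1$ on $\WS$.'' For any $\Lat$-periodic charge $f$ the Fourier coefficients $c_\bk^L(f)$ vanish off $\RLat \subset L^{-1}\RLat$, and the kernels $G_L$ and $G_1$ agree on $\RLat\setminus\{\bnull\}$, so $f \ast_{\WS_L} G_L = f \ast_\WS G_1$ exactly; all of the exponential error originates from the Riemann sum over $\Lambda_L$ versus the $\BZ$-integral, not from the Coulomb kernel. Second, $H_0^L$ is not the operator whose fibres coincide with those of $H_0$ evaluated on the grid: $H_0^L$ carries $V_0^L$ in its fibres, whereas the ``same fibre operators sampled on the grid'' description you give is the fibration of the intermediate operator $\widetilde{H_0^L} = -\tfrac12 \Delta^L + V_0$ (introduced later in the paper in~\eqref{eq:def:widetildeH0}). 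This distinction is precisely what forces the fixed-point step you allude to: one first uses the Riemann-sum/analyticity estimate on the linearised (frozen-potential) map, and then bootstraps to show $V_0^L$ and $V_0$ are exponentially close. You recognise this need in your closing paragraph, so the plan is sound; just be aware that the clean Bloch-sampling picture holds for $\widetilde{H_0^L}$, not for $H_0^L$ itself, and that the Green-kernel comparison contributes no error at all.
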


\subsection{The supercell rHF energy of a defect}
\label{sec:supercellRHFDefects}

When $\mu \neq \mu_\per$, we have much weaker results. In the case of local defects where $\mu = \mu_\per + \nu$, we can treat $\nu$ as a perturbation of defect-free case. We obtain the following theorem, whose proof is skipped for brevity.
\begin{theorem} \label{th:sc_nu_L}
	There exists $\eta > 0$ and $L^* \in \N$ such that, for all $\nu \in \cN(\eta)$, the following holds. For all $L \ge L^*$, the problem $I^L_{\mu_\per + \nu}$ has a unique minimizer $\gamma_{\nu}^L$. This minimizer satisfies $\Tr_{L^2_\per(\WS_L)}(\gamma_\nu^L) = \Tr_{L^2_\per(\WS_L)}(\gamma_0^L)$, and is the solution to the self-consistent equation
	\begin{equation} \label{eq:sc_nu_L}
		\left\{ \begin{aligned}
			\gamma_{\nu}^L & = \mathds{1} (H_{\nu}^L < \varepsilon_F) \\
			H_{\nu}^L & = - \frac12 \Delta^L + V_{\nu}^L \quad \text{acts on } L^2_\per(\WS_L) \\
			V_{\nu}^L & = (\rho_{\gamma_{\nu}^L} - \mu_\per - \nu_L) \ast_{\WS_L} G_L.
		\end{aligned}
		\right.
	\end{equation}
	Moreover, the operator $H_{\nu}^L$ in~\eqref{eq:sc_nu_L} satisfies $\left| H_{\nu}^L - \varepsilon_F \right| \ge g/2$.
\end{theorem}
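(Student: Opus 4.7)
\bigskip

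The approach is to treat $\nu_L$ as a small periodic perturbation of the defect-free nuclear density $\mu_\per$, so that the sought minimizer $\gamma_\nu^L$ is obtained as a small perturbation of the defect-free minimizer $\gamma_0^L$ provided by Theorem~\ref{th:expCV}. \emph{Step 1 (existence and uniqueness of a minimizer).} The grand canonical functional $\gamma^L \mapsto \cE^L_{\mu_\per+\nu}(\gamma^L) - \varepsilon_F \Tr_{L^2_\per(\WS_L)}(\gamma^L)$ is convex on $\cP^L$ (the kinetic and trace terms are linear in $\gamma^L$, and the Coulomb contribution is a positive semidefinite quadratic form in the affine map $\gamma^L \mapsto \rho_{\gamma^L} - \mu_\per - \nu_L$), bounded below, and coercive for the kinetic-energy seminorm on $\cP^L$. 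Extracting a minimizing sequence, standard weak-$\ast$ compactness of density matrices with bounded kinetic energy yields a minimizer, and strict convexity in $\rho_{\gamma^L}$ fixes the density uniquely. Uniqueness of the operator $\gamma_\nu^L$ itself is obtained a posteriori once $\gamma_\nu^L$ is identified as a spectral projector of a gapped Hamiltonian (Steps~2--3).

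\emph{Step 2 (self-consistent equation via a fixed point).} For $\rho \in L^2_\per(\WS_L)$ close to $\rho_{\gamma_0^L}$, I would define
\[
	\Phi^L_\nu(\rho) := \rho_{\mathds{1}(H[\rho] < \varepsilon_F)}, \qquad H[\rho] := -\tfrac12 \Delta^L + (\rho - \mu_\per - \nu_L)\ast_{\WS_L} G_L.
\]
By Theorem~\ref{th:expCV}, the defect-free mean-field Hamiltonian $H_0^L = H[\rho_{\gamma_0^L}]$ is gapped around $\varepsilon_F$ by at least $3g/4$ for $L \ge L^\gap$. Writing $\mathds{1}(H[\rho]<\varepsilon_F) = (2\ri\pi)^{-1}\oint_{\mathscr{C}} (z - H[\rho])^{-1} \rd z$ on a contour $\mathscr{C}$ at distance $3g/8$ from $\varepsilon_F$ and applying the second resolvent identity, one obtains Lipschitz bounds of the form $\| \Phi^L_\nu(\rho) - \Phi^L_\nu(\rho') \|_{L^2_\per(\WS_L)} \le \kappa_L \, \|(\rho - \rho')\ast_{\WS_L} G_L\|_{L^\infty_\per(\WS_L)}$, together with a base-point estimate $\|\Phi^L_\nu(\rho_{\gamma_0^L}) - \rho_{\gamma_0^L}\|_{L^2_\per(\WS_L)} \lesssim \|\nu\|_{L^2}$. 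A Banach fixed-point argument on a small ball around $\rho_{\gamma_0^L}$ then yields $\rho_{\gamma_\nu^L}$ and the self-consistent equation~\eqref{eq:sc_nu_L}.

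\emph{Step 3 (gap and charge equality).} The min-max principle gives $\mathrm{dist}(\sigma(H^L_\nu), \sigma(H_0^L)) \le \|V^L_\nu - V_0^L\|_{\mathrm{op}}$, and the estimates of Step~2 together with $\|\nu\|_{L^2} \le \eta$ yield $\|V^L_\nu - V_0^L\|_{\mathrm{op}} = O(\eta)$; choosing $\eta$ small enough shrinks the gap of $3g/4$ around $\varepsilon_F$ to at least $g/2$. The charge equality $\Tr_{L^2_\per(\WS_L)}(\gamma_\nu^L) = \Tr_{L^2_\per(\WS_L)}(\gamma_0^L)$ is then obtained by a homotopy along $t \in [0,1] \mapsto t\nu$: the Hamiltonian $H^L_{t\nu}$ depends continuously on $t$, its gap at $\varepsilon_F$ remains at least $g/2$ throughout and $\varepsilon_F$ never crosses an eigenvalue, so the number of eigenvalues below $\varepsilon_F$, counted with multiplicity, is constant.

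\emph{Main obstacle.} The delicate point is the $L$-uniformity of the contraction constant $\kappa_L$. The periodic Coulomb kernel $G_L$ acts as the Fourier multiplier $4\pi/|\bk|^2$ on $L^{-1}\RLat \setminus \{\bnull\}$, whose smallest nonzero frequency is $O(1/L)$, so the naive $L^2_\per \to L^2_\per$ operator norm of $\rho \mapsto \rho \ast_{\WS_L} G_L$ diverges as $L \to \infty$. As in the whole-space analysis of~\cite{Cances2008,GL2015}, this must be bypassed by working at the level of Coulomb potentials rather than densities and exploiting the uniform boundedness of the reduced polarisation/response operator, itself controlled by the macroscopic dielectric constant of the perfect crystal. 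The Bloch-decomposition framework recalled in Section~\ref{sec:BlochRegularity} provides the required quantitative bounds; once this is done, every other step is a routine perturbative manipulation, which is presumably why the authors omit the proof.
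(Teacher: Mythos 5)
Your global strategy is reasonable, and the paper itself gives no proof to compare against, but Step~2 as written contains a genuine gap: the map $\rho \mapsto \Phi^L_\nu(\rho) = \rho_{\mathds{1}(H[\rho]<\varepsilon_F)}$ is \emph{not} a contraction near $\rho_{\gamma_0^L}$, even in the correct Coulomb-potential norm and even for a single fixed~$L$. Its linearisation at the base point is $\chi^L v_c^L$, which is unitarily equivalent (conjugation by $\sqrt{v_c^L}$) to $-\cL^L$ of~\eqref{eq:def:LL}; Lemma~\ref{lem:propertiesLL} only gives $\cL^L \ge 0$ with a uniform upper bound, not $\|\cL^L\|<1$. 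The norm of $\cL^L$ encodes the static polarisability and is of the order of $\epsilon - 1$, so for generic insulators (silicon, $\epsilon\approx 12$) the Picard iteration simply diverges. You correctly observe that $\rho \mapsto \rho\ast_{\WS_L} G_L$ is not uniformly bounded from $L^2_\per$ to $L^\infty_\per$, but you misdiagnose the obstacle as an $L$-uniformity issue: the contraction constant is not small at fixed $L$ either.

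The cure is either to stay fully variational (as in your Step~1: minimise $\cF_\nu^L$ over $\cK^L$ as in~\eqref{eq:JnuL_infQ}, write the Euler--Lagrange inclusion $\gamma_\nu^L = \mathds{1}(H_\nu^L<\varepsilon_F)+\delta$ with $0\le\delta\le 1$ supported on $\ker(H_\nu^L-\varepsilon_F)$, and show $\delta=0$ once the gap is propagated), or to replace the Banach fixed point by the implicit function theorem, whose nondegeneracy condition is the invertibility of $1+\cL^L$; this holds uniformly in $L$ with $\|(1+\cL^L)^{-1}\|\le 1$ by Lemma~\ref{lem:propertiesLL}, regardless of the size of $\|\cL^L\|$. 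A further point in Step~3: the available a~priori estimate places $V_\nu^L - V_0^L$ in $\cC_L'$, which does \emph{not} embed in $L^\infty_\per(\WS_L)$ in dimension three, so the inequality $\mathrm{dist}(\sigma(H_\nu^L),\sigma(H_0^L)) \le \|V_\nu^L - V_0^L\|_{\mathrm{op}}$ does not directly apply; one should instead establish a relative (form) bound of the potential perturbation with respect to $-\Delta^L$, with constants uniform in $L$ via the periodic Kato--Seiler--Simon inequality (Corollary~\ref{cor:periodicKSS}), and conclude the spectral displacement from that. With these two corrections, the remaining ingredients of your proposal --- strict convexity for uniqueness of the density, the homotopy in $t\nu$ for the charge equality, and the shrinkage of the gap from $3g/4$ to $g/2$ --- go through.
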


In the sequel, we take $\eta > 0$ and $L^* \in \N$ as in Theorem~\ref{th:sc_nu_L}.
Without loss of generality, we may choose $L^* \ge \max (L^{\supp}, L^\gap)$. For all $\nu \in \cB(\eta)$, we supercell rHF energy of the defect is defined by
\begin{equation} \label{eq:JnuL}
	J^L_\nu := I^L_{\mu_\per + \nu} - I^L_{\mu_\per}.
\end{equation}

In~\cite[Theorem 5]{Cances2008}, the authors proved that $J_\nu$ defined in~\eqref{eq:Jnu} is the limit of $J^L_\nu$ as $L$ goes to infinity. The purpose of this article is to identify the speed of this convergence. Before stating our results, let us first rewrite $J^L_\nu$ as a minimization problem.

We introduce the convex set
\begin{equation}\label{eq:K}
	\begin{array}{cl }
	 \cK^L = & \left\{ Q^L \in \cS(L^2_\per(\WS_L)), \ -\gamma_0^L \leq Q^L \leq 1-\gamma_0^L , \right.\\
		& \left. \ (1 -\Delta^L)^\frac12 Q^L (1-\Delta^L)^\frac12 \in \fS_1(L^2_\per(\WS_L)) \right\}.
	\end{array}
\end{equation}
For $Q^L \in \cK^L$, we define the supercell rHF energy of $Q^L$ by
\[
	\cF_\nu^L(Q^L) := \Tr_{L^2_\per(\WS_L)} \left( \left[ H_0^L -\varepsilon_F \right]  Q^L \right) + \frac12 D_L \left( \rho_{Q^L} - \nu_L, \rho_{Q^L} - \nu_L \right) - \int_{\WS_L} V_0^L \nu.
\]
A little algebra shows that $J^L_\nu$ is also the solution of the minimization problem
\begin{equation} \label{eq:JnuL_infQ}
	J^L_\nu = \inf \left\{ \cF_\nu^L(Q^L), \ Q^L \in \cK^L\ \right\},
\end{equation}
and that the unique minimizer of this problem is $Q_\nu^L := \gamma_\nu^L - \gamma_0^L$. \\

\subsection{Main results}

Our main result is the following.
\begin{theorem}[Convergence rate of the defect energy]\label{th:chargedDefects}
Suppose that \textbf{(A1)} holds true. There exist $\eta > 0$, $\alpha > 0$, $L^* \in \N^*$ and $C \ge 0$, such that
\begin{equation} \label{eq:Jnu2JnuL}
	\forall \nu \in \cN( \eta ), \quad \forall L \ge L^*, \quad
	J_\nu =J_\nu^L - \dfrac{1}{L} \frac{2 \pi \fa q^2}{| \WS |} + K(L, \nu),
\end{equation}
with
\begin{equation} \label{eq:K}
	\forall \nu \in \cN( \eta), \quad \forall L \in L^*, \quad 
	\left| K(L, \nu) \right| \le C \left( \| \nu \|_{L^2(\R^3)}^3 + \dfrac{\| \nu \|_{L^2(\R^3)}^2}{L^3} + \| \nu \|_{L^2(\R^3)} \re^{- \alpha L} \right).
\end{equation}
Here, $q := \int_{\R^3} \nu$ is the nuclear charge of the defect, and $\fa$ is defined by
\begin{equation} \label{eq:fa}
	\fa := \sum_{\bk \in \RLat } \fint_{\BZ} \left( \frac{1 }{( \bk + \bq)^T M (\bk + \bq) } - \dfrac{ \mathds{1} (\bk \neq \bnull) }{ \bk^T M \bk} \right) \rd \bq,
\end{equation}
where $\fint_\BZ := | \BZ |^{-1} \int_\BZ$, and $M$ is the macroscopic dielectric $3 \times 3$ matrix of the crystal (see Definition~\ref{def:A1} below).
\end{theorem}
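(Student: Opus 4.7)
The plan is to compare the two variational problems \eqref{eq:Jnu} and \eqref{eq:JnuL_infQ} by expanding their minimizers and minimum values in powers of $\nu$. Since $Q = 0$ is admissible with $\cF_0(0) = \cF_0^L(0) = 0$, and since the linear-in-$\nu$ term in each energy vanishes by optimality of $\gamma_0$ and $\gamma_0^L$, the leading contribution in each case is quadratic in $\nu$. My strategy is threefold: first, perform a perturbative expansion of $Q_\nu$ and $Q_\nu^L$ in powers of $\nu$; second, express the quadratic parts of $J_\nu$ and $J_\nu^L$ in Bloch representation, as an integral (resp.\ Riemann sum) over $\BZ$ involving the dielectric response of the crystal; third, analyze the difference between the integral and the Riemann sum to extract the $1/L$ correction.

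First I would carry out the perturbative expansion. Applying the Cauchy formula
\begin{equation*}
\mathds{1}(H_\nu \le \varepsilon_F) = -\frac{1}{2\pi \ri} \oint_\cC (z - H_\nu)^{-1} \rd z
\end{equation*}
on a contour $\cC$ enclosing $\sigma(H_0) \cap (-\infty,\varepsilon_F]$, and expanding $(z-H_\nu)^{-1}$ in powers of $V_\nu - V_0$, one obtains $Q_\nu = Q_\nu^{(1)} + Q_\nu^{(2)} + O(\|\nu\|^3)$, and similarly in the supercell thanks to the uniform gap provided by Theorems~\ref{th:Qnu} and~\ref{th:sc_nu_L}. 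The first-order term $Q_\nu^{(1)}$ is fully determined by the self-consistent linear response equation identified in Section~\ref{sec:linear}, which one writes compactly as $\epsilon V^{(1)} = -\nu \ast |\cdot|^{-1}$ for the dielectric operator $\epsilon$ of the unperturbed crystal. Substituting into $\cF_\nu$ and $\cF_\nu^L$ and using the Euler--Lagrange equations for $\gamma_0$ and $\gamma_0^L$ yields, for the quadratic parts,
\begin{equation*}
J_\nu^{(2)} = -\tfrac12 D\bigl(\nu, \epsilon^{-1} \nu\bigr), \qquad (J_\nu^L)^{(2)} = -\tfrac12 D_L\bigl(\nu_L, \epsilon_L^{-1} \nu_L\bigr),
\end{equation*}
where $\epsilon_L$ denotes the supercell dielectric operator and $D$, $D_L$ are as in \eqref{eq:CoulombSpace} and \eqref{eq:defDL}. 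The cubic and higher remainders are controlled by $\|\nu\|^3$ uniformly in $L \ge L^*$, accounting for the first term in \eqref{eq:K}.

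Next I would translate both quadratic parts into the Bloch/Fourier representation developed in Section~\ref{sec:BlochRegularity}. Fibering the linear response operator over $\BZ$, $D(\nu, \epsilon^{-1} \nu)$ rewrites as $\int_{\BZ} F(\nu, \bq) \rd \bq$ where $F(\nu, \bq)$ is built from the Bloch fibres of $H_0$, the Coulomb kernel, and the Fourier transform of $\nu$. The map $\bq \mapsto F(\nu, \bq)$ is smooth on $\BZ \setminus \{\bnull\}$ but exhibits at the origin a singular contribution proportional to $q^2/(\bq^T M \bq)$, where $M$ is the macroscopic dielectric matrix; this pole encodes the long-wavelength dielectric screening of a charged defect. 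The supercell quantity $D_L(\nu_L, \epsilon_L^{-1} \nu_L)$ becomes the Riemann-type sum $|\BZ| L^{-3} \sum F^L(\nu, \cdot)$ over the $L^3$ quadrature points of $L^{-1}\RLat \cap \BZ$; Theorem~\ref{th:expCV} yields exponential convergence of the Bloch fibres of $H_0^L$ to those of $H_0$ away from $\bnull$, so that $F^L \to F$ exponentially fast on compact sets not containing the origin, while our choice $c_L = 0$ in \eqref{eq:GLFourier} removes the $\bk = \bnull$ term from the sum, producing the deficit that will become the Makov--Payne correction.

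Finally, the core estimate compares the integral and the Riemann sum. Writing $F = F^{\mathrm{sing}} + F^{\mathrm{reg}}$ where $F^{\mathrm{sing}}(\nu, \bq) = 4\pi q^2/(\bq^T M \bq)$ near $\bq = \bnull$ (and its $\RLat$-translates when viewing the sum as a sum over $\RLat$), the regular part contributes only $O(\|\nu\|^2 L^{-3})$ by the Riemann-sum estimates in the appendices. The singular part contributes, after Taylor expanding the Bloch fibres of $\epsilon$ near the origin and rescaling,
\begin{equation*}
\frac{4\pi q^2}{|\WS|} \sum_{\bk \in \RLat} \fint_{\BZ} \left( \frac{1}{(\bk + \bq)^T M (\bk + \bq)} - \frac{\mathds{1}(\bk \neq \bnull)}{\bk^T M \bk} \right) \rd \bq \cdot \frac{1}{L} + O(L^{-2}),
\end{equation*}
which is exactly $(4\pi \fa q^2/|\WS|) L^{-1}$ from \eqref{eq:fa}; combined with the $-\tfrac12$ prefactor in the energy, this produces the claimed $-2\pi \fa q^2/(L|\WS|)$ term. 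Collecting the three sources of error (cubic perturbative remainder in $\|\nu\|^3$, smooth Riemann-sum error in $\|\nu\|^2 L^{-3}$, and exponential Bloch-fibre convergence contributing $\|\nu\|\re^{-\alpha L}$) yields \eqref{eq:K}. I expect the main obstacle to be the rigorous Taylor expansion at $\bq = \bnull$ of the Bloch fibres of $\epsilon$: the analyticity of the periodic part of the Bloch functions must be combined with $f$-sum-rule-type identities to identify the macroscopic dielectric matrix $M$ as the Hessian, and to prove that the regular remainder is smooth enough on the whole Brillouin zone for the generic Riemann-sum estimates to apply.
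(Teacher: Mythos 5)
Your high-level strategy is the same as the paper's: isolate the quadratic-in-$\nu$ contribution by a resolvent expansion of $Q_\nu$ and $Q_\nu^L$, Bloch-decompose the screened Coulomb response, write the quadratic parts as an integral over $\BZ$ versus the corresponding Riemann sum over $\Lambda_L\setminus\{\bnull\}$, split the integrand into a smooth part (contributing $O(\|\nu\|^2 L^{-3})$) and a singular part behaving like $q^2/(\bq^T M\bq)$ near $\bq=\bnull$, and extract the $1/L$ coefficient from the latter via a singular Riemann-sum estimate (Lemma~\ref{lem:singularRiemann}). So the proposal is essentially the paper's proof in outline.

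There is, however, one concrete misstep you should correct: the assertion that ``the linear-in-$\nu$ term in each energy vanishes by optimality of $\gamma_0$ and $\gamma_0^L$'' is false. The linear-in-$\nu$ part of $J_\nu$ is $-\int_{\R^3} V_0\,\nu$ and of $J_\nu^L$ is $-\int_{\WS_L} V_0^L\,\nu_L$; neither vanishes. What vanishes by optimality is the first variation of $\cF_0$ in $Q$ at $Q=0$, which is a different statement. The difference $\cJ_{1,\nu}^L = \int(V_0^L-V_0)\nu$ does not cancel identically; it must be estimated, and it contributes exactly the $\|\nu\|_{L^2}\re^{-\alpha L}$ term in \eqref{eq:K} via the exponential convergence $\|V_0^L - V_0\|_{L^\infty} \le C\re^{-\alpha L}$ from Theorem~\ref{th:expCV} (this is Lemma~\ref{lem:linear}). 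Omitting this term would leave your error budget incomplete, since you attribute the exponential error only to the quadratic part.

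Two further remarks on the quadratic comparison. First, the paper does not directly compare $\cL^L$ with $\cL$: it inserts the intermediate operator $\widetilde{\cL^L}$ built from $\widetilde{H_0^L}=-\tfrac12\Delta^L + V_0$ on $L^2_\per(\WS_L)$, so that the quadratic discrepancy splits into ``$V_0^L$ versus $V_0$'' (exponentially small, Lemma~\ref{lem:intermediate}) plus a genuine quadrature error for the same fibered operators; this device is needed to make your ``$F^L\to F$ exponentially fast on compacts away from $\bnull$'' step rigorous, because $\cL^L$ and $\cL$ do not share the same Bloch fibres. Second, you propose identifying the Taylor expansion of the dielectric matrix from $f$-sum-rule identities on the Bloch eigenpairs $(u_{n,\bq},\varepsilon_{n,\bq})$; the paper instead establishes analyticity of $\bz\mapsto\cL_\bz$ and of the blocks in the Schur complement directly at the resolvent level (Lemmas~\ref{lem:analLq},~\ref{lem:propcL},~\ref{lem:propDq}, which ``undo'' the Cauchy integral), precisely to sidestep the lack of regularity of $\bq\mapsto u_{n,\bq}$ that you correctly flag as the main obstacle. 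The resolvent route is cleaner and is what makes the argument close.
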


\begin{remark}~\label{rem:role_of_cL}
	If we choose not to set $c_L = 0$ in~\eqref{eq:G1Fourier} and~\eqref{eq:GLFourier}, then~\eqref{eq:Jnu2JnuL} becomes
	\[
		J_\nu =J_\nu^L - c_L \frac{ q^2}{2}  - \dfrac{1}{L} \frac{2 \pi \fa q^2}{| \WS |} + \widetilde{K}(L, \nu),
	\]
	where $\widetilde{K}$ satisfies estimates similar than~\eqref{eq:K}.
	We refer to Remark~\ref{rem:theExtraTerm} for the origin of the extra term $-c_L q^2/2$. The convergence rate therefore depends on the choices of $c_L$. 
\end{remark}

The proof of Theorem~\ref{th:chargedDefects} is detailed in the following sections. In practice, the sum appearing in~\eqref{eq:fa} can be evaluated using Ewald summation~\cite{Ewald1921}. When the crystal is isotropic cubic, the expression of $\fa$ can be simplified.

\begin{definition}[Isotropic cubic crystal] \label{def:isotropic_cubic}
Let
\[
		S_1 =  \begin{pmatrix} 0 & 1 & 0 \\ 0 & 0 & 1 \\ 1 & 0 & 0 \end{pmatrix} \quad \text{and} \quad S_2 = \begin{pmatrix} -1 & 0 & 0 \\ 0 & 1 & 0 \\ 0 & 0 & 1 \end{pmatrix}.
\]
	We say that the crystal is isotropic cubic if $\mu_\per(S_1 \bx) = \mu_\per(S_2 \bx) = \mu_\per(\bx)$, for all $\bx \in \R^3$.
\end{definition}
A necessary condition for a crystal to be isotropic cubic is that $\Lat = a \Z^3$ for some $a >0$. 
\begin{proposition} \label{prop:isotropic}
If the crystal is isotropic cubic with $\Lat = a \Z^3$, then $M$ is proportional to the identity matrix with $M = \epsilon \bbI_3$, where $\epsilon \ge 1$ is the macroscopic dielectric constant of the crystal. In this case, it holds that
\begin{equation} \label{eq:MadelungCst}
	\fa = - 2 \pi^2 \dfrac{\fm}{\epsilon | \BZ |},
\end{equation}
where $\fm$ is what the physicians call the Madelung constant, defined by
\begin{equation} \label{eq:def:Madelung}
	\fm := \lim_{\bx \to \bnull} G_1(\bx) - \frac{1}{| \bx |}.
\end{equation}
Here $G_1$ is the Green kernel defined~\eqref{eq:G1Fourier} with $c_1 = 0$.
%
%
%
\end{proposition}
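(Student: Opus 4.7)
The proof splits into two essentially independent tasks: the identification $M=\epsilon\bbI_3$ is a symmetry argument, while the explicit formula for $\fa$ is an Ewald-type lattice-sum computation that identifies $\fa$ with the Madelung constant. For the first task, $M$ is built from the periodic ground state and $\mu_\per$ (see Definition~\ref{def:A1}), so every point-group symmetry of $\mu_\per$ induces the corresponding relation $S^T M S = M$. Under the isotropic cubic assumption, $\mu_\per$ is invariant under both the cyclic permutation $S_1$ and the reflection $S_2=\diag(-1,1,1)$: the relation with $S_1$ equates the three diagonal entries of $M$ and cyclically permutes the off-diagonals, while the relation with $S_2$ forces $M_{12}=M_{13}=0$. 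Combining the two makes every off-diagonal entry vanish, so $M=\epsilon\bbI_3$ for some $\epsilon\in\R$, and $\epsilon\ge 1$ follows from the general structure of $M$ as a nonnegative perturbation of the identity.

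Substituting $M=\epsilon\bbI_3$ into~\eqref{eq:fa} factors out $1/\epsilon$ and, using $|\Gamma|\cdot|\BZ|=(2\pi)^3$, reduces the proposition to the absolute identity
$$
S := \sum_{\bk \in \RLat} \fint_{\BZ} \left( \frac{1}{|\bk+\bq|^2} - \frac{\mathds{1}(\bk \neq \bnull)}{|\bk|^2} \right) \rd \bq = -\frac{|\Gamma|}{4\pi}\fm.
$$
To prove this I would combine the Fourier expansion~\eqref{eq:G1Fourier} of $G_1$ with the distributional identity $|\bx|^{-1}=\frac{1}{2\pi^2}\int_{\R^3}|\bk|^{-2}\,\re^{\ri\bk\cdot\bx}\,\rd\bk$, and slice $\R^3=\bigsqcup_{\bk\in\RLat}(\bk+\BZ)$ to rewrite the integral as an average over the Brillouin zone. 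This yields, for $\bx\notin\Lat$,
$$
\tfrac{|\Gamma|}{4\pi}\!\left(G_1(\bx)-\tfrac{1}{|\bx|}\right) = \sum_{\bk\in\RLat}\fint_{\BZ}\!\left(\frac{\mathds{1}(\bk\neq\bnull)\,\re^{\ri\bk\cdot\bx}}{|\bk|^2} - \frac{\re^{\ri(\bk+\bq)\cdot\bx}}{|\bk+\bq|^2}\right)\rd\bq.
$$
Letting $\bx\to\bnull$, the left-hand side converges to $\tfrac{|\Gamma|}{4\pi}\fm$ by~\eqref{eq:def:Madelung}, and the right-hand side converges formally to $-S$.

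The main obstacle is justifying this term-by-term passage to the limit, since neither $\sum_{\bk\neq\bnull}\re^{\ri\bk\cdot\bx}/|\bk|^2$ nor $\int_{\R^3}\re^{\ri\bk\cdot\bx}/|\bk|^2\,\rd\bk$ converges absolutely, so a direct dominated-convergence argument is not available. I would bypass this via an Ewald/Gaussian splitting: use $|\bq|^{-2}=\int_0^\infty\re^{-t|\bq|^2}\rd t$, fix a threshold $\tau>0$, and treat the integrals on $(\tau,\infty)$ and $(0,\tau]$ separately. On $(\tau,\infty)$ everything is absolutely convergent and the limit $\bx\to\bnull$ can be taken termwise; on $(0,\tau]$ the Poisson summation formula $\sum_{\bk\in\RLat}\re^{-t|\bk|^2}=(\pi/t)^{3/2}|\BZ|^{-1}\sum_{\bR\in\Lat}\re^{-|\bR|^2/(4t)}$ converts the apparent divergence on the reciprocal side into an exponentially small sum over the direct lattice, again allowing a clean limit. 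Matching the two Ewald representations reproduces exactly the definition of the Madelung constant $\fm$, yielding $S=-\tfrac{|\Gamma|}{4\pi}\fm$ and hence the formula~\eqref{eq:MadelungCst}.
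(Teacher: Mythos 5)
You follow the same route as the paper for~\eqref{eq:MadelungCst}: expand $G_1$ in Fourier, express $1/|\bx|$ as an integral, slice $\R^3=\bigsqcup_{\bk\in\RLat}(\bk+\BZ)$ to write the difference as a $\bk$-sum of $\BZ$-averages, and let $\bx\to\bnull$; your constant $|\Gamma|/(4\pi)=2\pi^2/|\BZ|$ is correct. Where you genuinely deviate is in justifying the limit interchange, and your caution is warranted: the paper appeals to absolute convergence ``thanks to the multipole expansion'', but the multipole bound $|F_1(\bk,\bq)|\le C|\bq|^2|\bk|^{-4}$ only gives the $O(|\bk|^{-4})$ decay of the $\BZ$-averaged summand \emph{at} $\bx=\bnull$. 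For $\bx\neq\bnull$ the averaged term still carries a piece $\re^{\ri\bk\cdot\bx}\bigl(1-\fint_\BZ\re^{\ri\bq\cdot\bx}\,\rd\bq\bigr)/|\bk|^2$ which is not absolutely summable, so a dominated-convergence argument as stated does not go through; your Ewald/Gaussian splitting is a standard, rigorous, if heavier, way to close this gap. (A lighter alternative is to resum that offending piece into $\bigl(1-\fint_\BZ\re^{\ri\bq\cdot\bx}\rd\bq\bigr)\cdot\tfrac{|\Gamma|}{4\pi}G_1(\bx)$, which vanishes as $\bx\to\bnull$ since the first factor is $O(|\bx|^2)$ and $G_1(\bx)\sim|\bx|^{-1}$.) On the first assertion, note that the paper itself only proves $M_1(\bnull)\propto\bbI_3$ (Lemma~\ref{lem:A0}) and merely remarks that the extension to $M(\bnull)$ is ``similar, although more involving''; your $S_1,S_2$ argument is exactly that route, but since $M$ is the Schur complement~\eqref{eq:def:A1} you must also verify the covariance of $\bb$ and $L_\bnull$ under the point group, not only of $M_1(\bnull)$, before concluding $M(\bnull)=\epsilon\bbI_3$.
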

The proof of Proposition~\ref{prop:isotropic} is given in Section~\ref{sec:proof:isotropic}. In~\cite{Lieb1977}, the authors defined another Madelung constant $\fm'$, defined by
\[
	\fm' = \lim_{\bx \to \bnull} F(\bx) - \frac{1}{| \bx |}
	\quad \text{where} \quad
	F(\bx) = \sum_{\bR \in \Lat} f(\bx - \bR)
	\quad \text{and} \quad
	f(\bx) = \dfrac{1}{| \bx |} - \dfrac{1}{| \WS |} \int_{\WS} \dfrac{\rd \by}{| \bx - \by| }.
\]
These two constants are linked by the relation $\fm' = \fm + 2 \pi |\WS |^{-2} \int_\WS \bx^2 \rd \bx$ (see~\cite[Equation (126)]{Lieb1977}).

\medskip

In the isotropic cubic case,~\eqref{eq:Jnu2JnuL} therefore becomes
\[
	\forall \nu \in \cN( \eta ), \quad \forall L \ge L^*, \quad
	J_\nu =J_\nu^L + \dfrac{1}{L} \frac{\fm q^2}{2 \epsilon} + K(L, \nu).
\]
We therefore recover in~\eqref{eq:Jnu2JnuL} the $L^{-1}$ correction term predicted by Leslie and Gillan~\cite{Leslie1985}, and by Makov and Payne~\cite{Makov1995}. In the non-isotropic case, the definition~\eqref{eq:fa} for $\fa$ in was already proposed without proof in~\cite{Murphy2013}.

\medskip

The rest of the paper is devoted to the proof of Theorem~\ref{th:chargedDefects} and Proposition~\ref{prop:isotropic}. The global strategy of the proof is as follows. We will first isolate the linear, quadratic and higher-order terms of $\nu$ in the functionals $J_\nu$ and $J_\nu^L$. We will then prove that the difference coming from the linear part decays exponentially fast with respect to $L$. To study the difference between the quadratic parts, we will Bloch decompose some response operators. We rewrite the difference as a difference between a Riemann sum and a corresponding integral. We finally prove that the $O(L^{-1})$ speed of convergence comes from the lack of regularity of the integrand.


\section{The linear and quadratic contributions of the defect}
\label{sec:linear}

\subsection{The supercell Coulomb operator}

Recall that the spaces $\cC$ and $\cC'$ were defined respectively in~\eqref{eq:CoulombSpace} and~\eqref{eq:BLSpace}. We introduce the Coulomb operator on the whole space $v_c : \cC \to \cC'$, defined in Fourier by
\begin{equation} \label{eq:def:vc}
	\forall \bk \in \R^3 \setminus \{ \bnull \}, \quad \widehat{v_c(f)}(\bk) = 4 \pi \dfrac{\widehat{f}(\bk)}{| \bk|^2}.
\end{equation}
We also introduce the operator $\sqrt{v_c}$ defined in Fourier by
\begin{equation} \label{eq:def:sqrt_vc}
	\forall \bk \in \R^3 \setminus \{ \bnull \}, \quad \widehat{\sqrt{v_c}(f)}(\bk) = \sqrt{4 \pi} \dfrac{\widehat{f}(\bk)}{| \bk|}.
\end{equation}
The following lemma is straightforward.
\begin{lemma} \label{lem:vc}
	The operator $v_c$ is unitary from $\cC$ to $\cC'$, with $v_c^* = v_c^{-1}$. The operator $\sqrt{v_c}$ is unitary from\footnote{To be accurate, one should introduce two operators $\sqrt{v_c}_1$ and $\sqrt{v_c}_2$ with domain $\cC$ and $L^2(\R^3)$ respectively. However, in order to keep the notation simple, we use the same name for these two operators.} $\cC$ to $L^2(\R^3)$ and from $L^2(\R^3)$ to $\cC'$. Finally, it holds $v_c = \sqrt{v_c} \sqrt{v_c}$.
\end{lemma}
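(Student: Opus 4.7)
The plan is to work entirely on the Fourier side, where $\cC$, $L^2(\R^3)$ and $\cC'$ all become weighted $L^2$-spaces and the operators $v_c$, $\sqrt{v_c}$ are simple multiplication operators; the lemma then reduces to direct computation.

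More precisely, by Plancherel applied to $\nabla V$, the map $f \mapsto \widehat{f}$ identifies $\cC$ with $L^2(\R^3; 4\pi |\bk|^{-2} \rd \bk)$ by the very definition~\eqref{eq:CoulombSpace}, identifies $\cC'$ with $L^2(\R^3; |\bk|^2 \rd \bk)$ through $\| V \|_{\cC'}^2 = \int |\bk|^2 |\widehat{V}(\bk)|^2 \rd \bk$, and identifies $L^2(\R^3)$ with the unweighted $L^2(\R^3; \rd \bk)$. Under these three isometric isomorphisms, $v_c$ from~\eqref{eq:def:vc} acts as multiplication by the symbol $m(\bk) := 4\pi |\bk|^{-2}$, while $\sqrt{v_c}$ from~\eqref{eq:def:sqrt_vc} acts as multiplication by $\sqrt{m(\bk)} = \sqrt{4\pi}\,|\bk|^{-1}$.

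With this in hand, each unitarity statement is a one-line change of weight. For $\sqrt{v_c}:\cC \to L^2(\R^3)$, I would just compute
\[
\bigl\| \sqrt{v_c}\, f \bigr\|_{L^2(\R^3)}^2 = \int_{\R^3} \bigl| \widehat{\sqrt{v_c}\, f}(\bk) \bigr|^2 \rd \bk = 4\pi \int_{\R^3} \frac{|\widehat{f}(\bk)|^2}{|\bk|^2} \rd \bk = \| f \|_{\cC}^2,
\]
and the other two isometry claims are obtained identically by pairing the appropriate weight with the symbol. Surjectivity in each case is automatic: the symbols $m$ and $\sqrt{m}$ are strictly positive on $\R^3 \setminus \{\bnull\}$, so multiplication by the reciprocal symbol provides a two-sided inverse that evidently maps between the relevant weighted spaces. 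The identity $v_c = \sqrt{v_c} \sqrt{v_c}$ is then immediate because multiplication operators compose by multiplying their symbols and $(\sqrt m)^2 = m$, and the relation $v_c^* = v_c^{-1}$ follows directly from the unitarity of $v_c:\cC \to \cC'$.

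There is no real obstacle to overcome here; the only thing to watch is the normalization convention for the Fourier transform and the consistent tracking of the $4\pi$ factors that enter the three different inner products (and the minor bookkeeping that $\sqrt{v_c}$ technically denotes two different operators with distinct domains, as remarked in the footnote of the statement).
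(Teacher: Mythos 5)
The paper offers no proof of this lemma (it is dismissed as ``straightforward''), and your strategy — diagonalize everything on the Fourier side, where the three spaces become weighted $L^2$ spaces and $v_c$, $\sqrt{v_c}$ are multiplication by $m$, $\sqrt m$ — is precisely the natural one, so there is no divergence in approach to discuss.

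There is, however, a point where your assertion that the ``other two isometry claims are obtained identically'' is too quick, and actually conceals a normalization issue you should resolve explicitly rather than wave at. You identify $\cC'$ with $L^2(\R^3;|\bk|^2\,\rd\bk)$, consistent with the inner product $\bra f,g\ket_{\cC'}=\int\overline{\nabla f}\cdot\nabla g$ stated after~\eqref{eq:BLSpace}. But with that normalization the calculation for $\sqrt{v_c}:L^2\to\cC'$ gives
\[
\left\|\sqrt{v_c}\,g\right\|_{\cC'}^2=\int_{\R^3}|\bk|^2\left|\widehat{\sqrt{v_c}\,g}(\bk)\right|^2\rd\bk=4\pi\int_{\R^3}|\widehat g(\bk)|^2\,\rd\bk=4\pi\,\|g\|_{L^2}^2,
\]
and similarly $\|v_c f\|_{\cC'}^2=4\pi\,\|f\|_\cC^2$, so neither of these maps is an isometry. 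The factor of $4\pi$ does \emph{not} cancel the way your first computation (for $\cC\to L^2$) suggests it should. The resolution is that the norm on $\cC'$ has to be $\|V\|_{\cC'}^2=\frac{1}{4\pi}\int|\nabla V|^2$ — this is what one gets by computing the dual norm of $\cC$ with $L^2$ pivoting, and it is exactly what the paper uses in the supercell analogue $\cC'_L$ in~\eqref{eq:supercellBLSpace}, where the $\frac{1}{4\pi}$ prefactor appears explicitly. The formula after~\eqref{eq:BLSpace} has simply dropped this constant. Since you flagged ``consistent tracking of the $4\pi$ factors'' as the only thing to watch, you should actually carry the two remaining computations through and note that they force the properly normalized $\cC'$ norm; as written, your identification of $\cC'$ with $L^2(|\bk|^2\,\rd\bk)$ without the $\frac{1}{4\pi}$ would make two of the three unitarity claims false, which is the opposite of ``obtained identically''. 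Everything else — the Plancherel reductions, surjectivity via inversion of the strictly positive symbol, $v_c=\sqrt{v_c}\sqrt{v_c}$ by multiplicativity of symbols, and $v_c^*=v_c^{-1}$ from unitarity — is fine.
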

For $f, g \in \cC$, it holds that $D(f,g)$ defined in~\eqref{eq:CoulombSpace} is also equal to
\begin{equation} \label{eq:D=vc}
	D(f, g) = \bra f,  v_c(g)\ket_{\cC, \cC'} =\bra v_c(f),  g \ket_{\cC', \cC} = \bra \sqrt{v_c} (f), \sqrt{v_c} (g) \ket_{L^2(\R^3)}.
\end{equation}

\medskip

In order to introduce the supercell equivalent to these objects, we introduce the space of $L^2_\per(\WS_L)$ functions with null mean-value:
\begin{equation*}
	L^2_{0,\per}(\WS_L) = \left\{ f \in L^2_\per(\WS_L), \ c_\bnull^L(f) = 0 \right\}.
\end{equation*}
It is an Hilbert space when endowed with the $L^2_\per(\WS_L)$ inner product. The supercell Coulomb space is
\begin{equation} \label{eq:supercellCoulombSpace}
	\cC_L := \left\{ f \in \sS_\per'(\WS_L), \ c_\bnull^L(f) = 0, \ \| f \|_{\cC_L}^2 := 4 \pi \sum_{\bk \in L^{-1} \RLat \setminus \{ \bnull \}} \dfrac{| c_\bk^L(f) |^2}{| \bk |^2} < \infty \right\}.
\end{equation}
It is a Hilbert space when endowed with the $D_L( \cdot, \cdot)$ inner product defined in~\eqref{eq:defDL}. Taking $L^2_{0,\per}(\WS_L)$ as the pivoting space, its dual is the supercell Beppo-Levi space
\begin{equation} \label{eq:supercellBLSpace}
	\cC'_L := \left\{ f \in L^6_\per(\WS_L), \ c_\bnull^L(f) = 0, \ \| f \|_{\cC'_{L}}^2 := \dfrac{1}{4 \pi} \sum_{\bk \in L^{-1} \RLat} | \bk |^2 {| c_\bk^L(f) |^2} < \infty \right\},
\end{equation}
which can also be seen as an Hilbert space with its natural inner product. Finally, the supercell Coulomb operators $v_{c}^L : \cC_{L} \to \cC_L'$ is defined in Fourier by
\begin{equation} \label{eq:def:vcL}
	 \forall \bk \in L^{-1}\RLat \setminus \{ \bnull \}, \quad
	  c^L_{\bk} \left(v_{c}^L(f) \right) = 4 \pi \dfrac{c_{\bk}^L(f)}{| \bk|^2}
	 \quad \text{and} \quad
	 c^L_\bnull \left( v_{c}^L(f)  \right) = 0.
\end{equation}
We also introduce the operators $\sqrt{v_{c}^L}$ defined in Fourier by
\begin{equation} \label{eq:def:sqrt_vcL}
	 \forall \bk \in L^{-1}\RLat \setminus \{ \bnull \}, \quad
	  c^L_{\bk} \left(\sqrt{v_{c}^L}(f) \right) = \sqrt{4 \pi} \dfrac{c_{\bk}^L(f)}{| \bk|}
	 \quad \text{and} \quad
	 c^L_\bnull \left( \sqrt{v_{c}^L}(f)  \right) = 0.
\end{equation}
\begin{lemma} \label{lem:vcL}
	The operators $v_c^L$ are unitary from $\cC_L$ to $\cC'_L$, so that $(v_c^L)^\ast = (v_c^L)^{-1}$. The operators $\sqrt{v_{c}^L}$ are unitary from $\cC_L$ to $L^2_{0,\per}(\WS_L)$ and from $L^2_{0,\per}(\WS_L)$ to $\cC_L'$. Finally, it holds $v_c^L = \sqrt{v_{c}^L} \sqrt{v_{c}^L}$.
\end{lemma}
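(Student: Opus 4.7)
The plan is a direct verification in Fourier coefficients, mirroring the (omitted) proof of Lemma~\ref{lem:vc}. All three spaces $\cC_L$, $L^2_{0,\per}(\WS_L)$, and $\cC_L'$ consist of tempered periodic distributions on $\WS_L$ with vanishing zero mode, and, by construction~\eqref{eq:supercellCoulombSpace}--\eqref{eq:supercellBLSpace}, their squared norms are the weighted $\ell^2$ sums $\sum_{\bk \neq \bnull} w(\bk)|c_\bk^L(f)|^2$ with weights $4\pi/|\bk|^2$, $1$, and $|\bk|^2/(4\pi)$ respectively. The operators $\sqrt{v_c^L}$ and $v_c^L$ defined in~\eqref{eq:def:sqrt_vcL}--\eqref{eq:def:vcL} act diagonally in Fourier, as multiplication by $\sqrt{4\pi}/|\bk|$ and $4\pi/|\bk|^2$, so each rescaling precisely sends one of these weights onto the next.

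Concretely, I would first show that $\sqrt{v_c^L} : \cC_L \to L^2_{0,\per}(\WS_L)$ is well-defined and isometric by Parseval:
\[
 \| \sqrt{v_c^L}\, f \|_{L^2_\per(\WS_L)}^2 = \sum_{\bk \in L^{-1}\RLat \setminus \{\bnull\}} \left| \sqrt{4\pi}\,\frac{c_\bk^L(f)}{|\bk|} \right|^2 = 4\pi \sum_{\bk \in L^{-1}\RLat \setminus \{\bnull\}} \dfrac{|c_\bk^L(f)|^2}{|\bk|^2} = \| f \|_{\cC_L}^2,
\]
and then exhibit its inverse $T : L^2_{0,\per}(\WS_L) \to \cC_L$ by $c_\bk^L(Tg) := |\bk|\, c_\bk^L(g)/\sqrt{4\pi}$, which is likewise isometric and clearly satisfies $T\sqrt{v_c^L} = \mathrm{Id}_{\cC_L}$, $\sqrt{v_c^L}\, T = \mathrm{Id}_{L^2_{0,\per}(\WS_L)}$ at the level of Fourier coefficients. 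The same computation, with the target weight $1$ replaced by $|\bk|^2/(4\pi)$, shows that $\sqrt{v_c^L} : L^2_{0,\per}(\WS_L) \to \cC_L'$ is also unitary.

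The identity $v_c^L = \sqrt{v_c^L}\sqrt{v_c^L}$ is immediate from the Fourier definitions, since the composition multiplies $c_\bk^L$ by $(\sqrt{4\pi}/|\bk|)^2 = 4\pi/|\bk|^2$. The unitarity of $v_c^L : \cC_L \to \cC_L'$, together with $(v_c^L)^* = (v_c^L)^{-1}$, then follows either by composing the two unitaries $\cC_L \to L^2_{0,\per}(\WS_L) \to \cC_L'$, or by a third Parseval computation of the same flavour.

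There is no substantial difficulty here. The only point requiring minor attention is that the zero Fourier mode must be excluded from every space, so that division by $|\bk|$ is unambiguous; this is built into the definitions~\eqref{eq:supercellCoulombSpace}, \eqref{eq:supercellBLSpace}, and the normalization $c_\bnull^L(v_c^L f) = c_\bnull^L(\sqrt{v_c^L}f) = 0$ in~\eqref{eq:def:vcL}--\eqref{eq:def:sqrt_vcL}. Everything else reduces to elementary manipulations of weighted $\ell^2$ sums.
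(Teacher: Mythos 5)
Your proof is correct and follows the only natural route: diagonal Fourier-multiplier verification via Parseval, with the zero mode excluded by construction. The paper gives no proof for this lemma (it treats it, like its whole-space analogue Lemma~\ref{lem:vc}, as straightforward), so your write-up simply supplies the elementary details the authors chose to omit.
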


\begin{remark} \label{rem:sqrtVc}
	It is convenient to extend the definition of $\sqrt{v_c^L}$ to $L^2_\per(\WS_L)$ by formally dropping the $\bnull$-Fourier coefficient. In this case, $\sqrt{v_c^L}$ is no longer unitary, but is still bounded by~$1$.
\end{remark}

For $f,g \in \cC_L$, it holds
\begin{equation} \label{eq:normC=normC'}
	D_L(f,g) = \bra f, v_c^L(g) \ket_{\cC_L, \cC_L'} = \bra v_c^L(f), g \ket_{\cC_L', \cC_L} =  \left\bra \sqrt{v_c^L}(f), \sqrt{v_c^L} (g) \right\ket_{L^2_{0,\per}(\WS_L)}.
\end{equation}

\begin{remark} \label{rem:theExtraTerm}
	If we choose not to set $c_L = 0$ in~\eqref{eq:GLFourier}, then~\eqref{eq:normC=normC'} should be modifid to read
	\[
		D_L(f,g) = \left\bra \sqrt{v_c^L}(f), \sqrt{v_c^L} (g) \right\ket_{L^2_{0,\per}(\WS_L)}+c_L  | \WS_L | \overline{c_\bnull^L(f)} c_\bnull^L(g).
	\]
	Together with the fact that for $\nu \in \cN(\eta)$ and $L \ge L^*$, it holds $c_\bnull^L(\nu) = \dfrac{q}{| \WS_L |^{1/2}}$,
	this eventually explains the extra term in Remark~\ref{rem:role_of_cL}.
\end{remark}

\subsection{First properties of $Q_\nu^L$ and $Q_\nu$}
In this section, we study the minimizers $Q_\nu$ and $Q_\nu^L$ of~\eqref{eq:Jnu} and~\eqref{eq:JnuL} respectively. In particular, we identify the linear and quadratic contribution of $\nu$ in these minimizers. We only state the results for $Q_\nu^L$, in order to emphasize the dependence in $L$ of our bounds. Similar results hold for $Q_\nu$, but we will not enunciate them for brevity. 

\medskip

We let $\eta > 0$ and $L^\ast \in \N^*$ be as in Theorem~\ref{th:sc_nu_L}, and we introduce $\sC$ a simple positively oriented loop that encloses the spectrum of the operators $H_{\nu}^L$ below $\varepsilon_F$ for all $\nu \in \cN(\eta)$ and all $L \ge L^*$. This is possible thanks to the last property of Theorem~\ref{th:sc_nu_L}. Let $\Sigma \in \R$ be such that
\[
	\forall \nu \in \cN(\eta), \quad  \Sigma \le \inf \sigma \left( H_\nu \right) -1  \quad \text{and} \quad \forall L > L^\ast, \quad \Sigma \le \inf \sigma \left( H_\nu^L \right) -1,
\]
we take $\sC = \sC_1 \cup \sC_2 \cup \sC_3 \cup \sC_4$, where $\sC_1 = [ \varepsilon_F - \ri, \varepsilon_F + \ri]$, $\sC_2 = [ \varepsilon_F +\ri, \Sigma  + \ri]$, $\sC_3 = [\Sigma + \ri , \Sigma  - \ri]$ and $\sC_4 = [\Sigma-\ri, \varepsilon_F - \ri]$. 

\begin{figure}[h!]
\begin{center}
\begin{tikzpicture}
	\fill[red!80] (-3, -0.1) rectangle (0, 0.1);
	\fill[red!80] (2, -0.1) rectangle (5, 0.1);
	\draw[line width=2, red!80] (-3.2, -0.1) -- (-3.2, 0.1);
	\draw[line width=2, red!80] (0.5, -0.1) -- (0.5, 0.1);
	\draw[line width=2, red!80] (0.2, -0.1) -- (0.2, 0.1);
	\draw[line width=2, red!80] (1.5, -0.1) -- (1.5, 0.1);
	\draw[line width=2, red!80] (1.7, -0.1) -- (1.7, 0.1);
	
	\draw[->] (-5, 0) -> (6, 0);
	
	\draw[line width=2] (-4, 0.2) -- (-4, -0.2);
	\node at (-4.2, -0.5) {$\Sigma$};
	
	\draw (1, -1) -- (1, 1) -- (-4, 1) -- (-4, -1) -- (1, -1);
	\draw (-1.8, 1.2) -- (-2, 1) -- (-1.8, 0.8);
	\draw (-2.2, -1.2) -- (-2, -1) -- (-2.2, -0.8);

	\node at (1.3, -0.5) {$\varepsilon_F$};
	\node at (3, 0.5) {$\sigma(H_\nu)$};
	
	\node at (1.3, 0.5) {$\sC_1$};
	\node at (-1, 1.4) {$\sC_2$};	
	\node at (-4.5, 0.5) {$\sC_3$};
	\node at (-1, -1.5) {$\sC_4$};
	
\end{tikzpicture}
\caption{The loop $\sC$.}
\label{fig:sC}
\end{center}
\end{figure}

With this notation, the Cauchy residue theorem states that, for all $\nu \in \cN(\eta)$ and all $L \ge L^\ast$, it holds
\[
	\gamma_\nu^L = \mathds{1}(H_{\nu}^L \le \varepsilon_F) = \dfrac{1}{2 \ri \pi} \oint_{\sC} \dfrac{\rd \lambda}{\lambda - H_{\nu}^L} \rd \lambda.
\]
In particular, since $Q_\nu^L :=  \gamma_\nu^L - \gamma_0^L$, it holds that
\begin{equation} \label{eq:decompositionQnuL}
	Q_\nu^L =  \dfrac{1}{2 \ri \pi} \oint_\sC \left( \dfrac{1}{\lambda - H_{\nu}^L} - \dfrac{1}{\lambda - H_{0}^L} \right) \rd \lambda
		=  \dfrac{1}{2 \ri \pi} \oint_\sC \dfrac{1}{\lambda - H_{\nu}^L} V_\nu^L \dfrac{1}{\lambda - H_{0}^L} \rd \lambda
		= Q_{\nu,1}^L + \widetilde{Q_{\nu,2}^L},
\end{equation}
where we set
\begin{equation} \label{eq:def:Q1}
	Q_{\nu,1}^L := \dfrac{1}{2 \ri \pi} \oint_{\sC}  \dfrac{1}{\lambda - H_{0}^L} V_{\nu}^L \dfrac{1}{\lambda - H_{0}^L} \rd \lambda
\end{equation}
and
\[
	 \widetilde{Q_{\nu,2}^L} :=   \dfrac{1}{2 \ri \pi} \oint_{\sC} \left( \dfrac{1}{\lambda - H_{\nu}^L} V_{\nu}^L \dfrac{1}{\lambda - H_{0}^L} V_{\nu}^L \dfrac{1}{\lambda - H_{0}^L} \right) \rd \lambda.
\]

The decomposition~\eqref{eq:decompositionQnuL} is motivated by the following lemma, which is very similar to~\cite[Lemma 3]{Cances2010}. In the sequel,  we consider the Banach spaces $\cQ^L = \left\{ Q^L \in \cS(L^2_\per(\WS_L)), \ \left\| Q^L \right\|_{\cQ^L} < \infty \right\} $ endowed with the norm (we denote by $\fS_p^L := \fS_p(L^2_\per(\WS_L))$ for clarity)
\[
	 \| Q^L \|_{\cQ^L} :=  \left\| (1 - \Delta^L)^{1/2} Q^L \right\|_{\fS_2^L}
		 +  \sum_{\alpha \in \{ +,-\}} \left\| (1 - \Delta^L)^{1/2} Q^{\alpha \alpha, L} (1 - \Delta^L)^{1/2} \right\|_{\fS_1^L}.
\]
Here, we denoted by $Q_\nu^{++,L} := (1 - \gamma_0^L) Q_\nu^L (1 - \gamma_0^L)$ and $Q_\nu^{--,L} := \gamma_0^L Q_\nu^L \gamma_0^L$.
\begin{lemma} \label{lem:controlQ}
There exists $C \in \R^+$ such that
\begin{equation} \label{eq:controlV}
	\forall \nu \in \cN(\eta), \ \forall L > L^*, \quad \left\| V_{\nu}^L \right\|_{\cC_L'} \le 2 \| \nu \|_{\cC_L} \le C \| \nu \|_{L^2(\R^3)}.
\end{equation}
Moreover, $Q_{\nu,1}^L$ and $\widetilde{Q_{\nu, 2}^L}$ are in $\cQ^L$, and there exists $C \in \R^+$ such that 
\begin{equation} \label{eq:controlQ1Q2}
	\forall \nu \in \cN(\eta), \ \forall L \ge L^*, \quad
	\left\| Q_{\nu,1}^L \right\|_{\cQ^L} \le C  \| \nu \|_{L^2(\R^3)}
	\quad \text{and} \quad
	\left\| \widetilde{Q_{\nu, 2}^L} \right\|_{\cQ^L} \le C  \| \nu \|_{L^2(\R^3)}^2.
\end{equation}
Finally, it holds
\[
	\forall \nu \in \cN(\eta), \quad \forall L \ge L^*, \quad Q_{\nu,1}^L \in \fS_1^L \quad \text{and} \quad \Tr_{L^2_\per(\WS_L)} \left( Q_{\nu,1}^L \right) = 0.
\]
\end{lemma}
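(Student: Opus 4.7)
The plan is to prove the three assertions of the lemma in sequence, following the strategy of~\cite[Lemma~3]{Cances2010} adapted to the supercell setting.

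First, for \eqref{eq:controlV}, I would start from the self-consistent equation of Theorem~\ref{th:sc_nu_L}, which gives $V_\nu^L = v_c^L(\rho_{Q_\nu^L} - \nu_L)$. Since $v_c^L : \cC_L \to \cC_L'$ is an isometry by Lemma~\ref{lem:vcL}, we get $\|V_\nu^L\|_{\cC_L'} \le \|\rho_{Q_\nu^L}\|_{\cC_L} + \|\nu_L\|_{\cC_L}$, and the task reduces to showing $\|\rho_{Q_\nu^L}\|_{\cC_L} \le \|\nu_L\|_{\cC_L}$ for small $\nu$. I would obtain this via an implicit function argument around $\nu = 0$ applied to the SCF equation for $V_\nu^L$: the differential of the SCF map at $\nu = 0$ is $1 + v_c^L \chi_0^L$, where $\chi_0^L$ is the linear response operator of the perfect crystal. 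Uniform (in $L$) invertibility of this dielectric operator on $\cC_L'$ follows from positivity of $\chi_0^L$ together with the uniform gap $g/2$ provided by Theorem~\ref{th:expCV}. The final embedding $\|\nu\|_{\cC_L} \le C\|\nu\|_{L^2(\R^3)}$ then follows from the chain $L^2(L^\supp \WS) \hookrightarrow L^{6/5}(\R^3) \hookrightarrow \cC$ together with an $L$-uniform comparison of the $\cC$ and $\cC_L$ norms on distributions supported in $L^\supp \WS$ for $L \ge L^\supp$.

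Second, for the $\cQ^L$-bounds \eqref{eq:controlQ1Q2}, I would estimate each summand of $\|\cdot\|_{\cQ^L}$ directly from the contour representations. The main analytic tool is the Kato--Seiler--Simon inequality on the torus: for $p\ge 2$,
\[
\|g(\bx) \, f(-\ri\nabla^L)\|_{\fS_p^L} \le C_p \, \|g\|_{L^p_\per(\WS_L)} \, \|f\|_{\ell^p(L^{-1}\RLat)},
\]
with $C_p$ independent of $L$. Combined with uniform resolvent bounds $\|(1-\Delta^L)^{\alpha}(\lambda - H_0^L)^{-1}\|$ for $\alpha \in \{0,\tfrac12,1\}$ on the contour $\sC$ (which follow from the uniform gap and the uniform $L^\infty_\loc$ bound on $V_0^L$), together with the splitting $V_\nu^L = (\sqrt{v_c^L})^{-1}(\sqrt{v_c^L}V_\nu^L)$ with $\sqrt{v_c^L}V_\nu^L \in L^2_{0,\per}(\WS_L)$ of norm $\|V_\nu^L\|_{\cC_L'}$, this yields $\fS_2$- and $\fS_1$-bounds for the integrand of \eqref{eq:def:Q1} that are uniform in $\lambda \in \sC$ and $L \ge L^*$. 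Integrating over the bounded contour and invoking the first step produces $\|Q_{\nu,1}^L\|_{\cQ^L} \le C\|\nu\|_{L^2(\R^3)}$. For $\widetilde{Q_{\nu,2}^L}$, the same estimates apply with one additional factor $V_\nu^L(\lambda-H_\nu^L)^{-1}$, whose resolvent is uniformly bounded thanks to Theorem~\ref{th:sc_nu_L}; this produces an extra $\|V_\nu^L\|_{\cC_L'}$ factor and hence the quadratic bound.

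Third, for the trace-class property and $\Tr_{L^2_\per(\WS_L)}(Q_{\nu,1}^L) = 0$: since $H_0^L$ commutes with $\gamma_0^L$, the resolvents $(\lambda - H_0^L)^{-1}$ commute with both projectors $\gamma_0^L$ and $1 - \gamma_0^L$. Inserting these around $V_\nu^L$ in the representation of $Q_{\nu,1}^{L,\pm\pm}$ shows that the contour integrand reduces to $(\lambda - \epsilon)^{-2}$ for $\epsilon$ in the spectrum of $H_0^L$, whose contour integral vanishes (second-order pole with zero residue). Thus $Q_{\nu,1}^{L,++} = Q_{\nu,1}^{L,--} = 0$, and $Q_{\nu,1}^L$ reduces to its off-diagonal blocks $Q_{\nu,1}^{L,+-} + Q_{\nu,1}^{L,-+}$. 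Using the partial fraction identity, each off-diagonal block has the explicit form of $V_\nu^L$ sandwiched between the projectors divided by a gap resolvent of norm $\le 1/g$; a standard factorization through $\sqrt{v_c^L}$ upgrades the Hilbert--Schmidt estimate to an $\fS_1^L$ estimate. Finally, cyclicity of the trace together with $\gamma_0^L(1 - \gamma_0^L) = 0$ gives $\Tr_{L^2_\per(\WS_L)}(Q_{\nu,1}^L) = 0$.

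The main obstacle lies in the first step: the naive variational inequality $\cF_\nu^L(Q_\nu^L) \le 0$, combined with the positivity $\Tr_{L^2_\per(\WS_L)}((H_0^L-\varepsilon_F)Q_\nu^L) \ge 0$ valid on $\cK^L$, only yields a square-root estimate $\|\rho_{Q_\nu^L}\|_{\cC_L}^2 \le C\|\nu\|_{L^2(\R^3)}$. Upgrading to the desired linear bound with an $L$-independent constant requires the full implicit function argument together with uniform-in-$L$ invertibility of $1 + v_c^L\chi_0^L$, for which the spectral stability provided by Theorems~\ref{th:expCV} and~\ref{th:sc_nu_L} is essential.
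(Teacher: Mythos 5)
Your diagnosis of the first step is incorrect, and this leads you to an unnecessarily complicated detour. You claim the variational inequality ``only yields a square-root estimate,'' but this is wrong: since $D_L$ is the squared $\cC_L$-norm, the inequality $\cF_\nu^L(Q_\nu^L)\le\cF_\nu^L(0)$, combined with the positivity of $\Tr_{L^2_\per(\WS_L)}([H_0^L-\varepsilon_F]Q_\nu^L)=\Tr(|H_0^L-\varepsilon_F|(Q_\nu^L)^2)\ge 0$ (which uses that $Q_\nu^L$ is a difference of projectors so $(Q_\nu^L)^2=Q_\nu^{++,L}-Q_\nu^{--,L}$), immediately gives $\tfrac12\|\rho_{Q_\nu^L}-\nu_L\|_{\cC_L}^2\le\|\nu_L\|_{\cC_L}^2$, hence the \emph{linear} bound $\|V_\nu^L\|_{\cC_L'}=\|\rho_{Q_\nu^L}-\nu_L\|_{\cC_L}\le 2\|\nu_L\|_{\cC_L}$. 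This is exactly the paper's argument; your proposed implicit-function theorem around $\nu=0$ and uniform invertibility of $1+v_c^L\chi_0^L$ are correct ideas but redundant here, and they require you to first establish on their own a version of Lemma~\ref{lem:propertiesLL}, which is much more work than the one-line variational estimate.

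The more serious issue is the treatment of $\widetilde{Q_{\nu,2}^L}$. The norm $\|\cdot\|_{\cQ^L}$ contains the $\fS_1^L$-norm of $(1-\Delta^L)^{1/2}\widetilde{Q_{\nu,2}^{\alpha\alpha,L}}(1-\Delta^L)^{1/2}$, and here your claim ``the same estimates apply with one additional factor'' breaks down. For $Q_{\nu,1}^L$ the $\pm\pm$ blocks vanish by Cauchy's theorem, so that term is trivially zero; but for $\widetilde{Q_{\nu,2}^L}$ the $\pm\pm$ blocks do not vanish, and moreover the extra resolvent $(\lambda-H_\nu^L)^{-1}$ does not commute with $\gamma_0^L$, so the block-decomposition trick is not directly available. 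Counting Schatten exponents: each $V_\nu^L(\lambda-H_0^L)^{-1}$ factor is only $\fS_6^L$ by Kato--Seiler--Simon, so with two $V_\nu^L$ insertions you land in $\fS_3^L$, not $\fS_1^L$, even after absorbing the two outer $(1-\Delta^L)^{1/2}$ factors. The paper resolves this by expanding further: it writes $\widetilde{Q_{\nu,2}^L}=\sum_{l=2}^{5}Q_{\nu,l}^L+\widetilde{Q_{\nu,6}^L}$, treats each finite-order $Q_{\nu,l}^L$ by exploiting that in its $\pm\pm$ block the Hilbert--Schmidt motif $P_\pm^L V_\nu^L P_\mp^L$ appears at least twice, and handles the tail $\widetilde{Q_{\nu,6}^L}$ by pure Schatten-exponent counting (six $\fS_6^L$ factors give $\fS_1^L$). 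Without this, or an equivalent device, the $\fS_1^L$ part of the $\cQ^L$-bound for $\widetilde{Q_{\nu,2}^L}$ is not established.

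The remaining points --- the use of the supercell Kato--Seiler--Simon inequality with $L$-uniform constants, the vanishing of $Q_{\nu,1}^{\pm\pm,L}$ by residue calculus, and the trace identity via cyclicity and $\gamma_0^L(1-\gamma_0^L)=0$ --- agree with the paper's argument.
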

The proof of Lemma~\ref{lem:controlQ} is postponed until Section~\ref{sec:proof:controlQ}. As a consequence, we see that $\widetilde{Q_{\nu,2}^L}$ contains only high order contributions in $\nu$. 

\medskip

The next lemma is a transposition of~\cite[Proposition 1]{Cances2008} in the supercell case. The proof follows the one in~\cite{Cances2008} upon replacing the Kato-Seiler-Simon inequality by the periodic Kato-Seiler-Simon inequality (see Corollary~\ref{cor:periodicKSS} below).

\begin{lemma} \label{lem:controlRhoQ}
There exists $C \in \R^+$ such that, for all $L \ge L^*$, all $Q^L \in \cQ^L$ and all $V^L \in \cC_L'$, it holds
\[
	\left| \left\bra \rho_{Q^L}, V^L \right\ket_{\cC_L, \cC_L'} \right| := 
	\left| \Tr_{L^2_\per(\WS_L)} \left( Q^L V^L   \right) \right| \le C \left\| Q^L \right\|_{\cQ^L} \left\| V^L \right\|_{\cC_L'}.
\]
\end{lemma}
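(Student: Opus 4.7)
The plan is to adapt the whole-space proof of~\cite[Proposition 1]{Cances2008}, transposing every step to the supercell setting and checking that all constants can be chosen uniformly in $L \ge L^*$. I would start by decomposing $Q^L = \sum_{\alpha,\beta \in \{+,-\}} Q^{\alpha\beta,L}$ according to the spectral projections $\gamma_0^L$ and $1-\gamma_0^L$, and treat the four blocks separately. The diagonal blocks $Q^{\pm\pm,L}$ carry the full trace-class regularity $(1-\Delta^L)^{1/2} Q^{\pm\pm,L}(1-\Delta^L)^{1/2} \in \fS_1^L$, whereas the off-diagonal blocks $Q^{+-,L},Q^{-+,L}$ only satisfy $(1-\Delta^L)^{1/2} Q^L \in \fS_2^L$, so the two require different Hölder estimates.

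For the diagonal blocks, the Schatten-Hölder inequality gives
\[
\left| \Tr_{L^2_\per(\WS_L)}\!\left( Q^{\pm\pm,L} V^L \right) \right| \le \left\| (1-\Delta^L)^{1/2} Q^{\pm\pm,L} (1-\Delta^L)^{1/2} \right\|_{\fS_1^L} \, \left\| (1-\Delta^L)^{-1/2} V^L (1-\Delta^L)^{-1/2} \right\|_{\mathrm{op}},
\]
and there remains to show that the operator norm on the right is bounded by $C \|V^L\|_{\cC_L'}$, uniformly in $L$. By duality, this amounts to a bilinear estimate on $(\phi,\psi) \mapsto \int_{\WS_L} V^L \overline{\phi} \psi$ for $\phi,\psi \in H^1_\per(\WS_L)$. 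Since $V^L$ has zero mean, I can freely subtract the mean from $\overline{\phi}\psi$; the resulting zero-mean test function lies in $L^{6/5}_{0,\per}(\WS_L) \hookrightarrow \cC_L$, and Hölder's inequality followed by the Sobolev embedding $H^1_\per(\WS_L) \hookrightarrow L^{12/5}_\per(\WS_L)$ closes the estimate. Uniformity in $L$ of these two embedding constants is obtained by the scaling $f(\cdot) \mapsto f(L\cdot)$, which identifies the $L^p_\per(\WS_L)$, $\cC_L$ and $\cC_L'$ norms with their unit-cell counterparts up to powers of $L$ that cancel exactly.

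For the off-diagonal blocks, I would apply instead an $\fS_2$-Hölder inequality, yielding after cyclicity a bound of the form $\|(1-\Delta^L)^{1/2} Q^L\|_{\fS_2^L} \cdot \|(1-\Delta^L)^{-1/2} \gamma_0^L V^L (1-\gamma_0^L)\|_{\fS_2^L}$ (and its analogue for $Q^{-+,L}$). The second factor has to be controlled by $C\|V^L\|_{\cC_L'}$. I would factorize $V^L = \sqrt{v_c^L}(U^L)$ with $\|U^L\|_{L^2_\per(\WS_L)} = \|V^L\|_{\cC_L'}$, exploit the smoothing property of $\gamma_0^L$ coming from the spectral gap \textbf{(A1)} and the uniform $L^\infty$-bound on $V_0^L$ provided by Theorem~\ref{th:expCV} (which yields uniform bounds of the form $\|(1-\Delta^L)^{s} \gamma_0^L (1-\Delta^L)^{-s}\|_{\mathrm{op}} \le C$ for $|s|$ bounded), and then invoke the periodic Kato-Seiler-Simon inequality of Corollary~\ref{cor:periodicKSS} to bound the remaining Hilbert-Schmidt norm by a product of an $L^p_\per(\WS_L)$-norm (of $V^L$ or $U^L$) and an $\ell^p$-norm on $L^{-1}\RLat$, the latter being a Riemann sum that converges to a finite integral uniformly in $L \ge L^*$.

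The main obstacle is ensuring that every constant produced along the way is independent of $L$. Two ingredients make this possible: for the Sobolev-type embeddings, the scaling $\WS_L \to \WS$ (valid thanks to the zero-mean condition built into $\cC_L$ and $\cC_L'$); for the Schatten-norm estimates on $L^2_\per(\WS_L)$, the periodic Kato-Seiler-Simon inequality whose constants depend only on Riemann sums converging to finite integrals for $L \ge L^*$. Once both ingredients are in place, the whole-space arguments of~\cite[Proposition 1]{Cances2008} transpose directly to the supercell, as claimed in the remark preceding the lemma.
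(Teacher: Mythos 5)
Your blueprint---block-decompose $Q^L$, apply $\fS_1\times\fS_\infty$ H\"older on the diagonal blocks and $\fS_2\times\fS_2$ on the off-diagonal ones, and secure $L$-uniform constants via the periodic Kato--Seiler--Simon inequality---is exactly the transposition of \cite[Proposition~1]{Cances2008} that the paper alludes to, and your treatment of the diagonal blocks essentially works. (One small caveat there: the $L$-uniformity of $H^1_\per(\WS_L)\hookrightarrow L^{12/5}_\per(\WS_L)$ does not follow from pure scaling, since under $f\mapsto f(L\cdot)$ the $L^2$ and gradient parts of the $H^1$ norm pick up \emph{different} powers of $L$; the uniform embedding holds, but one should argue via tiling/extension or via the homogeneous Gagliardo--Nirenberg inequality after separating off the mean.)

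The genuine gap is in the off-diagonal part. You need $\|(1-\Delta^L)^{-1/2}\gamma_0^L V^L(1-\gamma_0^L)\|_{\fS_2^L}\le C\|V^L\|_{\cC_L'}$ uniformly in $L$, and the ingredients you invoke---the smoothing bound $\|(1-\Delta^L)^{s}\gamma_0^L(1-\Delta^L)^{-s}\|_{\mathrm{op}}\le C$ together with periodic KSS applied to the multiplication operator $V^L$---cannot produce it. The Schatten index in Lemma~\ref{lem:periodicKSS} is tied to the Lebesgue index of the multiplier, and the only $L^p$ norms of $V^L$ controlled $L$-uniformly by $\|V^L\|_{\cC_L'}$ are those with $p\ge 6$ (homogeneous Sobolev), yielding $\fS_p^L$ bounds for $p\ge 6$, not $\fS_2^L$; taking $p=2$ would force you to pay $\|V^L\|_{L^2_\per(\WS_L)}$, and the Poincar\'e constant on $\WS_L$ for zero-mean functions grows like $L$ (the smallest nonzero mode in $L^{-1}\RLat$ scales as $L^{-1}$), so that norm is \emph{not} $\lesssim\|V^L\|_{\cC_L'}$ uniformly. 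Writing $V^L=\sqrt{v_c^L}(U^L)$ as a function does not help either, because multiplication by the function $\sqrt{v_c^L}(U^L)$ does not factor through multiplication by $U^L$. What you are missing is the commutator mechanism that forces $\nabla V^L$ (equivalently $U^L$, up to a constant) to surface: since $Q^{+-,L}=(1-\gamma_0^L)Q^L\gamma_0^L$, one has $\gamma_0^L V^L(1-\gamma_0^L)=-\gamma_0^L[V^L,\gamma_0^L]$, and writing $[\gamma_0^L,V^L]$ through the resolvent integral together with $[\Delta^L,V^L]=\sum_j\bigl(P_j^L(P_j^LV^L)+(P_j^LV^L)P_j^L\bigr)$ places $P_j^LV^L\in L^2_\per(\WS_L)$, with $\|P_j^LV^L\|_{L^2}\le\|V^L\|_{\cC_L'}$, under the trace; only then does periodic KSS with $p=2$ close the $\fS_2^L$ bound with an $L$-uniform constant. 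This is precisely the content of Lemma~\ref{lem:commutator}(i), on which the paper's proof (following~\cite{Cances2008}) hinges; without it your off-diagonal estimate does not close.
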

In other words, the map $Q^L \mapsto \rho_{Q^L}$ is continuous from $\cQ^L$ to $\cC_L$ with a continuity bound that is independent of $L$.


\subsection{Linear response operators}

In order to study the operators $Q_{\nu,1}^L$ defined in~\eqref{eq:def:Q1}, we introduce, for $L \ge L^*$, the supercell irreducible polarizability operator $\chi^L : \cC_L' \to \cC_L$ defined by
\begin{equation} \label{eq:def:chiL}
	\chi^L : V^L \in \cC_L \mapsto \rho \left[ \dfrac{1}{2 \ri \pi} \oint_\sC \dfrac{1}{\lambda - H_0^L} V^L \dfrac{1}{\lambda - H_0^L} \rd \lambda \right] ,
\end{equation}
where we denoted by $\rho[Q^L]$ the density of $Q^L$ (in the sense of Lemma~\ref{lem:controlRhoQ}). Following the proof of Lemma~\ref{lem:controlQ}, we obtain the following lemma. In the sequel, we denote by $\cB(E,F)$ the Banach space of bounded operators from the Banach space $E$ to the Banach space $F$, and by $\cB(E) := \cB(E,E)$.
\begin{lemma} \label{lem:propertiesChiL}
	For all $L \ge L^*$, the operator $\chi^L$ is bounded from $\cC_L'$ to $\cC_L$, and there exists $C \in \R^+$ such that, for all $L \ge L^*$, it holds $\left\| \chi^L \right\|_{\cB(\cC_L', \cC_L)} \le C$.
\end{lemma}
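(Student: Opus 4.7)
The plan is to decompose the proof along the lines of Lemma~\ref{lem:controlQ} and Lemma~\ref{lem:controlRhoQ}, the only additional task being to track that every constant is uniform in $L$. For any $V^L \in \cC_L'$, introduce
$$Q^L_{V^L} := \frac{1}{2\ri\pi}\oint_\sC \frac{1}{\lambda - H_0^L}\, V^L\, \frac{1}{\lambda - H_0^L}\,\rd\lambda,$$
so that $\chi^L(V^L) = \rho[Q^L_{V^L}]$ by the definition~\eqref{eq:def:chiL}. This is exactly the object $Q_{\nu,1}^L$ of~\eqref{eq:def:Q1}, with the specific input $V_\nu^L$ replaced by an arbitrary $V^L \in \cC_L'$, so the task reduces to rerunning the relevant estimates for this more general input.

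\emph{Step 1.} I would replay verbatim the argument that controls $\|Q_{\nu,1}^L\|_{\cQ^L}$ in Lemma~\ref{lem:controlQ}, keeping the potential generic. The three ingredients are: (i) the uniform resolvent bound $\|(\lambda - H_0^L)^{-1}\|_{\cB(L^2_\per(\WS_L))}\leq C$ for $\lambda \in \sC$, which follows from the uniform-in-$L$ spectral gap $|H_0^L-\varepsilon_F|>3g/4$ of Theorem~\ref{th:expCV}; (ii) the companion smoothing estimate $\|(1-\Delta^L)^{1/2}(\lambda - H_0^L)^{-1}\|_{\cB(L^2_\per(\WS_L))}\leq C$ on the compact contour $\sC$, obtained from the same gap together with the kinetic control of $\gamma_0^L$; and (iii) the periodic Kato-Seiler-Simon inequality (Corollary~\ref{cor:periodicKSS}) to absorb factors of the form $V^L(1-\Delta^L)^{-1}$ into $\fS_2^L$ with a constant independent of $L$. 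Assembling these via the resolvent factorisation in~\eqref{eq:def:Q1} yields
$$\|Q^L_{V^L}\|_{\cQ^L}\leq C\,\|V^L\|_{\cC_L'}, \qquad C \text{ independent of } L.$$

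\emph{Step 2.} I would then apply Lemma~\ref{lem:controlRhoQ}: the density $\rho[Q^L_{V^L}]$ lies in $\cC_L$ and satisfies, for every $W^L \in \cC_L'$,
$$\left|\left\langle \rho[Q^L_{V^L}],\,W^L\right\rangle_{\cC_L,\cC_L'}\right|\leq C\,\|Q^L_{V^L}\|_{\cQ^L}\,\|W^L\|_{\cC_L'}.$$
Since $\cC_L'$ is the topological dual of $\cC_L$ with pivot $L^2_{0,\per}(\WS_L)$, taking the supremum over $W^L$ in the unit ball of $\cC_L'$ produces $\|\rho[Q^L_{V^L}]\|_{\cC_L}\leq C\,\|Q^L_{V^L}\|_{\cQ^L}$, again uniformly in $L$. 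Composing with the bound of Step 1 gives $\|\chi^L(V^L)\|_{\cC_L}\leq C\,\|V^L\|_{\cC_L'}$, hence $\|\chi^L\|_{\cB(\cC_L',\cC_L)}\leq C$ as claimed.

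The only genuine obstacle is precisely this uniformity in $L$; no new analytic idea is needed beyond what Lemma~\ref{lem:controlQ} already uses. Uniformity rests on two pillars already at our disposal: the uniform spectral gap of $H_0^L$ furnished by Theorem~\ref{th:expCV}, which keeps the contour $\sC$ at distance $\geq g/4$ from $\sigma(H_0^L)$ for all $L \geq L^\gap$ and thus bounds the resolvent on $\sC$ independently of $L$; and the periodic Kato-Seiler-Simon estimate of Corollary~\ref{cor:periodicKSS}, whose constant is $L$-independent by construction. The proof therefore amounts to a careful bookkeeping of the estimates of Lemma~\ref{lem:controlQ}, keeping the potential as a free variable and reading off the continuity constant.
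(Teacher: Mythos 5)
Your proof is correct and follows exactly the route the paper intends: the paper gives no explicit proof of this lemma but prefaces it with the remark that it is obtained ``following the proof of Lemma~\ref{lem:controlQ}'', and your two steps spell this out precisely, namely rerunning the $\cQ^L$-bound of Lemma~\ref{lem:controlQ} with a generic potential $V^L$ (using the uniform-in-$L$ resolvent/gap bounds and Corollary~\ref{cor:periodicKSS}) and then invoking Lemma~\ref{lem:controlRhoQ} to pass from the $\cQ^L$ bound to the $\cC_L$ bound on the density, both with constants independent of $L$.
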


From the definitions~\eqref{eq:sc_nu_L} and~\eqref{eq:def:Q1} and the decomposition~\eqref{eq:decompositionQnuL}, we get
\[
	\rho_{Q_{\nu,1}^L} = \chi^L \left( V_\nu^L \right) = \chi^L v_c^L \left( \rho_{Q_\nu^L} - \nu_L \right) = \chi^L v_c^L \left( \rho_{Q_{\nu,1}^L} - \nu_L \right) + \chi^L v_c^L \left( \rho_{\widetilde{Q_{\nu,2}^L}} \right).
\]
Applying the operator $\sqrt{v_c^L}$ leads after some straightforward manipulations to
\begin{equation} \label{eq:forQ1}
	 \left( 1 - \sqrt{v_c^L} \chi^L \sqrt{v_c^L}  \right) \sqrt{v_c^L} \left( \rho_{Q_{\nu,1}^L} - \nu_L \right) = - \sqrt{v_c^L} \nu_L + \sqrt{v_c^L} \chi^L \sqrt{v_c^L}  \left( \sqrt{v_c^L} \rho_{\widetilde{Q_{\nu,2}^L}} \right).
\end{equation}
In the sequel, we denote by $\cL^L$ the operator
\begin{equation} \label{eq:def:LL}
	\cL^L := - \sqrt{v_c^L} \chi^L \sqrt{v_c^L}.
\end{equation}
We have the following lemma, which is a variant of \textit{e.g.}~\cite[Proposition 2]{Cances2010}. We refer to this article for the proof.
\begin{lemma} \label{lem:propertiesLL}
	For all $L \ge L^*$, the operator $\cL^L$ is a non-negative bounded self-adjoint operator on $L^2_{0,\per}(\WS_L)$, and there exists $C \in \R^+$ such that, for all $L \ge L^*$, it holds $\left\| \cL^L \right\|_{\cB(L^2_{0,\per}(\WS_L))} \le C$. As a consequence, for all $L \ge L^*$, the operator $1 + \cL^L$ is invertible, and $(1 + \cL^L)^{-1}$ is a bounded self-adjoint operator on $L^2_{0,\per}(\WS_L)$ satisfying $\left\| (1 + \cL^L)^{-1} \right\|_{\cB(L^2_{0,\per}(\WS_L))} \le~1$.
\end{lemma}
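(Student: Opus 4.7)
The plan is to break the statement into four subclaims and address them in the following order: (i) $\cL^L$ is bounded on $L^2_{0,\per}(\WS_L)$; (ii) $\cL^L$ is self-adjoint; (iii) $\cL^L \ge 0$; (iv) the norm bound is uniform in $L$, which then yields invertibility with the claimed estimate.

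For \textbf{boundedness}, I would simply compose the mapping properties already established. By Lemma~\ref{lem:vcL}, $\sqrt{v_c^L}$ is unitary from $L^2_{0,\per}(\WS_L)$ to $\cC_L'$ and from $\cC_L$ to $L^2_{0,\per}(\WS_L)$, while by Lemma~\ref{lem:propertiesChiL} the operator $\chi^L$ is bounded from $\cC_L'$ to $\cC_L$. Therefore $\cL^L = -\sqrt{v_c^L} \chi^L \sqrt{v_c^L}$ is bounded on $L^2_{0,\per}(\WS_L)$ with $\|\cL^L\| \le \|\chi^L\|_{\cB(\cC_L',\cC_L)}$.

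For \textbf{self-adjointness and non-negativity}, I would work directly with the Dunford-integral definition~\eqref{eq:def:chiL}. Given $V_1, V_2 \in \cC_L'$, a short computation using Lemma~\ref{lem:controlRhoQ}, the cyclicity of the supercell trace and the resolvent identity gives
\[
\bra V_2, \chi^L V_1 \ket_{\cC_L', \cC_L} = \dfrac{1}{2\ri\pi} \oint_{\sC} \Tr_{L^2_\per(\WS_L)}\!\left( \dfrac{1}{\lambda - H_0^L} V_1 \dfrac{1}{\lambda - H_0^L} V_2 \right) \rd \lambda,
\]
which is manifestly symmetric in $(V_1, V_2)$; this proves self-adjointness of $\cL^L$ after sandwiching by $\sqrt{v_c^L}$. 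For non-negativity I would deform the contour $\sC$ to evaluate the integral via residues using the spectral decomposition of $H_0^L$ on $L^2_\per(\WS_L)$ (which has purely discrete spectrum since $\Gamma_L$ is a bounded box). Writing $H_0^L = \sum_n \varepsilon_n^L |\phi_n^L\ket\bra\phi_n^L|$, one obtains
\[
\bra V, \chi^L V \ket = \sum_{\varepsilon_n^L < \varepsilon_F < \varepsilon_m^L} \dfrac{2\, \bigl|\bra \phi_n^L | V | \phi_m^L \ket\bigr|^2}{\varepsilon_n^L - \varepsilon_m^L} \le 0,
\]
the gap $|H_0^L - \varepsilon_F| \ge 3g/4$ from Theorem~\ref{th:expCV} ensuring the denominators stay away from zero. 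Thus $\cL^L \ge 0$ on $L^2_{0,\per}(\WS_L)$.

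The \textbf{main obstacle} is the $L$-uniform bound on $\|\chi^L\|$. A direct estimate from Lemma~\ref{lem:controlQ} gives a bound but only in the $\cQ^L$-norm against an $L^2$-potential, not the $\cC_L \to \cC_L'$ bound that is needed; moreover the $L$-uniformity must be made explicit. The cleanest route I see is to Bloch-decompose $H_0^L$ using the fact that on a supercell the admissible quasi-momenta form the discrete set $L^{-1}\RLat \cap \BZ$, expressing $\chi^L$ as a Riemann-sum analogue of the whole-space polarizability $\chi$ studied in~\cite{Cances2010}. The crucial input is that the Bloch-fiber kernel is bounded uniformly in the quasi-momentum, by the same argument as in the whole-space case (gap + Kato-Seiler-Simon in periodic form, cf.~Corollary~\ref{cor:periodicKSS}). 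Taking an $L$-independent upper bound of each Bloch contribution and summing as a Riemann sum that is $\le$ its whole-space counterpart delivers $\|\chi^L\|_{\cB(\cC_L', \cC_L)} \le C$ independently of $L$, which transfers to $\cL^L$.

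Once the uniform bound and non-negativity of $\cL^L$ are in hand, \textbf{invertibility} of $1 + \cL^L$ is immediate: for every $f \in L^2_{0,\per}(\WS_L)$ we have $\bra f, (1+\cL^L) f\ket \ge \|f\|^2$, so $1 + \cL^L$ is bounded below by $1$ on a Hilbert space, hence invertible with $\|(1+\cL^L)^{-1}\|\le 1$ by the spectral theorem.
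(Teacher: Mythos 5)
Your proof is correct, and it is the natural one; the paper itself does not spell out an argument for this lemma, deferring to~\cite{Cances2010}, so writing out the details is exactly what is needed. Steps (i)--(iii) and the final invertibility argument are fine. In particular the residue computation for non-negativity is right: with $\varepsilon_n^L < \varepsilon_F < \varepsilon_m^L$ the denominator $\varepsilon_n^L - \varepsilon_m^L$ is strictly negative (and bounded away from zero by the gap from Theorem~\ref{th:expCV}), so $\chi^L \le 0$, and conjugation by the unitary $\sqrt{v_c^L}$ together with the minus sign in the definition~\eqref{eq:def:LL} gives $\cL^L \ge 0$. Once $\cL^L$ is a non-negative bounded self-adjoint operator, $1 + \cL^L \ge 1$ and the spectral theorem immediately yields invertibility with $\left\|(1+\cL^L)^{-1}\right\| \le 1$.

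The one thing to flag is that the paragraph you label ``the main obstacle'' is redundant with your own opening step. In (i) you already invoked Lemma~\ref{lem:propertiesChiL}, whose statement is precisely that there is a single $C$ with $\left\|\chi^L\right\|_{\cB(\cC_L', \cC_L)} \le C$ for all $L \ge L^*$; that lemma precedes the present one in the paper's logical order and may be taken as given. Combined with your observation $\left\|\cL^L\right\|_{\cB(L^2_{0,\per}(\WS_L))} \le \left\|\chi^L\right\|_{\cB(\cC_L',\cC_L)}$ (from the unitarity of $\sqrt{v_c^L}$ in Lemma~\ref{lem:vcL}), the $L$-uniform bound on $\cL^L$ is immediate, and the Bloch-fiber re-derivation you sketch is unnecessary here. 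What you propose there is essentially the content of the \emph{proof} of Lemma~\ref{lem:propertiesChiL} (via the periodic Kato--Seiler--Simon inequality, Corollary~\ref{cor:periodicKSS}, and the resolvent bounds of Lemma~\ref{lem:bounds_B}), but that work belongs to the earlier lemma, not to this one. Stripped of that detour, your argument matches the route the paper points to in~\cite{Cances2010}.
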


From Lemma~\ref{lem:propertiesLL}, we deduce that~\eqref{eq:forQ1} can be rewritten as
\[
	 \sqrt{v_c^L} \left( \rho_{Q_{\nu,1}^L} - \nu_L \right) = - (1 + \cL^L)^{-1} \sqrt{v_c^L} \nu_L - (1 + \cL^L)^{-1} \cL^L  \left( \sqrt{v_c^L} \rho_{\widetilde{Q_{\nu,2}^L}} \right).
\]
The main result of this section is the following lemma, which shows that this decomposition indeed separates the linear contribution of $\nu$ in $ \sqrt{v_c^L} \left( \rho_{Q_{\nu,1}^L} - \nu_L \right)$. The proof is a straightforward consequence of Lemma~\ref{lem:controlQ}, Lemma~\ref{lem:propertiesChiL} and Lemma~\ref{lem:propertiesLL}.

\begin{lemma} \label{lem:linearPartQ1}
	There exists $C \in \R^+$ such that, for all $L \ge L^*$ and all $\nu \in \cN(\eta)$, it holds
	\[
		\left\| \sqrt{v_c^L} \left( \rho_{Q_{\nu,1}^L} - \nu_L \right) + (1 + \cL^L)^{-1} \sqrt{v_c^L} \nu_L \right\|_{ L^2_{0,\per}(\WS_L) } \le C \left\| \nu \right\|^2_{L^2(\R^3)}.
	\]
\end{lemma}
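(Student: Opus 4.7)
The plan is to observe that the claimed inequality is essentially already set up by the identity derived immediately above the lemma statement, so all that remains is to quantify the ``remainder'' term. Starting from
\[
\sqrt{v_c^L} \left( \rho_{Q_{\nu,1}^L} - \nu_L \right) = - (1 + \cL^L)^{-1} \sqrt{v_c^L} \nu_L - (1 + \cL^L)^{-1} \cL^L  \left( \sqrt{v_c^L} \rho_{\widetilde{Q_{\nu,2}^L}} \right),
\]
the quantity to be bounded collapses to
\[
\left\| (1 + \cL^L)^{-1} \cL^L \sqrt{v_c^L} \rho_{\widetilde{Q_{\nu,2}^L}} \right\|_{L^2_{0,\per}(\WS_L)},
\]
so the only work is to show that this is $O(\|\nu\|_{L^2}^2)$ uniformly in $L \ge L^*$.

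First I would dispose of the two operator factors in front. By Lemma~\ref{lem:propertiesLL}, the operator $(1+\cL^L)^{-1}$ is bounded by $1$ on $L^2_{0,\per}(\WS_L)$ and $\cL^L$ is bounded by some constant $C$ independent of $L$, so the product is a bounded operator on $L^2_{0,\per}(\WS_L)$ with norm uniformly bounded in $L$. It therefore suffices to estimate $\bigl\| \sqrt{v_c^L} \rho_{\widetilde{Q_{\nu,2}^L}} \bigr\|_{L^2_{0,\per}(\WS_L)}$.

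Next, one needs to verify that $\rho_{\widetilde{Q_{\nu,2}^L}}$ has zero mean (so that $\sqrt{v_c^L}$ acts isometrically from $\cC_L$ to $L^2_{0,\per}(\WS_L)$ on this element). This follows from Theorem~\ref{th:sc_nu_L}, which gives $\Tr_{L^2_\per(\WS_L)}(Q_\nu^L)=0$, combined with the last assertion of Lemma~\ref{lem:controlQ}, which gives $\Tr_{L^2_\per(\WS_L)}(Q_{\nu,1}^L)=0$; subtracting the two yields $\int_{\WS_L}\rho_{\widetilde{Q_{\nu,2}^L}}=0$. Then Lemma~\ref{lem:vcL} gives
\[
\bigl\| \sqrt{v_c^L} \rho_{\widetilde{Q_{\nu,2}^L}} \bigr\|_{L^2_{0,\per}(\WS_L)} = \bigl\| \rho_{\widetilde{Q_{\nu,2}^L}} \bigr\|_{\cC_L}.
\]

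Finally, I would chain Lemma~\ref{lem:controlRhoQ} and Lemma~\ref{lem:controlQ}: the former provides a continuity bound of the density map $\cQ^L \to \cC_L$ with an $L$-independent constant, and the latter supplies the quadratic estimate $\bigl\| \widetilde{Q_{\nu,2}^L} \bigr\|_{\cQ^L} \le C \|\nu\|_{L^2(\R^3)}^2$. Combining everything yields the desired inequality with a constant $C$ independent of both $\nu \in \cN(\eta)$ and $L \ge L^*$. The only mild subtlety in the plan is the mean-zero check, needed so that Lemma~\ref{lem:vcL} is applicable with equality rather than the weaker bound noted in Remark~\ref{rem:sqrtVc}; however, even if one preferred to avoid that verification, the non-isometric extension of $\sqrt{v_c^L}$ from Remark~\ref{rem:sqrtVc} gives the same estimate up to the factor $1$, so this step is not a genuine obstacle.
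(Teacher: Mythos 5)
Your proposal is correct and follows exactly the route the paper intends: pass to the remainder $(1+\cL^L)^{-1}\cL^L\sqrt{v_c^L}\rho_{\widetilde{Q_{\nu,2}^L}}$, absorb the two operator factors via Lemma~\ref{lem:propertiesLL}, reduce to $\|\rho_{\widetilde{Q_{\nu,2}^L}}\|_{\cC_L}$ using the (iso)metry of $\sqrt{v_c^L}$, and finish with the density-map continuity (Lemma~\ref{lem:controlRhoQ}) and the quadratic bound on $\widetilde{Q_{\nu,2}^L}$ from Lemma~\ref{lem:controlQ}. The only cosmetic difference is that the paper's one-line proof sketch cites Lemma~\ref{lem:propertiesChiL} where you invoke Lemma~\ref{lem:controlRhoQ}; your choice is the apt one, since it is the continuity of $Q^L\mapsto\rho_{Q^L}$ from $\cQ^L$ to $\cC_L$ (not boundedness of $\chi^L$) that is actually needed for the non-linear operator $\widetilde{Q_{\nu,2}^L}$. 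Your mean-zero verification is also correct, and your fallback remark that the extension in Remark~\ref{rem:sqrtVc} renders it inessential is accurate.
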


We end this section by mentioning that all the results also hold in the whole space setting  (see~\cite{Cances2010}). The operator $\chi : \cC' \to \cC$ defined by
\begin{equation} \label{eq:def:chi}
	\chi: V \in \cC' \mapsto 
	\rho \left[ \dfrac{1}{2 \ri \pi} \oint_\sC \dfrac{1}{\lambda - H_0} {V} \dfrac{1}{\lambda - H_0} \rd \lambda \right]
\end{equation}
is a well-defined bounded operator from $\cC'$ to $\cC$, and the operator 
\begin{equation} \label{eq:def:L}
	\cL := - \sqrt{v_c} \chi \sqrt{v_c}
\end{equation}
is a well-defined non-negative bounded self-adjoint operator on $L^2(\R^3)$, so that $(1 + \cL)^{-1}$ is also a well-defined bounded self-adjoint operator on $L^2(\R^3)$.


\subsection{The linear and quadratic contribution of $\nu$ in $J_\nu^L$ and in $J_\nu$}

We identify in this section the linear and quadratic contributions of $\nu$ in the highly non-linear functionals $J_\nu^L$ and $J_\nu$. Again, we state the arguments only for $J_\nu^L$ to check the dependence of the constants with respect to $L$, but similar results hold true for $J_\nu$. From~\eqref{eq:JnuL_infQ}, we obtain
\begin{equation} \label{eq:Jnu_as_F(Qnu)}
	J_\nu^L := \Tr_{L^2_\per(\WS_L)} \left( \left[ H_0^L -\varepsilon_F \right]  Q_\nu^L \right) + \frac12 D_L \left( \rho_{Q_\nu^L} - \nu_L, \rho_{Q_\nu^L} - \nu_L \right) - \int_{\WS_L} V_0^L \nu_L.
\end{equation}
Since $Q_\nu^L = \gamma_\nu^L - \gamma_0^L$ is the difference of two projectors, we get
\begin{align}
	 \Tr_{L^2_\per(\WS_L)} \left( \left[ H_0^L -\varepsilon_F \right]  Q_\nu^L \right) &=  \Tr_{L^2_\per(\WS_L)} \left( \left| H_0^L -\varepsilon_F \right|  \left(Q_\nu^{++,L} - Q_\nu^{--,L} \right) \right) \nonumber \\
	 & = \Tr_{L^2_\per(\WS_L)} \left( \left| H_0^L -\varepsilon_F \right|  \left(Q_\nu^L\right)^2 \right) \nonumber \\
	 & =  \Tr_{L^2_\per(\WS_L)} \left( \left| H_0^L -\varepsilon_F \right|  \left(Q_{\nu,1}^L\right)^2 \right) \label{eq:Q1^2} \\
	 & \quad + \Tr_{L^2_\per(\WS_L)} \left( \left| H_0^L -\varepsilon_F \right|  \left[ \left(Q_{\nu}^L\right)^2 -  \left(Q_{\nu,1}^L\right)^2 \right] \right) \label{eq:Qnu^2-Q1^2}.
\end{align}

The term in~\eqref{eq:Q1^2} can be treated in a similar way than~\cite[Lemma 3.2]{Lewin2013}:
\begin{lemma} \label{lem:LewRou}
	For all $L \ge L^*$, it holds that
	\[
		 \Tr_{L^2_\per(\WS_L)} \left( \left| H_0^L -\varepsilon_F \right|  \left(Q_{\nu,1}^L\right)^2 \right)
		 = 
		 -\dfrac12 D_L \left( \rho_{Q_{\nu,1}^L} - \nu_L , \rho_{Q_{\nu,1}^L} \right).
	\]
\end{lemma}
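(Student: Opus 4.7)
The strategy, following Lewin--Rougerie, is to diagonalize $H_0^L$ explicitly and evaluate both sides of the identity in its spectral basis via residue calculus. Since $-\Delta^L$ has compact resolvent on $L^2_\per(\WS_L)$ and $V_0^L \in L^\infty_\per(\WS_L)$ is a bounded perturbation, $H_0^L$ admits an orthonormal eigenbasis $(\phi_n)_{n\ge 1}$ with eigenvalues $(\varepsilon_n)_{n\ge 1}$. By Theorem~\ref{th:expCV}, the contour $\sC$ encloses exactly the occupied eigenvalues $\varepsilon_n \le \varepsilon_F - 3g/4$, separated from the unoccupied ones $\varepsilon_p \ge \varepsilon_F + 3g/4$. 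Setting $V_{mn} := \bra \phi_m, V_\nu^L \phi_n\ket_{L^2_\per(\WS_L)}$, the Cauchy integral defining $Q_{\nu,1}^L$ in~\eqref{eq:def:Q1} evaluates to
\[
	\bra \phi_m, Q_{\nu,1}^L \phi_n\ket = \dfrac{\mathds{1}(\varepsilon_m\le\varepsilon_F)-\mathds{1}(\varepsilon_n\le\varepsilon_F)}{\varepsilon_m-\varepsilon_n}\,V_{mn},
\]
by elementary residue calculus (the coefficient vanishes when $\varepsilon_m,\varepsilon_n$ lie on the same side of $\varepsilon_F$). In particular, $Q_{\nu,1}^L$ is purely block off-diagonal relative to $\gamma_0^L$, so only cross-gap pairs contribute to $(Q_{\nu,1}^L)^2$.

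Plugging in and using that $|H_0^L - \varepsilon_F|$ acts diagonally, then symmetrising in the cross-gap pair $(m,p)$ via $|\varepsilon_m-\varepsilon_F|+|\varepsilon_p-\varepsilon_F|=|\varepsilon_p-\varepsilon_m|$, yields
\[
	\Tr_{L^2_\per(\WS_L)}\!\left(|H_0^L-\varepsilon_F|\,(Q_{\nu,1}^L)^2\right)=\sum_{\varepsilon_m\le\varepsilon_F<\varepsilon_p}\dfrac{|V_{mp}|^2}{\varepsilon_p-\varepsilon_m}.
\]
A parallel computation for $\Tr(V_\nu^L Q_{\nu,1}^L)=\sum_{m,n}V_{nm}\bra\phi_m,Q_{\nu,1}^L\phi_n\ket$ gives $-2\sum_{\varepsilon_m\le\varepsilon_F<\varepsilon_p}|V_{mp}|^2/(\varepsilon_p-\varepsilon_m)$, leading to the clean algebraic identity
\[
	\Tr_{L^2_\per(\WS_L)}\!\left(|H_0^L-\varepsilon_F|\,(Q_{\nu,1}^L)^2\right)=-\dfrac12\,\Tr_{L^2_\per(\WS_L)}\!\left(V_\nu^L\,Q_{\nu,1}^L\right).
\]
To finish, Lemma~\ref{lem:controlRhoQ} identifies this last trace with the duality pairing $\bra V_\nu^L,\rho_{Q_{\nu,1}^L}\ket_{\cC_L',\cC_L}$, and the self-consistent identification $V_\nu^L=v_c^L(\rho_{Q_{\nu,1}^L}-\nu_L)$ used in~\eqref{eq:forQ1} together with~\eqref{eq:normC=normC'} turns this pairing into $D_L(\rho_{Q_{\nu,1}^L}-\nu_L,\rho_{Q_{\nu,1}^L})$, proving the claim.

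The main obstacle is justifying absolute convergence of the double sums and the Fubini-type interchanges with the contour integral. In the supercell setting this is manageable: $\gamma_0^L$ is finite-rank, so the cross-gap sum is controlled by the Hilbert--Schmidt norm $\|\gamma_0^L V_\nu^L(1-\gamma_0^L)\|_{\fS_2(L^2_\per(\WS_L))}^2$ divided by the gap bound $3g/2$. The operator $\gamma_0^L V_\nu^L(1-\gamma_0^L)$ is Hilbert--Schmidt thanks to the $\cC_L'$-bound on $V_\nu^L$ from Lemma~\ref{lem:controlQ} and the regularity $\phi_n\in H^2_\per(\WS_L)\hookrightarrow L^\infty_\per(\WS_L)$, which also makes each matrix element $V_{mn}$ well-defined as a pairing of $V_\nu^L\in L^6_\per(\WS_L)$ against $\overline{\phi_m}\phi_n\in L^\infty_\per(\WS_L)$. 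Once summability is secured, Fubini legitimises the spectral rearrangements and the identity follows.
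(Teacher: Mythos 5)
Your spectral computation up to the identity
\[
\Tr_{L^2_\per(\WS_L)}\!\left(\left|H_0^L-\varepsilon_F\right|\left(Q_{\nu,1}^L\right)^2\right)
=-\tfrac12\,\Tr_{L^2_\per(\WS_L)}\!\left(V_\nu^L\,Q_{\nu,1}^L\right)
\]
is exactly the Lewin--Rougerie argument that the paper invokes; the residue formula for the matrix elements, the cross-gap block structure of $Q_{\nu,1}^L$, the symmetrization $|\varepsilon_m-\varepsilon_F|+|\varepsilon_p-\varepsilon_F|=\varepsilon_p-\varepsilon_m$, and the summability discussion (using that $\gamma_0^L$ is finite rank in the supercell) are all correct.

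The last step, however, uses an identity that is not in the paper. You write that $V_\nu^L=v_c^L(\rho_{Q_{\nu,1}^L}-\nu_L)$ is ``used in~\eqref{eq:forQ1}'', but~\eqref{eq:forQ1} is the fixed-point equation for $\sqrt{v_c^L}(\rho_{Q_{\nu,1}^L}-\nu_L)$ and does not make that identification. The perturbation appearing in~\eqref{eq:decompositionQnuL} and~\eqref{eq:def:Q1} is $H_\nu^L-H_0^L=v_c^L(\rho_{Q_\nu^L}-\nu_L)$, with the \emph{full} density $\rho_{Q_\nu^L}$, not the linear piece $\rho_{Q_{\nu,1}^L}$. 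Substituting the correct relation into your trace identity gives
\[
\Tr_{L^2_\per(\WS_L)}\!\left(\left|H_0^L-\varepsilon_F\right|\left(Q_{\nu,1}^L\right)^2\right)
=-\tfrac12\,D_L\!\left(\rho_{Q_\nu^L}-\nu_L,\,\rho_{Q_{\nu,1}^L}\right),
\]
which differs from the stated Lemma~\ref{lem:LewRou} by the non-vanishing term $-\tfrac12 D_L(\rho_{\widetilde{Q_{\nu,2}^L}},\rho_{Q_{\nu,1}^L})$. In fact the paper's own bookkeeping is consistent with this corrected form: the definition of $R_\nu^L$ in Lemma~\ref{lem:linearAndQuadratic} contains $\tfrac12 D_L(\rho_{Q_\nu^L}-2\nu_L,\widetilde{\rho_{Q_{\nu,2}^L}})$, which is what falls out when one uses $-\tfrac12 D_L(\rho_{Q_\nu^L}-\nu_L,\rho_{Q_{\nu,1}^L})$ in the expansion of $J_\nu^L$, whereas using the lemma as printed would produce $\tfrac12 D_L(\rho_{Q_\nu^L}+\rho_{Q_{\nu,1}^L}-2\nu_L,\widetilde{\rho_{Q_{\nu,2}^L}})$ instead. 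So Lemma~\ref{lem:LewRou} as written appears to carry a typo ($\rho_{Q_{\nu,1}^L}$ should be $\rho_{Q_\nu^L}$ in the first slot of $D_L$), and your final step, by substituting an incorrect self-consistency relation, happens to reproduce that typoed target rather than what the spectral computation actually yields. The fix is to replace your last line with the correct identification $V_\nu^L=v_c^L(\rho_{Q_\nu^L}-\nu_L)$ and to state the result with $\rho_{Q_\nu^L}$ in the first argument.
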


On the other hand, the term in~\eqref{eq:Qnu^2-Q1^2} is a high-order term in $\nu$. More specifically, we have the following lemma, whose proof is postponed until Section~\ref{sec:proof:kinetic_ho}.
\begin{lemma} \label{lem:kinetic_ho}
	There exists $C \in \R^+$ such that
	\[
		\forall L \ge L^*, \quad \forall \nu \in \cN(\eta), \quad \left| \Tr_{L^2_\per(\WS_L)} \left( \left| H_0^L -\varepsilon_F \right|  \left[ \left(Q_{\nu}^L\right)^2 -  \left(Q_{\nu,1}^L\right)^2 \right] \right) \right| \le C \| \nu \|_{L^2(\R^3)}^3.
	\]
\end{lemma}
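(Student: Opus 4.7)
The plan is to further decompose $Q_\nu^L = Q_{\nu,1}^L + \widetilde{Q_{\nu,2}^L}$ as in~\eqref{eq:decompositionQnuL} and expand
\[
(Q_\nu^L)^2 - (Q_{\nu,1}^L)^2 = Q_{\nu,1}^L\,\widetilde{Q_{\nu,2}^L} + \widetilde{Q_{\nu,2}^L}\,Q_{\nu,1}^L + \bigl(\widetilde{Q_{\nu,2}^L}\bigr)^2,
\]
and then to estimate the trace of $|H_0^L - \varepsilon_F|$ against each summand separately. Since Lemma~\ref{lem:controlQ} guarantees $\|Q_{\nu,1}^L\|_{\cQ^L} \le C\|\nu\|_{L^2(\R^3)}$ and $\|\widetilde{Q_{\nu,2}^L}\|_{\cQ^L} \le C\|\nu\|_{L^2(\R^3)}^2$, each cross product should contribute $O(\|\nu\|^3)$, while $(\widetilde{Q_{\nu,2}^L})^2$ will give $O(\|\nu\|^4) \le \eta\,O(\|\nu\|^3)$ on $\cN(\eta)$.

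The key preparatory step is a uniform-in-$L$ form bound
\[
|H_0^L - \varepsilon_F| \le C\,(1 - \Delta^L) \quad \text{as self-adjoint operators on } L^2_\per(\WS_L).
\]
I would derive this by first showing $(H_0^L - \varepsilon_F)^2 \le C\,(1-\Delta^L)^2$, expanding the square and using that $\|V_0^L\|_{L^\infty_\per(\WS_L)}$ is bounded uniformly in $L \ge L^*$ (a consequence of the exponential convergence of $\rho_{\gamma_0^L}$ given by Theorem~\ref{th:expCV} and boundedness of the limit potential $V_0$); operator monotonicity of $t \mapsto t^{1/2}$ then yields the claim. Equivalently, the self-adjoint operator $B_L := (1-\Delta^L)^{-1/2}|H_0^L - \varepsilon_F|(1-\Delta^L)^{-1/2}$ is bounded on $L^2_\per(\WS_L)$ with operator norm bounded independently of $L$.

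Once $B_L$ is in hand, for any self-adjoint $Q, Q' \in \cQ^L$ cyclicity of the trace gives
\[
\Tr_{L^2_\per(\WS_L)}\!\Bigl(|H_0^L - \varepsilon_F|\,Q\,Q'\Bigr) = \Tr_{L^2_\per(\WS_L)}\!\Bigl(B_L\cdot (1-\Delta^L)^{1/2}Q \cdot Q'(1-\Delta^L)^{1/2}\Bigr),
\]
and I would apply the Hölder-type estimate $|\Tr(BXY)| \le \|B\|_{\cB}\|X\|_{\fS_2^L}\|Y\|_{\fS_2^L}$, together with $\|Q'(1-\Delta^L)^{1/2}\|_{\fS_2^L} = \|(1-\Delta^L)^{1/2}Q'\|_{\fS_2^L}$ (by self-adjointness of $Q'$) and the bounds of Lemma~\ref{lem:controlQ}, specializing to $(Q,Q')$ equal to each of $(Q_{\nu,1}^L, \widetilde{Q_{\nu,2}^L})$, $(\widetilde{Q_{\nu,2}^L}, Q_{\nu,1}^L)$, $(\widetilde{Q_{\nu,2}^L}, \widetilde{Q_{\nu,2}^L})$ in turn, to obtain the required $C\|\nu\|^3_{L^2(\R^3)}$ bound in each case.

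The main obstacle is the uniform-in-$L$ control $\|V_0^L\|_{L^\infty_\per(\WS_L)} \le M$ needed to close $(H_0^L - \varepsilon_F)^2 \le C(1-\Delta^L)^2$; once this is secured, the remainder of the argument is standard Schatten-class calculus. A secondary technicality is to verify that the operator products appearing above actually lie in $\fS_1^L$ so that the cyclic rearrangement is legitimate, which follows from the bounds of Lemma~\ref{lem:controlQ} combined with the uniform boundedness of $B_L$.
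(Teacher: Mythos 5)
Your proposal is correct and follows essentially the same route as the paper: decompose $(Q_\nu^L)^2 - (Q_{\nu,1}^L)^2$ into cross terms and a square, establish uniform-in-$L$ boundedness of $(1-\Delta^L)^{-1/2}\left|H_0^L-\varepsilon_F\right|^{1/2}$ (equivalently your $B_L$), and close with the Hölder trace inequality together with the $\cQ^L$-bounds from Lemma~\ref{lem:controlQ}. The paper states the sandwiching bound without elaboration ("as in Lemma~\ref{lem:bounds_B}") whereas you make explicit the route via uniform $L^\infty$-control of $V_0^L$ and operator monotonicity of the square root, but this is just a spelled-out version of the same step.
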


From the equalities~\eqref{eq:Jnu_as_F(Qnu)},~\eqref{eq:Q1^2}-\eqref{eq:Qnu^2-Q1^2} and Lemma~\ref{lem:LewRou}, we deduce, after some algebra, that
\begin{align}
	J_\nu^L & = - \int_{\WS_L} V_0^L \nu_L -\dfrac12 D_L \left( \rho_{Q_{\nu,1}^L} - \nu_L, \nu _L\right) +   \label{eq:linearQuadraticJnuL}\\
	& \quad + \dfrac12 D_L \left( \rho_{Q_{\nu}^L} + \rho_{Q_{\nu,1}^L} - 2 \nu_L, \widetilde{\rho_{Q_{\nu,2}^L}} \right)
	+ \Tr_{L^2_\per(\WS_L)} \left( \left| H_0^L -\varepsilon_F \right|  \left[ \left(Q_{\nu}^L\right)^2 -  \left(Q_{\nu,1}^L\right)^2 \right] \right). \label{eq:ho1}
\end{align}
We further decompose the second term of~\eqref{eq:linearQuadraticJnuL} in view of Lemma~\ref{lem:linearPartQ1}. Using~\eqref{eq:normC=normC'}, we write
\begin{align}
	D_L \left( \rho_{Q_{\nu,1}^L} - \nu_L, \nu _L\right) & = 
	\left\bra \sqrt{v_c^L} \left( \rho_{Q_{\nu,1}^L} - \nu_L \right), \sqrt{v_c^L} \nu_L  \right\ket_{L^2_{0,\per}(\WS_L)} \nonumber \\
		& = \left\bra \left( 1 + \cL^L \right)^{-1} \sqrt{v_c^L} \nu_L, \sqrt{v_c^L} \nu_L  \right\ket_{L^2_{0,\per}(\WS_L)} + \label{eq:linearDL} \\
		& \quad + \left\bra \left[ \sqrt{v_c^L} \left( \rho_{Q_{\nu,1}^L} - \nu_L \right) -  \left( 1 + \cL^L \right)^{-1} \sqrt{v_c^L} \nu_L \right], \sqrt{v_c^L} \nu_L  \right\ket_{L^2_{0,\per}(\WS_L)}. \label{eq:hoDL}
\end{align}

From Lemma~\ref{lem:linearPartQ1}, we can control the term in~\eqref{eq:hoDL} (see Section~\ref{sec:proof:controlHoDL} for the proof).
\begin{lemma} \label{lem:controlHoDL}
	There exists $C \in \R^+$ such that, for all $\nu \in \cN(\eta)$ and all $L \ge L^*$, it holds
	\[
		\left| \left\bra \left[ \sqrt{v_c^L} \left( \rho_{Q_{\nu,1}^L} - \nu_L \right) -  \left( 1 + \cL^L \right)^{-1} \sqrt{v_c^L} \nu_L \right], \sqrt{v_c^L} \nu_L  \right\ket_{L^2_{0,\per}(\WS_L)} \right|
		\le C \| \nu \|_{L^2(\R^3)}^3.
	\]
\end{lemma}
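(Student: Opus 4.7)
The plan is to apply Cauchy--Schwarz in $L^2_{0,\per}(\WS_L)$ to the inner product, controlling one factor by Lemma~\ref{lem:linearPartQ1} and the other by a uniform-in-$L$ supercell Coulomb estimate for $\nu_L$. Concretely, the inner product is bounded by
\[
\left\| \sqrt{v_c^L}\bigl(\rho_{Q_{\nu,1}^L}-\nu_L\bigr) - (1+\cL^L)^{-1}\sqrt{v_c^L}\nu_L \right\|_{L^2_{0,\per}(\WS_L)}
\cdot
\left\| \sqrt{v_c^L}\nu_L \right\|_{L^2_{0,\per}(\WS_L)}.
\]
The first factor is $\le C\|\nu\|_{L^2(\R^3)}^2$ directly from Lemma~\ref{lem:linearPartQ1}. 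Hence it suffices to prove the uniform bound
\begin{equation} \label{eq:uniformCoulombNu}
\exists C\ge 0,\ \forall L\ge L^*,\ \forall \nu\in\cN(\eta),\quad
\left\| \sqrt{v_c^L}\nu_L \right\|_{L^2_{0,\per}(\WS_L)} \le C \|\nu\|_{L^2(\R^3)}.
\end{equation}

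To prove \eqref{eq:uniformCoulombNu}, I would use that $\supp\nu\subset L^\supp\WS\subset\WS_L$ for $L\ge L^*\ge L^\supp$, so that $\nu_L$ coincides with $\nu$ on $\WS_L$ and its Fourier coefficients are related to the Fourier transform of $\nu$ via
\[
c_{\bk}^L(\nu_L) = \dfrac{(2\pi)^{3/2}}{|\WS_L|^{1/2}}\,\widehat{\nu}(\bk), \qquad \bk\in L^{-1}\RLat.
\]
Plugging this into the Fourier expression of the supercell Coulomb norm gives
\[
\left\| \sqrt{v_c^L}\nu_L \right\|_{L^2_{0,\per}(\WS_L)}^{2}
= \sum_{\bk\in L^{-1}\RLat\setminus\{\bnull\}} \dfrac{(2\pi)^3}{|\WS_L|}\,\dfrac{4\pi\,|\widehat{\nu}(\bk)|^2}{|\bk|^2},
\]
which is a Riemann sum, of mesh $(2\pi)^3/|\WS_L|$, for the integral $\|\nu\|_{\cC(\R^3)}^{2}=\int_{\R^3} 4\pi|\widehat{\nu}|^2/|\bk|^2$. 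The continuous embedding $L^2(L^\supp\WS)\hookrightarrow L^{6/5}(\R^3)\hookrightarrow\cC$ already controls the integral by $C\|\nu\|_{L^2(\R^3)}^2$, so the only real issue is to dominate the sum by the integral uniformly in $L$ \emph{despite} the singularity at $\bk=\bnull$ and the missing $\bk=\bnull$ term.

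This is handled by splitting the sum according to whether $|\bk|\le 1$ or $|\bk|>1$. On $|\bk|>1$, the summand is bounded by a Riemann sum for the smooth, integrable function $4\pi|\widehat\nu|^2/|\bk|^2$ and is controlled by $\|\widehat\nu\|_{L^2}^2=\|\nu\|_{L^2}^2$ (plus a trivially bounded Riemann sum error). On $|\bk|\le 1$, the compact support of $\nu$ yields the pointwise bound $|\widehat{\nu}(\bk)|\le (2\pi)^{-3/2}|L^\supp\WS|^{1/2}\|\nu\|_{L^2}$, reducing the estimate to the uniformly convergent bound
\[
\sum_{\bk\in L^{-1}\RLat\setminus\{\bnull\},\,|\bk|\le 1}\dfrac{1}{|\WS_L|\,|\bk|^2} \ \le\ \int_{|\bk|\le 2}\dfrac{\rd\bk}{|\bk|^2} + O(1),
\]
which is finite and $L$-independent (the integral converges because $1/|\bk|^2$ is locally integrable in $\R^3$). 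Combining these two pieces proves \eqref{eq:uniformCoulombNu}, and concatenating with Cauchy--Schwarz and Lemma~\ref{lem:linearPartQ1} yields the cubic estimate claimed in the lemma. The only delicate step is the uniform control near $\bk=\bnull$; everything else is a direct rewriting.
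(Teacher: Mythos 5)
Your argument is essentially the paper's: both apply Cauchy--Schwarz in $L^2_{0,\per}(\WS_L)$, control the first factor by Lemma~\ref{lem:linearPartQ1}, and then bound $\left\| \sqrt{v_c^L}\nu_L \right\|_{L^2_{0,\per}(\WS_L)}$ by $C\left\|\nu\right\|_{L^2(\R^3)}$ uniformly in $L$. The only divergence is how that last bound is obtained: the paper invokes the fact that $\sqrt{v_c^L}$ is a contraction from $\cC_L$ to $L^2_\per(\WS_L)$ (Remark~\ref{rem:sqrtVc}) together with the uniform embedding $L^{6/5}_\per(\WS_L)\hookrightarrow\cC_L$ recorded in~\eqref{eq:CL_L2}, whereas you re-derive the estimate from scratch by a Fourier split near and away from $\bk=\bnull$. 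Both routes are fine; yours is more self-contained but redundantly reproves what~\eqref{eq:CL_L2} already gives. One small remark: for the tail $|\bk|>1$ you do not actually need a Riemann-sum error estimate, since the identity $c_\bk^L(\nu_L)=(2\pi)^{3/2}|\WS_L|^{-1/2}\widehat{\nu}(\bk)$ (valid for $L\ge L^\supp$) combined with Parseval for $L^2_\per(\WS_L)$ gives exactly $\sum_{\bk\in L^{-1}\RLat}\frac{(2\pi)^3}{|\WS_L|}\left|\widehat{\nu}(\bk)\right|^2=\left\|\nu\right\|^2_{L^2(\R^3)}$, which immediately dominates your high-frequency piece with no error term at all.
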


Gathering~\eqref{eq:linearQuadraticJnuL}-\eqref{eq:ho1} with~\eqref{eq:linearDL}-\eqref{eq:hoDL}, and using Lemma~\ref{lem:kinetic_ho} and Lemma~\ref{lem:controlHoDL}, we obtain the following lemma, which identifies the linear and quadratic contributions of $\nu$ in $J_\nu^L$ and $J_\nu$. We enunciate the result for both the supercell case and the full model case.
\begin{lemma}[Linear and quadratic contributions of $\nu$] \label{lem:linearAndQuadratic}
	For all $\nu \in \cB(\eta)$ and all $L \ge L^*$, it holds
	\[
		J_\nu^L  = - \int_{\WS_L} V_0^L \nu_L - \dfrac12  \left\bra \left( 1 + \cL^L \right)^{-1} \sqrt{v_c^L} \nu_L, \sqrt{v_c^L} \nu_L  \right\ket_{L^2_{0,\per}(\WS_L)} + R_\nu^L
	\]
	and
	\[
		J_\nu  = - \int_{\R^3} V_0 \nu -\dfrac12   \left\bra \left( 1 + \cL \right)^{-1} \sqrt{v_c} \nu, \sqrt{v_c} \nu  \right\ket_{L^2(\R^3)} + R_\nu,
	\]
	where we set
	\begin{equation}
	\begin{aligned}
		R_\nu^L := & \dfrac12 D_L \left( \rho_{Q_{\nu}^L} - 2 \nu_L, \widetilde{\rho_{Q_{\nu,2}^L}} \right)
	+ \Tr_{L^2_\per(\WS_L)} \left( \left| H_0^L -\varepsilon_F \right|  \left[ \left(Q_{\nu}^L\right)^2 -  \left(Q_{\nu,1}^L\right)^2 \right] \right) \\
	& \quad	- \dfrac12 \left\bra \left[ \sqrt{v_c^L} \left( \rho_{Q_{\nu,1}^L} - \nu_L \right) -  \left( 1 + \cL^L \right)^{-1} \sqrt{v_c^L} \nu_L \right], \sqrt{v_c^L} \nu_L  \right\ket_{L^2_{0,\per}(\WS_L)},
	\end{aligned}
	\end{equation}
	and
	\begin{equation}
	\begin{aligned}
		R_\nu := & \dfrac12 D \left( \rho_{Q_{\nu}} - 2 \nu, \widetilde{\rho_{Q_{\nu,2}}} \right)
	+ \Tr_{L^2_\per(\WS)} \left( \left| H_0 -\varepsilon_F \right|  \left[ \left(Q_{\nu}\right)^2 -  \left(Q_{\nu,1}\right)^2 \right] \right) \\
	& \quad	- \dfrac12 \left\bra \left[ \sqrt{v_c} \left( \rho_{Q_{\nu,1}} - \nu_L \right) -  \left( 1 + \cL \right)^{-1} \sqrt{v_c} \nu_L \right], \sqrt{v_c} \nu  \right\ket_{L^2(\R^3)}.
	\end{aligned}
	\end{equation}
	Moreover, there exists $C \in \R^+$ such that 
	\begin{equation} \label{eq:cubicTerms}
		\forall \nu \in \cN(\eta), \quad \left\| R_\nu \right\|  \le C \| \nu \|_{L^2(\R^3)}^3 
		\quad \text{and} \quad
		\forall L \ge L^*, \quad \left\| R_\nu^L \right\|  \le C \| \nu \|_{L^2(\R^3)}^3.
	\end{equation}
\end{lemma}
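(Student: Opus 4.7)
The proof is essentially an accounting exercise: all the key estimates have been established in the preceding lemmas, so my plan is to assemble them carefully while tracking $L$-independence of constants.

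I would begin from the expression $J_\nu^L=\cF_\nu^L(Q_\nu^L)$ given by~\eqref{eq:Jnu_as_F(Qnu)}. Since $Q_\nu^L=\gamma_\nu^L-\gamma_0^L$ is a difference of spectral projectors of $H_0^L$ and $H_\nu^L$ lying on opposite sides of the gap, one has the standard identity $\Tr_{L^2_\per(\WS_L)}\bigl((H_0^L-\varepsilon_F)Q_\nu^L\bigr)=\Tr_{L^2_\per(\WS_L)}\bigl(|H_0^L-\varepsilon_F|(Q_\nu^L)^2\bigr)$, which the paper already uses to write~\eqref{eq:Q1^2}--\eqref{eq:Qnu^2-Q1^2}. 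Lemma~\ref{lem:LewRou} converts the $(Q_{\nu,1}^L)^2$ piece into $-\tfrac12 D_L(\rho_{Q_{\nu,1}^L}-\nu_L,\rho_{Q_{\nu,1}^L})$, while Lemma~\ref{lem:kinetic_ho} shows the difference $(Q_\nu^L)^2-(Q_{\nu,1}^L)^2$ contributes a term of size $O(\|\nu\|_{L^2}^3)$.

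Next I would expand the Coulomb term $\tfrac12 D_L(\rho_{Q_\nu^L}-\nu_L,\rho_{Q_\nu^L}-\nu_L)$ using the decomposition $Q_\nu^L=Q_{\nu,1}^L+\widetilde{Q_{\nu,2}^L}$ from~\eqref{eq:decompositionQnuL} and collect the terms that are linear in $\widetilde{\rho_{Q_{\nu,2}^L}}$ into $\tfrac12 D_L(\rho_{Q_\nu^L}+\rho_{Q_{\nu,1}^L}-2\nu_L,\widetilde{\rho_{Q_{\nu,2}^L}})$, which produces the display~\eqref{eq:linearQuadraticJnuL}--\eqref{eq:ho1}. Then I would use the isometry~\eqref{eq:normC=normC'} to rewrite $D_L(\rho_{Q_{\nu,1}^L}-\nu_L,\nu_L)$ as an inner product in $L^2_{0,\per}(\WS_L)$, insert the identity $\sqrt{v_c^L}(\rho_{Q_{\nu,1}^L}-\nu_L)=-(1+\cL^L)^{-1}\sqrt{v_c^L}\nu_L+\bigl[\sqrt{v_c^L}(\rho_{Q_{\nu,1}^L}-\nu_L)+(1+\cL^L)^{-1}\sqrt{v_c^L}\nu_L\bigr]$ coming from the definition of $\cL^L$ and Lemma~\ref{lem:linearPartQ1}, and obtain the splitting~\eqref{eq:linearDL}--\eqref{eq:hoDL}. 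This directly yields the claimed expression with $R_\nu^L$ exactly as defined in the statement.

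To bound $R_\nu^L$ by $C\|\nu\|_{L^2}^3$ uniformly in $L$, I would treat each of its three constituents in turn. The first piece is bounded by Cauchy--Schwarz in the $\cC_L,\cC_L'$ duality: $\|\widetilde{\rho_{Q_{\nu,2}^L}}\|_{\cC_L}\le C\|\nu\|_{L^2}^2$ follows from Lemma~\ref{lem:controlQ} combined with the uniform density estimate of Lemma~\ref{lem:controlRhoQ}, while $\|\rho_{Q_\nu^L}-2\nu_L\|_{\cC_L}\le C\|\nu\|_{L^2}$ follows from the same two lemmas together with~\eqref{eq:controlV}; their product gives $\|\nu\|_{L^2}^3$. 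The second piece is Lemma~\ref{lem:kinetic_ho}, and the third is Lemma~\ref{lem:controlHoDL}, each of which already provides $O(\|\nu\|_{L^2}^3)$ with a constant independent of $L$.

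The whole-space version of the statement is proved by an identical argument, replacing the supercell operators $H_0^L, v_c^L, \cL^L$ and the trace $\Tr_{L^2_\per(\WS_L)}$ by their counterparts $H_0, v_c, \cL, \Tr_{L^2(\R^3)}$ on $\R^3$, and invoking the analogues of Lemmas~\ref{lem:controlQ}--\ref{lem:controlHoDL} for the full model (which are either already in~\cite{Cances2008,Cances2010} or follow from literally the same proofs as their supercell versions, since those proofs were written to be $L$-uniform). I do not anticipate a genuine obstacle: the only mildly delicate point is to make sure that the rearrangements leading to $R_\nu^L$ do not introduce any hidden quadratic contribution, and this is guaranteed by the explicit use of~\eqref{eq:normC=normC'} together with Lemma~\ref{lem:linearPartQ1}, whose whole purpose is to isolate the linear part of $\sqrt{v_c^L}(\rho_{Q_{\nu,1}^L}-\nu_L)$.
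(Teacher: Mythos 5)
Your proposal is correct and follows the paper's proof almost verbatim: the paper's argument is exactly the sequence of displays \eqref{eq:Jnu_as_F(Qnu)}, \eqref{eq:Q1^2}--\eqref{eq:Qnu^2-Q1^2}, \eqref{eq:linearQuadraticJnuL}--\eqref{eq:ho1}, \eqref{eq:linearDL}--\eqref{eq:hoDL}, combined with Lemmas~\ref{lem:LewRou}, \ref{lem:kinetic_ho}, \ref{lem:controlHoDL}, and the bound on the first term of $R_\nu^L$ via Lemmas~\ref{lem:controlQ} and~\ref{lem:controlRhoQ} plus Cauchy--Schwarz in $\cC_L$, which is precisely what you do. One small caveat: the rearrangement actually produces $\tfrac12 D_L(\rho_{Q_\nu^L}+\rho_{Q_{\nu,1}^L}-2\nu_L,\widetilde{\rho_{Q_{\nu,2}^L}})$ as in~\eqref{eq:linearQuadraticJnuL}, so the $R_\nu^L$ written in the lemma statement (which drops $\rho_{Q_{\nu,1}^L}$) appears to contain a typo; this does not affect the cubic estimate since $\rho_{Q_{\nu,1}^L}$ is $O(\|\nu\|)$ in $\cC_L$.
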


In view of this decomposition, we write that $J_\nu - J_\nu^L = \cJ_{1,\nu}^L +   \cJ_{2,\nu}^L +  \cJ_{3,\nu}^L$, where we set
\begin{align}
	 \cJ_{1,\nu}^L & :=  \int_{\WS_L} V_0^L \nu_L - \int_{\R^3} V_0 \nu \label{eq:cJ1} \\
	 \cJ_{2,\nu}^L & := \dfrac12  \left\bra \left( 1 + \cL^L \right)^{-1} \sqrt{v_c^L} \nu_L, \sqrt{v_c^L} \nu_L  \right\ket_{L^2_{0,\per}(\WS_L)}  -\dfrac12   \left\bra \left( 1 + \cL \right)^{-1} \sqrt{v_c} \nu, \sqrt{v_c} \nu  \right\ket_{L^2(\R^3)} \label{eq:cJ2}\\
	  \cJ_{3,\nu}^L & := R_\nu - R_\nu^L. \label{eq:cJ3}
\end{align}

From~\eqref{eq:cubicTerms}, we obtain that $| \cJ_{3, \nu}^L | \le C \| \nu \|_{L^2(\R^3)}^3$, which then leads to the $\| \nu \|^3_{L^2(\R^3)}$ term in~\eqref{eq:K}. The linear part $\cJ_1^L$ defined in~\eqref{eq:cJ1} is easily controlled thanks to the exponential convergence of the mean-field potentials in the defect-free case~\cite{GL2015}. More specifically, we have the following Lemma, whose proof is postponed until Section~\ref{sec:proof:linear}
\begin{lemma}[Convergence of the linear part]
\label{lem:linear}
There exist $C \in \R^+$ and $\alpha > 0$ such that,
	\[
		\forall \nu \in \cN(\eta), \quad \forall L \ge L^*, 
		\quad
		\left|  \cJ_{1,\nu}^L \right| = \left| \int_{\R^3} \left( V_{0}^L - V_{0} \right) \nu \right| \le C \| \nu \|_{L^2(\R^3)} \re^{-\alpha L}.
	\]
\end{lemma}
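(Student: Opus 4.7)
The plan is to reduce the statement via Cauchy--Schwarz to an exponential decay estimate on $V_0^L - V_0$ over the (fixed, $L$-independent) support of $\nu$, and then to derive this estimate directly from the exponential convergence of the defect-free density given in Theorem~\ref{th:expCV}.

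First, for $L \ge L^* \ge L^\supp$, the support $L^\supp \WS$ of $\nu$ is contained in $\WS_L$, so $\nu_L$ and $\nu$ coincide there, and
\[
\cJ_{1,\nu}^L = \int_{\WS_L} V_0^L \nu_L - \int_{\R^3} V_0 \nu = \int_{L^\supp \WS} (V_0^L - V_0) \nu.
\]
Cauchy--Schwarz then gives $|\cJ_{1,\nu}^L| \le \|V_0^L - V_0\|_{L^2(L^\supp \WS)} \|\nu\|_{L^2(\R^3)}$, so it suffices to establish $\|V_0^L - V_0\|_{L^2(L^\supp \WS)} \le C\re^{-\alpha L}$.

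The key observation is that $V_0^L$ is actually $\Lat$-periodic, not merely $L\Lat$-periodic. Indeed, Theorem~\ref{th:expCV} asserts that $\gamma_0^L$ commutes with $\Lat$-translations, so $\rho_{\gamma_0^L}$, and hence $\rho_{\gamma_0^L} - \mu_\per$, is $\Lat$-periodic. Since $V_0^L$ is the unique mean-zero $L\Lat$-periodic solution of $-\Delta V = 4\pi (\rho_{\gamma_0^L} - \mu_\per)$, and the mean-zero $\Lat$-periodic solution of this same Poisson equation is a fortiori $L\Lat$-periodic, the two must coincide. Comparing Fourier series on $\RLat$ in the gauge $c_1 = c_L = 0$ then yields
\[
V_0^L(\bx) - V_0(\bx) = \frac{4\pi}{|\WS|^{1/2}} \sum_{\bk \in \RLat \setminus \{\bnull\}} \frac{c_\bk(\rho_{\gamma_0^L} - \rho_{\gamma_0})}{|\bk|^2} \re^{\ri \bk \cdot \bx}.
\]

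Finally, Parseval's identity together with the crude bound $|\bk|^{-4} \le |\bk_{\min}|^{-4}$, where $\bk_{\min}$ is the shortest nonzero vector of $\RLat$, gives
\[
\|V_0^L - V_0\|_{L^2_\per(\WS)}^2 \le \frac{16\pi^2}{|\bk_{\min}|^4} \|\rho_{\gamma_0^L} - \rho_{\gamma_0}\|_{L^2_\per(\WS)}^2 \le \frac{16\pi^2 |\WS|}{|\bk_{\min}|^4} \|\rho_{\gamma_0^L} - \rho_{\gamma_0}\|_{L^\infty_\per(\WS)}^2,
\]
and Theorem~\ref{th:expCV} bounds the right-hand side by $C \re^{-2\alpha L}$. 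Since $V_0^L - V_0$ is $\Lat$-periodic and $L^\supp$ is $L$-independent, $\|V_0^L - V_0\|_{L^2(L^\supp \WS)}^2 = (L^\supp)^3 \|V_0^L - V_0\|_{L^2_\per(\WS)}^2$, which yields the desired exponential estimate. The only genuine subtlety is identifying the $\Lat$-periodicity of $V_0^L$; everything else is a direct transcription of Theorem~\ref{th:expCV} through the Poisson equation.
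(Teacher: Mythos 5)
Your proof is correct and follows essentially the same route as the paper's: Cauchy--Schwarz on $L^\supp\WS$, the $\Lat$-periodicity of $V_0^L$ (which the paper asserts and you usefully justify via uniqueness of the mean-zero periodic Poisson solution), reduction to $\|\rho_{\gamma_0^L}-\rho_{\gamma_0}\|$, and the exponential estimate from Theorem~\ref{th:expCV}. The only cosmetic difference is that you bound the Green's-function convolution by working directly with the Fourier coefficients and Parseval, whereas the paper invokes the continuity of convolution by $G_1$ from $L^2_\per$ to $L^\infty_\per$; these are the same computation.
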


The study of the quadratic term $\cJ_2^L$ defined in~\eqref{eq:cJ2} is more involving, and require a precise study of the operators $(1 + \cL^L)^{-1}$ and $(1 + \cL)^{-1}$. This is the topic of the next section.


\subsection{An intermediate operator}

In order to study the $\cJ_{2,\nu}^L$ term, we introduce an intermediate operator. The idea is to notice that in~\eqref{eq:cJ2}, there are two sources of errors. One comes from the fact that the operators $\cL$ and $\cL^L$ are constructed from different models, and the other comes from the fact that the scalar products depend on $L$. 

\medskip

Recall that $H_0 = -\frac12 \Delta + V_0$ defined in~\eqref{eq:sc} acts on $L^2(\R^3)$, and that $V_0$ is $\Lat$-periodic. For $L \in \N^*$, we introduce the operator
\begin{equation} \label{eq:def:widetildeH0}
	\widetilde{H_0^L} := -\frac12 \Delta^L + V_0 \quad \text{acting on} \quad L^2_\per(\WS_L).
\end{equation}
This operator has formally the same form than $H_0$, but acts on the periodic space $L^2_\per(\WS_L)$ instead of the whole space $L^2(\R^3)$. Since ${H_0}$ has a spectral gap of size at least $g$ around $\varepsilon_F$, we deduce (see~\cite[Proposition 3.1]{GL2015}) the following lemma.
\begin{lemma} \label{lem:widetildeH0}
	For all $L \ge L^*$, the operator $\widetilde{H_0^L}$ has a spectral gap of size at least $g$ around~$\varepsilon_F$.
\end{lemma}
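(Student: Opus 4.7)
The plan is to exploit the fact that $V_0$ is only $\Lat$-periodic (not just $L\Lat$-periodic), which means the operator $\widetilde{H_0^L}$ can be reduced to fibers of the Bloch decomposition of $H_0$. First I would recall that, since $V_0 \in L^2_\per(\WS)$ is $\Lat$-periodic, Bloch-Floquet theory gives a unitary decomposition
\[
H_0 \simeq \int_{\BZ}^{\oplus} H_{0,\bq} \, \frac{\mathrm{d}\bq}{|\BZ|}, \qquad H_{0,\bq} := \tfrac12 \left( -\mathrm{i}\nabla + \bq \right)^2 + V_0 \quad \text{on } L^2_\per(\WS),
\]
so that $\sigma(H_0) = \bigcup_{\bq \in \BZ} \sigma(H_{0,\bq})$. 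Assumption \textbf{(A1)} then reads: there is no $\bq \in \BZ$ and no eigenvalue of $H_{0,\bq}$ at distance less than $g/2$ from $\varepsilon_F$.

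Next I would decompose the supercell space $L^2_\per(\WS_L)$ along the discrete set of Bloch momenta compatible with $L\Lat$-periodicity. Each $\bk \in L^{-1}\RLat$ decomposes uniquely as $\bk = \bq + \bK$ with $\bK \in \RLat$ and $\bq \in \cQ_L := L^{-1}\RLat \cap \BZ$. Grouping Fourier coefficients accordingly yields the orthogonal decomposition
\[
L^2_\per(\WS_L) \;=\; \bigoplus_{\bq \in \cQ_L} \mathrm{e}^{\mathrm{i}\bq \cdot \bx}\, L^2_\per(\WS),
\]
where each summand is identified via $u \mapsto \mathrm{e}^{\mathrm{i}\bq\cdot\bx} u$ with $u \in L^2_\per(\WS)$. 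Since $V_0$ is $\Lat$-periodic and $-\mathrm{i}\partial_j(\mathrm{e}^{\mathrm{i}\bq\cdot\bx}u) = \mathrm{e}^{\mathrm{i}\bq\cdot\bx}(q_j - \mathrm{i}\partial_j)u$, the operator $\widetilde{H_0^L}$ is block-diagonal in this decomposition and acts as $H_{0,\bq}$ on each fiber indexed by $\bq \in \cQ_L$.

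As an immediate consequence,
\[
\sigma\bigl(\widetilde{H_0^L}\bigr) \;=\; \bigcup_{\bq \in \cQ_L} \sigma(H_{0,\bq}) \;\subset\; \bigcup_{\bq \in \BZ} \sigma(H_{0,\bq}) \;=\; \sigma(H_0).
\]
Since $H_0$ has a spectral gap of size at least $g$ around $\varepsilon_F$, so does $\widetilde{H_0^L}$, for every $L \ge L^*$ (in fact for every $L \in \N^*$). I would note that this is exactly the content of \cite[Proposition 3.1]{GL2015} reused in the present setting.

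The argument is genuinely routine; the only point requiring some care is to verify that the direct-sum identification is an isometry (the volume factor $|\WS_L|=|\cQ_L|\cdot|\WS|$ matches the number of fibers) and that the natural supercell Laplacian $-\Delta^L$, defined as a Fourier multiplier on $L^2_\per(\WS_L)$, reduces correctly to $(-\mathrm{i}\nabla + \bq)^2$ on each fiber. Both are bookkeeping checks once the Fourier indexing $\bk = \bq + \bK$ above is fixed, and I would regard them as the only nontrivial step in an otherwise immediate Bloch-Floquet reduction.
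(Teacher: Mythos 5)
Your proof is correct and is essentially the argument the paper invokes: the supercell Bloch transform $\cZ_L$ of Section 4.2 (where $\Lambda_L = L^{-1}\RLat \cap \BZ$ is your $\cQ_L$) gives exactly the block decomposition $\cZ_L \widetilde{H_0^L}\cZ_L^{-1} = L^{-3}\bigoplus_{\bQ\in\Lambda_L} H_\bQ$ stated in~\eqref{eq:BlochHq}, so $\sigma(\widetilde{H_0^L}) = \bigcup_{\bQ\in\Lambda_L}\sigma(H_\bQ) \subset \sigma(H_0)$ and the gap is preserved. The paper defers this to~\cite[Proposition 3.1]{GL2015}; your write-up just makes that reduction explicit, and your side remark that the result actually holds for all $L\in\N^*$ (not only $L\ge L^*$) is correct.
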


We introduce the modified irreducible polarizability operator $\widetilde{\chi^L} : \cC_L' \to \cC_L$, defined by
\begin{equation} \label{eq:def:widetildeChiL}
	\widetilde{\chi^L} : V^L \in \cC_L \mapsto \rho \left[ \dfrac{1}{2 \ri \pi} \oint_\sC \dfrac{1}{\lambda - \widetilde{H_0^L}} V^L \dfrac{1}{\lambda - \widetilde{H_0^L}} \rd \lambda \right] .
\end{equation}

From Lemma~\ref{lem:widetildeH0}, we deduce that $\widetilde{\chi^L}$ is well-defined, and has properties similar to ${\chi^L}$ defined in~\eqref{eq:def:chiL}. We finally define the operator 
\begin{equation} \label{eq:def:widetildeLL}
	\widetilde{\cL^L} := -  \sqrt{v_c^L} \widetilde{\chi^L} \sqrt{v_c^L}.
\end{equation}
This operator shares the properties of $\cL^L$ defined in~\eqref{eq:def:LL}. We then write $\cJ_{2,\nu}^L = \cJ_{2,1,\nu}^L + \cJ_{2,2,\nu}^L$ with
\begin{align}
	 \cJ_{2,1,\nu}^L = & \frac12 \left\bra \left[ \left( 1 + \cL^L \right)^{-1}  - \left( 1 + \widetilde{\cL^L} \right)^{-1} \right] \sqrt{v_c^L} \nu_L, \sqrt{v_c^L} \nu_L  \right\ket_{L^2_{0,\per}(\WS_L)} +  \label{eq:diffQuad1}\\
	\cJ_{2,2,\nu}^L = & \frac12 \left( \left\bra \left( 1 + \widetilde{\cL^L} \right)^{-1} \sqrt{v_c^L} \nu_L, \sqrt{v_c^L} \nu_L  \right\ket_{L^2_{0,\per}(\WS_L)} - \left\bra \left( 1 + \cL \right)^{-1} \sqrt{v_c} \nu, \sqrt{v_c} \nu  \right\ket_{L^2(\R^3)} \right). \label{eq:diffQuad2}
\end{align}

The first term is controlled thanks to the following lemma, whose proof is postponed until Section~\ref{sec:proof:intermediate}.

\begin{lemma} \label{lem:intermediate}
	 There exist $C \in \R^+$ and $\alpha > 0$ such that, for all $L \ge L^*$, it holds that
	 \[
	 	\left|  \cJ_{2,1,\nu}^L \right| = \left| \left\bra \left[ \left( 1 + \cL^L \right)^{-1}  - \left( 1 + \widetilde{\cL^L} \right)^{-1} \right] \sqrt{v_c^L} \nu_L, \sqrt{v_c^L} \nu_L  \right\ket_{L^2_{0,\per}(\WS_L)} \right| \le C \| \nu \|_{L^2(\R^3)}^2 \re^{-\alpha L}.
	 \]
\end{lemma}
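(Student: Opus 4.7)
\textbf{Proof plan for Lemma~\ref{lem:intermediate}.} The plan is to reduce everything to the operator-norm estimate $\|\cL^L - \widetilde{\cL^L}\|_{\cB(L^2_{0,\per}(\WS_L))} \le C e^{-\alpha L}$ and to reduce \emph{that} estimate to the exponential closeness of the mean-field potentials $V_0$ and $V_0^L$ (which, under assumption~\textbf{(A1)} and Theorem~\ref{th:expCV}, is already available and which is the same input used to prove Lemma~\ref{lem:linear}).

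\medskip

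\textbf{Step 1: reduction via the second resolvent identity.} First I would write
\begin{equation*}
\left( 1+\cL^L\right)^{-1} - \left( 1+\widetilde{\cL^L}\right)^{-1} = \left( 1+\cL^L\right)^{-1} \bigl( \widetilde{\cL^L} - \cL^L \bigr) \left( 1+\widetilde{\cL^L}\right)^{-1}.
\end{equation*}
By Lemma~\ref{lem:propertiesLL} (and the obvious analogue for $\widetilde{\cL^L}$, which is proved by the exact same arguments since $\widetilde{H_0^L}$ is gapped around $\varepsilon_F$ thanks to Lemma~\ref{lem:widetildeH0}), both inverses are bounded by $1$ on $L^2_{0,\per}(\WS_L)$. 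A Cauchy--Schwarz argument together with Remark~\ref{rem:sqrtVc} and the fact that $\nu_L$ coincides with $\nu$ on $\supp\nu$ for $L\geq L^\supp$ gives
\begin{equation*}
\bigl\|\sqrt{v_c^L}\,\nu_L\bigr\|_{L^2_{0,\per}(\WS_L)}^2 \leq C\|\nu\|_{L^2(\R^3)}^2,
\end{equation*}
uniformly in $L$. Hence
\begin{equation*}
|\cJ_{2,1,\nu}^L| \leq \tfrac12\bigl\|\cL^L-\widetilde{\cL^L}\bigr\|_{\cB(L^2_{0,\per}(\WS_L))}\cdot C\|\nu\|_{L^2(\R^3)}^2,
\end{equation*}
and the whole question boils down to proving an exponential bound on $\|\cL^L-\widetilde{\cL^L}\|$.

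\medskip

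\textbf{Step 2: contour representation of $\chi^L - \widetilde{\chi^L}$.} Since $\cL^L - \widetilde{\cL^L} = -\sqrt{v_c^L}(\chi^L-\widetilde{\chi^L})\sqrt{v_c^L}$, and since both $\sqrt{v_c^L}$ factors are bounded by $1$ on $L^2_{0,\per}(\WS_L)$ (Lemma~\ref{lem:vcL}), it suffices to bound $\|\chi^L - \widetilde{\chi^L}\|_{\cB(\cC_L',\cC_L)}$. The key observation is that $H_0^L - \widetilde{H_0^L} = V_0^L - V_0$ is a \emph{bounded multiplication operator} on $L^2_\per(\WS_L)$, with $L^\infty$ bound independent of $L$ and exponentially small for large $L$: Theorem~\ref{th:expCV} (and its corollary on mean-field potentials, which is the input used in Lemma~\ref{lem:linear}) yields
\begin{equation*}
\|V_0^L - V_0\|_{L^\infty_\per(\WS_L)} \leq C\,e^{-\alpha L},\qquad L\geq L^\ast.
\end{equation*}
Applying twice the second resolvent identity $(\lambda-H_0^L)^{-1}-(\lambda-\widetilde{H_0^L})^{-1} = (\lambda-\widetilde{H_0^L})^{-1}(V_0-V_0^L)(\lambda-H_0^L)^{-1}$ inside the contour integral defining $\chi^L$, one writes $\chi^L-\widetilde{\chi^L}$ as the density of a sum of two operators of the schematic form
\begin{equation*}
\frac{1}{2i\pi}\oint_\sC \frac{1}{\lambda-R_1}\,(V_0^L-V_0)\,\frac{1}{\lambda-R_2}\,V^L\,\frac{1}{\lambda-R_3}\,d\lambda, \qquad R_j\in\{H_0^L,\widetilde{H_0^L}\},
\end{equation*}
where $V^L\in\cC_L'$ is the argument.

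\medskip

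\textbf{Step 3: Schatten bounds and conclusion.} I would then repeat the Kato--Seiler--Simon estimates used in the proof of Lemma~\ref{lem:controlQ} (see the periodic version, Corollary~\ref{cor:periodicKSS}) to show that each of the operators appearing above lies in $\cQ^L$, with $\cQ^L$-norm bounded by $C\,\|V_0^L-V_0\|_{L^\infty_\per}\,\|V^L\|_{\cC_L'}$, uniformly in $L$ (the uniform-in-$L$ character being guaranteed by the fact that the contour $\sC$ stays at distance at least $g/2$ from $\sigma(H_0^L)\cup\sigma(\widetilde{H_0^L})$). Taking the density via Lemma~\ref{lem:controlRhoQ} produces an element of $\cC_L$ with norm bounded by $C\,e^{-\alpha L}\,\|V^L\|_{\cC_L'}$, hence $\|\chi^L-\widetilde{\chi^L}\|_{\cB(\cC_L',\cC_L)}\leq C\,e^{-\alpha L}$. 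Combining with Step~1 finishes the proof.

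\medskip

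\textbf{Anticipated difficulty.} The only real obstacle is the uniformity in $L$ of the Kato--Seiler--Simon and resolvent bounds: one must ensure that the constants $C$ that appear when estimating operators of the form $\chi_{\sC}(H_0^L)$, $(\lambda-H_0^L)^{-1}$, and the associated Schatten norms on the finite box $\WS_L$ do \emph{not} grow with $L$. This is handled in exactly the same way as in Lemma~\ref{lem:controlQ}: the Bloch/Floquet picture gives bounds that depend only on the distance from the contour to the spectrum of the Bloch fibres, which is controlled by $g/2$ uniformly in~$L$ via Theorem~\ref{th:expCV}.
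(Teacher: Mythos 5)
Your proposal is correct and follows essentially the same route as the paper: reduce via the second resolvent identity to an operator-norm bound on $(1+\cL^L)^{-1}-(1+\widetilde{\cL^L})^{-1}$, then peel off the bounded factors $\sqrt{v_c^L}$ and $(1+\cL^L)^{-1}$, $(1+\widetilde{\cL^L})^{-1}$ to reduce to $\|\chi^L-\widetilde{\chi^L}\|_{\cB(\cC_L',\cC_L)}$, which is controlled by inserting $\widetilde{H_0^L}-H_0^L=V_0-V_0^L$ through resolvent identities inside the contour integral and invoking the exponential $L^\infty$-closeness of the mean-field potentials from Theorem~\ref{th:expCV} together with uniform-in-$L$ periodic Kato--Seiler--Simon bounds. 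The paper isolates the operator-norm estimate as a separate Lemma~\ref{lem:expCVoperators}, but the chain of inequalities and the inputs are identical to yours.
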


It remains to study the convergence of $\cJ_{2,2,\nu}^L$ defined in~\eqref{eq:diffQuad2} towards $0$. This is somehow an easier problem than before, for $\widetilde{\cL^L}$ and $\cL$ have very similar expressions. To study this last convergence, we use the Bloch transforms.


\section{Regularity of Bloch transforms}
\label{sec:BlochRegularity}


\subsection{Bloch transform from $L^2(\R^3)$ to $L^2(\BZ, L^2_\per(\WS))$}
\label{ssec:Bloch}
We recall the definition and basic properties of the Bloch transforms for the sake of completeness. We refer to~\cite[Chapter XIII]{ReedSimon4}) and \cite{GL2015}) for more details. In the sequel, we denote by $L^2_\per := L^2_\per(\WS)$ for clarity. We consider the Hilbert space $L^2(\BZ, L^2_\per)$, endowed with the normalized inner product
 \[
 	\bra f(\bq, \bx), g(\bq, \bx) \ket_{L^2(\BZ, L^2_\per(\WS))} := \fint_{\BZ} \int_{\WS} \overline{f}(\bq, \bx) g(\bq, \bx) \, \rd \bx \, \rd \bq,
 \]
 where we denoted by $\fint_{\BZ} = | \BZ |^{-1} \int_{\BZ}$. The Bloch transform $\cZ$ is defined by
\begin{equation} \label{eq:def:cZ}
	 \begin{array}{lcll}
	\cZ: & L^2(\R^3) & \to & L^2(\BZ, L^2_\per) \\
		 & w & \mapsto & (\cZ w)(\bq,\bx) := w_\bq(\bx) := \displaystyle \sum\limits_{\bR \in \Lat} \re^{-\ri \bq \cdot (\bx + \bR)} w(\bx + \bR).
	\end{array}
\end{equation}
It is an isometry from $L^2(\R^3)$ to $ L^2(\BZ, L^2_\per)$, whose inverse is given by
\[
	\cZ^{-1}:  w_\bq(\bx)  \mapsto  (\cZ^{-1} w)(\bx) := \displaystyle \fint_{\BZ} \re^{ \ri \bq \cdot \bx} w_\bq( \bx) \, \rd \bq.
\]
For $\bm \in \RLat$, we introduce the unitary operator $U_\bm$ acting on $L^2_\per$ defined by 
\begin{equation} \label{eq:def:Um}
	\forall \bm \in \RLat, \quad \forall f \in L^2_\per(\WS), \quad
	\left( U_\bm f \right)(\bx) = \re^{-\ri \bm \cdot \bx} f(\bx).
\end{equation}
From~\eqref{eq:def:cZ}, we can consider $\cZ w$ as a function of $L^2_\loc \left( \R^3, L^2_\per \right)$ with
\begin{equation} \label{eq:function_covariant}
	 \forall w \in L^2(\R^3), \ \forall \bm \in \RLat, \ \forall \bq \in \BZ, \quad 
	 \left( \cZ w \right)(\bq + \bm, \cdot) = w_{\bq + \bm} = U_\bm w_{\bq} = U_\bm \left( \cZ w (\bq, \cdot) \right).
\end{equation}

Let $A$ with domain $\cD(A)$ be a possibly unbounded operator acting on $L^2_\per$. We say that $A$ commutes with $\Lat$-translations if $\tau_\bR A = A \tau_\bR$ for all $\bR \in \Lat$. If $A$ commutes with $\Lat$-translations, then it admits a Bloch decomposition: there exists a family of operators $\left(A_\bq \right)_{\bq \in \BZ}$ acting on $L^2_\per$, such that, if $f \in L^2(\R^3)$ and $g \in \cD(A)$ are such that $f = Ag$, then, for almost any $\bq \in \BZ$, $g_\bq \in L^2_\per$ is in the domain of $A_\bq$, and
\begin{equation}\label{eq:op_diagonal}
 f_\bq = A_\bq g_\bq.
\end{equation} 
In this case, we write
\begin{equation*}
	\cZ A \cZ^{-1} = \fint_{\BZ}^{\oplus} A_\bq \rd \bq \quad \text{(Bloch decomposition of $A$).}
\end{equation*}
From~\eqref{eq:function_covariant}, we can extend the definition of $A_\bq$, initially defined for $\bq \in \BZ$, to $\bq \in \R^3$, by setting 
\begin{equation} \label{eq:rotation}
	\forall \bm \in \RLat, \quad \forall \bq \in \BZ, \quad A_{\bq + \bm} = U_\bm A_\bq U_\bm^{-1},
\end{equation}
so that~\eqref{eq:op_diagonal} holds for almost any $\bq\in \RR^3$. 

If $A$ is locally trace-class, then $A_\bq$ is trace-class on $L^2_\per$ for almost any $\bq \in \R^3$. The operator $A$ can be associated a density $\rho_A$, which is an $\Lat$-periodic function, given by 
\[
	\rho_A = \fint_{\BZ} \rho_{A_\bq} \rd \bq,
\] 
where $\rho_{A_\bq}$ is the density of the trace-class operator $A_\bq$. The trace per unit volume of $A$ (defined in~\eqref{eq:VTr_def1}) is also equal to
\begin{equation} \label{eq:VTr}
	\VTr(A) = \fint_{\BZ} \Tr_{L^2_\per} \left( A_\bq \right) \rd \bq .
\end{equation}

\subsection{The supercell Bloch transform}
\label{sec:supercellBloch}

We present in this section the ``supercell'' Bloch transform, already introduced in~\cite{GL2015}. This transformation goes from~$L^2_\per(\Gamma_L)$ to $\ell^2(\Lambda_L, L^2_\per)$, where $\Lambda_L := \left( L^{-1} \RLat\right) \cap \BZ$, \textit{i.e.} 
\begin{equation} \label{eq:def:LambdaL}
	\Lambda_L := \left\{ \dfrac{k_1}{L} \ba_1^* +  \dfrac{k_2}{L} \ba_2^* +  \dfrac{k_3}{L} \ba_3^*, \ (k_1, k_2, k_3) \in \left\{ \dfrac{-L+ \eta}{2},\dfrac{-L+\eta}{2} + 1, \cdots,  \dfrac{L + \eta}{2} - 1 \right\}^3 \right\},
\end{equation}
with $\eta = 1$ if $L$ is odd, and $\eta = 0$ if $L$ is even, so that there are exactly $L^3$ points in $\Lambda_L$. Similarly, we define $\Lat_L := \Lat \cap \WS_L$, which contains $L^3$ points of the lattice $\Lat$. We introduce the Hilbert space $\ell^2(\Lambda_L, L^2_\per)$ endowed with the normalized inner product
\[
	\bra f(\bQ, \bx), g(\bQ, \bx) \ket_{\ell^2(\Lambda_L, L^2_\per)} := \dfrac{1}{L^3}\sum_{\bQ \in \Lambda_L} \int_{\WS} \overline{f}(\bQ, \bx) g(\bQ, \bx) \, \rd \bx.
\]
The supercell Bloch transform is defined by
\begin{equation} \label{eq:def:cZL}
	\begin{array}{llll}
	\cZ_L: & L^2_\per(\WS_L) & \to &  \ell^2(\Lambda_L, L^2_\per) \\
		& w & \mapsto & (\cZ_L w)(\bQ,\bx) := w_\bQ(\bx) := \displaystyle \sum\limits_{\bR \in \Lat_L} \re^{-\ri \bQ \cdot (\bx + \bR)} w(\bx + \bR).
	\end{array}
\end{equation}
It is an isometry from $L^2_\per(\WS_L)$ to $\ell^2(\Lambda_L, L^2_\per)$ whose inverse is given by
\[
	\cZ_L^{-1}: w_\bQ( \bx) \mapsto (\cZ_L^{-1} w)(\bx) :=  \dfrac{1}{L^3} \sum\limits_{\bQ \in \Lambda_L} \re^{ \ri \bQ \cdot \bx} w_\bQ( \bx).
\]
We can extend $\cZ$ to $\ell^\infty \left( L^{-1} \RLat , L^2_\per \right)$ with
\[
	\forall w \in L^2_\per(\WS_L), \quad \forall \bm \in \RLat, \quad \forall \bQ \in \Lambda_L, \quad w_{\bQ + \bm} = U_\bm w_{\bQ},
\]
where the operator $U_\bm$ was defined in~\eqref{eq:def:Um}.

Let $A^L$ with domain $\cD \left(A^L \right)$ be an operator acting on $L^2_\per(\WS_L)$. If $A$ commutes with $\Lat$-translations, then it admits a supercell Bloch decomposition: there exists a family of operators $(A_\bQ^L)_{\bQ \in \Lambda_L}$ acting on $L^2_\per$ such that if $f = A^L g$ with $f \in L^2_\per(\Gamma_L)$ and $g \in \cD(A^L)$, then for all $\bQ \in \Lambda_L$, $g_\bQ \in \cD(A^L)$ and
\begin{equation} \label{eq:op_diagonal_sc}
 f_\bQ = A_\bQ^L g_\bQ.
\end{equation} 
We write
\[
	\cZ_L A^L \cZ_L^{-1} := \dfrac{1}{L^3} \bigoplus_{\bQ \in \Lambda_L} A^L_\bQ \quad \text{(supercell Bloch decomposition of $A^L$).}
\]
Similarly to~\eqref{eq:rotation}, we extend the definition of $A^L_\bQ$ to $L^{-1} \RLat$ with
\[
	\forall \bm \in \RLat, \quad \forall \bQ \in \Lambda_L, \quad A^L_{\bQ + \bm} = U_\bm A^L_\bQ U_{\bm}^{-1},
\]
so that~\eqref{eq:op_diagonal_sc} holds for all $\bQ \in L^{-1} \RLat$.

Finally, if the operator $A^L$ is trace-class, we define the trace per unit volume by
\begin{equation} \label{eq:VTrL}
	\VTr_L(A^L) = \dfrac{1}{L^3} \Tr_{L^2_\per(\WS_L)}(A^L) = \dfrac{1}{L^3} \sum\limits_{\bQ \in \Lambda_L} \Tr_{L^2_\per}(A^L_\bQ),
\end{equation}
and the associated density is given by 
$
	\displaystyle \rho_{A^L} = \dfrac{1}{L^3} \sum\limits_{\bQ \in \Lambda_L} \rho_{A^L_{\bQ}}
$,
where $\rho_{A^L_\bQ}$ is the density of the trace-class operator $A^L_\bQ$.


\subsection{Bloch transforms of $H_0$ and $\widetilde{H_0^L}$.}
We now derive the Bloch transformations of the different operators that we encountered.
We begin by noticing that the Bloch transforms of $-\Delta$ and $-\Delta^L$ are respectively given by
\[
	\cZ \left( - \Delta \right) \cZ^{-1} = \fint_\BZ^\oplus \left( -\Delta \right)_\bq \rd \bq
	\quad \text{and} \quad
	\cZ_L \left( - \Delta^L \right)   \cZ_L^{-1} = \dfrac{1}{L^3} \bigoplus_{\bQ \in \Lambda_L} \left(-\Delta\right)_\bQ,
\]
where we set
\[
	\forall \bq \in \BZ, \quad \left(-\Delta\right)_\bq = \left| - \ri \nabla^1 + \bq \right|^2 = \sum_{j=1}^3 \left( P_j^1 + q_j \right)^2.
\]
Here, we denoted by $\nabla^1$ the gradient operator on $L^2_\per$, and we recall the operators $P_j^L$, $j \in \{1,2,3\}$ have been defined in Section~\ref{ssec:rHF}. In particular, since the potential $V_0$ defined in~\eqref{eq:sc} is $\Lat$-periodic, the Bloch transform of $H_0$ is
\begin{equation} \label{eq:def:Hq}
	\cZ H_0 \cZ^{-1} = \fint_\BZ^\oplus H_\bq \rd \bq
	\quad \text{with} \quad
	H_\bq := \frac12 \left| - \ri \nabla^1 + \bq \right|^2 + V_0 = - \frac12 \Delta^1 - \ri \bq \cdot \nabla^1 + \frac{|\bq|^2}{2} + V_0,
\end{equation}
and the supercell Bloch transform of the operator $\widetilde{H_0^L}$ defined in~\eqref{eq:def:widetildeH0} is simply
\begin{equation} \label{eq:BlochHq}
	\cZ_L \widetilde{H_0^L} \cZ_L^{-1} = \dfrac{1}{L^3}\bigoplus_{\bQ \in \Lambda_L} H_\bQ.
\end{equation}
In other words, it holds that $\left( \widetilde{H_0^L}\right)_\bQ = \left({H_0}\right)_\bQ$. In view of~\eqref{eq:def:Hq}, we can extend the definition of $H_\bq$ to the whole complex plane. More specifically, for $\bz \in \C^3$, we define
\begin{equation} \label{eq:def:Hz}
	H_\bz := - \frac12 \Delta^1 - \ri \bz \cdot \nabla^1 + \frac{\bz^2}{2} + V_0
	\quad \text{acting on}
	\quad L^2_\per,
\end{equation}
where $\bz^2$ is a short notation for $\bz^T \bz = z_1^2 + z_2^2 + z_3^2$. The map $\bz \mapsto H_\bz$ is an holomorphic family of type $(A)$ (see~\cite[Chapter VII]{Kato2012}). 
For $\bz \in \C^3$, and $\lambda \in \sC$, we introduce
\begin{equation} \label{eq:def:B1q_B2q}
	\begin{aligned}
		{B_1}(\lambda, \bz) & := (1 - \Delta^1) \dfrac{1}{\lambda - H_\bz}, 
		\quad \text{and} \quad
		{B_2}(\lambda, \bz) & := \dfrac{1}{\lambda - H_\bz} (1 - \Delta^1).
	\end{aligned}
\end{equation}
The following lemma was proved in~\cite[Lemma 5.2]{GL2015}.
\begin{lemma} \label{lem:bounds_Bq}
	For all $\bq \in \R^3$, and all $\lambda \in \sC$, the operator $\lambda - H_\bq$ is invertible. For any compact $K \subset \R^3$, there exists $C_K \in \R^+$ such that,
	\begin{equation*}
		\forall \bq \in K, \quad \forall \lambda \in \sC, \quad 
	\left\|  {B_{1,2}}(\lambda, \bq) \right\|_{\cB(L^2_\per)} \le {C_K}.
	\end{equation*}
	Moreover, there exists $A > 0$ such that, for all $\bz \in \R^3 + \ri [-A,A]^3$ and all $\lambda \in \sC$, the operator $\lambda - H_\bz$ is invertible, and there exists $C_K \in \R^+$ such that
	\begin{equation} \label{eq:bounds_B_z}
		\forall \bz \in K + \ri \, [-A, A]^3, \quad \forall \lambda \in \sC, \quad 
		\left\|  {B_{1,2}}(\lambda, \bz) \right\|_{\cB(L^2_\per)}  \le C_K.
	\end{equation}
\end{lemma}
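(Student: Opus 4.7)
The plan is to first settle the case of real quasi-momenta by exploiting the self-adjointness of $H_\bq$ together with the Bloch decomposition of $H_0$, and then to extend to a uniform complex strip via analytic perturbation theory for type~(A) holomorphic families. Concretely, I would first reduce to $\bq \in \BZ$ by the covariance relation $H_{\bq + \bm} = U_\bm H_\bq U_\bm^{-1}$ valid for $\bm \in \RLat$, which preserves spectra and conjugates resolvents by a unitary. The Bloch decomposition~\eqref{eq:def:Hq} gives $\sigma(H_\bq) \subset \sigma(H_0)$ for all $\bq \in \BZ$. Since the contour $\sC$ was constructed to enclose the spectrum of $H_0$ below $\varepsilon_F$ and to stay uniformly away from $\sigma(H_0)$ (using Assumption~\textbf{(A1)} and the explicit geometry of $\sC$), one has $\lambda - H_\bq$ invertible, with $\|(\lambda - H_\bq)^{-1}\|_{\cB(L^2_\per)}$ bounded uniformly in $\bq \in \R^3$ and $\lambda \in \sC$.

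Next, to bound $B_1(\lambda, \bq) = (1-\Delta^1)(\lambda - H_\bq)^{-1}$, I would establish the equivalence of the graph norms of $-\Delta^1$ and $H_\bq$ on $H^2_\per$ with constants depending only on a compact set $K \ni \bq$. Rewriting $-\Delta^1 = 2H_\bq + 2\ri \bq\cdot\nabla^1 - |\bq|^2 - 2V_0$ and using that $\bq\cdot\nabla^1$ is infinitesimally $(-\Delta^1)$-bounded, together with $V_0 \in L^\infty_\per$ (which follows from $\mu_\per \in L^2_\per$ by elliptic regularity applied to the periodic Poisson equation defining $V_0$), one obtains
\[
\|(1-\Delta^1) u\|_{L^2_\per} \le C_K \bigl( \|H_\bq u\|_{L^2_\per} + \|u\|_{L^2_\per} \bigr)
\quad \text{uniformly in } \bq \in K.
\]
Combined with the identity $H_\bq (\lambda - H_\bq)^{-1} = \lambda (\lambda - H_\bq)^{-1} - 1$ and the resolvent bound above, this yields the desired estimate on $B_1$. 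For $B_2$, the identity $B_2(\lambda, \bq) = B_1(\bar\lambda, \bq)^*$ (valid by self-adjointness of $H_\bq$ for $\bq \in \R^3$ and the conjugation-symmetry of $\sC$) provides the same bound.

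For the complex extension, I would expand
\[
H_{\bq + \ri \bp} = H_\bq + W(\bq, \bp), \qquad W(\bq, \bp) := \bp\cdot\nabla^1 + \ri \bq \cdot \bp - \tfrac12 |\bp|^2,
\]
and observe that $W$ is a first-order operator whose operator norm against $(\lambda - H_\bq)^{-1}$ is $O(|\bp|)$ uniformly on $\bq \in K$ and $\lambda \in \sC$ (using the bound on $B_1$ already obtained). Choosing $A > 0$ so that this norm stays below $1/2$ on the strip $K + \ri [-A, A]^3$, a Neumann series gives
\[
(\lambda - H_{\bq + \ri \bp})^{-1} = (\lambda - H_\bq)^{-1} \sum_{n \ge 0} \bigl( W(\bq, \bp)(\lambda - H_\bq)^{-1} \bigr)^n,
\]
uniformly bounded for $\bz = \bq + \ri \bp \in K + \ri [-A, A]^3$ and $\lambda \in \sC$. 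The bounds on $B_{1,2}(\lambda, \bz)$ then follow from the same graph-norm argument as in the real case, once one verifies that $H_\bz$ still has domain $H^2_\per$ with graph-norm equivalence constants uniform in the strip.

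The main obstacle I anticipate is ensuring that the width $A$ of the admissible complex strip can be chosen \emph{uniformly} in $\lambda \in \sC$ and $\bq \in K$, and that the graph-norm equivalence constants entering the bound on $B_{1,2}$ remain uniform in the strip. Both are handled by the compactness of $\sC$ and of $K$, together with the explicit (linear in $\bp$) form of the perturbation $W$; once these points are settled, the rest of the argument is routine analytic perturbation theory.
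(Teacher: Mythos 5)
The paper does not prove this lemma; it states it as a recall and attributes it to~\cite[Lemma 5.2]{GL2015}, so there is no in-paper proof to compare against. Your argument is a correct, self-contained proof and is the standard one: spectral inclusion from the Bloch decomposition together with Assumption \textbf{(A1)} and the covariance $H_{\bq+\bm}=U_\bm H_\bq U_\bm^{-1}$ give the uniform resolvent bound along~$\sC$; the graph-norm equivalence of $-\Delta^1$ and $H_\bq$ via $-\Delta^1 = 2H_\bq + 2\ri\bq\cdot\nabla^1 - |\bq|^2 - 2V_0$ (with $V_0\in L^\infty_\per$) controls $B_1$ with a constant depending only on $\sup_{\bq\in K}|\bq|$ and $\sup_{\lambda\in\sC}|\lambda|$; the adjoint identity $B_2(\lambda,\bq)=B_1(\bar\lambda,\bq)^*$ and the conjugation symmetry of $\sC$ handle $B_2$; and a Neumann series for the first-order perturbation $W(\bq,\bp)$, with $A$ chosen uniformly in $(\lambda,\bq)\in\sC\times K$ by compactness, yields the complex strip. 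One point worth writing out carefully in the final version: for complex $\bz$ the adjoint trick for $B_2$ is no longer available since $H_\bz$ is not self-adjoint, so you should instead expand $(\lambda-H_\bz)^{-1}=\sum_{n\ge0}\bigl((\lambda-H_\bq)^{-1}W\bigr)^n(\lambda-H_\bq)^{-1}$ and note that $\|(\lambda-H_\bq)^{-1}W\|$ is $O(|\bp|)$ uniformly, because $(\lambda-H_\bq)^{-1}\nabla^1 = -\bigl(\nabla^1(\bar\lambda-H_\bq)^{-1}\bigr)^*$ is controlled by the already-established bound on $B_1(\bar\lambda,\bq)$; then $B_2(\lambda,\bz)=\sum_{n\ge0}\bigl((\lambda-H_\bq)^{-1}W\bigr)^n B_2(\lambda,\bq)$ inherits the uniform bound.
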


\begin{remark}
In practice, we take the compact $K$ big enough so that, for instance $2 \BZ \subset K$. This is useful in order to consider for instance $B_{1,2}(\lambda, \bq - \bq')$ for $\bq, \bq' \in \BZ$.
\end{remark}

For all $\bq \in \BZ$, the operator $H_\bq$ acting on $L^2_\per$ is a bounded below self-adjoint operator which is compact resolvent. In particular, its spectrum is purely discrete, and accumulates at infinity. In the sequel, we denote by $\varepsilon_{1,\bq} \le \varepsilon_{2,\bq} \le \cdots$ the eigenvalues of $H_\bq$ sorted in increasing order, and by $\left( u_{n, \bq} \right)_{n \in \N^*}$ an associated orthonormal basis of eigenvectors. Since $V_0$ is real-valued, it holds
\begin{equation} \label{eq:symmetries}
	\forall \bq \in \BZ, \quad \forall n \in \N^*, \quad u_{n,-\bq} = \overline{u_{n,\bq}}
	\quad \text{and} \quad
	\varepsilon_{n,-\bq} = \varepsilon_{n,\bq}
\end{equation}
We emphasize that the maps $\bq \mapsto \varepsilon_{n, \bq}$ and $\bq \mapsto u_{n, \bq}$ are not smooth in general. In the sequel, we write
\begin{equation} \label{eq:spectralHq}
	H_\bq = \sum_{n=1}^\infty \varepsilon_{n, \bq} | u_{n, \bq} \ket \bra u_{n, \bq} |,
\end{equation}
with the implicit convention that the Dirac's notation for the bra-ket inner product refers to the one of $L^2_\per$. Finally, we denote by $N$ the common number of eigenvalues of $H_\bq$ below $\varepsilon_F$, so that
\begin{equation} \label{eq:gap}
	\forall \bq \in \BZ, \quad \varepsilon_{N, \bq} \le \varepsilon_F -g 
	\quad \text{and} \quad
	\varepsilon_F + g  \le \varepsilon_{N+1, \bq}.
\end{equation}

\subsection{Bloch transforms of $v_c$ and $v_c^L$.}
We now compute the Bloch transform of the operators $v_c$ and $v_c^L$, which commute with $\Lat$-translations. Let $\left( e_\bk \right)_{\bk \in \RLat}$ denotes the orthonormal Fourier basis of $L^2_\per$:
\begin{equation} \label{eq:def:FourierBasis}
	\forall \bk \in \RLat, \quad \forall \bx \in \WS, \quad e_\bk(\bx) := \dfrac{1}{| \WS |^{1/2}} \re^{\ri \bk \cdot \bx}.
\end{equation}
It is classical to prove that the Bloch transform of the operator $v_c$ is
\begin{equation} \label{eq:BlochVc}
	\cZ v_c \cZ^{-1} = \fint_{\BZ}^\oplus v_{c,\bq} \rd \bq 
	\quad \text{with} \quad 
	v_{c,\bq} = 4 \pi \sum_{\bk \in \RLat} \dfrac{| e_\bk \ket \bra e_\bk |}{| \bk + \bq |^2}.
\end{equation}
Likewise, the Bloch transform of $\sqrt{v_c}$ is
\begin{equation} \label{eq:BlochSqrtVc}
	\cZ \sqrt{v_c} \cZ^{-1} = \fint_{\BZ}^\oplus \left(\sqrt{v_c} \right)_\bq \rd \bq 
	\quad \text{with} \quad 
	\left(\sqrt{v_c} \right)_\bq = \sqrt{4 \pi} \sum_{\bk \in \RLat} \dfrac{| e_\bk \ket \bra e_\bk |}{| \bk + \bq |}.
\end{equation}
From Lemma~\ref{lem:vc}, we obtain that, for $\rho \in \cC$, it holds that
\[
	\| \rho \|_{\cC}^2 = \| \sqrt{v_c} \, \rho \|_{L^2(\R^3)}^2 = \fint_\BZ \| (\sqrt{v_c})_\bq \rho_\bq \|_{L^2_\per}^2 \rd \bq
	= \fint_\BZ \|  \rho_\bq \|_{\cC_\bq}^2 \rd \bq,
\]
where we identify the domain of $(\sqrt{v_c})_\bq$ to be, for $\bq \in \BZ \setminus \{ \bnull \}$,
\[
	\cC_\bq := \left\{ f \in \sS'_\per(\WS), \ \left\| f \right\|_{\cC_\bq}^2 := \left\| (\sqrt{v_c})_\bq f \right\|_{L^2_\per}^2 = 4 \pi \sum_{\bk \in \RLat} \dfrac{| c_\bk(f) |^2}{| \bq + \bk |^2} < \infty \right\}.
\]
Taking $L^2_\per$ as the pivoting space, the dual of $\cC_\bq$ is the space $\cC_\bq'$ defined, for $\bq \in \BZ \setminus \{ \bnull \}$, by
\[
	\cC'_\bq := \left\{ f \in \sS'_\per(\WS), \quad \left\| f \right\|_{\cC'_\bq}^2 := (4 \pi)^{-1} \sum_{\bk \in \RLat} {| c_\bk(f) |^2}{| \bq + \bk |^2} < \infty \right\}.
\]
It is easy to see that for $\bq \in \BZ \setminus \{ \bnull \}$, the operator $(\sqrt{v_c})_\bq$ is an isometry from $\cC_\bq$ to $L^2_\per$ and from $L^2_\per$  to $\cC'_\bq$, so that $({v_c})_\bq$ is an isometry from $\cC_\bq$ to $\cC'_\bq$.

\medskip

In the supercell setting, we obtain $\cZ_L \sqrt{v_c^L} \cZ_L^{-1} = \dfrac{1}{L^3} \bigoplus\limits_{\bQ \in \Lambda_L} \left( \sqrt{v_c^L} \right)_\bQ$ with
\begin{equation} \label{eq:vcLbQ}
	\left( \sqrt{v_c^L} \right)_\bQ= \left\{ \begin{array}{lll}
	\displaystyle		\sqrt{4 \pi} \sum\limits_{\bk \in \RLat} \dfrac{| e_\bk \ket \bra e_\bk |}{| \bk + \bQ |} & \text{if} & \bQ \neq \bnull \\
		& & \\
	\displaystyle	\sqrt{4 \pi} \sum\limits_{\bk \in \RLat \setminus \{ \bnull \}} \dfrac{| e_\bk \ket \bra e_\bk |}{| \bk  |} & \text{if} & \bQ = \bnull.
		\end{array}
		\right.
\end{equation}
In particular, if $\bQ \neq \bnull$, it holds that $( \sqrt{v_c^L} )_\bQ = ( \sqrt{v_c} )_\bQ $, while for $\bQ = \bnull$, $(\sqrt{v_c^L})_\bnull$ is only defined on
\[
	\cC_\bnull := \left\{ f \in \sS'_\per(\WS), \ c_\bnull(f) = 0, \ \left\| f \right\|_{\cC_\bnull}^2 := \left\| (\sqrt{v_c^L})_\bnull f \right\|_{L^2_\per}^2 = 4 \pi \sum_{\bk \in \RLat \setminus \{ \bnull \}} \dfrac{| c_\bk(f) |^2}{| \bk |^2} < \infty \right\}.
\]



\subsection{Bloch transforms of $\chi$ and $\widetilde{\chi^L}$.}

The operators $\chi$ and $\widetilde{\chi^L}$ defined respectively in~\eqref{eq:def:chi} and~\eqref{eq:def:widetildeChiL} commute with $\Lat$-translations, hence admit Bloch transforms. To calculate them, we first recall that the Bloch matrix of a multiplicative operator by a function $V$ is $\left[ V \right]_{\bq, \bq'} = V_{\bq - \bq'}$ and that the Bloch transform of a function $\rho_A$ which is the density of an operator $A$ acting on $L^2(\R^3)$ is formally (see \textit{e.g.}~\cite[Equation (48)]{Cances2010})
\begin{equation} \label{eq:densityBloch}
	\left( \rho_A \right)_\bq := \fint_{\BZ} \rho \left[ A_{\bq', \bq' - \bq} \right] \rd \bq'.
\end{equation}
We deduce that the Bloch transform of the operator $\chi$ defined in~\eqref{eq:def:chi} is $\cZ \chi \cZ^{-1} = \fint_\BZ^\oplus \chi_\bq \rd \bq$ with  (see also~\cite[Proposition 3]{Cances2010})
\begin{equation} \label{eq:chiQ}
	\chi_\bq : V \in \cC'_\bq \mapsto 
	 \fint_\BZ \rho \left[  \dfrac{1}{2 \ri \pi}  \oint_\sC \dfrac{1}{\lambda - H_{\bq'}} {V} \dfrac{1}{\lambda - H_{\bq' - \bq}} \rd \lambda \right]\rd \bq'.
\end{equation}
Since $\chi$ is bounded from $\cC'$ to $\cC$ (see Lemma~\ref{lem:propertiesChiL}), we deduce that for almost all $\bq \in \BZ$, the operator $\chi_\bq$ is bounded from $\cC'_\bq$ to $\cC_\bq$. 

In the supercell setting, the analog of~\eqref{eq:densityBloch} is
\[
	\forall \bQ \in \Lambda_L, \quad \left( \rho_A \right)_\bQ := \dfrac{1}{L^3} \sum_{\bQ' \in \Lambda_L} \rho \left[ A_{\bQ', \bQ' - \bQ} \right].
\]
Together with~\eqref{eq:BlochHq}, we obtain that $ \cZ_L \widetilde{\chi^L} \cZ_L^{-1} = \dfrac{1}{L^3}  \bigoplus\limits_{\bQ \in \Lambda_L} \widetilde{\chi^L_\bQ}$ with
\[
	\widetilde{\chi^L_\bQ} : V \in \cC'_\bQ \mapsto 
	 \dfrac{1}{L^3} \sum_{\bQ' \in \Lambda_L} \rho \left[  \dfrac{1}{2 \ri \pi}  \oint_\sC \dfrac{1}{\lambda - H_{\bQ'}} {V} \dfrac{1}{\lambda - H_{\bQ' - \bQ}} \rd \lambda \right].
\]

\subsection{Bloch transforms of the operators $\cL$ and $\widetilde{\cL^L}$.} 
From the definitions~\eqref{eq:def:L} and~\eqref{eq:def:widetildeLL}, we deduce that the operators $\cL$ and $\widetilde{\cL^L}$ commute with $\Lat$-translations, and that it holds
\[
	\cZ \cL \cZ^{-1} = \fint_{\BZ}^\oplus \cL_\bq \rd \bq
	\quad \text{with}\quad
	\forall \bq \in \BZ, \quad	\cL_\bq = - \left(\sqrt{v_c} \right)_\bq  \chi_\bq \left(\sqrt{v_c} \right)_\bq.
\]
and
\[
	\cZ_L \widetilde{\cL^L} \cZ_L^{-1} = \dfrac{1}{L^3} \bigoplus \cL_\bQ 
	\quad \text{with}\quad
	\widetilde{\cL_\bQ} = - \left( \sqrt{v_c^L} \right)_\bQ \widetilde{\chi_\bQ^L} \left( \sqrt{v_c^L} \right)_\bQ.
\]

The following lemma controls the difference between $\widetilde{\cL^L_\bQ}$ and ${\cL_\bQ}$ (see Section~\ref{sec:proof:LQ_LLQ} for the proof). It is based on the fact that the difference between Riemann sums and the corresponding integral decays exponentially fast for analytic integrands.
\begin{lemma} \label{lem:LQ_LLQ}
	There exist $C \in \R^+$ and $\alpha > 0$ such that, for all $L \ge L^*$ and all $\bQ \in \Lambda_L \setminus \{ \bnull \}$, it holds that
	\[
		\left\|  \widetilde{\cL^L_\bQ} - \cL_\bQ \right\|_{\cB(L^2_\per)} \le C \re^{- \ri \alpha L}.
	\]
\end{lemma}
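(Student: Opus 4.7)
My plan is to reduce the bound to a Riemann-sum-versus-integral comparison for an integrand that is both $\RLat$-periodic and holomorphic in a complex strip, and then invoke the appendix estimate on such sums.

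First I would set up the reduction. For $\bQ \in \Lambda_L \setminus\{\bnull\}$, comparing \eqref{eq:BlochSqrtVc} and \eqref{eq:vcLbQ} shows $(\sqrt{v_c^L})_\bQ = (\sqrt{v_c})_\bQ$, so
\[
\widetilde{\cL^L_\bQ} - \cL_\bQ = -(\sqrt{v_c})_\bQ\bigl(\widetilde{\chi^L_\bQ} - \chi_\bQ\bigr)(\sqrt{v_c})_\bQ.
\]
Using that $(\sqrt{v_c})_\bQ$ is an isometry from $L^2_\per$ to $\cC'_\bQ$ and from $\cC_\bQ$ to $L^2_\per$, it suffices to prove that $\|\widetilde{\chi^L_\bQ}-\chi_\bQ\|_{\cB(\cC'_\bQ,\cC_\bQ)} \le C e^{-\alpha L}$ with constants uniform in $\bQ$. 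For $V \in \cC'_\bQ$, set
\[
g_V(\bq') := \rho\!\left[\frac{1}{2 i \pi}\oint_\sC \frac{1}{\lambda - H_{\bq'}}\,V\,\frac{1}{\lambda - H_{\bq' - \bQ}}\, \rd\lambda\right],
\]
so that $\chi_\bQ V = \fint_\BZ g_V(\bq')\,\rd\bq'$ and $\widetilde{\chi^L_\bQ}V = L^{-3}\sum_{\bQ' \in \Lambda_L} g_V(\bQ')$: the difference is exactly the error between a Riemann sum and the corresponding integral of a single $\cC_\bQ$-valued integrand.

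Next I would verify the two structural properties of $g_V$ needed to apply the appendix result. Periodicity: by \eqref{eq:rotation}, $(\lambda - H_{\bq'+\bm})^{-1} = U_\bm (\lambda - H_{\bq'})^{-1} U_\bm^{-1}$ and likewise for $H_{\bq'+\bm-\bQ}$; since $U_\bm$ is multiplication by $e^{-\ri \bm \cdot \bx}$, it commutes with the multiplicative operator $V$; and a direct kernel computation gives $\rho[U_\bm A U_\bm^{-1}] = \rho[A]$. Hence $g_V(\bq' + \bm) = g_V(\bq')$ for all $\bm \in \RLat$. Analyticity: Lemma~\ref{lem:bounds_Bq}, applied with a compact $K \supset 2\BZ$, extends both $\bq' \mapsto (\lambda - H_{\bq'})^{-1}$ and $\bq' \mapsto (\lambda - H_{\bq'-\bQ})^{-1}$ to holomorphic $\cB(L^2_\per)$-valued maps on $\BZ + \ri[-A,A]^3$, uniformly bounded for $\lambda \in \sC$. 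A Kato--Seiler--Simon type estimate analogous to Lemma~\ref{lem:controlRhoQ} transferred to Bloch fibers (the $|\bk+\bQ|^{-1}$ weights coming from $\cC'_\bQ$ and $\cC_\bQ$ offset each other via $v_c$) then yields
\[
\sup_{\bq' \in \BZ + \ri[-A,A]^3}\|g_V(\bq')\|_{\cC_\bQ} \le C\,\|V\|_{\cC'_\bQ}
\]
with $C$ independent of $\bQ$ and $L$.

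Finally I would invoke the Banach-space-valued exponential convergence of Riemann sums for analytic periodic integrands promised in the appendix: if $f:\R^3 \to \cX$ is $\RLat$-periodic with a holomorphic extension to $\R^3 + \ri[-A,A]^3$ bounded by $M$ in $\cX$, then $\|L^{-3}\sum_{\bQ' \in \Lambda_L} f(\bQ') - \fint_\BZ f\|_\cX \le C\,M\,e^{-\alpha L}$ for some $\alpha < A$. (This follows from the Poisson summation formula together with the standard fact that Fourier coefficients of a function holomorphic in a strip of width $A$ decay as $e^{-A|\bk|}$, so the aliasing error indexed by $L\RLat \setminus \{\bnull\}$ is summable and exponentially small.) Applying this to $f = g_V$ valued in $\cC_\bQ$, and using the uniform analytic bound from the previous step, gives the desired estimate $\|\widetilde{\chi^L_\bQ}V - \chi_\bQ V\|_{\cC_\bQ} \le C e^{-\alpha L}\|V\|_{\cC'_\bQ}$.

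The main obstacle is the uniformity of all the analytic and functional-analytic estimates in $\bQ$, since the spaces $\cC_\bQ$ and $\cC'_\bQ$ themselves depend on $\bQ$. Absorbing the $\bQ$-dependent weights into the isometries $(\sqrt{v_c})_\bQ$ and working in the $\cB(L^2_\per)$ picture, as done in the reduction step, is what makes the constants uniform; beyond that, everything rests on the analytic bounds of Lemma~\ref{lem:bounds_Bq} and on the Banach-valued Riemann-sum lemma from the appendix.
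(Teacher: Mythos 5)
Your strategy is structurally similar to the paper's (Riemann sum minus integral of an analytic $\RLat$-periodic integrand plus the appendix convergence lemma), but the two proofs diverge at a step that is more delicate than your sketch acknowledges, and the paper deliberately avoids it.

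The paper does \emph{not} treat the $\cC_\bQ$-valued integrand $g_V(\bq')$ with the $\lambda$-integral already done. Instead it tests $\bra f|(\widetilde{\cL^L_\bQ}-\cL_\bQ)g\ket$ for $f,g\in L^2_\per$, writes the integrand as the scalar
$K^{f,g}_{\lambda,\bQ}(\bq')=\Tr[(\lambda-H_{\bq'})^{-1}F_\bQ(\lambda-H_{\bq'-\bQ})^{-1}G_\bQ]$ with $F_\bQ=(\sqrt{v_c})_\bQ f$, proves analyticity and periodicity in $\bq'$ at each fixed $\lambda$, and bounds it by $C|\bQ|^{-2}\|f\|_{L^2_\per}\|g\|_{L^2_\per}$ via $\|F_\bQ\|_{L^2_\per}\le \sqrt{4\pi}|\bQ|^{-1}\|f\|_{L^2_\per}$ and the periodic KSS inequality. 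The $|\bQ|^{-2}$ factor is then harmless: for $\bQ\in\Lambda_L\setminus\{\bnull\}$ one has $|\bQ|^{-2}\le CL^2$, and $L^2 e^{-\alpha L}\le C'e^{-\alpha'L}$. This uses only the scalar Lemma~\ref{lem:RiemannAnalytic} stated in the appendix, at the price of an extra polynomial factor that is immediately absorbed.

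Your version instead claims the stronger \emph{uniform-in-$\bQ$} pointwise bound $\sup_{\bq'}\|g_V(\bq')\|_{\cC_\bQ}\le C\|V\|_{\cC'_\bQ}$ and then needs a \emph{Banach-valued} analogue of Lemma~\ref{lem:RiemannAnalytic}, which the paper does not state (the extension via Poisson summation is routine, but you should say so rather than attribute it to the appendix). The more substantive issue is the uniform bound itself: by the $(\sqrt{v_c})_\bQ$-isometry it is equivalent to a pointwise operator bound $\|\cT_\bQ(\bq')\|_{\cB(L^2_\per)}\le C$ independent of $\bQ$, which is \emph{not} a direct fiber-wise KSS estimate. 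Testing with $V=W=e_\bnull$ shows $\|V\|_{\cC'_\bQ}\sim|\bQ|$, so the required bound forces $c_\bnull(g_V(\bq'))=O(|\bQ|^2)$. This does hold, but only because of a cancellation: after the Cauchy integration one has $c_\bnull(g_V)\propto\sum_{n\le N<m}|\bra u_{n,\bq'}|u_{m,\bq'-\bQ}\ket|^2/|\varepsilon_m-\varepsilon_n|$, and the overlaps are $O(|\bQ|)$ because $\gamma_{0,\bq'}(1-\gamma_{0,\bq'-\bQ})=\gamma_{0,\bq'}(\gamma_{0,\bq'}-\gamma_{0,\bq'-\bQ})=O(|\bQ|)$ by analyticity of $\bq\mapsto\gamma_{0,\bq}$. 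Your parenthetical ``the weights offset via $v_c$'' gestures at this but skips the commutator mechanism that actually produces it; a KSS estimate at fixed $\lambda$ (as in the paper's Lemma~\ref{lem:analK}) genuinely cannot see this cancellation, which is precisely why the paper gets $|\bQ|^{-2}$ and not a constant. So either you must work out the off-diagonal Bloch-fiber analogue of the commutator Lemma~\ref{lem:commutator} (nontrivial, since it couples $\gamma_{0,\bq'}$ and $\gamma_{0,\bq'-\bQ}$ rather than a single $\gamma_0^L$), or you should simply drop the uniformity claim, accept a $|\bQ|^{-2}$ bound as the paper does, and absorb $L^2$ into $e^{-\alpha L}$ at the very end. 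The latter is shorter and is, in fact, the paper's route.
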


The rest of this section is devoted to exhibiting some properties of the operators $\cL_\bq$. Recall that, for $f,g \in L^2_\per$,
\begin{align*}
	\bra f | \cL_\bq g \ket 
		 = \dfrac{-1}{2 \ri \pi} \fint_\BZ \oint_\sC \Tr_{L^2_\per(\WS)} \left( 
		\dfrac{1}{\lambda - H_{\bq'}} \left[ \left( \sqrt{v_c} \right)_\bq g \right] \dfrac{1}{\lambda - H_{\bq' - \bq}} \left[ \left( \sqrt{v_c} \right)_\bq \overline{f} \right] 
	\right) \rd \lambda \rd \bq'.
\end{align*}
For $\bz \in \C^3 \setminus \RLat$, we introduce the operator $(\sqrt{v_c})_\bz$ defined in the Fourier basis as
\[
	(\sqrt{v_c})_\bz := \sqrt{4 \pi} \sum_{\bk \in \RLat} \dfrac{| e_\bk \ket \bra e_\bk |}{| \bk + \bz |}.
\]
We also introduce the operator $\cL_\bz$ acting on $L^2_\per$ defined for $f,g \in L^2_\per$ by
\[
	\bra f | \cL_\bz g \ket 
		= \dfrac{-1}{2 \ri \pi} \fint_\BZ \oint_\sC \Tr_{L^2_\per(\WS)} \left( 
		\dfrac{1}{\lambda - H_{\bq'}} \left[ \left( \sqrt{v_c} \right)_\bz g \right] \dfrac{1}{\lambda - H_{\bq' - \bz}} \left[ \left( \sqrt{v_c} \right)_\bz \overline{f} \right] 
	\right) \rd \lambda \rd \bq' .
\]
We will study these operators in two different regimes. For $r>0$, and $\bq_0 \in \R^3$, we denote by $\cB(\bq_0, r) := \left\{ \bq \in \R^3, \ |\bq  - \bq_0| \le r \right\}$, and 
\[
	\Omega_r := \R^3 \setminus \left\{ \bigcup_{\bk \in \RLat} \cB(\bk, r) \right\}.
\]
In other words, any $\bq \in \Omega_r$ is far from $\RLat$. We recall if $E$ is a Banach space, a map $f: \Omega \subset \C^d \to E$ is said to be (strongly) analytic on the open subset $\Omega$ if for all $\bz \in \Omega$, $\partial_{z_j} f \in E$ for $1 \le j \le d$ (\textit{i.e.} $\left\| \partial_{z_j} f \right\|_E < \infty$). In particular, if $E$ and $F$ are Banach spaces, and $A : \Omega \subset \C^d \to \cB(E,F)$ and $f : \Omega\subset \C^d \to E$ are analytic maps on $\Omega$, then $(Af): \bz  \mapsto A(\bz) f(\bz) \in F$ is analytic from $\Omega$ to $F$.

\medskip

We first have the following lemma (see Section~\ref{sec:proof:analLq} for the proof).
\begin{lemma} \label{lem:analLq}
	There exists $A > 0$ such that, for all $r > 0$, the map $\bz \mapsto \cL_\bz$ is well defined and analytic from $\Omega_r +\ri [-A, A]^3$ to $\cB(L^2_\per)$. Moreover, there exists $C \in \R^+$ such that
\[
	\sup_{\bz \in \Omega_r + \ri [-A, A]^3} \left\| \cL_\bz \right\|_{\cB(L^2_\per)} \le C.
\]
\end{lemma}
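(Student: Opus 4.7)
The plan is to prove Lemma~\ref{lem:analLq} in three steps: analytic extension of the two ``building blocks'', reduction to a bounded subdomain via $\RLat$-covariance, and a uniform Schatten estimate combined with dominated convergence for the analyticity.

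First, I would extend the two building blocks analytically to the complex strip. For $\bz=\bq+\ri\bp$ with $\bq\in\Omega_r$ and $\bp\in[-A,A]^3$, the expansion $(\bk+\bz)^2=(\bk+\bq)^2-\bp^2+2\ri\,\bp\cdot(\bk+\bq)$ gives $\Re((\bk+\bz)^2)\geq r^2-3A^2$, which is strictly positive provided $\sqrt 3\,A<r/2$. Hence $(\bk+\bz)^2$ lies in a sector avoiding $\R_{\leq 0}$, the principal branch of $\sqrt{\,\cdot\,}$ is holomorphic on it, and one checks $|\bk+\bz|^{-1}\leq c(1+|\bk|)^{-1}$ uniformly in $\bk\in\RLat$ and $\bz$ in the strip. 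Consequently the Fourier multiplier $(\sqrt{v_c})_\bz = \sqrt{4\pi}\sum_{\bk\in\RLat}|\bk+\bz|^{-1}|e_\bk\rangle\langle e_\bk|$ is holomorphic in $\bz$ with values in $\cB(L^2_\per)$ and $\|(\sqrt{v_c})_\bz\|_{\cB(L^2_\per)}\leq C$. The second part of Lemma~\ref{lem:bounds_Bq} simultaneously makes $\bz\mapsto(\lambda-H_{\bq'-\bz})^{-1}$ holomorphic and provides uniform bounds on $B_1(\lambda,\bq')$ and $B_1(\lambda,\bq'-\bz)$, provided both arguments stay in a fixed compact subset of $\R^3+\ri[-A,A]^3$.

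Second, I would reduce to $\bz$ in a bounded subdomain by $\RLat$-covariance. The identities $H_{\bz+\bm}=U_\bm H_\bz U_\bm^{-1}$ and $(\sqrt{v_c})_{\bz+\bm}=U_\bm(\sqrt{v_c})_\bz U_\bm^{-1}$ both extend to complex $\bz$ directly from their definitions (the first from $U_\bm(-\ri\nabla^1+\bz)U_\bm^{-1}=-\ri\nabla^1+\bz+\bm$, the second by reindexing the Fourier basis). Plugging them into the formula defining $\cL_\bz$ and performing the change of variable $\bq'\mapsto\bq'-\bm$ in the $\BZ$-integral (justified by cyclicity of the trace, which makes the integrand $\RLat$-periodic in $\bq'$) yields $\cL_{\bz+\bm}=U_\bm\cL_\bz U_\bm^{-1}$ for every $\bm\in\RLat$. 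Since $U_\bm$ is unitary on $L^2_\per$, it suffices to prove the lemma for $\bz$ in a bounded set $D$ forming a fundamental domain of $\RLat$ inside $\Omega_r+\ri[-A,A]^3$. For such $\bz$ and every $\bq'\in\BZ$, both $\bq'$ and $\bq'-\bz$ lie in a common compact $K$, so the uniform bounds of Step~1 apply.

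Finally, I would bound the bilinear form and deduce analyticity at once. Setting $h_1:=(\sqrt{v_c})_\bz g$, $h_2:=(\sqrt{v_c})_\bz\overline f$ and factoring $(\lambda-H_\bw)^{-1}=(1-\Delta^1)^{-1}B_1(\lambda,\bw)$, cyclicity of the trace combined with the periodic Kato-Seiler-Simon inequality (Corollary~\ref{cor:periodicKSS}) gives
\[
\left|\Tr_{L^2_\per}\!\left(\tfrac{1}{\lambda-H_{\bq'}}[h_1]\tfrac{1}{\lambda-H_{\bq'-\bz}}[h_2]\right)\right|\leq C\|h_1\|_{L^2_\per}\|h_2\|_{L^2_\per}\,\|B_1(\lambda,\bq')\|\,\|B_1(\lambda,\bq'-\bz)\|,
\]
with all four factors on the right uniformly bounded by Steps~1 and~2, and $\|h_j\|_{L^2_\per}\leq C\|f\|_{L^2_\per}$ or $C\|g\|_{L^2_\per}$. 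Integrating over the compact contour $\sC$ and the compact Brillouin zone $\BZ$ yields $|\langle f|\cL_\bz g\rangle|\leq C\|f\|_{L^2_\per}\|g\|_{L^2_\per}$, hence the claimed uniform operator-norm bound. The integrand being holomorphic in $\bz$ with a $\bz$-independent integrable dominating function, one may differentiate under the integrals; this upgrades weak holomorphy of $\bz\mapsto\langle f|\cL_\bz g\rangle$ together with the uniform norm bound to holomorphy of $\bz\mapsto\cL_\bz$ in $\cB(L^2_\per)$. The main technical hurdle is Step~2: ensuring that the branch choice is compatible with the covariance identities so that the shifted argument $\bq'-\bz$ can be brought back into a compact set where Lemma~\ref{lem:bounds_Bq} provides uniform control; once this is in place the remaining estimates are the standard Hartree-Fock / periodic KSS manipulations used elsewhere in~\cite{Cances2008, Cances2010}.
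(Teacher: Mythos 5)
Your proof is correct and follows essentially the same route as the paper's: factor the resolvents as $(\lambda-H_\bw)^{-1}=(1-\Delta^1)^{-1}B_1(\lambda,\bw)$, apply the periodic Kato--Seiler--Simon inequality (Corollary~\ref{cor:periodicKSS}) to get an $\fS_2\times\fS_2$ estimate on the trace, and integrate over the compact $\sC\times\BZ$. Two aspects of your write-up are refinements rather than different routes, and both are genuinely helpful: first, you make explicit the $\RLat$-covariance $\cL_{\bz+\bm}=U_\bm\cL_\bz U_\bm^{-1}$ and the resulting reduction to a bounded fundamental domain, which the paper's proof uses only implicitly (Lemma~\ref{lem:bounds_Bq} gives bounds on compacts, yet $\Omega_r$ is unbounded and $\bq'-\bz$ leaves every compact as $\bz\to\infty$, so some such reduction is truly required); second, you obtain strong holomorphy from weak holomorphy plus local boundedness, whereas the paper computes $\partial_{z_j}\cL_\bz$ directly and bounds the three resulting terms --- both are standard and equivalent here, with the paper's version being slightly more hands-on. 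One small caveat in your Step~1: the requirement $\sqrt{3}A<r/2$ (or even $\sqrt{3}A<r$) makes the admissible strip width $A$ depend on $r$, which is in tension with the lemma's stated quantifier order ``there exists $A>0$ such that, for all $r>0$''; the paper silently has the same issue since it never addresses the branch of $\bz\mapsto|\bk+\bz|^{-1}$, and in all subsequent uses $r$ is fixed first, so this reads as an imprecision in the statement rather than a defect in your argument.
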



We now study the operator $\cL_\bq$ as $\bq \to \bnull$. 
To do so, we block-decompose the operators. More specifically, we write $L^2_\per = \{ \C | e_\bnull \ket \} \oplus L^2_{0,\per}$, where $L^2_{0,\per}$ is a short notation for $\{ f \in L^2_\per, c_\bnull(f) = 0\}$. We introduce $P_c = | e_\bnull \ket \bra e_\bnull | $ the orthogonal projection  on the constants, and $P_0 = P_c^\perp$ the orthogonal projection from $L^2_\per$ to $L^2_{0,\per}$. With this decomposition, the operator $(\sqrt{v_c})_\bq$ has the matrix form
\[
	(\sqrt{v_c})_\bq = \left( \begin{array}{c | c } 
	\dfrac{\sqrt{4 \pi}  }{| \bq |} & 0 \\
	\hline \\
	0 & (\sqrt{w_c})_\bq
	\end{array} \right),
\]
where, for all $\bz \in \C^3$, we defined
\[
	(\sqrt{w_c})_\bz := \sqrt{4 \pi} \sum_{\bk \in L^{-1}\RLat \setminus \{ \bnull \}} \dfrac{| e_\bk \ket \bra e_\bk |}{| \bz + \bk |}.
\]
The next lemma is straightforward.
\begin{lemma} \label{lem:wc}
	There exist $C \in \R^+$, $r_1 > 0$ and $A > 0$ such that the map $\bz \mapsto (\sqrt{w_c})_\bz$ is analytic from $\cB(\bnull, r_1) + \ri [-A,A]^3$ to $\cB(L^2_{0,\per})$, with
	\[
		\sup_{\bz \in \cB(\bnull, r_1) + \ri [-A,A]^3} \left\| (\sqrt{w_c})_\bz \right\|_{\cB(L^2_{0,\per})} \le C.
	\]
\end{lemma}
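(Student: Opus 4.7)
The plan is to choose $r_1$ and $A$ small enough so that, for every $\bk\in\RLat\setminus\{\bnull\}$ and every $\bz\in\cB(\bnull,r_1)+\ri[-A,A]^3$, the complex number $(\bz+\bk)^2:=(z_1+k_1)^2+(z_2+k_2)^2+(z_3+k_3)^2$ stays in the open right half-plane (and thus away from the branch cut of the principal square root), with modulus bounded below by $c|\bk|^2$ for some fixed $c>0$. Granting this, one defines $\sqrt{(\bz+\bk)^2}$ via the principal branch, and the candidate operator
\[
 \bigl(\sqrt{w_c}\bigr)_\bz = \sqrt{4\pi}\sum_{\bk\in\RLat\setminus\{\bnull\}}\frac{|e_\bk\ket\bra e_\bk|}{\sqrt{(\bz+\bk)^2}}
\]
is a Fourier-diagonal operator on $L^2_{0,\per}$, whose $\cB(L^2_{0,\per})$-norm is simply the supremum of its diagonal entries $\sqrt{4\pi}/|\sqrt{(\bz+\bk)^2}|$. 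The bound $|(\bz+\bk)^2|\ge c|\bk|^2$ then yields the uniform estimate $\|(\sqrt{w_c})_\bz\|\le C:=\sqrt{4\pi/c}/|\bk_{\min}|$, where $|\bk_{\min}|:=\min\{|\bk|:\bk\in\RLat\setminus\{\bnull\}\}$.

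First I would fix the parameters. Write $\bz=\bx+\ri\by$ with $|\bx|\le r_1$ and $\|\by\|_\infty\le A$. Expanding,
\[
 \Re(\bz+\bk)^2=|\bx+\bk|^2-|\by|^2,\qquad \Im(\bz+\bk)^2=2(\bx+\bk)\cdot\by.
\]
Choosing, say, $r_1=|\bk_{\min}|/4$ and $A$ so small that $3A^2<|\bk_{\min}|^2/8$, one checks that $\Re(\bz+\bk)^2\ge\tfrac12|\bk|^2$ for every $\bk\in\RLat\setminus\{\bnull\}$. In particular $(\bz+\bk)^2$ lies in the open right half-plane and $|(\bz+\bk)^2|\ge\tfrac12|\bk|^2$, justifying the use of the principal square root and giving the uniform bound in step 1.

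Next I would verify analyticity. Each diagonal coefficient $a_\bk(\bz):=\sqrt{4\pi}/\sqrt{(\bz+\bk)^2}$ is a composition of entire functions with the principal branch of the square root, applied to values that stay in the domain of analyticity chosen above; hence $a_\bk$ is holomorphic in $\bz$ on $\cB(\bnull,r_1)+\ri[-A,A]^3$. Strong analyticity of $\bz\mapsto(\sqrt{w_c})_\bz$ as a map into $\cB(L^2_{0,\per})$ then follows from weak analyticity plus local uniform boundedness: for any $f,g\in L^2_{0,\per}$, the bilinear form $\bra g,(\sqrt{w_c})_\bz f\ket=\sum_{\bk\neq\bnull}\overline{c_\bk(g)}\,a_\bk(\bz)\,c_\bk(f)$ is an absolutely and uniformly convergent sum of holomorphic functions (dominated by $C\sum_\bk|c_\bk(g)||c_\bk(f)|\le C\|f\|\|g\|$), hence holomorphic; combined with the uniform operator-norm bound already established, this yields strong holomorphy.

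The main obstacle is really the first step: one has to check by hand that the small polydisc of $\bz$'s does not push any of the vectors $\bz+\bk$ into the branch cut locus of the Euclidean square root. Once the geometric configuration $\{(\bz+\bk)^2:\bz\in\cB(\bnull,r_1)+\ri[-A,A]^3,\ \bk\in\RLat\setminus\{\bnull\}\}$ is pinned inside the right half-plane, everything else is routine: the operator norm reduces to a supremum of holomorphic scalars, and strong analyticity is automatic.
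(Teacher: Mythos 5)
Your proof is correct, and since the paper marks this lemma as ``straightforward'' and gives no proof, your argument is essentially the one the authors intend: interpret $|\bz+\bk|$ as the holomorphic extension $\sqrt{(\bz+\bk)^T(\bz+\bk)}$ (the complex bilinear form, not the Hermitian one), confine $(\bz+\bk)^2$ to the right half-plane by shrinking $r_1$ and $A$, observe that a Fourier-diagonal operator has operator norm equal to the supremum of the moduli of its diagonal entries, and pass from weak to strong holomorphy via local uniform boundedness. One small arithmetic slip: with $r_1=|\bk_{\min}|/4$ you get $|\bx+\bk|\ge\tfrac34|\bk|$, hence $|\bx+\bk|^2\ge\tfrac{9}{16}|\bk|^2$, and subtracting $|\by|^2\le\tfrac18|\bk_{\min}|^2\le\tfrac18|\bk|^2$ gives $\Re(\bz+\bk)^2\ge\tfrac{7}{16}|\bk|^2$, not $\tfrac12|\bk|^2$; this does not affect the conclusion, since any constant $c>0$ suffices. (Incidentally, the index set in the paper's displayed formula for $(\sqrt{w_c})_\bz$ should read $\bk\in\RLat\setminus\{\bnull\}$ rather than $L^{-1}\RLat\setminus\{\bnull\}$, as you implicitly assumed; this is consistent with the block decomposition of $(\sqrt{v_c})_\bq$ on $L^2_\per(\WS)$.)
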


We also block-decompose the operator $\cL_\bz$, and we write
\begin{equation} \label{eq:blockL}
	\cL_\bz = \left( \begin{array}{c | c } 
	\Lambda_\bz & l_\bz^* \\
	\hline \\
	l_\bz & L_\bz
	\end{array} \right),
\end{equation}
with $\Lambda_\bz = P_c \cL_\bz P_c \in \R$, $l_\bz = P_0 \cL_\bz P_c \in L^2_{0,\per}$, $L_\bz = P_0 \cL_\bz P_0 \in \cB(L^2_{0,\per})$. We identify these different objects in the next lemma, whose proof is given in Section~\ref{sec:proof:propcL}.

\begin{lemma}\label{lem:propcL}
There exist $r_1 > 0$ and $A > 0$ such that the following holds true.
\begin{itemize}
	\item[i)] The map $\bz \mapsto L_\bz$ is analytic from $\cB(\bnull, r_1) + \ri [-A, A]^3$ to the Banach space $\cB(L^2_{0,\per})$.
	\item[ii)] For $\bq \in \cB(\bnull, r_1)$, it holds that
	\begin{equation} \label{eq:Lambdaq}
		\Lambda_\bq = \dfrac{\bq^T}{|\bq|} M_1(\bq) \dfrac{\bq}{|\bq|},
	\end{equation}
	where $\bz \mapsto M_1(\bz)$ is an analytic map from $\cB(\bnull, r_1) + \ri [-A, A]^3$ to the space of $3 \times 3$ complex matrices. For all $\bq \in \cB(\bnull, r_1)$, it holds that $M_1(\bq)$ is an hermitian matrix, satisfying $M_1(\bq) = M_1(-\bq)$ and defined by
	\begin{equation} \label{eq:def:Aq}
	M_1(\bq)
	:= 
	\dfrac{8 \pi}{|\WS|} \sum_{n \le N < m} \fint_\BZ \dfrac{ \bra u_{n, \bq'} | (- \ri \nabla^1) u_{m, \bq' - \bq} \ket \bra u_{m, \bq' - \bq} | (- \ri \nabla^{1,T}) u_{n, \bq'} \ket  }{\left( \varepsilon_{m, {\bq' - \bq}} - \varepsilon_{n, {\bq'}} - \frac{| \bq - \bq'|^2}{2} + \frac{| \bq' |^2}{2} \right)^2 | \varepsilon_{m, {\bq' - \bq}} - \varepsilon_{n, {\bq'}} |} \rd \bq'.
	\end{equation}
	\item[iii)] For $\bq \in \cB(\bnull, r_1)$, it holds that
	\[
		l_\bq = \frac{\bq^T}{| \bq |} \bb(\bq),
	\]
	where $\bz \mapsto \bb(\bz)$ is analytic from $\cB(\bnull, r_1) + \ri [-A,A]^3$ to $\left(L^2_{0,\per} \right)^3$. For $\bq \in \cB(\bnull, r_1)$, it holds
	\begin{equation} \label{eq:def:bb}
	\begin{aligned}
	\bb(\bq) & := \dfrac{-2(4\pi)^{1/2} }{| \WS |^{1/2}}\sum_{n \le N < m}\fint_\BZ  \rd \bq' \\
		& \quad
	\left( \dfrac{ \bra u_{m, \bq' - \bq}  | (-\ri \nabla^1) u_{n,\bq'} \ket  }{| \varepsilon_{m, \bq' - \bq} - \varepsilon_{n, \bq'} | \left( \varepsilon_{m, {\bq' - \bq}} - \varepsilon_{n, {\bq'}} - \frac{| \bq |^2}{2} + \bq \cdot \bq' \right)} \right) \left[ (\sqrt{w_{c}})_\bq P_0 \left( u_{m, \bq' - \bq} u_{n, \bq'} \right)  \right].
	\end{aligned}
\end{equation}
\end{itemize}
\end{lemma}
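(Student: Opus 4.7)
\emph{Proof plan.} The strategy is to separate the $|\bz|^{-1}$ singularity of $(\sqrt{v_c})_\bz$ on constants from the regular part, and to exhibit a compensating vanishing at $\bz=\bnull$ via a resolvent identity. Using the block decomposition $L^2_\per=\C e_\bnull\oplus L^2_{0,\per}$, $(\sqrt{v_c})_\bz$ acts on $\C e_\bnull$ as multiplication by $\sqrt{4\pi}/|\bz|$ and on $L^2_{0,\per}$ as the analytic family $(\sqrt{w_c})_\bz$ of Lemma~\ref{lem:wc}. For part~(i), $L_\bz=P_0\cL_\bz P_0$ only involves $(\sqrt{w_c})_\bz$ together with the resolvents $(\lambda-H_{\bq'})^{-1}$ and $(\lambda-H_{\bq'-\bz})^{-1}$, all jointly analytic on a complex neighborhood by Lemmas~\ref{lem:wc} and~\ref{lem:bounds_Bq}; uniform boundedness on the compact set $\sC\times\BZ$ combined with the standard analytic dependence of parameter integrals then gives analyticity of $\bz\mapsto L_\bz\in\cB(L^2_{0,\per})$ on some $\cB(\bnull,r_1)+\ri[-A,A]^3$.

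For parts~(ii) and~(iii), I factor out the explicit prefactors and write
\begin{equation*}
\Lambda_\bz=\frac{4\pi}{|\bz|^2\,|\WS|}\,\widetilde\Lambda(\bz),\qquad P_0\cL_\bz e_\bnull=\frac{\sqrt{4\pi}}{|\bz|\,|\WS|^{1/2}}\,\widetilde l(\bz),
\end{equation*}
where the regularised kernels $\widetilde\Lambda$, $\widetilde l$ are analytic in $\bz$ by the argument used for~(i). The key compensation is the resolvent identity
\begin{equation*}
\frac{1}{\lambda-H_{\bq'-\bz}}=\frac{1}{\lambda-H_{\bq'}}+\frac{1}{\lambda-H_{\bq'-\bz}}(H_{\bq'-\bz}-H_{\bq'})\frac{1}{\lambda-H_{\bq'}},
\end{equation*}
with $H_{\bq'-\bz}-H_{\bq'}=-\bz^T(-\ri\nabla^1+\bq')+\bz^2/2$, which is linear-plus-quadratic in $\bz$. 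Inserting this once into $\widetilde l$, the zeroth-order term contains $\oint_\sC(\lambda-H_{\bq'})^{-2}\,\rd\lambda=0$ and so vanishes, leaving a remainder of the form $\bz^T\cdot(\text{analytic vector})$. Iterating once more for $\widetilde\Lambda$, the zeroth and first-order terms both involve $\oint_\sC(\lambda-H_{\bq'})^{-k}\,\rd\lambda=0$ with $k\ge2$, so $\widetilde\Lambda$ vanishes to second order at $\bnull$. Taylor's integral remainder for holomorphic functions of several variables then produces analytic matrix- and vector-valued maps $M_1$, $\bb$ with $\widetilde\Lambda(\bz)=\bz^T M_1(\bz)\bz$ and $\widetilde l(\bz)=\bz^T\bb(\bz)$, yielding~\eqref{eq:Lambdaq} and the structure asserted in~(iii).

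To extract the explicit formulas~\eqref{eq:def:Aq} and~\eqref{eq:def:bb}, specialise to real $\bq$, apply the spectral decomposition~\eqref{eq:spectralHq}, and use Cauchy's residue theorem to reduce the $\lambda$-contour to a sum over pairs $(n,m)$ with $n\le N<m$ or $m\le N<n$. The interband overlap is rewritten via
\begin{equation*}
\bra u_{n,\bq'}|u_{m,\bq'-\bq}\ket=\frac{\bq^T\bra u_{n,\bq'}|(-\ri\nabla^1)|u_{m,\bq'-\bq}\ket}{\varepsilon_{n,\bq'}-\varepsilon_{m,\bq'-\bq}+|\bq-\bq'|^2/2-|\bq'|^2/2},
\end{equation*}
obtained by inserting $H_{\bq'}-H_{\bq'-\bq}=\bq\cdot(-\ri\nabla^1+\bq')-|\bq|^2/2$ between bra and ket. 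The two families of pairs are folded into one by the change $\bq'\mapsto\bq-\bq'$ and the time-reversal identities~\eqref{eq:symmetries}, doubling the prefactor to $8\pi/|\WS|$. Hermiticity of $M_1(\bq)$ reflects the self-adjointness of $\cL_\bq$ on $L^2_\per$ for real $\bq$, and $M_1(\bq)=M_1(-\bq)$ follows from the same time-reversal symmetry applied to the integration variable.

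\emph{Main obstacle.} The delicate point is to justify the exchanges of $\lambda$-contour, spectral sum and Brillouin-zone average, in particular the absolute convergence of~\eqref{eq:def:Aq} uniformly in $\bz$ on the complex neighborhood. Using Weyl's law $\varepsilon_{m,\bq}\sim m^{2/3}$ and the bound $\|(-\ri\nabla^1)u_{m,\bq}\|_{L^2_\per}\lesssim\sqrt{\varepsilon_{m,\bq}}$, each summand is $O(\varepsilon_m^{-2})=O(m^{-4/3})$, absolutely summable in three dimensions. Propagating these estimates through the resolvent expansion in the analytic category requires the same operator-theoretic bounds already used in the proof of Lemma~\ref{lem:bounds_Bq}; once they are in place, the remaining manipulations reduce to routine contour and trace computations and the division lemma for holomorphic functions.
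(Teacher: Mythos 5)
Your overall structure is sound, and the resolvent-identity route to showing that $\widetilde\Lambda(\bz)$ and $\widetilde l(\bz)$ vanish to orders $2$ and $1$ at $\bnull$ is clean: after cycling the trace, the zeroth- and first-order terms reduce to $\oint_\sC (\lambda - H_{\bq'})^{-k}\,\rd\lambda=0$ with $k\ge 2$, exactly as you claim. However, there is a genuine gap in the step where you pass from this vanishing to the analyticity of the \emph{specific} matrix $M_1$ of~\eqref{eq:def:Aq}. Taylor's integral remainder (Hadamard's lemma) produces \emph{some} analytic matrix $\widetilde M_1$ with $\widetilde\Lambda(\bz)=\bz^T\widetilde M_1(\bz)\bz$. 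But such a decomposition is far from unique (for a fixed $\bq$, the relation only constrains the scalar $\bq^T M_1(\bq)\bq$, not the matrix entries; e.g.\ $C(\bq)=|\bq|^2 I-\bq\bq^T$ is hermitian, non-zero, and satisfies $\bq^T C(\bq)\bq=0$). So the Hadamard matrix need not agree with~\eqref{eq:def:Aq}, and your argument does not show that the formula~\eqref{eq:def:Aq} admits an analytic extension. The same non-uniqueness afflicts the factorization $\widetilde l(\bz)=\bz^T\bb(\bz)$ in part~(iii).

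Moreover, your ``Main obstacle'' paragraph misidentifies the real difficulty: you propose to check absolute convergence of~\eqref{eq:def:Aq} ``uniformly in $\bz$ on the complex neighborhood'', but the band functions $\bq\mapsto\varepsilon_{n,\bq}$ and $\bq\mapsto u_{n,\bq}$ are not even continuous in general (level crossings), let alone analytic, and $H_\bz$ for complex $\bz$ is not self-adjoint, so~\eqref{eq:def:Aq} literally has no meaning there. This is precisely the point the paper addresses. The paper's proof derives~\eqref{eq:def:Aq} for real $\bq$ by the spectral decomposition and the commutator identity $\bra u_{m,\bq'-\bq}|u_{n,\bq'}\ket = \bq^T\bra u_{m,\bq'-\bq}|(-\ri\nabla^1)u_{n,\bq'}\ket\big/(\varepsilon_{n,\bq'}-\varepsilon_{m,\bq'-\bq}+\cdots)$ (as you do), but then ``undoes'' the Cauchy residue step: each inverse energy denominator is rewritten as a Cauchy contour integral, turning~\eqref{eq:def:Aq} into a triple contour integral of traces of resolvent products. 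That expression coincides with~\eqref{eq:def:Aq} for real $\bq$ by construction, yet is manifestly analytic in $\bz$ on a complex tube, by the same estimates as in Lemma~\ref{lem:analLq}. This reverse-Cauchy step is the missing bridge in your plan between the explicit formula and analyticity, and without it the lemma (which requires the analytic $M_1$ to be the one ``defined by''~\eqref{eq:def:Aq}) is not established. As a minor point, hermiticity and $M_1(-\bq)=M_1(\bq)$ are properties read off directly from the explicit formula~\eqref{eq:def:Aq}, not consequences of self-adjointness of the scalar $\Lambda_\bq$ as you assert.
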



The matrix $M_1(\bnull)$ is an important object for the macroscopic properties of the crystal. This matrix is sometimes called the macroscopic inverse dielectric $3 \times 3$ matrix. Let us state some properties of this matrix (see Section~\ref{sec:proof:A0} for the proof).
\begin{lemma}[Properties of $M_1(\bnull)$] \label{lem:A0}
 It holds that $M_1(\bnull)$ is a $3 \times 3$ positive definite hermitian matrix, and that
	\begin{equation} \label{eq:def:A0}
		M_1(\bnull) =  \dfrac{4 \pi}{2 \ri \pi | \WS |} \oint_{\sC} \VTr \left( \dfrac{1}{(\lambda - H_0)^2} \left(-\ri \nabla \right) \dfrac{1}{(\lambda - H_0)^2} \left(-\ri \nabla^T\right) \right) \rd \lambda.
	\end{equation}
	If the crystal is isotropic cubic (see Definition~\ref{def:isotropic_cubic}), then $M_1(\bnull)$ is proportional to the identity matrix. 
\end{lemma}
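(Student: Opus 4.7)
The plan is to read the Hermitian and positive-definiteness properties directly from formula~\eqref{eq:def:Aq} evaluated at $\bq=\bnull$, then to derive the integral representation~\eqref{eq:def:A0} by Bloch-decomposing the trace per unit volume and applying the Cauchy residue theorem, and finally to treat the isotropic cubic case via a symmetry argument based on the uniqueness of the rHF ground state $\gamma_{0}$.

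Setting $\bq=\bnull$ in~\eqref{eq:def:Aq} causes the denominator to collapse to $(\varepsilon_{m,\bq'}-\varepsilon_{n,\bq'})^{2}\,|\varepsilon_{m,\bq'}-\varepsilon_{n,\bq'}|=|\varepsilon_{m,\bq'}-\varepsilon_{n,\bq'}|^{3}$, so that
\[
M_{1}(\bnull)=\dfrac{8\pi}{|\WS|}\sum_{n\le N<m}\fint_{\BZ}\dfrac{\bra u_{n,\bq'}|(-\ri\nabla^{1})u_{m,\bq'}\ket\,\bra u_{m,\bq'}|(-\ri\nabla^{1,T})u_{n,\bq'}\ket}{|\varepsilon_{m,\bq'}-\varepsilon_{n,\bq'}|^{3}}\,\rd\bq'.
\]
Since each component of $-\ri\nabla^{1}$ is self-adjoint on $L^{2}_{\per}$, one has $\overline{\bra u_{n,\bq'}|P_{j}u_{m,\bq'}\ket}=\bra u_{m,\bq'}|P_{j}u_{n,\bq'}\ket$, so each summand is a rank-one Hermitian matrix. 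Moreover, for any $\xi\in\C^{3}$,
\[
\xi^{*}M_{1}(\bnull)\xi=\dfrac{8\pi}{|\WS|}\sum_{n\le N<m}\fint_{\BZ}\dfrac{|\bra u_{n,\bq'}|(\overline{\xi}\cdot(-\ri\nabla^{1}))u_{m,\bq'}\ket|^{2}}{|\varepsilon_{m,\bq'}-\varepsilon_{n,\bq'}|^{3}}\,\rd\bq'\ge 0,
\]
which gives positive semi-definiteness; strict positivity follows from the standard observation that for a three-dimensional insulating crystal there is no direction $\xi\neq 0$ along which $(\overline{\xi}\cdot(-\ri\nabla^{1}))$ preserves the occupied Bloch subspace at almost every $\bq'$.

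For the integral representation~\eqref{eq:def:A0}, I would combine the Bloch decomposition $\cZ(-\ri\nabla)\cZ^{-1}=\fint_{\BZ}^{\oplus}(-\ri\nabla^{1}+\bq)\,\rd\bq$, the spectral form~\eqref{eq:spectralHq} of $H_{\bq}$, and formula~\eqref{eq:VTr} to write
\[
\VTr\!\left(\dfrac{1}{(\lambda-H_{0})^{2}}(-\ri\nabla)\dfrac{1}{(\lambda-H_{0})^{2}}(-\ri\nabla^{T})\right)=\fint_{\BZ}\Tr_{L^{2}_{\per}}\!\left(\dfrac{(-\ri\nabla^{1}+\bq)}{(\lambda-H_{\bq})^{2}}\dfrac{(-\ri\nabla^{1,T}+\bq^{T})}{(\lambda-H_{\bq})^{2}}\right)\rd\bq,
\]
which after spectral expansion becomes a double sum over $(n,m)$ of integrands proportional to $1/[(\lambda-\varepsilon_{n,\bq})^{2}(\lambda-\varepsilon_{m,\bq})^{2}]$. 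The orthogonality $\bra u_{n,\bq}|u_{m,\bq}\ket=0$ for $n\neq m$ cancels the additive shifts $\bq,\bq^{T}$ in every off-diagonal term, while the diagonal terms $n=m$ produce fourth-order poles whose contour integral around $\sC$ vanishes. Applying the residue theorem to the off-diagonal terms yields $\tfrac{1}{2\ri\pi}\oint_{\sC}\tfrac{\rd\lambda}{(\lambda-\varepsilon_{n,\bq})^{2}(\lambda-\varepsilon_{m,\bq})^{2}}=-2/(\varepsilon_{n,\bq}-\varepsilon_{m,\bq})^{3}$ exactly when one eigenvalue lies inside $\sC$ and the other outside, selecting precisely the pairs $n\le N<m$ and $m\le N<n$; using the symmetry~\eqref{eq:symmetries} to render the $\BZ$-averaged bilinear form real, relabeling in the second family, and gathering the prefactor $4\pi/|\WS|$ reproduces exactly the expression for $M_{1}(\bnull)$ derived in the previous paragraph.

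Finally, for the isotropic cubic case, let $U_{R}$ denote the unitary $f\mapsto f(R^{-1}\cdot)$ on $L^{2}(\R^{3})$. The invariance $\mu_{\per}\circ S_{i}=\mu_{\per}$ implies that $U_{S_{i}}\gamma_{0}U_{S_{i}}^{*}$ is a minimizer of~\eqref{eq:Iper}, and by uniqueness $U_{S_{i}}\gamma_{0}U_{S_{i}}^{*}=\gamma_{0}$, so $\rho_{\gamma_{0}}\circ S_{i}=\rho_{\gamma_{0}}$. The cubic unit cell is stable under $S_{1}$ and $S_{2}$, hence $G_{1}\circ S_{i}=G_{1}$, from which $V_{0}\circ S_{i}=V_{0}$ and $U_{S_{i}}H_{0}U_{S_{i}}^{*}=H_{0}$ follow. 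Inserting these invariances into~\eqref{eq:def:A0}, together with the vector-operator identity $U_{S_{i}}(-\ri\nabla)U_{S_{i}}^{*}=S_{i}(-\ri\nabla)$ and cyclicity of $\VTr$, gives the matrix identity $M_{1}(\bnull)=S_{i}^{T}M_{1}(\bnull)S_{i}$ for $i\in\{1,2\}$. The group $\langle S_{1},S_{2}\rangle$ contains all three coordinate sign flips (since $S_{1}^{k}S_{2}S_{1}^{-k}$ exhausts them) and all three cyclic permutations of the axes, and hence acts irreducibly on $\R^{3}$; Schur's lemma then forces $M_{1}(\bnull)=\epsilon\,\bbI_{3}$ with $\epsilon\in\R_{+}$. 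The main obstacle I anticipate is the residue bookkeeping in the second paragraph: carefully combining the combinatorial factors from the double sum over $(n,m)$, the two possible placements of eigenvalues relative to $\sC$, and the $\bq\to-\bq$ symmetry so as to reproduce the prefactor in~\eqref{eq:def:Aq} is delicate but routine, while the generic non-smoothness of $u_{n,\bq}$ in $\bq$ at band crossings does not affect the identity since the eigenvectors enter only through a Brillouin-zone average.
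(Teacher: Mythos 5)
Your proof follows essentially the same route as the paper's: evaluate~\eqref{eq:def:Aq} at $\bq=\bnull$, use orthogonality $\bra u_{n,\bq'}|u_{m,\bq'}\ket=0$ to insert the additive shift $\bq'$ into the momentum operator, rewrite $|\varepsilon_{m}-\varepsilon_{n}|^{-3}$ as a Cauchy contour integral of $(\lambda-\varepsilon_n)^{-2}(\lambda-\varepsilon_m)^{-2}$, and reassemble the spectral sum into a $\VTr$; the isotropic cubic case is handled the same way, by conjugating under the trace by $\cT:f\mapsto f(S_i\bx)$ and concluding $M_1(\bnull)=S_i^T M_1(\bnull)S_i$. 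Two minor remarks: the residue you quote, $-2/(\varepsilon_{n,\bq}-\varepsilon_{m,\bq})^3$, has the correct sign only when $\varepsilon_{n,\bq}<\varepsilon_F<\varepsilon_{m,\bq}$ and flips sign for the reversed pair, so the $n$-$m$-symmetric form $2/|\varepsilon_{n,\bq}-\varepsilon_{m,\bq}|^3$ used in the paper's~\eqref{eq:nonTrivialCauchy} is the safer way to write it (you do signal the needed relabeling, so this is a presentation issue rather than a gap). Your observations that $H_0\cT=\cT H_0$ follows from uniqueness of $\gamma_0$, and that the $\BZ$-average is $(n,m)$-symmetric via~\eqref{eq:symmetries}, are useful and fill in steps the paper leaves implicit; on the other hand your argument for \emph{strict} positivity of $M_1(\bnull)$ is only gestured at, though the paper's own proof of this lemma also does not establish positive definiteness, so both presentations share that omission.
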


Formula~\eqref{eq:def:A0} is new to our knowledge, and provides an elegant definition for the macroscopic inverse dielectric $3 \times 3$ matrix. We emphasize that this matrix is not the inverse of the macroscopic dielectric $3 \times 3$ matrix (see next section).

\subsection{Bloch transform of the operator $(1 + \cL)^{-1}$.}

We finally study the Bloch decomposition of $(1 + \cL)^{-1}$. It holds that
\[
	\cZ (1 + \cL)^{-1} \cZ^{-1} = \fint_{\BZ}^\oplus (1 + \cL_\bq)^{-1} \rd \bq.
\]
Recall that $\cL$ is a bounded self-adjoint positive operator on $L^2(\R^3)$, so that the operators $\cL_\bq$ are bounded self-adjoint positive operator on $L^2_\per$ for almost all $\bq \in \BZ$. In particular, the operator $(1 + \cL_\bq)$ is indeed invertible, and from the inequality $ 1 - x \le (1+x)^{-1} \le 1$, valid for $x \ge 0$, it holds
\begin{equation} \label{eq:ineqLq}
	1 - \cL_\bq \le (1 + \cL_\bq)^{-1} \le 1.
\end{equation}
We study the operators $(1 + \cL_\bq)^{-1} $ in two different regimes. From Lemma~\ref{lem:analLq}, we deduce the following lemma (see Section~\ref{sec:proof:analDq} for the proof). 
\begin{lemma} \label{lem:analDq}
	There exist $C \in \R^+$ and $A > 0$ such that, for all $r > 0$, the map $\bz \mapsto  (1 + \cL_\bz)^{-1}$ is well defined and analytic from $\Omega_r +\ri [-A, A]^3$ to $\cB(L^2_\per)$, and
\[
	\sup_{\bz \in \Omega_r + \ri [-A, A]^3} \left\| (1 + \cL_\bz)^{-1} \right\|_{\cB(L^2_\per)} \le C.
\]
\end{lemma}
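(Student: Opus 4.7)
The proof should fall into two pieces: first establish pointwise invertibility of $(1+\cL_\bq)$ on the real cross-section $\Omega_r$ using positivity, and then promote this to a complex strip by analytic perturbation.

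For real $\bq \in \Omega_r$, I would first argue that $\cL_\bq$ is bounded self-adjoint and non-negative on $L^2_\per$, so that $(1+\cL_\bq)^{-1}$ exists with $\|(1+\cL_\bq)^{-1}\|_{\cB(L^2_\per)} \leq 1$. Indeed, the whole-space analog of Lemma~\ref{lem:propertiesLL} gives that $\cL$ is non-negative self-adjoint on $L^2(\R^3)$, so its Bloch decomposition $\fint_\BZ^\oplus \cL_\bq \, \rd \bq$ yields $\cL_\bq \geq 0$ self-adjoint for almost every $\bq \in \BZ$. Lemma~\ref{lem:analLq} extends $\bq \mapsto \cL_\bq$ to a norm-continuous map on $\Omega_r$, and since the conditions $\cL_\bq^* = \cL_\bq$ and $\langle f, \cL_\bq f \rangle_{L^2_\per} \geq 0$ are preserved under norm limits, they transfer from almost every $\bq$ to every $\bq \in \Omega_r$.

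The second step is analytic perturbation. By Lemma~\ref{lem:analLq}, the map $\bz \mapsto \cL_\bz$ is analytic and uniformly bounded, say by some $C_0$, on the strip $\Omega_r + \ri[-A, A]^3$. Applying the Cauchy integral formula on polydiscs of radius $A/2$ gives a uniform bound on the partial derivatives,
\[
	\sup_{\bz \in \Omega_r + \ri [-A/2, A/2]^3} \max_{1 \le j \le 3} \left\|\partial_{z_j} \cL_\bz\right\|_{\cB(L^2_\per)} \leq \frac{2 C_0}{A},
\]
so that, choosing $A' \in (0, A/2]$ with $2 A' \sqrt{3}\, C_0/A \leq 1/2$, I get $\|\cL_{\bq + \ri \by} - \cL_\bq\|_{\cB(L^2_\per)} \leq 1/2$ for every $\bq \in \Omega_r$ and every $\by \in [-A', A']^3$. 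Writing $\bz = \bq + \ri \by$ and factoring
\[
	1 + \cL_\bz = (1 + \cL_\bq)\bigl[1 + (1 + \cL_\bq)^{-1}(\cL_\bz - \cL_\bq)\bigr],
\]
the bracket is invertible by Neumann series with inverse of norm at most $2$, so $(1+\cL_\bz)^{-1}$ exists and satisfies $\|(1+\cL_\bz)^{-1}\|_{\cB(L^2_\per)} \leq 2$. This yields the uniform bound with $C = 2$ on $\Omega_r + \ri [-A', A']^3$.

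Analyticity of $\bz \mapsto (1+\cL_\bz)^{-1}$ then follows from the explicit Neumann expansion
\[
	(1+\cL_\bz)^{-1} = \sum_{n=0}^\infty (-1)^n \bigl[(1+\cL_\bq)^{-1}(\cL_\bz - \cL_\bq)\bigr]^n (1+\cL_\bq)^{-1},
\]
which converges in operator norm uniformly on a neighborhood of each real $\bq \in \Omega_r$ and whose terms are analytic in $\bz$; alternatively one invokes the general fact that inversion is analytic on the open set of invertible bounded operators. The only genuinely delicate point of the argument is the transfer of self-adjointness and non-negativity of $\cL_\bq$ from almost every $\bq$ to every $\bq \in \Omega_r$, which relies essentially on the norm-continuity provided by Lemma~\ref{lem:analLq}; once that step is secured, the rest is routine analytic perturbation theory.
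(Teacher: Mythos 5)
Your proof is correct and follows essentially the same route as the paper's: invertibility of $1+\cL_\bq$ for real $\bq$ from positivity and self-adjointness, the factorization $1+\cL_\bz = (1+\cL_\bq)\bigl[1 + (1+\cL_\bq)^{-1}(\cL_\bz - \cL_\bq)\bigr]$, and a Neumann series after shrinking the imaginary strip so that $\|\cL_\bz - \cL_\bq\|_{\cB(L^2_\per)} \le 1/2$. The only differences are cosmetic: the paper obtains the shrunk strip by working first on the compact set $2\BZ$ and then extending via the covariant identity $A_{\bq+\bm} = U_\bm A_\bq U_\bm^{-1}$, whereas you get it directly on $\Omega_r$ via a Cauchy estimate on $\partial_{z_j}\cL_\bz$; and you spell out the transfer of $\cL_\bq \ge 0$, $\cL_\bq^* = \cL_\bq$ from almost every $\bq$ to every real $\bq$ by norm-continuity, which the paper takes for granted. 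One small point worth flagging: when you apply the Cauchy formula on a polydisc of radius $A/2$ around a real $\bq \in \Omega_r$, the polydisc can leak to real parts outside $\Omega_r$; this is harmless if you shrink the polydisc radius to $\min(A/2, r/(2\sqrt{3}))$ and invoke the Lemma~\ref{lem:analLq} bound on $\Omega_{r/2}$ instead, which is how one would make the estimate rigorous.
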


We now study the behavior of this operator as $\bq$ goes to $\bnull$. From the block decomposition~\eqref{eq:blockL}, we obtain with the Schur complement that
\[
	\left( 1 + \cL_\bz \right)^{-1} = \left( \begin{array}{c | c } 
	\delta_\bz & d_\bz^* \\
	\hline \\
	d_\bz & D_\bz
	\end{array} \right),
\]
with
\begin{align*}
	\delta_\bz & := \left[ (1 + \Lambda_\bz) - l_\bz^* (1 + L_\bz)^{-1} l_\bz \right]^{-1} , \\
	d_\bz & := - \delta_\bz (1 + L_\bz)^{-1}  l_\bz , \\
	D_\bz & := (1 + L_\bz)^{-1} + \delta_\bz (1 + L_\bz)^{-1} l_\bz l_\bz^* (1 + L_\bz)^{-1}.
\end{align*}
Together with Lemma~\ref{lem:propcL}, we deduce the following lemma, which is an extension of~\cite[Lemma 6]{Cances2010}
\begin{lemma} \label{lem:propDq}
There exist $r_1 > 0$ and $A > 0$ such that the following holds true.
\begin{itemize}
	\item[i)] For all $\bq \in \cB(\bnull, r_1)$, it holds that
	\begin{equation} \label{eq:Lambdaq}
		\delta_\bq = \dfrac{| \bq |^2}{ \bq^T M(\bq) \bq},
	\end{equation}
	where $\bz \mapsto M(\bz)$ is an analytic map from $\cB(\bnull, r_1) + \ri [-A, A]^3$ to the space of $3 \times 3$ complex matrices, defined by
	\begin{equation} \label{eq:def:A1}
		M(\bz) = 1 + M_1(\bz) - \bb^*(\bz) \left( 1 + L_\bz \right)^{-1} \bb(\bz).
	\end{equation}
	 For all $\bq \in \cB(\bnull, r_1)$, $M(\bq)$ is an hermitian $3 \times 3$ matrix, and it holds that $M(\bq) = M(-\bq)$. 
	\item[ii)] For $\bq \in \cB(\bnull, r_1)$, it holds that
	\[
		d_\bq = - \frac{| \bq | \bq^T} {\bq^T M(\bq) \bq} \bc(\bq),
	\]
	where $\bz \mapsto \bc(\bz)$ is an analytic map from $\cB(\bnull, r_1) + \ri [-A, A]^3$ to $\left( L^2_{0,\per} \right)^3$, defined by
	\[
		\forall j \in \{1,2,3,\}, \quad c_j(\bz) := \left( (1 + \cL_\bq )^{-1} b_j(\bq) \right).
	\]
	\item[iii)]  For $\bq \in \cB(\bnull, r_1)$, it holds that
	\[
		D_\bq = (1 + L_\bq)^{-1} + \dfrac{\bq^T C(\bq) \bq}{\bq^T M(\bq) \bq},
	\]
	where the map $\bz \mapsto C(\bz)$ is an analytic map from $\cB(\bnull, r_1) + \ri [-A, A]^3$ to the space of $3 \times 3$ matrices with components in $\cB(L^2_{0,\per})$, defined by
	\[
		\forall i,j \in \{1,2,3,\}, \quad C_{i,j}(\bz) = c_i(\bz) c_j^*(\bz).
	\]
\end{itemize}
\end{lemma}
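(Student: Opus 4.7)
The plan is to derive all three identities directly from the Schur complement formula applied to the block decomposition~\eqref{eq:blockL}, using the explicit expressions for $\Lambda_\bq$, $l_\bq$ and $L_\bq$ furnished by Lemma~\ref{lem:propcL}. Before doing so, the first thing I need to establish is that $(1+L_\bz)^{-1}$ is well-defined and analytic on a neighborhood of $\bnull$. At $\bz=\bnull$, the operator $\cL_\bnull$ is non-negative and bounded on $L^2_\per$ (it is the Bloch fiber at $\bq=\bnull$ of the non-negative operator $\cL$, cf.~\eqref{eq:ineqLq}), hence $L_\bnull = P_0 \cL_\bnull P_0$ is non-negative on $L^2_{0,\per}$, and $1+L_\bnull\geq 1$ is boundedly invertible. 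Part (i) of Lemma~\ref{lem:propcL} gives analyticity of $\bz\mapsto L_\bz$ on some polydisk $\cB(\bnull,r_1)+\ri[-A,A]^3$, so standard resolvent perturbation theory shrinks $r_1,A$ slightly to make $\bz\mapsto (1+L_\bz)^{-1}$ analytic with values in $\cB(L^2_{0,\per})$.

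Next I would plug the Schur formulas $\delta_\bz = [(1+\Lambda_\bz)-l_\bz^\ast(1+L_\bz)^{-1}l_\bz]^{-1}$, $d_\bz = -\delta_\bz (1+L_\bz)^{-1}l_\bz$ and $D_\bz = (1+L_\bz)^{-1}+\delta_\bz (1+L_\bz)^{-1}l_\bz l_\bz^\ast (1+L_\bz)^{-1}$ into the expressions from Lemma~\ref{lem:propcL}. Writing $l_\bq = \frac{\bq^T}{|\bq|}\bb(\bq)$ component-wise yields
\[
l_\bq^\ast (1+L_\bq)^{-1}l_\bq = \frac{1}{|\bq|^2}\sum_{i,j} q_i q_j\,\langle b_i(\bq),(1+L_\bq)^{-1}b_j(\bq)\rangle = \frac{\bq^T [\bb^\ast(\bq)(1+L_\bq)^{-1}\bb(\bq)]\bq}{|\bq|^2},
\]
which combined with $\Lambda_\bq = \frac{\bq^T M_1(\bq)\bq}{|\bq|^2}$ gives $(1+\Lambda_\bq)-l_\bq^\ast(1+L_\bq)^{-1}l_\bq = \bq^T M(\bq)\bq / |\bq|^2$, with $M(\bz)$ as in~\eqref{eq:def:A1}. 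Inverting gives (i). For (ii), $(1+L_\bq)^{-1}l_\bq = \frac{\bq^T\bc(\bq)}{|\bq|}$ with $\bc(\bq):=(1+L_\bq)^{-1}\bb(\bq)$; multiplying by $-\delta_\bq$ produces the announced $d_\bq = -|\bq|\bq^T\bc(\bq)/(\bq^T M(\bq)\bq)$. For (iii), the outer product $(1+L_\bq)^{-1}l_\bq l_\bq^\ast (1+L_\bq)^{-1} = \sum_{i,j} q_i q_j c_i(\bq)c_j^\ast(\bq)/|\bq|^2$ combined with $\delta_\bq = |\bq|^2/\bq^T M(\bq)\bq$ cancels the $|\bq|^2$ factor, yielding the form with $C_{i,j}(\bz)=c_i(\bz)c_j^\ast(\bz)$.

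The analyticity of $M(\bz)$, $\bc(\bz)$ and $C(\bz)$ then follows from composition and products of analytic maps: $M_1(\bz)$, $\bb(\bz)$ are analytic by Lemma~\ref{lem:propcL}, $(1+L_\bz)^{-1}$ is analytic by the first step, and the bilinear/matrix expressions are multi-linear and bounded. Hermiticity of $M(\bq)$ for real $\bq$ is immediate: $M_1(\bq)$ is hermitian by Lemma~\ref{lem:propcL}~(ii), $(1+L_\bq)^{-1}$ is self-adjoint on $L^2_{0,\per}$ (since $L_\bq$ is), and the Gram-type expression $[\bb^\ast(\bq)(1+L_\bq)^{-1}\bb(\bq)]_{ij} = \langle b_i(\bq),(1+L_\bq)^{-1}b_j(\bq)\rangle$ is manifestly hermitian in $(i,j)$.

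The remaining symmetry $M(\bq)=M(-\bq)$ is the only step with any subtlety, and I would derive it from the complex-conjugation symmetries~\eqref{eq:symmetries} of the Bloch eigenpairs. Changing variables $\bq'\mapsto -\bq'$ in the integral~\eqref{eq:def:bb} and using $u_{n,-\bq'}=\overline{u_{n,\bq'}}$, $\varepsilon_{n,-\bq'}=\varepsilon_{n,\bq'}$, $(\sqrt{w_c})_{-\bq}=\overline{(\sqrt{w_c})_\bq}$ shows $\bb(-\bq)=\overline{\bb(\bq)}$; the same substitution in~\eqref{eq:def:Aq} gives $M_1(-\bq)=\overline{M_1(\bq)}$ (already asserted in Lemma~\ref{lem:propcL}), and analogously $L_{-\bq}=\overline{L_\bq}$ on $L^2_{0,\per}$ (where the overline is entry-wise complex conjugation in the Fourier basis). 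These together give $M(-\bq)=\overline{M(\bq)}$, and combined with hermiticity $M(\bq)=M(\bq)^\ast$ this yields $M(-\bq)=M(\bq)^T$; but the same conjugation-and-transposition argument applied directly to the defining integrals shows $M_1(\bq)$ is in fact a real symmetric matrix (not merely hermitian), so $M(\bq)$ is too, and $M(-\bq)=M(\bq)$. The main obstacle is keeping track of these conjugation symmetries cleanly, since the individual building blocks $\bb(\bq)$ are genuinely complex-valued even though the final matrix is real.
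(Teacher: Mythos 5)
Your proposal follows the same route the paper intends: the paper offers no explicit proof of Lemma~\ref{lem:propDq} and merely says it is deduced from Lemma~\ref{lem:propcL} together with the Schur-complement block formulas displayed just above the statement, and your Schur-complement algebra reproduces exactly that. Your observation that $(1+L_\bz)^{-1}$ must first be shown invertible and analytic near $\bnull$ is a necessary preliminary that the paper leaves implicit, your identification of the Gram-matrix structure gives hermiticity cleanly, and you correctly read the displayed definition $c_j(\bz) = (1+\cL_\bq)^{-1}b_j(\bq)$ as a typo for $(1+L_\bz)^{-1}b_j(\bz)$, since only the latter is what the Schur formula produces.

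The one place where your argument has a genuine gap is the final assertion $M(\bq) = M(-\bq)$. Two issues. First, a sign error: changing $\bq\mapsto-\bq$, $\bq'\mapsto-\bq'$ in~\eqref{eq:def:bb} and using $u_{n,-\bq'}=\overline{u_{n,\bq'}}$ conjugates the matrix element $\langle u_{m,\bq'-\bq}\,|\,(-\ri\nabla^1) u_{n,\bq'}\rangle$ \emph{with an extra minus sign} (complex conjugation flips $-\ri\nabla^1$), so one actually finds $\bb(-\bq)=-\overline{\bb(\bq)}$ rather than $\bb(-\bq)=\overline{\bb(\bq)}$; this does not hurt you, because the sign appears twice in the Gram term $\bb^\ast(\bq)(1+L_\bq)^{-1}\bb(\bq)$ and cancels, leaving $M(-\bq)=\overline{M(\bq)}$ unchanged. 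Second, and more seriously, from $M(-\bq)=\overline{M(\bq)}$ and hermiticity you deduce $M(-\bq)=M(\bq)^T$, and you then conclude $M(-\bq)=M(\bq)$ by asserting that $M(\bq)$ is real-symmetric. You justify this only for the $M_1(\bq)$ piece — which does follow by combining the conjugation relation with the paper's separate assertion $M_1(\bq)=M_1(-\bq)$ in Lemma~\ref{lem:propcL} — but you never establish it for the Gram part $N(\bq):=\bb^\ast(\bq)(1+L_\bq)^{-1}\bb(\bq)$, which is the part that actually contains the new information. The conjugation argument alone gives $N(-\bq)=\overline{N(\bq)}=N(\bq)^T$, which is circular with what you already have, so one still needs an additional argument (a reciprocity / time-reversal argument for the fibered operator, or a direct check at $\bq=\bnull$ plus an expansion) to close the claim that $N(\bq)$, and hence $M(\bq)$, is real. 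You flagged this step as the ``main obstacle,'' and indeed it is the step you have not actually completed.
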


\begin{remark} \label{rem:A1ge1}
	For $\bq \in \cB(\bnull, r_1)$, we obtain from~\eqref{eq:ineqLq} for $\bq \in \cB(\bnull, r_1)$ that
	\[
		 1 - \dfrac{\bq^T M_1(\bq) \bq}{| \bq |^2} \le \dfrac{| \bq |^2}{\bq^T M(\bq) \bq} \le 1
	\]
We deduce from the second inequality to the leading order in $\bq \to \bnull$ that $M(\bnull) \ge 1$.
\end{remark}

The matrix $M(\bnull)$ is an important object in materials science. Let us give the definition from Adler~\cite{Adler1962} and Wiser~\cite{Wiser1963}.
\begin{definition}[Macroscopic dielectric matrix] \label{def:A1}
		The $3 \times 3$ hermitian matrix $M(\bnull)$, where $M(\cdot)$ is defined in~\eqref{eq:def:A1} is the macroscopic dielectric $3 \times 3$ matrix of the crystal. If the crystal is isotropic cubic, then there exists $\epsilon \ge 1$ such that $M(\bnull) = \epsilon \bbI_3$. The number $\epsilon$ is the macroscopic dielectric constant of the crystal (in the rHF approximation).
\end{definition}

The proof that $M(\bnull)$ is proportional to the identity matrix in the isotropic cubic case is similar to Lemma~\ref{lem:A0}, although more involving.


\subsection{End of the proof of Theorem~\ref{th:chargedDefects}}

We now have all the tools to conclude the proof of Theorem~\ref{th:chargedDefects}. According to Lemma~\ref{lem:linearAndQuadratic}, Lemma~\ref{lem:linear} and Lemma~\ref{lem:intermediate}, it only remains to prove that $\cL_{2,2,\nu}^L$ defined in~\eqref{eq:diffQuad2} satisfies an estimate of the form
\begin{equation} \label{eq:restQuadratic}
	\left| \cJ_{2,2,\nu}^L + \dfrac{2 \pi \fa}{L | \WS |} q^2 \right| \le C \dfrac{\left\| \nu \right\|^2_{L^2(\R^3)}}{L^3},
\end{equation}
where $C \in \R^+$ is independent of $L$, and where we recall that $q = \int_{\R^3} \nu$. From Section~\ref{ssec:Bloch}, it holds that
\[
	\left\bra \left( 1 + \cL \right)^{-1} \sqrt{v_c} \nu, \sqrt{v_c} \nu  \right\ket_{L^2(\R^3)}
	= 
	\fint_\BZ F_\nu(\bq) \rd \bq,
\]
where we defined
\[
	\forall \bq \in \R^3, \quad F_\nu(\bq) := \left\bra \left( 1 + \cL_\bq \right)^{-1} (\sqrt{v_c})_\bq \nu_\bq, (\sqrt{v_c})_\bq \nu_\bq  \right\ket.
\]
Similarly, from Section~\ref{sec:supercellBloch}, it holds that
\[
	\left\bra \left( 1 + \widetilde{\cL^L} \right)^{-1} \sqrt{v_c^L} \nu_L, \sqrt{v_c^L} \nu_L  \right\ket_{L^2_{0,\per}(\WS_L)} 
	= 
	\dfrac{1}{L^3} \sum_{\bQ \in \Lambda_L \setminus \{ \bnull \}} F_\nu^L(\bQ),
\]
with
\[
	F_\nu^L(\bQ) := \left\bra \left( 1 + \widetilde{\cL_\bQ^L} \right)^{-1} (\sqrt{v_c^L})_\bQ \nu_{L,\bQ}, (\sqrt{v_c^L})_\bQ \nu_{L,\bQ}  \right\ket.
\]
Altogether, we obtain that
\[
	\cJ_{2,2,\nu}^L = \frac12 \left( \dfrac{1}{L^3} \sum_{\bQ \in \Lambda_L \setminus \{ \bnull \}} F_\nu^L(\bQ)  -  \fint_\BZ F_\nu(\bq) \rd \bq \right).
\]
We split the difference in two parts, namely
\begin{align}
 \cJ_{2,2,\nu}^L 
 	= & \frac12 \left( \dfrac{1}{L^3} \sum_{\bQ \in \Lambda_L \setminus \{ \bnull \}} \left( F_\nu^L(\bQ) - F_\nu(\bQ) \right) \right) +  \label{eq:FMinusFL}\\
	& \quad + \frac12 \left( \dfrac{1}{L^3} \sum_{\bQ \in \Lambda_L \setminus \{ \bnull \}} F_\nu(\bQ) - \fint_\BZ F_\nu(\bq) \rd \bq \right).  \label{eq:intMinusRiemann}
\end{align}
The term in~\eqref{eq:intMinusRiemann} is the difference between a Riemann sum and the corresponding integral. The term in~\eqref{eq:FMinusFL} compares the functions $F_\nu$ and $F^L_\nu$. This term is controlled thanks to Lemma~\ref{lem:LQ_LLQ} (see Section~\ref{sec:proof:FMinusFL} for the proof).

\begin{lemma} \label{lem:eqFMinusFL}
	There exists $C \in \R^+$ and $\alpha > 0$ such that, for all $L \ge L^*$ and all $\nu \in \cN(\eta)$, it holds that
	\[
		\left| \dfrac{1}{L^3} \sum_{\bQ \in \Lambda_L \setminus \{ \bnull \}} \left( F_\nu^L(\bQ) - F_\nu(\bQ) \right) \right|
		\le
		C \left\| \nu \right\|^2_{L^2(\R^3)} \re^{-\alpha L}.
	\]
\end{lemma}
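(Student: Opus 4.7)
The plan is to exploit two compatibility facts at each reciprocal point $\bQ\in\Lambda_L\setminus\{\bnull\}$: the supercell Coulomb operator agrees with the whole-space one fiber-by-fiber, and the supercell Bloch transform of $\nu_L$ coincides with the Bloch transform of $\nu$. Combined with Lemma~\ref{lem:LQ_LLQ} and the elementary resolvent identity, this turns the desired bound into an $L^2$ estimate that is independent of $L$.

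First, I would verify that for every $\bQ\in\Lambda_L\setminus\{\bnull\}$ one has $(\sqrt{v_c^L})_\bQ = (\sqrt{v_c})_\bQ$, which is the content of the discussion following~\eqref{eq:vcLbQ}, and that $\nu_{L,\bQ}=\nu_\bQ$. The latter follows from a direct computation: since $\nu$ is supported in $L^\supp\WS$ with $L\ge L^*\ge L^\supp$, the $L\Lat$-periodization satisfies $\nu_L(\by)=\sum_{\bR'\in L\Lat}\nu(\by-\bR')$, and using $\re^{-\ri\bQ\cdot\bR'}=1$ for $\bR'\in L\Lat$ and $\bQ\in L^{-1}\RLat$, one unfolds the double sum defining $\cZ_L\nu_L$ into the sum over $\Lat$ defining $\cZ\nu$.

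Next, I would apply the resolvent identity
\[
\bigl(1+\cL_\bQ\bigr)^{-1}-\bigl(1+\widetilde{\cL_\bQ^L}\bigr)^{-1}
=\bigl(1+\cL_\bQ\bigr)^{-1}\bigl(\widetilde{\cL_\bQ^L}-\cL_\bQ\bigr)\bigl(1+\widetilde{\cL_\bQ^L}\bigr)^{-1},
\]
which, together with the previous identifications, gives
\[
F_\nu(\bQ)-F_\nu^L(\bQ)
=\left\bra\bigl(1+\cL_\bQ\bigr)^{-1}\bigl(\widetilde{\cL_\bQ^L}-\cL_\bQ\bigr)\bigl(1+\widetilde{\cL_\bQ^L}\bigr)^{-1}(\sqrt{v_c^L})_\bQ\nu_{L,\bQ},\,(\sqrt{v_c^L})_\bQ\nu_{L,\bQ}\right\ket.
\]
Since $\cL_\bQ$ and $\widetilde{\cL_\bQ^L}$ are non-negative self-adjoint on $L^2_\per$, the two resolvents are operator-norm bounded by $1$, and Lemma~\ref{lem:LQ_LLQ} furnishes $\|\widetilde{\cL_\bQ^L}-\cL_\bQ\|_{\cB(L^2_\per)}\le C\re^{-\alpha L}$ uniformly in $\bQ\in\Lambda_L\setminus\{\bnull\}$. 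By Cauchy--Schwarz this yields
\[
|F_\nu(\bQ)-F_\nu^L(\bQ)|\le C\re^{-\alpha L}\,\bigl\|(\sqrt{v_c^L})_\bQ\nu_{L,\bQ}\bigr\|^2_{L^2_\per}.
\]

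Finally, I would sum over $\bQ\in\Lambda_L\setminus\{\bnull\}$ and use the Plancherel-type identity from Section~\ref{sec:supercellBloch}, together with Remark~\ref{rem:sqrtVc} (which asserts that $\sqrt{v_c^L}$ is a contraction on $L^2_\per(\WS_L)$ when the zero Fourier mode is dropped):
\[
\frac{1}{L^3}\sum_{\bQ\in\Lambda_L\setminus\{\bnull\}}\bigl\|(\sqrt{v_c^L})_\bQ\nu_{L,\bQ}\bigr\|^2_{L^2_\per}=\bigl\|\sqrt{v_c^L}\nu_L\bigr\|^2_{L^2_{0,\per}(\WS_L)}\le\|\nu_L\|^2_{L^2_\per(\WS_L)}=\|\nu\|^2_{L^2(\R^3)}.
\]
Combining the previous two displays gives the claimed bound. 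The only mildly delicate step is the identification $\nu_{L,\bQ}=\nu_\bQ$; once the support hypothesis on $\nu$ is used properly, everything else is a soft consequence of the earlier lemmas, so there is no real analytical obstacle in this proof.
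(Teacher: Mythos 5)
Your overall plan mirrors the paper's proof exactly (same fiberwise identifications, same resolvent identity, same invocation of Lemma~\ref{lem:LQ_LLQ}), and steps one through the Cauchy--Schwarz estimate are sound. The problem is the final display. You assert
\[
\bigl\|\sqrt{v_c^L}\nu_L\bigr\|^2_{L^2_{0,\per}(\WS_L)}\le\|\nu_L\|^2_{L^2_\per(\WS_L)},
\]
citing Remark~\ref{rem:sqrtVc} as saying $\sqrt{v_c^L}$ is a contraction on $L^2_\per(\WS_L)$. That is a misreading of the remark: it asserts that $\sqrt{v_c^L}$ is bounded by $1$ as a map from $\cC_L$ to $L^2_\per(\WS_L)$ (it is in fact an isometry from $\cC_L$ onto $L^2_{0,\per}(\WS_L)$), not as a map from $L^2_\per$ to itself. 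As an $L^2\to L^2$ operator, $\sqrt{v_c^L}$ is the Fourier multiplier by $\sqrt{4\pi}/|\bk|$ on $L^{-1}\RLat\setminus\{\bnull\}$, and the smallest nonzero $|\bk|$ is of order $1/L$, so its $\cB(L^2_\per(\WS_L))$ norm grows like $L$. The inequality you wrote is therefore false with constant $1$, and it is not a soft consequence of anything in the paper.

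The fix keeps your idea but routes through the Coulomb norm. By the unitarity $\sqrt{v_c^L}:\cC_L\to L^2_{0,\per}(\WS_L)$ one has the identity $\bigl\|\sqrt{v_c^L}\nu_L\bigr\|^2_{L^2_{0,\per}(\WS_L)}=\|\nu_L\|^2_{\cC_L}$, and then the uniform embedding $L^{6/5}_\per\hookrightarrow\cC_L$ together with the support hypothesis $\supp\nu\subset L^\supp\WS$ gives $\|\nu_L\|_{\cC_L}\le C\,\|\nu\|_{L^2(\R^3)}$ with a constant depending on $L^\supp$ but not on $L$; this is exactly estimate~\eqref{eq:CL_L2} in the proof of Lemma~\ref{lem:controlQ}. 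With that substitution your Plancherel-plus-Cauchy--Schwarz argument closes, and it is actually slightly cleaner than the paper's own final step, which bounds each $\|(\sqrt{v_c})_\bQ\nu_\bQ\|^2_{L^2_\per}$ crudely by $C L^2\|\nu_\bQ\|^2_{L^2_\per}$ and absorbs the extra $L^2$ into the exponential.
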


We now study~\eqref{eq:intMinusRiemann}. According to Lemma~\ref{lem:propDq}, the function $F$ in singular as $\bq$ approaches $\bnull$. In order to isolate the singularity, we construct a periodic cut-off function. Recall that the macroscopic dielectric $3 \times 3$ matrix $M(\bnull)$ satisfies $M(\bnull) \ge 1$ (see Remark~\ref{rem:A1ge1}). By continuity of $M(\cdot)$, there exists $r_2 > 0$ such that $M(\bq) \ge 1/2$ for all $| \bq | \le r_2$. We finally set $r = \min(r_1, r_2)$, where $r_1$ is chosen as in Lemma~\ref{lem:propDq}, and we introduce
\begin{equation} \label{eq:Psi}
	\Psi_\per (\bq) = \sum_{\bk \in \RLat} \psi(| \bq + \bk |)
\end{equation}
where $\psi : \R^+ \to \R^+$ is a non-increasing function satisfying $\psi(x) = 1$ for all $x < r/2$ and $\psi(x) = 0$ for all  $x > r$. We write
\[
	F_\nu(\bq) = F_{\nu,1}(\bq) + F_{\nu,2}(\bq)
	\quad \text{with} \quad
	F_{\nu,1}(\bq) = (1 - \Psi_\per)(\bq) F_\nu(\bq) 
	\quad \text{and} \quad
	F_{\nu,2}(\bq) =  \Psi_\per(\bq)F_\nu(\bq).
\]
The function $F_{\nu,1}$ is smooth on the whole space. We deduce the following result, whose proof is postponed until Section~\ref{sec:proof:F1}.
\begin{lemma} \label{lem:F1}
	for all $p \in \N^*$, there exists $C_p \ge 0$ such that for all $L \ge L^*$ and all $\nu \in \cN(\eta)$, it holds
	\begin{equation} \label{eq:F1}
		\left| \fint_{\BZ} F_{\nu,1}(\bq) \rd \bq - \dfrac{1}{L^3}\sum_{\bQ \in \Lambda_L\setminus \{ \bnull \}} F_{\nu,1}(\bQ)  \right| \le \dfrac{C_p}{L^p} \left\| \nu \right\|_{L^2(\R^3)}^2.
	\end{equation}
\end{lemma}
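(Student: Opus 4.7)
The strategy is to show that $F_{\nu,1}$ is a smooth $\RLat$-periodic function and then invoke the classical fact that, for a $C^\infty$ periodic function, the difference between its integral over a unit cell and its Riemann sum on an $L \times L \times L$ grid converges faster than any polynomial in $L^{-1}$.

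\textbf{Periodicity and smoothness of $F_{\nu,1}$.} The cut-off $\Psi_\per$ is $\RLat$-periodic by construction \eqref{eq:Psi}. For $F_\nu$, the covariance relations \eqref{eq:function_covariant}, \eqref{eq:rotation}, together with the analogous covariance for $(\sqrt{v_c})_\bq$ (from \eqref{eq:BlochSqrtVc}) and $\cL_\bq$, yield $\nu_{\bq+\bk}=U_\bk\nu_\bq$, $(\sqrt{v_c})_{\bq+\bk}=U_\bk(\sqrt{v_c})_\bq U_\bk^{-1}$ and $\cL_{\bq+\bk}=U_\bk\cL_\bq U_\bk^{-1}$ for all $\bk\in\RLat$. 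Since the $U_\bk$ are unitary on $L^2_\per$, these conjugations cancel inside the inner product, so $F_\nu(\bq+\bk)=F_\nu(\bq)$ and $F_{\nu,1}$ is $\RLat$-periodic. For smoothness, note that $\mathrm{supp}(1-\Psi_\per)\subset\Omega_{r/2}$, so it suffices to show $F_\nu$ is $C^\infty$ (indeed real-analytic) on $\Omega_{r/2}$. On this open set, $(\sqrt{v_c})_\bq$ is analytic in $\bq$ (the denominators $|\bk+\bq|$ stay bounded away from $0$), $(1+\cL_\bq)^{-1}$ is analytic by Lemma~\ref{lem:analDq}, and $\bq\mapsto\nu_\bq$ is entire on $\C^3$ with values in $L^2_\per$ because $\nu$ is supported in $L^\supp\WS$ so the sum \eqref{eq:def:cZ} reduces to finitely many exponentials in $\bq$. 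The composition is therefore real-analytic on $\Omega_{r/2}$, and $F_{\nu,1}$ is $C^\infty$ on all of $\R^3$.

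\textbf{Uniform control of $C^p$ norms.} For every $p\in\N$ there exists $C_p>0$, independent of $L$ and of $\nu$, such that $\|F_{\nu,1}\|_{C^p(\BZ)}\le C_p\|\nu\|_{L^2(\R^3)}^2$. This uses the quadratic dependence of $F_\nu$ on $\nu_\bq$, uniform bounds on $(\sqrt{v_c})_\bq$, $(1+\cL_\bq)^{-1}$ and all their $\bq$-derivatives on $\Omega_{r/2}$ (obtained from Lemma~\ref{lem:analDq} by Cauchy estimates in the complex strip $\Omega_{r/2}+i[-A,A]^3$), together with the bound $\|\partial^\alpha_\bq\nu_\bq\|_{L^2_\per}\le C(L^\supp)^{|\alpha|}\|\nu\|_{L^2(\R^3)}$, itself obtained by iterating $\partial_{q_j}\nu_\bq=-i\,\cZ[y_j\nu](\bq,\cdot)$ and using the isometry of $\cZ$ together with the compact support of $\nu$.

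\textbf{Poisson summation and conclusion.} Expanding $F_{\nu,1}$ in Fourier series,
\[
F_{\nu,1}(\bq)=\sum_{\bR\in\Lat}c_\bR(F_{\nu,1})\,\re^{\ri\bq\cdot\bR},\qquad c_\bR(F_{\nu,1})=\fint_\BZ F_{\nu,1}(\bq)\,\re^{-\ri\bq\cdot\bR}\,\rd\bq,
\]
the previous step and integration by parts $p$ times give $|c_\bR(F_{\nu,1})|\le C_p\|\nu\|_{L^2(\R^3)}^2|\bR|^{-p}$ for $\bR\neq 0$. The discrete orthogonality $L^{-3}\sum_{\bQ\in\Lambda_L}\re^{\ri\bQ\cdot\bR}=\mathds{1}_{\bR\in L\Lat}$ and the identity $F_{\nu,1}(\bnull)=0$ (since $\Psi_\per(\bnull)=1$) yield
\[
\frac{1}{L^3}\sum_{\bQ\in\Lambda_L\setminus\{\bnull\}}F_{\nu,1}(\bQ)-\fint_\BZ F_{\nu,1}(\bq)\,\rd\bq=\sum_{\bR\in L\Lat\setminus\{0\}}c_\bR(F_{\nu,1}).
\]
Choosing $p$ large enough (in particular $p>3$), and using $|\bR|\ge cL$ for $\bR\in L\Lat\setminus\{0\}$, the right-hand side is bounded by $C'_p\|\nu\|_{L^2(\R^3)}^2 L^{-(p-3)}$, which gives \eqref{eq:F1} with an arbitrary polynomial rate.

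The main obstacle is the middle step: the uniform-in-$L$ control of $\|F_{\nu,1}\|_{C^p}$ requires transferring the strip-analyticity bounds of Lemma~\ref{lem:analDq} into pointwise bounds on real-analytic functions via Cauchy estimates, while keeping track of the linear dependence on $L^\supp$ (which is fixed) and the quadratic dependence on $\|\nu\|_{L^2(\R^3)}$. The rest of the argument is a standard Fourier-analytic calculation on the torus.
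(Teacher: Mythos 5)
Your proof is correct and follows essentially the same route as the paper: compact support of $\nu$ gives entire $\bq\mapsto\nu_\bq$, the cut-off $\Psi_\per$ together with Lemma~\ref{lem:analDq} gives smoothness and $F_{\nu,1}(\bnull)=0$, the covariance identity~\eqref{eq:rotation} gives $\RLat$-periodicity, the derivative bound on $\nu_\bq$ controls the $C^p$ norms with the quadratic $\|\nu\|_{L^2}^2$ factor, and Poisson summation on the torus closes the argument (the paper delegates this last step to its Riemann-sum lemma, which you instead write out explicitly). One small inefficiency: since $\sum_{\bR\in L\Lat\setminus\{\bnull\}}|\bR|^{-p}=L^{-p}\sum_{\bR'\in\Lat\setminus\{\bnull\}}|\bR'|^{-p}$, the exact rate is $L^{-p}$ rather than $L^{-(p-3)}$, though as $p$ is arbitrary this does not affect the conclusion.
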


For the remaining $F_{\nu,2}$ term, we expect a much slower convergence, due to the singularity as $\bq \to \bnull$. Thanks to Lemma~\ref{lem:wc} and Lemma~\ref{lem:propDq}, it holds that, for $\bq \in \BZ$,
\begin{equation} \label{eq:singularF2}
	F_{\nu,2}(\bq) =  \Psi_\per(\bq) \left( 
	\dfrac{ r(\bq)}{\bq^T M(\bq) \bq} \right)
\end{equation}
where we set (we denote by $n_\bq := (\sqrt{w_c})_\bq \nu_\bq$ ot lighten the notation)
\begin{align*}
	r(\bq) & := 4 \pi \left| \bra e_\bnull | \nu_\bq \ket \right|^2 +  \bq \left( 2 \Re \left( \sqrt{4 \pi} \bra e_\bnull | \nu_\bq \ket \left\bra n_\bq | d_\bq \right\ket_{L^2_{0,\per}} 	
	+
	\left\bra D_\bq n_\bq | n_\bq \right\ket_{L^2_{0,\per}}
	\right)
		\right) \\
		& = - 2\sqrt{4 \pi} \Re \left( \bra e_\bnull | \nu_\bq \ket  \left\bra n_\bq | \bq^T \bc(\bq) \right\ket_{L^2_{0,\per}} \right) 
		+
		\left\bra \bq^T C(\bq)\bq \, n_\bq | n_\bq \right\ket_{L^2_{0,\per}}	+ \\
		& \quad 
		+ \bq^T M(\bq) \bq \left\bra \left( 1 + L_\bq \right)^{-1} n_\bq | n_\bq \right\ket_{L^2_{0,\per}}.
\end{align*}
From Lemma~\ref{lem:propDq}, we deduce the following properties of $r(\bq)$.
\begin{lemma} \label{lem:rq}
	The map $\bq \mapsto r(\bq)$ admits an analytical extension on $\cB(\bnull, r_1) + \ri [-A, A]^3$ for some $A > 0$. Moreover, it holds 
	\[
		r(\bnull) = 4 \pi \left| \bra e_\bnull | \nu \ket \right|^2 = \dfrac{4 \pi}{| \WS |} q^2.
	\]
\end{lemma}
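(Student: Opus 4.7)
The plan is to establish analyticity of $r(\bq)$ term by term in the definition, and then to evaluate the resulting holomorphic function at $\bq = \bnull$ by noting which contributions survive.

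First I would verify that each building block extends analytically to a common complex neighborhood of the origin. Because $\nu$ is supported in the compact set $L^\supp \WS$, its Bloch transform $\nu_\bq(\bx) = \sum_{\bR \in \Lat} \re^{-\ri \bq \cdot (\bx + \bR)} \nu(\bx + \bR)$ is a finite sum for each $\bx \in \WS$, uniformly bounded on compacts, so $\bz \mapsto \nu_\bz$ is an entire $L^2_\per$-valued map. Combined with Lemma~\ref{lem:wc}, this makes $\bz \mapsto n_\bz := (\sqrt{w_c})_\bz \nu_\bz$ analytic from $\cB(\bnull, r_1) + \ri [-A, A]^3$ into $L^2_{0,\per}$. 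Next, Lemmas~\ref{lem:propcL} and~\ref{lem:propDq} give analyticity of $\bz \mapsto L_\bz$, $\bz \mapsto M(\bz)$, $\bz \mapsto \bc(\bz)$ and $\bz \mapsto C(\bz)$ on the same neighborhood. Since $\cL \ge 0$ forces $L_\bnull \ge 0$ (through $L_\bq = P_0 \cL_\bq P_0 \ge 0$ for real $\bq$ and continuity), the operator $1 + L_\bz$ is invertible for $\bz$ in a neighborhood of $\bnull$, so $(1 + L_\bz)^{-1}$ is analytic there too. After possibly shrinking $r_1$ and $A$, all building blocks share this domain of analyticity.

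The delicate point is to make sense of the $\Re$ symbol and the ``bra'' factors for complex $\bz$. For this one uses that $\nu$ is real-valued and $V_0$ is real-valued $\Lat$-periodic, which yields the symmetries $\overline{\nu_\bq} = \nu_{-\bq}$, $\overline{u_{n,\bq}} = u_{n,-\bq}$ and $\varepsilon_{n,-\bq} = \varepsilon_{n,\bq}$ on the real Brillouin zone (see~\eqref{eq:symmetries}). For real $\bq$ each antiholomorphic factor can thus be rewritten as a holomorphic expression evaluated at $-\bq$; the identity $2 \Re(\overline{\alpha(\bq)}\beta(\bq)) = \alpha(-\bq)\beta(\bq) + \alpha(\bq)\beta(-\bq)$, applied to the $\Re(\cdot)$ term, and the analogous rewriting applied to each bra-ket, then provide a canonical and manifestly holomorphic extension of $r$ to $\cB(\bnull, r_1) + \ri [-A, A]^3$.

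It only remains to evaluate $r(\bnull)$. Three of the four summands in the definition of $r(\bq)$ carry an explicit prefactor $\bq^T \bc(\bq)$, $\bq^T C(\bq) \bq$ or $\bq^T M(\bq) \bq$, with the other factors uniformly bounded as $\bq \to \bnull$, so they vanish at the origin. Only the leading contribution $4 \pi \left| \bra e_\bnull | \nu_\bq \ket \right|^2$ survives; using $e_\bnull = | \WS |^{-1/2}$ and
\[
	\int_\WS \nu_\bnull = \sum_{\bR \in \Lat} \int_\WS \nu(\cdot + \bR) \, \rd \bx = \int_{\R^3} \nu = q,
\]
we get $\bra e_\bnull | \nu_\bnull \ket = q / | \WS |^{1/2}$ and hence $r(\bnull) = 4 \pi q^2 / | \WS |$, as claimed. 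The only nontrivial aspect of the argument is the bookkeeping needed to produce a coherent holomorphic extension of the $\Re$-terms; there are no analytic estimates beyond those already encoded in Lemmas~\ref{lem:wc}, \ref{lem:propcL} and~\ref{lem:propDq}.
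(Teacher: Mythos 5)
The paper states Lemma~\ref{lem:rq} without proof; the authors evidently regard it as an immediate byproduct of Lemmas~\ref{lem:wc}, \ref{lem:propcL} and~\ref{lem:propDq}, so there is no paper-internal argument to compare against. Your proposal is correct and supplies exactly the ingredients those lemmas leave implicit: entirety of $\bz \mapsto \nu_\bz$ from the compact support of $\nu$, analyticity of $\bz \mapsto n_\bz$ via Lemma~\ref{lem:wc}, invertibility and hence analyticity of $(1+L_\bz)^{-1}$ near $\bnull$ (which Lemma~\ref{lem:propDq} tacitly uses), and the removal of the conjugations through the reality of $V_0$ and $\nu$ — which yields $\overline{H_\bq f} = H_{-\bq}\overline f$, $\overline{\nu_\bq} = \nu_{-\bq}$, $\overline{n_\bq} = n_{-\bq}$ for real $\bq$. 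The evaluation $r(\bnull) = 4\pi q^2/|\WS|$ is also right: the three non-leading summands carry the explicit prefactors $\bq^T\bc(\bq)$, $\bq^T C(\bq)\bq$ and $\bq^T M(\bq)\bq$, which vanish at $\bnull$ while their partners stay bounded, so only $4\pi|\bra e_\bnull|\nu_\bnull\ket|^2$ survives. One small caution: the identity $2\Re(\overline{\alpha(\bq)}\beta(\bq)) = \alpha(-\bq)\beta(\bq) + \alpha(\bq)\beta(-\bq)$ needs sign bookkeeping in the term $\beta(\bq) = \bra n_\bq | \bq^T\bc(\bq)\ket$, because the explicit $\bq^T$ factor is odd under $\bq \mapsto -\bq$; after replacing $\overline{n_\bq}$ by $n_{-\bq}$ the holomorphic extension satisfies $\overline{\beta(\bq)} = -\tilde\beta(-\bq)$ for real $\bq$, so the $2\Re$-term becomes $\tilde w(\bz) - \tilde w(-\bz)$ rather than a sum. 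This does not affect the conclusion — both combinations are manifestly holomorphic — but stating the general principle (replace every conjugated factor by its holomorphic counterpart evaluated at $-\bz$, then use that $r$ is real for real $\bq$) is cleaner than relying on the specific identity you wrote.
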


We finally apply Lemma~\ref{lem:singularRiemann} to the function $F_{\nu,2}(\bq)$, which gives the rate of convergence of the Riemann sum to the corresponding integral of a function of the form~\eqref{eq:singularF2}. We obtain the following lemma, which concludes the proof of Theorem~\ref{th:chargedDefects}.

\begin{lemma}
	There exists $C \in \R^+$ such that, for all $L \ge L^*$ and all $\nu \in \cN(\eta)$, it holds that
	\[
		\left|\dfrac{1}{L^3} \sum_{\bQ \in \Lambda_L \setminus \{ \bnull \}} F_\nu(\bQ) - \fint_\BZ F_\nu(\bq) \rd \bq
		  + \dfrac{4 \pi \fa}{| \WS |} \dfrac{q^2}{L}\right| \le C \dfrac{\left\| \nu \right\|}{L^3}.
	\]
\end{lemma}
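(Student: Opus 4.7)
My plan is to bound the Riemann-sum error
\[
\dfrac{1}{L^3} \sum_{\bQ \in \Lambda_L \setminus \{\bnull\}} F_\nu(\bQ) - \fint_\BZ F_\nu(\bq)\rd\bq
\]
by splitting $F_\nu = F_{\nu,1} + F_{\nu,2}$ exactly as in the text, where $F_{\nu,2} = \Psi_\per F_\nu$ concentrates near $\RLat$ and encodes the singular behavior at $\bnull$. For the regular part $F_{\nu,1}$, Lemma~\ref{lem:F1} gives a super-polynomial bound of the form $C_p L^{-p} \|\nu\|_{L^2}^2$, which is absorbed into the $O(L^{-3})$ error. Thus everything reduces to analyzing the singular piece $F_{\nu,2}$, whose explicit form from~\eqref{eq:singularF2} is
\[
F_{\nu,2}(\bq) = \Psi_\per(\bq)\, \dfrac{r(\bq)}{\bq^T M(\bq) \bq},
\]
with $r$ and $M$ both analytic on a neighborhood of $\bnull$ (by Lemma~\ref{lem:rq} and Lemma~\ref{lem:propDq}) and $r(\bnull) = 4\pi q^2 / |\WS|$.

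The key step is to apply the Riemann-sum asymptotic expansion for functions with a homogeneous $|\bq|^{-2}$ singularity, i.e.\ Lemma~\ref{lem:singularRiemann} from the appendix. Concretely, writing
\[
\dfrac{r(\bq)}{\bq^T M(\bq) \bq} = \dfrac{r(\bnull)}{\bq^T M(\bnull) \bq} + G(\bq),
\]
with $G$ the remainder, I would check that $G$ is at worst mildly singular (say $|\bq|^{-1}$ or better), which follows because $r(\bq) - r(\bnull) = O(|\bq|)$ in the analytic extension and $\bq^T(M(\bq) - M(\bnull))\bq = O(|\bq|^3)$, so $G$ behaves like $|\bq|^{-1}$ near $\bnull$. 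For such a remainder the Riemann sum converges at rate $O(L^{-3})$ after cutoff by $\Psi_\per$, using an Euler--Maclaurin-type argument on $\BZ \setminus \cB(\bnull, 1/L)$ together with explicit quadrature estimates near the origin; this is exactly the content of Lemma~\ref{lem:singularRiemann} applied to $G$.

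For the singular leading term $\Psi_\per(\bq)\, r(\bnull)/(\bq^T M(\bnull)\bq)$, the Riemann sum versus integral error is computed directly: the $\Lat$-periodization produces precisely the lattice sum defining $\fa$. Indeed, for the homogeneous quadratic form $h(\bq) := 1/(\bq^T M(\bnull) \bq)$, decomposing
\[
\dfrac{1}{L^3}\sum_{\bQ \in \Lambda_L \setminus \{\bnull\}} \Psi_\per(\bQ) h(\bQ) - \fint_\BZ \Psi_\per(\bq) h(\bq) \rd\bq
\]
and using that a sum of $h$ at the $L^3$ points of $\Lambda_L \setminus \{\bnull\}$ plus a correcting integral near $\bnull$ converges (after normalization by $1/L$) to
\[
\sum_{\bk \in \RLat} \fint_\BZ \left(\dfrac{1}{(\bk+\bq)^T M(\bnull)(\bk+\bq)} - \dfrac{\mathds{1}(\bk \neq \bnull)}{\bk^T M(\bnull) \bk}\right) \rd\bq = \fa,
\]
yields a contribution of exactly $-\,r(\bnull) \fa/(L|\BZ|) \cdot |\BZ| = -(4\pi q^2 \fa)/(|\WS| L)$ to leading order, with an $O(L^{-3})$ remainder controlled by the regularity of $\Psi_\per h$ away from $\bnull$.

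The main obstacle is the careful bookkeeping in applying Lemma~\ref{lem:singularRiemann} to extract the precise constant $\fa$: one must verify that the subtraction scheme used in defining $\fa$ matches exactly the Euler--Maclaurin correction produced by the Riemann sum for a $|\bq|^{-2}$ singularity on a general (non-isotropic) quadratic form. Once this identification is done, combining the contributions from $F_{\nu,1}$, the regular remainder of $F_{\nu,2}$, and the singular leading piece gives the claimed bound, where the $\|\nu\|^2_{L^2}$ dependence enters through $r(\bnull) = 4\pi q^2/|\WS|$ and $|q| \le C\|\nu\|_{L^2}$ for $\nu \in \cN(\eta)$ (by Cauchy--Schwarz on the bounded support $L^\supp \WS$).
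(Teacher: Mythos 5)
Your high‑level structure (split $F_\nu = F_{\nu,1} + F_{\nu,2}$, invoke Lemma~\ref{lem:F1} for the smooth piece and Lemma~\ref{lem:singularRiemann} for the singular piece, then use $r(\bnull) = 4\pi q^2/|\WS|$ and $|q| \le C\|\nu\|_{L^2}$) matches the paper's. The gap is in how you treat the singular piece. The clean route is to apply Lemma~\ref{lem:singularRiemann} \emph{directly} to $F_{\nu,2}(\bq) = \Psi_\per(\bq)\,r(\bq)/(\bq^T M(\bq)\bq)$, with $g = r$ (which is $C^4$ on $\cB(\bnull,r_1)$ by Lemma~\ref{lem:rq}) and the matrix $M(\cdot)$ from Lemma~\ref{lem:propDq}; this produces the $\fa\, r(\bnull)/L$ correction and an $O(\|r\|_{C^4}/L^3)$ error, with $\|r\|_{C^4}\lesssim\|\nu\|_{L^2}^2$. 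Instead you split off $G(\bq)=r(\bq)/(\bq^T M(\bq)\bq)-r(\bnull)/(\bq^T M(\bnull)\bq)$ and assert that because $G=O(|\bq|^{-1})$, its Riemann-sum error is $O(L^{-3})$ "by an Euler--Maclaurin-type argument". This step would fail as stated: for a \emph{generic} function with a degree-$(-1)$ homogeneous singularity the Riemann sum error is $O(L^{-2})$, not $O(L^{-3})$. What saves the day here is parity: in the Taylor expansion of $g/(\bq^T M(\bq)\bq)$ the degree-$(-1)$ and degree-$1$ homogeneous pieces $f_1,f_3$ are odd, so both their Riemann sums (over the symmetric set $\Lambda_L\cap\{\psi\neq 0\}$) and their integrals vanish exactly; the true $O(L^{-3})$ contribution comes from the degree-$0$ piece $f_2$, which is bounded but not continuous at $\bnull$ and is handled in the appendix via a nontrivial dyadic scaling (the $F_g(x)-F_g(2x)$ telescoping) that no Euler--Maclaurin bound supplies. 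Your sketch mentions neither the parity cancellation nor the degree-$0$ piece, and this is exactly where the claimed $L^{-3}$ rate needs to be earned.

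Also, "Lemma~\ref{lem:singularRiemann} applied to $G$" is both circular and inapplicable: to put $G$ into the required form $\widetilde{g}/(\bq^T M(\bq)\bq)$ you would need $\widetilde{g}(\bq)=r(\bq)-r(\bnull)\,\bq^T M(\bq)\bq/\bq^T M(\bnull)\bq$, but the ratio of the two quadratic forms is a nonconstant degree-$0$ homogeneous function, so $\widetilde{g}$ is not even continuous at $\bnull$, let alone $C^4$ as the lemma requires. Once you drop the extra decomposition and apply Lemma~\ref{lem:singularRiemann} to $F_{\nu,2}$ directly, the remaining steps of your argument — the $F_{\nu,1}$ bound, the identification of the leading lattice sum with $\fa$, and the bound on $q$ via Cauchy--Schwarz — are all fine.
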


\section{Proofs of the results}
\label{sec:detailsProofs}

\subsection{Proof of Proposition~\ref{prop:isotropic}}
\label{sec:proof:isotropic}

\begin{proof}
Let us first compute $\fm$ defined in~\eqref{eq:def:Madelung}.  From the Fourier representation of $G_1$ in~\eqref{eq:G1Fourier} and the Fourier transform of the Coulomb potential
\[
	\dfrac{1}{| \bx |} = \dfrac{1}{2 \pi^2} \int_{\R^3} \dfrac{\re^{\ri \bk \cdot \bx}}{| \bk |^2},
\]
one obtains (we use the fact that $| \WS | \cdot | \BZ | = (2 \pi)^3$)
\[
	G_1(\bx) - \frac{1}{| \bx |} = \frac{| \BZ |}{2 \pi^2} \sum_{\bk \in \RLat} \fint_{\BZ} \left( \dfrac{\re^{\ri \bk \cdot \bx} \mathds{1}(\bk \neq \bnull)}{| \bk |^2} -  \dfrac{\re^{\ri (\bk + \bq) \cdot \bx}}{| \bk + \bq |^2} \right) \rd \bq.
\]
The sum is absolutely convergent thanks to the multipole expansion (see for instance~\eqref{eq:def:F1} below). The limit $\bx \to \bnull$ then leads to
\[
	\fm = \frac{| \BZ |}{2 \pi^2} \sum_{\bk \in \RLat} \fint_{\BZ} \left( \dfrac{\mathds{1}(\bk \neq \bnull)}{| \bk |^2} -  \dfrac{1}{| \bk + \bq |^2} \right) \rd \bq.
\]
Together with the definition of $\fa$ in~\eqref{eq:fa} with $M = \epsilon \mathbb{I}_3$ (see Definition~\ref{def:A1}), this leads to the desire result.
\end{proof}

\subsection{Proof of Lemma~\ref{lem:controlQ}}
\label{sec:proof:controlQ}

The proof follows the arguments in~\cite[Lemma 3]{Cances2010}. We provide it here to emphasize the role the size of the supercell $L$. We first state a supercell equivalent of~\cite[Lemma 5]{Cances2008} and~\cite[Lemma 1]{Cances2010} with uniform bounds in $L$.
We introduce for $\lambda \in \sC$, $\nu \in \cN(\eta)$ and $L \ge L^\ast$ the operators
\[
	B_1(\lambda, \nu, L) := (1 - \Delta^L) \dfrac{1}{\lambda - H_\nu^L}
	\quad \text{and} \quad
	B_2(\lambda, \nu, L) :=  \dfrac{1}{\lambda - H_\nu^L}(1 - \Delta^L).
\]
The following classical lemma is very useful. It can be proved following for instance the lines of~\cite[Lemma 5.2]{GL2015}. Recall that $\cB(E)$ denotes the Banach space of bounded operators on the Banach space $E$.
\begin{lemma} \label{lem:bounds_B}
	For all $\nu \in \cN(\eta)$, all $L \ge L^\ast$ and all $\lambda \in \sC$, the operator $\lambda - H_\nu^L$ is invertible, and there exists $ {C} \in \R^+$ such that
	\begin{align} \label{eq:bounds_B1_B2}
		\forall \nu \in \cN(\eta), \quad \forall L \ge L^\ast,  \quad \forall \lambda \in \sC, \quad  \left\|  {B_{1,2}}(\lambda, \nu,L) \right\|_{\cB(L^2_\per(\WS_L))} \le {C}.
	\end{align}
\end{lemma}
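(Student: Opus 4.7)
The argument proceeds in three steps along the lines of \cite[Lemma~5.2]{GL2015}.

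\emph{Step 1 (uniform resolvent bound).} I first establish that $\dist(\sC, \sigma(H_\nu^L))$ is bounded below uniformly in $\nu \in \cN(\eta)$ and $L \ge L^*$. By Theorem~\ref{th:sc_nu_L}, $\sigma(H_\nu^L) \subset [\inf \sigma(H_\nu^L), \varepsilon_F - g/2] \cup [\varepsilon_F + g/2, +\infty)$, while by the construction of $\Sigma$ it holds $\inf \sigma(H_\nu^L) \ge \Sigma + 1$. Inspecting each of the four segments $\sC_1, \sC_2, \sC_3, \sC_4$, one sees that any $\lambda \in \sC$ is at distance at least $\min(g/2, 1)$ from the spectrum. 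Consequently $\lambda - H_\nu^L$ is invertible and $\|(\lambda - H_\nu^L)^{-1}\|_{\cB(L^2_\per(\WS_L))} \le C_1$, with $C_1$ independent of $\nu$ and $L$.

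\emph{Step 2 (uniform $-\Delta^L$-boundedness of $V_\nu^L$).} From Lemma~\ref{lem:controlQ} one has $\|V_\nu^L\|_{\cC_L'} \le C\|\nu\|_{L^2(\R^3)}$, and the definition~\eqref{eq:supercellBLSpace} of $\cC_L'$ together with the Sobolev embedding $\cC_L' \hookrightarrow L^6_\per(\WS_L)$ (with constant independent of $L$, obtained by periodic extension and scaling) yields
\[
	\|V_\nu^L\|_{L^6_\per(\WS_L)} \le C\|\nu\|_{L^2(\R^3)}.
\]
By H\"older and the interpolation $\|g\|_{L^3_\per(\WS_L)} \le C \|g\|_{L^2_\per(\WS_L)}^{1/2} \|g\|_{H^1_\per(\WS_L)}^{1/2}$ (uniform in $L$), I obtain for every $\epsilon > 0$ a constant $C_\epsilon$ such that, for all $\nu \in \cN(\eta)$ with $\eta$ small enough and all $g \in H^2_\per(\WS_L)$,
\[
	\|V_\nu^L g\|_{L^2_\per(\WS_L)} \le \epsilon \|{-\Delta^L} g\|_{L^2_\per(\WS_L)} + C_\epsilon \|g\|_{L^2_\per(\WS_L)}.
\]

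\emph{Step 3 (bound on $B_1$ and $B_2$).} Using $-\Delta^L = 2H_\nu^L - 2V_\nu^L$ and the identity $H_\nu^L (\lambda - H_\nu^L)^{-1} = -I + \lambda (\lambda - H_\nu^L)^{-1}$, one gets the algebraic decomposition
\[
	B_1(\lambda, \nu, L) = (1 + 2\lambda)(\lambda - H_\nu^L)^{-1} - 2I - 2 V_\nu^L (\lambda - H_\nu^L)^{-1}.
\]
The first two terms are bounded uniformly by Step~1 (since $|\lambda|$ is bounded on $\sC$). For the third term, I set $g = (\lambda - H_\nu^L)^{-1}f$ so that $-\frac12 \Delta^L g = -f + \lambda g - V_\nu^L g$; applying Step~2 with $\epsilon < 1/4$ and absorbing $\|{-\Delta^L} g\|$ on the left gives $\|{-\Delta^L} g\|_{L^2} \le C(\|f\|_{L^2} + \|g\|_{L^2}) \le C' \|f\|_{L^2}$ by Step~1. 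This yields $\|B_1\|_{\cB(L^2_\per(\WS_L))} \le C$. For $B_2$, I observe that $H_\nu^L$ and $1 - \Delta^L$ are self-adjoint and that $\sC$ is symmetric with respect to the real axis, so $B_2(\overline{\lambda}, \nu, L) = B_1(\lambda, \nu, L)^*$, giving the same uniform bound.

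\emph{Main obstacle.} The delicate point is Step~2: both the Sobolev-type estimate $\cC_L' \hookrightarrow L^6_\per(\WS_L)$ and the interpolation $L^3_\per \subset [L^2_\per, H^1_\per]_{1/2}$ must be obtained with constants independent of $L$. This is done by lifting to $\R^3$ via periodic extension on $\WS_L$ combined with a rescaling that reduces the problem to the fixed unit cell $\WS$, where the classical Gagliardo--Nirenberg and Sobolev inequalities apply with $L$-independent constants.
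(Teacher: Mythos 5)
Your overall strategy is sound: a uniform resolvent bound from the spectral gap, uniform $(-\Delta^L)$-relative boundedness of the potential, the resolvent-algebra decomposition of $B_1$, and the duality trick $B_2(\overline{\lambda}, \nu, L) = B_1(\lambda, \nu, L)^*$ (which exploits the conjugation symmetry of $\sC$) to deduce the $B_2$ bound. However, Steps~2 and~3 use the symbol $V_\nu^L$ for two different objects, and this is where the argument has a genuine gap.

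In Step~3 you write $-\Delta^L = 2H_\nu^L - 2V_\nu^L$, which requires $V_\nu^L$ to be the \emph{full} mean-field potential of Theorem~\ref{th:sc_nu_L}, namely $(\rho_{\gamma_\nu^L} - \mu_\per - \nu_L) \ast_{\WS_L} G_L$. In Step~2 you invoke the bound $\left\| V_\nu^L \right\|_{\cC_L'} \le C \left\| \nu \right\|_{L^2(\R^3)}$ from Lemma~\ref{lem:controlQ}; but in that lemma and in the decomposition~\eqref{eq:decompositionQnuL}, $V_\nu^L$ denotes the \emph{perturbation} potential $H_\nu^L - H_0^L = (\rho_{Q_\nu^L} - \nu_L) \ast_{\WS_L} G_L$. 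These two objects differ by $V_0^L$, which carries the $O(1)$ contribution of $\mu_\per$; the full potential is certainly not of order $\left\| \nu \right\|_{L^2}$ in $\cC_L'$. As written, either Step~2 bounds the wrong operator, or the identity in Step~3 is missing a $-2 V_0^L$ term — in both readings there is a gap. The repair is simple: write $H_\nu^L = -\tfrac12\Delta^L + V_0^L + W_\nu^L$ with $W_\nu^L := H_\nu^L - H_0^L$. Theorem~\ref{th:expCV} gives $\left\| \rho_{\gamma_0^L} - \rho_{\gamma_0} \right\|_{L^\infty_\per(\WS)} \le C\re^{-\alpha L}$, and since convolution by $G_1$ maps $L^2_\per(\WS)$ into $L^\infty_\per(\WS)$, the $\Lat$-periodic function $V_0^L$ is bounded in $L^\infty$ uniformly in $L$; it therefore contributes a harmless bounded multiplication operator in the decomposition of $B_1$. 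Your $\cC_L'$-based Step~2 then applies to $W_\nu^L$, and with this correction the rest of the proof goes through.

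Two secondary remarks. First, since Lemma~\ref{lem:bounds_B} is introduced inside the paper's proof of Lemma~\ref{lem:controlQ}, invoking the latter looks circular; it is not, because the specific estimate~\eqref{eq:controlV} you use follows from the variational principle alone (taking $Q = 0$ as a competitor) and does not rely on Lemma~\ref{lem:bounds_B} — but this should be said explicitly. Second, you rightly flag the $L$-independence of the embedding $\cC_L' \hookrightarrow L^6_\per(\WS_L)$ and of the Gagliardo--Nirenberg interpolation as the technical crux; this does hold for $L \ge 1$ by the scaling argument you sketch (the dilation $\bx \mapsto L\bx$ leaves the Sobolev quotient invariant, and the Poincar\'e--Sobolev inequality for mean-zero functions holds on the fixed cell $\WS$), but it should be carried out or given a precise reference rather than merely asserted.
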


We deduce the following lemma.
\begin{lemma} \label{lem:commutator} \ \\
	i) For all $L \ge L^\ast$, the map $V^L \mapsto [\gamma_0^L, V^L]$ is continuous from $\cC_L'$ to $\fS_2^L$. Moreover, there exists $C \in \R^+$ such that
	\[
		\forall L \ge L^*, \quad \forall \, V^L \in \cC_L', \quad \left\|  [\gamma_0^L, V^L] \right\|_{\fS_2^L } \le C \| V^L \|_{\cC_L'}.
	\]
	ii) For all $L \ge L^\ast$, all $\nu \in \cN(\eta)$ and all $\lambda \in \sC$, the map $V^L \mapsto (\lambda - H_\nu^L)^{-1} V^L$ is continuous from $\cC_L'$ to $\fS_6^L)$. Moreover, there exists $C \in \R^+$ such that
	\[
		\forall L \ge L^*, \quad \forall \nu \in \cN(\eta), \quad \forall \lambda \in \sC, \quad \forall \, V^L \in \cC_L', \quad \left\|   (\lambda - H_\nu^L)^{-1} V^L\right\|_{\fS_6^L)} \le C \| V^L \|_{\cC_L'}.
	\]
	
\end{lemma}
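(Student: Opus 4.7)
The plan rests on three ingredients, each uniform in $L\ge L^\ast$: the resolvent bounds of Lemma~\ref{lem:bounds_B}, the periodic Kato--Seiler--Simon inequality (Corollary~\ref{cor:periodicKSS}), typically stated as $\|f(P^L)g(X^L)\|_{\fS_p^L}\le C\|f\|_{L^p(\R^3)}\|g\|_{L^p_\per(\WS_L)}$ for $p\ge 2$ with $C$ depending only on $\Lat$ and $p$, and the embedding $\cC_L'\hookrightarrow L^6_\per(\WS_L)$ with $L$-independent constant, which follows by rescaling to the unit torus from the Gagliardo--Nirenberg inequality $\|g\|_{L^6}\le C\|\nabla g\|_{L^2}$ on mean-zero functions. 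I will also repeatedly use that $(\lambda-H_\nu^L)^{-1}(1-\Delta^L)^{1/2}$ is uniformly bounded on $L^2_\per(\WS_L)$, obtained by Stein complex interpolation between the trivial $s=0$ bound and the $s=1$ bound $B_2(\lambda,\nu,L)$ of Lemma~\ref{lem:bounds_B}.

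For part~(ii), the natural factorization
\begin{equation*}
	(\lambda-H_\nu^L)^{-1} V^L \;=\; \bigl[(\lambda-H_\nu^L)^{-1}(1-\Delta^L)^{1/2}\bigr]\,\bigl[(1-\Delta^L)^{-1/2} V^L\bigr]
\end{equation*}
reduces the task to two uniform bounds. The first bracket is uniformly bounded as above. For the second, periodic Kato--Seiler--Simon at $p=6$, together with the fact that $(1+|\xi|^2)^{-1/2}\in L^6(\R^3)$ in three dimensions, gives $\|(1-\Delta^L)^{-1/2}V^L\|_{\fS_6^L}\le C\|V^L\|_{L^6_\per(\WS_L)}$, and the embedding $\cC_L'\hookrightarrow L^6_\per(\WS_L)$ yields the claimed estimate.

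For part~(i), I would start from the Cauchy representation $\gamma_0^L=(2\pi\ri)^{-1}\oint_\sC(\lambda-H_0^L)^{-1}\,d\lambda$ and the standard resolvent identity
\begin{equation*}
	[\gamma_0^L, V^L] \;=\; -\frac{1}{2\pi\ri}\oint_\sC (\lambda-H_0^L)^{-1}\,[V^L, H_0^L]\,(\lambda-H_0^L)^{-1}\,d\lambda.
\end{equation*}
Since $V^L$ commutes with the multiplication operator $V_0^L$, one has $[V^L,H_0^L]=-\tfrac12[V^L,-\Delta^L]=-\tfrac12\bigl(\nabla^L\!\cdot\!(\nabla^L V^L)+(\nabla^L V^L)\!\cdot\!\nabla^L\bigr)$, where the three components of $\nabla^L V^L$ are multiplication operators by the $L^2_\per(\WS_L)$-functions $\partial_j V^L$; the definition~\eqref{eq:supercellBLSpace} provides the key identity $\|\nabla V^L\|_{L^2_\per}^2 = 4\pi\|V^L\|_{\cC_L'}^2$. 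A representative summand in the contour integrand then has the form
\begin{equation*}
	\bigl[(\lambda-H_0^L)^{-1}\nabla^L\bigr]\,\bigl[(\nabla^L V^L)(1-\Delta^L)^{-1}\bigr]\,B_1(\lambda,0,L),
\end{equation*}
whose three factors are, respectively: a uniformly bounded operator on $L^2_\per(\WS_L)$ (by duality from the half-power estimate of the opening paragraph, since $\|\nabla^L(1-\Delta^L)^{-1/2}\|_\infty\le 1$); an operator in $\fS_2^L$ with norm $\le C\|\nabla V^L\|_{L^2_\per}$ (periodic KSS at $p=2$, using $(1+|\xi|^2)^{-1}\in L^2(\R^3)$); and a uniformly bounded operator by Lemma~\ref{lem:bounds_B}. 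Integrating over the compact contour $\sC$ and collecting the two symmetric kinetic terms gives the desired bound $\|[\gamma_0^L,V^L]\|_{\fS_2^L}\le C\|V^L\|_{\cC_L'}$.

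The main obstacle is not the algebra but the verification that every constant is genuinely independent of $L$. This boils down to (a) the $L$-uniform form of periodic Kato--Seiler--Simon, (b) the $L$-uniform Sobolev embedding on mean-zero functions (established by the rescaling to the unit torus), and (c) the interpolation step for the half-power resolvent bound, which is routine once both endpoints are uniform via Lemma~\ref{lem:bounds_B}.
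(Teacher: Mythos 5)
Your proof is correct and follows essentially the same route as the paper: the Cauchy contour representation together with the commutator identity $[H_0^L,V^L]=-\tfrac12[\Delta^L,V^L]$ for part~(i), a resolvent factorization for part~(ii), the periodic Kato--Seiler--Simon inequality at $p=2$ and $p=6$ respectively, and the $L$-uniform facts $\|\nabla V^L\|_{L^2_\per}^2=4\pi\|V^L\|_{\cC_L'}^2$ and $\cC_L'\hookrightarrow L^6_\per(\WS_L)$. The only cosmetic difference is that the paper avoids your interpolation step: it writes $(\lambda-H_\nu^L)^{-1}V^L=B_2(\lambda,\nu,L)\,(1-\Delta^L)^{-1}V^L$ and $(\lambda-H_0^L)^{-1}P_j^L=B_2(\lambda,0,L)\,P_j^L(1-\Delta^L)^{-1}$, so only the full-power resolvent bounds of Lemma~\ref{lem:bounds_B} and $\|P_j^L(1-\Delta^L)^{-1}\|\le 1$ are needed, with KSS then applied to $(1-\Delta^L)^{-1}V^L$ rather than to $(1-\Delta^L)^{-1/2}V^L$.
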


\begin{proof}[Proof of Lemma~\ref{lem:commutator}]
	To prove the first point, we write 
	\begin{align*}
		[\gamma_0^L, V^L] & = \dfrac{1}{2 \ri \pi} \oint_\sC \left[ \dfrac{1}{\lambda - H_0^L}, V^L \right] \rd \lambda 
			= \dfrac{1}{2 \ri \pi} \oint_\sC \dfrac{1}{\lambda - H_0^L}\left[ \lambda - H_0^L, V^L \right]\dfrac{1}{\lambda - H_0^L} \rd \lambda \\
		& = \dfrac{1}{4 \ri \pi} \oint_\sC B_2(\lambda, 0, L) \dfrac{1}{1 - \Delta^L}\left[ \Delta^L, V^L \right]\dfrac{1}{1 - \Delta^L} B_1(\lambda, 0, L) \rd \lambda.
	\end{align*}
	On the other hand, we have
	\[
		\left[ \Delta^L, V^L \right] = \sum_{j=1}^3 \left[ \left(P_j^L\right)^2, V^L \right] =  \sum_{j=1}^3 P_j^L \left( P_j^L V^L \right) + \left( P_j^L V^L \right) P_j^L.
	\]
	Since $V^L \in \cC'_L$, it follows that $\left( P_j^L V^L \right) \in L^2_\per(\WS_L)$ with $\| P_j^L V^L \|_{L^2_\per(\WS_L)} \le \| V^L \|_{\cC'_L}$, for $j \in \{ 1,2,3 \}$. The result then follows from the fact that $P_j (1 - \Delta^L)^{-1}$ is uniformly bounded for $L \ge L^*$, $j \in \{1,2,3\}$, and the periodic Kato-Seiler-Simon inequality (see Lemma~\ref{lem:periodicKSS} and Corollary~\ref{cor:periodicKSS}). \\
	To prove the second point, we simply write that $(\lambda - H_\nu^L)^{-1} V^L = B_2(\lambda, \nu,L) (1 - \Delta^L)^{-1} V^L$, and use the fact that $V^L \in \cC_L' \hookrightarrow L^6_\per(\WS_L)$ with uniform bounds in $L$, together with Corollary~\ref{cor:periodicKSS} with $p = 6$.
\end{proof}

We now prove Lemma~\ref{lem:controlQ}. First, since $Q_\nu^L$ is the minimizer of~\eqref{eq:JnuL_infQ}, it holds that $\cF_\nu^L(Q_\nu^L) \le \cF_\nu^L(0)$, hence
\begin{equation} \label{eq:sumPositivTerms}
	 \Tr_{L^2_\per(\WS_L)} \left( \left[ H_0^L -\varepsilon_F \right]  Q_\nu^L \right) + \frac12 D_L \left( \rho_{Q_\nu^L} - \nu_L, \rho_{Q_\nu^L} - \nu_L \right) \le D_L(\nu_L, \nu_L).
\end{equation}
Since $Q_\nu^L$ is the difference of two projectors, we have
\begin{align*}
	\Tr_{L^2_\per(\WS_L)} \left( \left[ H_0^L -\varepsilon_F \right]  Q_\nu^L \right) 
		& = \Tr_{L^2_\per(\WS_L)} \left( \left| H_0^L -\varepsilon_F \right|  \left(Q_\nu^{++,L} - Q_\nu^{--,L} \right) \right)  \\
		& = \Tr_{L^2_\per(\WS_L)} \left( \left| H_0^L -\varepsilon_F \right|  \left( Q_\nu^L \right)^2 \right) \ge 0.
\end{align*}
 As a consequence, the two terms in the left-hand side of~\eqref{eq:sumPositivTerms} are positive. We deduce that $\| \rho_{Q_\nu^L} - \nu_L \|_{\cC_L} \le 2 \| \nu_L \|_{\cC_L}$, which is the first point of~\eqref{eq:controlV}. From the embedding $L_\per^{6/5} \hookrightarrow \cC_L$ with uniform bound in $L$, we obtain that there exists $C_1, C_2 \in \R^+$ such that, for all $L \ge L^\supp$, 
 \begin{equation} \label{eq:CL_L2}
 	\forall \nu \in \cN(\eta), \quad
	\left\| \nu \right\|_{\cC_L} \le C_1 \left\| \nu \right\|_{L^{6/5}_\per(L^\supp)} \le C_2 \left\| \nu \right\|_{L^{2}_\per(L^\supp)} = C_2 \left\| \nu \right\|_{L^{2}(\R^3)}.
 \end{equation}
The second part of~\eqref{eq:controlV} follows.

\medskip

To prove~\eqref{eq:controlQ1Q2}, we follow~\cite[Lemma 3]{Cances2010}. We first prove the assertion for $Q_{\nu,1}^L$. In the sequel, we use the notation
\begin{equation} \label{eq:P+-}
	P_+^L = (1 - \gamma_0^L) 
	\quad \text{and} \quad
	P_-^L = \gamma_0^L,
\end{equation}
and for $\alpha, \beta \in \{ +,-\}$, we denote by $Q_{\nu,1}^{\alpha \beta, L} := P_\alpha^L Q_{\nu,1}^L P_\beta^L$. Thanks to the Cauchy residual formula, it holds that $Q_{\nu,1}^{++,L} = Q_{\nu,1}^{--,L} = 0$, so that $Q_{\nu,1}^L = Q_{\nu,1}^{+-,L} + Q_{\nu,1}^{-+,L}$. Let us study the $Q_{\nu,1}^{+-,L}$ term (the study of the $Q_{\nu,1}^{-+,L}$ being similar). It holds
\begin{align*}
	(1 - \Delta^L)^{1/2} Q_{\nu,1}^{+-,L} 
		& = \dfrac{1}{2 \ri \pi} \oint_\sC (1 - \Delta^L) \dfrac{1 - \gamma_0^L}{\lambda - H_0^L} V_\nu^L \dfrac{\gamma_0^L}{\lambda - H_0^L} \\
		& =  \dfrac{1}{2 \ri \pi} \oint_\sC B_1(\lambda,0,L) (1 - \gamma_0^L) \left[ V_\nu^L, \gamma_0^L \right] B_2(\lambda, 0, L) \dfrac{1}{1 - \Delta^L} \rd \lambda.
\end{align*}
From Lemma~\ref{lem:bounds_B}, Lemma~\ref{lem:commutator} and the fact that $(1 - \gamma_0^L)$ and $(1 - \Delta^L)^{-1}$ are uniformly bounded in $\cS(L^2_\per(\WS_L))$ for $L \ge L^*$, we deduce that there exist $C_1, C_2, C_3 \in \R^+$ such that, for all $L \ge L^*$,
\[
	\left\| (1 - \Delta^L)^{1/2} Q_{\nu,1}^{+-,L}  \right\|_{\fS_2^L} \le C_1 \left\|  \left[ V_\nu^L, \gamma_0^L \right] \right\|_{\fS_2^L} \le C_2 \left\| V_\nu^L \right\|_{\cC'} \le C_3 \| \nu \|_{\cC},
\]
where we used~\eqref{eq:controlV} for the last inequality.

\medskip

On the other hand, since $(1 - \Delta^L)^{-1} \in \fS_2^L$, we also deduce that $Q_{\nu,1}^{+-,L} \in \fS_1^L$, so that we can consider its trace, and get $\Tr_{L^2_\per(\WS_L)}(Q_{\nu,1}^{+-,L}) = 0$. The third point of Lemma~\ref{lem:controlQ} follows.

\medskip
We now prove the result for $\widetilde{Q_{\nu,2}^L}$. For $k \in \N$, we denote by $Q_{\nu,k}^L$ and $\widetilde{Q_{\nu,k}^L}$ the operators
\[
	Q_{\nu,k}^L = \dfrac{1}{2 \ri \pi} \oint_\sC \dfrac{1}{\lambda - H_0^L} \left( V_\nu^L \dfrac{1}{\lambda - H_0^L} \right)^k \rd \lambda
	\quad \text{and} \quad
	\widetilde{Q_{\nu,k}^L} = \dfrac{1}{2 \ri \pi} \oint_\sC \dfrac{1}{\lambda - H_\nu^L} \left( V_\nu^L \dfrac{1}{\lambda - H_0^L} \right)^k \rd \lambda.
\]
Note that for $k \ge 2$, it holds $\widetilde{Q_{\nu,2}^L} = \sum_{l=2}^{k-1} {Q_{\nu,l}^L} + \widetilde{Q_{\nu,k}^L}$. We proceed in two steps. 

\medskip

\underline{\textbf{Step 1:} The operators $Q_{\nu,l}^L$, $l \in \{ 1,2,3,4,5\}$ satisfy bounds similar to~\eqref{eq:controlQ1Q2}.} \\
We do the proof for $Q_{\nu,2}^L$, the proof begin similar for the other cases. For $\alpha, \beta, \gamma \in \{ -,+ \}$, we introduce 
\[
	Q_{\nu,2}^{\alpha \beta \gamma, L} := \dfrac{1}{2 \ri \pi} \oint_\sC \dfrac{P_\alpha^L}{\lambda - H_0^L} V_\nu^L \dfrac{P_\beta^L}{\lambda - H_0^L} V_\nu^L \dfrac{P_\gamma^L}{\lambda - H_0^L} \rd \lambda,
\]
where the operators $P^L_\pm$ were defined in~\eqref{eq:P+-}. Thanks to the Cauchy residual formula, it holds $Q_{\nu,2}^{+++, L} = Q_{\nu,2}^{---, L} = 0$. In each other terms, the motif $P_+^L  V_\nu^L P_-^L$ or $P_+^L  V_\nu^L P_-^L$ appears at least once. Thanks to Lemma~\ref{lem:commutator}, there exists $C \in \R^+$ such that, for all $L \ge L^*$,
\[
	\left\| P_+^L  V_\nu^L P_-^L \right\|_{\fS_2^L} = \left\| (1 - \gamma_0^L) [V_\nu^L, \gamma_0^L] \right\|_{\fS_2^L} \le C \| \nu \|_{L^2(\R^3)}
	\quad \text{and} \quad
	\left\| P_-^L  V_\nu^L P_+^L \right\|_{\fS_2^L} \le C \| \nu \|_{L^2(\R^3)}.
\]
The other motifs are bounded thanks to the second point of Lemma~\ref{lem:commutator}. Altogether, we obtain
\[
	\exists C \in \R^+, \quad \forall 2 \le l \le 5, \quad \forall L \ge L^*, \quad \left\| (1 - \Delta^L)^{1/2} Q_{\nu,l}^L \right\|_{\fS_2^L} \le C \| \nu \|^l_{L^2(\R^3)}.
\]
On the other hand, it holds $Q_{\nu,2}^{--,L} = Q_{\nu,2}^{-+-,L}$ and $Q_{\nu,2}^{++,L} = Q_{\nu,2}^{+-+,L}$, so that for these operators, the motif $P_+^L  V_\nu^L P_-^L$ or $P_+^L  V_\nu^L P_-^L$ appears at least twice. Following the same arguments leads to
\[
	\exists C \in \R^+, \quad \forall 2 \le l \le 5, \quad \forall L \ge L^*, \quad \sum_{\alpha \in \{-,+\}} \left\| (1 - \Delta^L)^{1/2} Q_{\nu,l}^{\alpha \alpha, L} (1 - \Delta^L)^{1/2} \right\|_{\fS_1^L} \le C \| \nu \|^l_{L^2(\R^3)}.
\]

\underline{\textbf{Step 2:} The operator $\widetilde{Q_{\nu,6}^L}$, satisfies bounds similar to~\eqref{eq:controlQ1Q2}.} \\
Actually, we can prove that $(1 - \Delta^L)^{1/2} \widetilde{Q_{\nu,6}^L} (1 - \Delta^L)^{1/2} \in \fS_1^L$ with norm uniformly bounded in $L$. This time, we simply use the fact that the motif $V_\nu^L (\lambda - H_0^L)^{-1}$ appears six times, and we use the second point of Lemma~\ref{lem:commutator} to bound the operator $(1 - \Delta^L)^{1/2} \widetilde{Q_{\nu,6}^L} (1 - \Delta^L)^{1/2}$ in $\fS_1^L$ uniformly in $L$. More specifically,
\[
	\exists C \in \R^+, \quad \forall \nu \in \cN(\eta), \quad \forall L \ge L^*, \quad \left\| (1 - \Delta^L)^{1/2} \widetilde{Q_{\nu,6}^L} (1 - \Delta^L)^{1/2} \right\|_{\fS_1^L} \le C \| \nu \|_{L^2(\R^3)}^6.
\]

The proof of~\eqref{eq:controlQ1Q2} follows.

\subsection{Proof of Lemma~\ref{lem:kinetic_ho}}
\label{sec:proof:kinetic_ho}

We use the decomposition~\eqref{eq:decompositionQnuL}, and write
\begin{equation} \label{eq:decompositionQnuL^2}
	\left(Q_{\nu}^L\right)^2 = \left(Q_{\nu,1}^L + \widetilde{Q_{\nu,2}^L} \right)^2 = \left(Q_{\nu,1}^L\right)^2 + \left(\widetilde{Q_{\nu,2}^L}\right)^2 
	+ Q_{\nu,1}^L \widetilde{Q_{\nu,2}^L} + \widetilde{Q_{\nu,2}^L} Q_{\nu,1}^L.
\end{equation}
On the other hand, for any $Q^L_a,  Q^L_b \in \cQ^L$, it holds that
\begin{align*}
 \Tr_{L^2_\per(\WS_L)} \left( \left| H_0^L - \varepsilon_F \right| Q_a^L Q_b^L \right) 
	=  \Tr_{L^2_\per(\WS_L)} & \left(  \left| H_0^L - \varepsilon_F \right|^{1/2} (1 - \Delta^L)^{-1/2} (1 - \Delta^L)^{1/2} Q_a^L . \right. \\
	& \quad \left.	Q_b^L (1 - \Delta^L)^{1/2} (1 - \Delta^L)^{-1/2}  \left| H_0^L - \varepsilon_F \right|^{1/2} \right).
\end{align*}
The operators $(1 - \Delta^L)^{-1/2}  \left| H_0^L - \varepsilon_F \right|^{1/2}$ and $ \left| H_0^L - \varepsilon_F \right|^{1/2} (1 - \Delta^L)^{-1/2} $ are uniformly bounded in $L$ (this can be shown as in Lemma~\ref{lem:bounds_B}) by some constant $C \in \R^+$. We therefore get
\begin{align}
 \left| \Tr_{L^2_\per(\WS_L)} \left( \left| H_0^L - \varepsilon_F \right| Q_a^L Q_b^L \right) \right| 
 & \le C^2 \left\| (1 - \Delta^L)^{1/2} Q_a^L\right\|_{\fS_2^L} \left\| (1 - \Delta^L)^{1/2} Q_b^L\right\|_{\fS_2^L} \nonumber \\
 & \le C^2 \left\| Q_a^L \right\|_{\cQ^L}  \left\| Q_b^L \right\|_{\cQ^L} . \label{eq:continuityKinetic}
\end{align}
Lemma~\ref{lem:kinetic_ho} then follows from~\eqref{eq:continuityKinetic}, the decomposition~\eqref{eq:decompositionQnuL^2} and Lemma~\ref{lem:controlQ}.

\subsection{Proof of Lemma~\ref{lem:controlHoDL}}
\label{sec:proof:controlHoDL}

The Cauchy-Schwarz inequality leads to
\begin{align*}
	& \left| \left\bra \left[ \sqrt{v_c^L} \left( \rho_{Q_{\nu,1}^L} - \nu_L \right) -  \left( 1 + \cL^L \right) \sqrt{v_c^L} \nu_L \right], \sqrt{v_c^L} \nu_L  \right\ket_{L^2_{0,\per}(\WS_L)} \right| \le \\
	& \qquad \qquad \left\| \sqrt{v_c^L} \left( \rho_{Q_{\nu,1}^L} - \nu_L \right) -  \left( 1 + \cL^L \right) \sqrt{v_c^L} \nu_L \right\|_{L^2_{0,\per}(\WS_L)} \left\| \sqrt{v_c^L} \nu_L \right\|_{L^2_{0,\per}(\WS_L)}.
\end{align*}
The first term of the right-hand side is controlled by some $C \left\| \nu \right\|^2_{L^2(\R^3)}$ thanks to Lemma~\ref{lem:linearPartQ1}. Since $\sqrt{v_c^L}$ is an operator from $\cC_L$ to $L^2_\per(\WS_L)$ bounded by $1$ (see Remark~\ref{rem:sqrtVc}), we obtain $\left\| \sqrt{v_c^L} \nu_L \right\|_{L^2_\per(\WS_L)} \le \left\| \nu_L \right\|_{\cC_L}$, and we conclude as in~\eqref{eq:CL_L2}.

\subsection{Proof of Lemma~\ref{lem:linear}}
\label{sec:proof:linear}

Recall that the support of $\nu$ is contained in $L^\supp \WS$ (see Section~\ref{sec:defect}), so that
\[
	\left| \int_{\R^3} \left( V_{0}^L - V_{0} \right) \nu \right| = \left| \int_{L^\supp \WS} \left( V_{0}^L - V_{0} \right) \nu \right| \le 
	\left\| V_{0}^L - V_{0} \right\|_{L^2(L^\supp \WS)} \left\| \nu \right\|_{L^2(\R^3)},
\]
where we used the Cauchy-Schwarz inequality. Since $V_0^L$ and $V_0$ are both $\Lat$-periodic, it holds $\left\| V_{0}^L - V_{0} \right\|_{L^2(L^\supp \WS)} = L^\supp \left\| V_{0}^L - V_{0} \right\|_{L^2(\WS)}$. On the other hand, from~\eqref{eq:sc} and~\eqref{eq:scL}, it holds
\[
	V_{0}^L - V_{0} = \left( \rho_{\gamma_0^L} - \rho_{\gamma_0} \right) \ast_\WS G_1.
\]
The result then follows from the continuous embedding $L^\infty_\per(\WS) \hookrightarrow L^2_\per(\WS)$, the fact that the convolution by $G_1$ is continuous from $L^2_\per(\WS)$ to $L^\infty_\per(\WS)$ and the last part of Theorem~\ref{th:expCV}.

\subsection{Proof of Lemma~\ref{lem:intermediate}}
\label{sec:proof:intermediate}

The proof is a direct consequence of the following lemma.

\begin{lemma}
\label{lem:expCVoperators}
	 There exist $C \in \R^+$ and $\alpha > 0$ such that
	\[
		\forall L \ge L^*, \quad \left\| \left( 1 + \widetilde{\cL^L} \right)^{-1} - \left( 1 + \cL^L\right)^{-1}  \right\|_{\cS(L^2_{0,\per}(\WS_L))} \le C \re^{- \alpha L}.
	\]
\end{lemma}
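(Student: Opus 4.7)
The plan is to reduce everything, via the second resolvent identity, to controlling the difference between the two irreducible polarizability operators, and then to exploit the fact that $\widetilde{H_0^L}-H_0^L=V_0-V_0^L$ is a periodic multiplicative operator whose $L^\infty$ norm decays exponentially in $L$ by Theorem~\ref{th:expCV}.

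First, by the second resolvent identity,
\[
    \bigl(1+\widetilde{\cL^L}\bigr)^{-1}-\bigl(1+\cL^L\bigr)^{-1}=\bigl(1+\widetilde{\cL^L}\bigr)^{-1}\bigl(\cL^L-\widetilde{\cL^L}\bigr)\bigl(1+\cL^L\bigr)^{-1}.
\]
Since Lemma~\ref{lem:propertiesLL} (applied to both $\cL^L$ and its analogue $\widetilde{\cL^L}$, which shares the same properties) gives both inverses a uniform $\cB(L^2_{0,\per}(\WS_L))$-norm bounded by $1$, it suffices to prove that $\|\cL^L-\widetilde{\cL^L}\|_{\cB(L^2_{0,\per}(\WS_L))}\le Ce^{-\alpha L}$. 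Using the isometry property of $\sqrt{v_c^L}$ between $\cC_L$, $L^2_{0,\per}(\WS_L)$ and $\cC_L'$ stated in Lemma~\ref{lem:vcL}, this reduces further to the estimate $\|\chi^L-\widetilde{\chi^L}\|_{\cB(\cC_L',\cC_L)}\le Ce^{-\alpha L}$.

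Next, set $W_L:=V_0^L-V_0$, so that $H_0^L-\widetilde{H_0^L}=W_L$ is $\Lat$-periodic. From Theorem~\ref{th:expCV}, $\|\rho_{\gamma_0^L}-\rho_{\gamma_0}\|_{L^\infty_\per(\WS)}\le Ce^{-\alpha L}$; combined with the identity $V_0^L-V_0=(\rho_{\gamma_0^L}-\rho_{\gamma_0})\ast_\Gamma G_1$ (both potentials coincide with the $\Lat$-periodic Coulomb convolution with $G_1$, since $\rho_{\gamma_0^L}-\mu_\per$ is $\Lat$-periodic of zero $\WS$-mean and $c_L=c_1=0$) and the continuity of convolution by $G_1$ from $L^2_\per(\WS)$ to $L^\infty_\per(\WS)$, this yields $\|W_L\|_{L^\infty_\per(\WS)}\le Ce^{-\alpha L}$. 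Applying the second resolvent identity to $(\lambda-H_0^L)^{-1}-(\lambda-\widetilde{H_0^L})^{-1}=(\lambda-H_0^L)^{-1}W_L(\lambda-\widetilde{H_0^L})^{-1}$, one writes, for any $V\in\cC_L'$,
\[
    (\chi^L-\widetilde{\chi^L})V=\rho\!\left[\tfrac{1}{2\ri\pi}\oint_\sC\!\left(\tfrac{W_L}{\lambda-H_0^L}\tfrac{V}{\lambda-\widetilde{H_0^L}}\tfrac{1}{\lambda-H_0^L}+\tfrac{V}{\lambda-\widetilde{H_0^L}}\tfrac{W_L}{\lambda-H_0^L}\tfrac{1}{\lambda-\widetilde{H_0^L}}\right)\rd\lambda\right].
\]

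The final step is to bound the $\cQ^L$-norm of each of the two operators in the integrand uniformly in $L$ by $Ce^{-\alpha L}\|V\|_{\cC_L'}$, and to invoke Lemma~\ref{lem:controlRhoQ} to convert this into a $\cC_L$-bound on the density. This estimate is carried out exactly as in the proof of Lemma~\ref{lem:controlQ}: one inserts factors $(1-\Delta^L)^{1/2}(1-\Delta^L)^{-1/2}$, uses the uniform bounds on $B_{1,2}(\lambda,0,L)$ of Lemma~\ref{lem:bounds_B}, the commutator estimates of Lemma~\ref{lem:commutator} (which provide the Hilbert--Schmidt bound on $P_+^LVP_-^L$), and the periodic Kato--Seiler--Simon inequality; the new feature is the extra factor $W_L$, which is simply a bounded multiplicative operator of norm $\|W_L\|_{L^\infty}\le Ce^{-\alpha L}$, and can be absorbed between two resolvents without affecting the Schatten estimates.

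The main obstacle is technical rather than conceptual: one must carefully distribute $(1-\Delta^L)^{\pm 1/2}$ factors around $W_L$ and $V$ so that the $\fS_2^L$-norm of each commutator-type piece is controlled by $\|V\|_{\cC_L'}$ while the scalar factor $\|W_L\|_{L^\infty}$ is extracted, without picking up an $L$-dependent constant. Once this is done for each $\lambda\in\sC$ (with the compact contour $\sC$ bounded independently of $L$), the $\lambda$-integration is harmless and the two desired exponential bounds follow, concluding the proof.
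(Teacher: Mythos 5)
Your proof is correct and follows essentially the same route as the paper: both proofs reduce via the second resolvent identity at the level of $(1+\cL)^{-1}$ and at the level of the resolvents of $H_0^L$ and $\widetilde{H_0^L}$, both use the exponential decay of $V_0^L-V_0$ from Theorem~\ref{th:expCV} together with the $L^\infty$-boundedness of convolution by $G_1$, and both then invoke the same Schatten-class machinery (periodic Kato--Seiler--Simon, Lemma~\ref{lem:controlQ}-type estimates, Lemma~\ref{lem:propertiesLL}) to close the argument. The only difference is expository (you work top-down from $(1+\cL)^{-1}$, the paper builds bottom-up from the $\chi$ difference), not mathematical.
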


\begin{proof}[Proof of Lemma~\ref{lem:expCVoperators}]

We first note that $\widetilde{H_0^L} - H_0^L = V_0 - V_0^L = G_1 \ast_{\WS} \left( \rho_{\gamma_0} - \rho_{\gamma_0^L} \right)$. Since the convolution by $G_1$ is bounded on $L^\infty_\per(\R^3)$, we obtain from the last part of Theorem~\ref{th:expCV} that that there exist $C \in \R^*$ and $\alpha >0$ such that
\begin{equation} \label{eq:expCVH0}
	\forall L \ge L^*, \quad \left\| \widetilde{H_0^L} - H_0^L \right\|_{\cS(L^2_{\per}(\WS_L))} \le C \re^{- \alpha L}.
\end{equation}
On the other hand, we have, for $V_a^L, V_b^L \in \cC_L'$, that
\begin{align*}
	& \left\bra \widetilde{\chi^L} V_a^L , V_b^L \right\ket_{\cC_L, \cC_L'} - \left\bra {\chi^L} V_a^L , V_b^L \right\ket_{\cC_L, \cC_L'} \\
	& \quad = \dfrac{1}{2 \ri \pi} \oint_\sC \Tr_{L^2_\per(\WS_L)} \left( \dfrac{1}{\lambda - \widetilde{H_0^L}} V_a^L \dfrac{1}{\lambda - \widetilde{H_0^L}} V_b^L - \dfrac{1}{\lambda - {H_0^L}} V_a^L \dfrac{1}{\lambda - {H_0^L}} V_b^L\right) \rd \lambda \\
	& \quad = \dfrac{1}{2 \ri \pi} \oint_\sC \Tr_{L^2_\per(\WS_L)} \left( \dfrac{1}{\lambda - \widetilde{H_0^L}} \left( \widetilde{H_0^L} - H_0^L \right) \dfrac{1}{\lambda - {H_0^L}}  V_a^L \dfrac{1}{\lambda - \widetilde{H_0^L}} V_b^L \right) \rd \lambda + \\
	& \quad \quad + \dfrac{1}{2 \ri \pi} \oint_\sC \Tr_{L^2_\per(\WS_L)} \left( \dfrac{1}{\lambda - {H_0^L}}   V_a^L \dfrac{1}{\lambda - \widetilde{H_0^L}} \left( \widetilde{H_0^L} - H_0^L \right) \dfrac{1}{\lambda - {H_0^L}} V_b^L \right) \rd \lambda.
\end{align*}
Using estimates similar to the ones used in the proof of Lemmas~\ref{lem:controlQ} and Lemma~\ref{lem:propertiesChiL}, together with the estimate~\eqref{eq:expCVH0}, we deduce that there exist $C \in \R^+$ and $\alpha >0$ such that
\begin{equation} \label{eq:expCVChi}
	\forall L \ge L^*, \quad \left\| \widetilde{\chi^L} - \chi^L \right\|_{\cB(\cC_L', \cC_L)} \le C \re^{- \alpha L}.
\end{equation}
Finally, from the definitions~\eqref{eq:def:LL} and~\eqref{eq:def:widetildeLL}, it holds that
\[
	 \left( 1 + \widetilde{\cL^L} \right)^{-1} - \left( 1 + \cL^L\right)^{-1} = 
	  \left( 1 + \widetilde{\cL^L} \right)^{-1} \sqrt{v_c^L} \left( \widetilde{\chi^L} - \chi^L \right) \sqrt{v_c^L} \left( 1 + \cL^L\right)^{-1}.
\]
The result then follows from Lemma~\ref{lem:vcL}, Lemma~\ref{lem:propertiesLL} and~\eqref{eq:expCVChi}.
\end{proof}

\subsection{Proof of Lemma~\ref{lem:LQ_LLQ}}
\label{sec:proof:LQ_LLQ}

Let us first extend the definition of $\widetilde{\cL^L_\bQ}$, initially defined for $\bQ \in \Lambda_L$, to all $\bq \in \BZ \setminus \{\bnull\}$, with
\[
	\forall \bq \in \BZ \setminus \{\bnull\}, \quad
	\widetilde{\cL^L_\bq} : f \in L^2_\per \mapsto 
	 \dfrac{1}{L^3} \sum_{\bQ' \in \Lambda_L} \rho \left[  \dfrac{1}{2 \ri \pi}  \oint_\sC \dfrac{1}{\lambda - H_{\bQ'}} {(\sqrt{v_c})_\bq f} \dfrac{1}{\lambda - H_{\bQ' - \bq}} \rd \lambda \right].
\]
Let $f,g \in L^2_\per$. For $\bq \in \BZ \setminus \{\bnull\}$, we write for simplicity $F_\bq := (\sqrt{v_c})_\bq f$ and $G_\bq := (\sqrt{v_c})_\bq g$. 
From the definition~\eqref{eq:BlochSqrtVc}, we deduce that for $\bq \in \BZ \setminus \{ \bnull \}$, it holds that $F_\bq, G_\bq \in L^2_\per$ with
\begin{equation} \label{eq:boundFq}
	\left\| F_\bq \right\|_{L^2_\per} \le \dfrac{\sqrt{4 \pi}}{| \bq|} \left\| f_\bq \right\|_{L^2_\per}
	\quad \text{and} \quad
	\left\| G_\bq \right\|_{L^2_\per} \le \dfrac{\sqrt{4 \pi}}{| \bq|} \left\| g_\bq \right\|_{L^2_\per}
\end{equation}

Finally, for $\lambda \in \sC$, we introduce
\[
	\forall \bz \in \C^3, \quad  K^{f,g}_{\lambda,\bq}(\bz) := \Tr_{L^2_\per(\WS)} \left( \dfrac{1}{\lambda - H_{\bz}} F_\bq \dfrac{1}{\lambda - H_{\bz - \bq}} G_\bq \right).
\]
With all these notation, the quantity that we want to control is
\begin{align*}
	\left| \left\bra f \big| \left( \widetilde{\cL^L_\bq} - \cL_\bq \right) g \right\ket \right|
		& = \left| \dfrac{1}{2 \ri \pi} \oint_{\sC} \fint_{\BZ} K^{f,g}_{\lambda,\bq}(\bq') \rd \bq' - \dfrac{1}{L^3} \sum_{\bQ' \in \Lambda_L} K^{f,g}_{\lambda,\bq}(\bQ') \right| \\
		& \le \dfrac{| \sC |}{2 \pi} \sup_{\lambda \in \sC} \left|  \fint_{\BZ} K^{f,g}_{\lambda,\bq}(\bq') \rd \bq' - \dfrac{1}{L^3} \sum_{\bQ' \in \Lambda_L} K^{f,g}_{\lambda,\bq}(\bQ')  \right|.
\end{align*}
We recognize the difference between an integral and a corresponding Riemann sum. Let us study the integrand $K^{f,g}_{\lambda,\bq}$.

\begin{lemma} \label{lem:analK}
	There exist $A > 0$ and $C \in \R^+$ such that, for all $\bq \in \BZ \setminus \{\bnull\}$, all $f,g \in L^2_\per(\WS)$ and all $\lambda \in \sC$, the function $z \mapsto K^{f,g}_{\lambda,\bq}(\bz)$ is an $\RLat$-periodic analytic function on $\R^3 + \ri [-A, A]^3$, with
	\begin{equation} \label{eq:upperK}
		\sup_{\bz \in \R^3 + \ri [-A, A]^3} \left| K^{f,g}_{\lambda,\bq}(\bz) \right| 
		\le  \dfrac{C}{| \bq |^2} \| f \|_{L^2_\per}  \| g \|_{L^2_\per} .
	\end{equation}
\end{lemma}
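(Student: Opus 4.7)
The plan is to factor the resolvent $(\lambda - H_\bz)^{-1} = B_2(\lambda, \bz)(1-\Delta^1)^{-1}$ (cf.~\eqref{eq:def:B1q_B2q}) and reduce the bound to the product of two Hilbert--Schmidt operators. Explicitly,
\[
K^{f,g}_{\lambda, \bq}(\bz) = \Tr_{L^2_\per} \Bigl( B_2(\lambda, \bz)\,(1-\Delta^1)^{-1} F_\bq\, B_2(\lambda, \bz - \bq)\,(1-\Delta^1)^{-1} G_\bq \Bigr).
\]
Lemma~\ref{lem:bounds_Bq} gives uniform boundedness of $B_2(\lambda, \cdot)$ on a compact strip, and the periodic Kato--Seiler--Simon inequality (Corollary~\ref{cor:periodicKSS}) gives $\|(1-\Delta^1)^{-1} F_\bq\|_{\fS_2(L^2_\per)} \le C\,\|F_\bq\|_{L^2_\per}$; together with~\eqref{eq:boundFq} and the analogous estimates for $G_\bq$, this produces the factor $|\bq|^{-2}\|f\|_{L^2_\per}\|g\|_{L^2_\per}$.

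First, I would establish $\RLat$-periodicity of $\bz \mapsto K^{f,g}_{\lambda, \bq}(\bz)$. For $\bm \in \RLat$, the covariance relation~\eqref{eq:rotation} extends holomorphically to give $(\lambda - H_{\bz + \bm})^{-1} = U_\bm (\lambda - H_\bz)^{-1} U_\bm^{-1}$. Since $F_\bq$ and $G_\bq$ are multiplication operators, they commute with each $U_\bm$ (itself multiplication by $e^{-\ri \bm \cdot \bx}$), and cyclicity of the trace yields $K^{f,g}_{\lambda, \bq}(\bz + \bm) = K^{f,g}_{\lambda, \bq}(\bz)$. Hence it suffices to bound $K^{f,g}_{\lambda, \bq}$ for $\bz$ ranging over the compact strip $\overline{\BZ} + \ri [-A, A]^3$; for such $\bz$ and $\bq \in \BZ$, the shifted argument $\bz - \bq$ lies in $2\overline{\BZ} + \ri[-A,A]^3$, which we take to be inside the compact $K$ of the remark following Lemma~\ref{lem:bounds_Bq}. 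Lemma~\ref{lem:bounds_Bq} then provides a constant $C_K$ bounding both $\|B_2(\lambda, \bz)\|_{\cB(L^2_\per)}$ and $\|B_2(\lambda, \bz - \bq)\|_{\cB(L^2_\per)}$, uniformly in $\lambda \in \sC$, $\bq \in \BZ$, and $\bz \in \overline{\BZ} + \ri [-A, A]^3$.

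Next, combining the Hilbert--Schmidt bound $\|(1-\Delta^1)^{-1} F_\bq\|_{\fS_2} \le C\,\|F_\bq\|_{L^2_\per} \le C'\,|\bq|^{-1}\|f\|_{L^2_\per}$ with the trace inequality $|\Tr(AB)| \le \|A\|_{\fS_2} \|B\|_{\fS_2}$ for $A, B \in \fS_2$ gives the claimed estimate~\eqref{eq:upperK}. For analyticity, Lemma~\ref{lem:bounds_Bq} together with the fact that $\bz \mapsto H_\bz$ is an analytic family of type~(A) shows that $\bz \mapsto (\lambda - H_\bz)^{-1}$ is operator-norm analytic on $\R^3 + \ri [-A, A]^3$ for $\lambda \in \sC$. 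Composing with the fixed bounded multiplication operators $F_\bq$, $G_\bq$ and taking the product yields an $\fS_1$-valued analytic map in $\bz$ (the product of two $\fS_2$-valued analytic maps is $\fS_1$-valued analytic), and precomposition with the continuous linear functional $\Tr$ preserves analyticity.

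The only potentially delicate point is the uniformity in $\bq \in \BZ \setminus \{\bnull\}$: this is what forces us to take $K$ big enough to contain $2\overline{\BZ}$, and it is also what makes the factor $|\bq|^{-2}$ appear (coming entirely from the $(\sqrt{v_c})_\bq$ operators hitting the zero Fourier mode, not from the resolvents, which are harmless thanks to Lemma~\ref{lem:bounds_Bq}). No additional obstruction arises.
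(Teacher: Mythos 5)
Your proof is correct and follows essentially the same route as the paper: factor each resolvent as $B_2(\lambda,\cdot)(1-\Delta^1)^{-1}$, use Lemma~\ref{lem:bounds_Bq} for the $B_2$ factors, the periodic Kato--Seiler--Simon inequality plus \eqref{eq:boundFq} for the Hilbert--Schmidt factors, and covariance under $U_\bm$ for periodicity. The only cosmetic differences are that the paper establishes periodicity on $\R^3$ first and transfers it to the strip by analytic continuation rather than verifying the covariance identity directly for complex $\bz$, and that the paper proves analyticity by computing $\partial_{z_j}B_2$ explicitly rather than invoking the general theory of analytic families of type~(A); both choices are equally valid.
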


\begin{proof}[Proof of Lemma~\ref{lem:analK}]
Let us begin with the $\RLat$-periodicity. From the covariant property \eqref{eq:rotation}, we deduce that
\begin{align*}
	\forall \bq' \in \R^3, \quad \forall \bk \in \RLat, \quad K^{f,g}_{\lambda,\bq}(\bq' + \bk) 
	& =  \Tr_{L^2_\per(\WS)} \left( \dfrac{1}{\lambda - H_{\bq' + \bk}} F_\bq \dfrac{1}{\lambda - H_{\bq' + \bk - \bq}} G_\bq \right) \\
		& = \Tr_{L^2_\per(\WS)} \left( U_\bk \dfrac{1}{\lambda - H_{\bq' + \bk}} U_{-\bk} F_\bq U_{\bk} \dfrac{1}{\lambda - H_{\bq' - \bq}} U_{-\bk} G_\bq \right).
\end{align*}
The result follows by rotating the unitary operator $U_{\bk}$ under the trace, and using the fact that for any multiplication operator $V$, it holds that $U_{-\bk} V U_{\bk} = V$. The $\RLat$-periodicity on $\R^3$ will eventually transfer into a $\RLat$-periodicity on $\R^3 + \ri [-A, A]$ by analyticity. \\
Let us prove that these maps are well-defined on some complex strip. We choose $A > 0$ as in Lemma~\ref{lem:bounds_Bq}, and recall that $B_1$ and $B_2$ were defined in~\eqref{eq:def:B1q_B2q}. For $\bz \in 2\BZ + \ri [-A, A]^3$, it holds that
\[
	K^{f,g}_{\lambda,\bq}(\bz) = \Tr_{L^2_\per(\WS)} \left( B_2(\lambda, \bz) \left[ \dfrac{1}{1 - \Delta^1} F_\bq \right] B_2(\lambda, \bz - \bq) \left[ \dfrac{1}{1 - \Delta^1} G_\bq \right] \right).
\]
According to Lemma~\ref{lem:bounds_Bq}, the operator $B_2(\lambda, \bz)$ and $B_2(\lambda, \bz - \bq)$ are uniformly bounded for $\lambda \in \sC$ and $\bz \in 2\BZ + \ri [-A, A]^3$. Together with the periodic Kato-Seiler-Simon inequality (see Corollary~\ref{cor:periodicKSS} with $p = 2$), we deduce that there exists $C \in \R^+$ such that
\[
	\forall \bq \in \BZ \setminus \{ \bnull \}, \quad \forall \bz \in 2 \BZ + \ri [-A,A]^3, \quad \forall f,g \in L^2_\per, \quad \left| K^{f,g}_{\lambda,\bq}(\bz) \right| \le C \| F_\bq \|_{L^2_\per} \| G_\bq \|_{L^2_\per}.
\]
Finally, using~\eqref{eq:BlochSqrtVc}, we deduce~\eqref{eq:upperK}. 

\medskip

We finally prove that the maps are analytic on $2 \BZ + \ri [-A, A]^3$ which, by periodicity, will imply the analyticity on the whole strip $\R^3 + \ri [-A,A]^3$. It is enough to prove that these maps are derivable on $\R^3 + \ri [-A, A]^3$. We notice that, for $j \in \{ 1,2,3 \}$, it holds that
\[
	\partial_{z_j} B_2(\lambda, \bz) = \dfrac{1}{\lambda - H_\bz} \left( - \ri P^1_j + z_j \right) \dfrac{1}B_2(\lambda, \bz) = B_2(\lambda, \bz) \left( \dfrac{- \ri P^1_j + z_j}{1 - \Delta^1} \right) B_2(\lambda, \bz),
\]
so that $\partial_{z_j} B_2(\lambda, \bz)$ is a bounded operator. The result easily follows from estimates similar to the ones used previously.
\end{proof}

This analyticity implies an exponential rate of convergence for the Riemann sum towards the corresponding integral. More specifically, applying Lemma~\ref{lem:RiemannAnalytic} to the functions $K_{\lambda, \bq}^{f,g}$, and using~\eqref{eq:boundFq} leads to
\[
	\exists C \in \R^+, \quad \exists \alpha >0, \quad \forall L \ge L^*, \quad \forall \bq \in \BZ \setminus \{ \bnull \}, 
	\left\| \left( \widetilde{\cL^L_\bq} - \cL_\bq \right)  \right\|_{\cB(L^2_\per(\WS))} \le C \dfrac{1}{| \bq |^2} \re^{-\ri \alpha L}.
\]
If $\bQ \in \Lambda_L$ and $\bQ \neq 0$, then it holds $| \bQ |^{-2} \le L^2$, so that there exist $C' \in \R^+$  and $\alpha'>0$ such that
\[
	\forall L \ge L^*, \quad \forall \bQ \in \Lambda_L \setminus \{ \bnull \} , 
	\left\|   \left( \widetilde{\cL^L_\bQ} - \cL_\bQ \right)  \right\|_{\cS(L^2_\per)} \le C L^2 \re^{-\ri \alpha L} \le C' \re^{- \ri \alpha'L},
\]
which is the desire result.

\subsection{Proof of Lemma~\ref{lem:analLq} }
\label{sec:proof:analLq}
%
%
%
%

We choose $A > 0$ as in Lemma~\ref{lem:bounds_Bq}, and $r > 0$. Let us first prove that for $\bz \in \Omega_r + \ri [-A, A]^3$, the operator $\cL_\bz$ is indeed bounded. For $f,g, \in L^2_\per$, we denote by $F_\bz = (\sqrt{v_c})_\bz f$ and $G_\bq = (\sqrt{v_c})_\bz g$. As in~\eqref{eq:boundFq}, we deduce that $F_\bz$ and $G_\bz$ are in $L^2_\per$, and that there exists $C_r \in \R^+$ independent of $f$ and $g$ such that
\begin{equation} \label{eq:controlFq}
	\forall \bz \in \Omega_r + \ri [-A,A]^3, \quad \left\| F_\bz \right\|_{L^2_{\per}} \le  C_r \left\| f \right\|_{L^2_{\per}}.
\end{equation}
In particular,
\begin{align*}
	\left| \bra f | \cL_\bz g \ket  \right| &
	= \left|
		\dfrac{-1}{2 \ri \pi} \fint_\BZ \oint_\sC \Tr_{L^2_\per(\WS)} \left( 
		\dfrac{1}{\lambda - H_{\bq'}} G_\bz \dfrac{1}{\lambda - H_{\bq' - \bz}} \overline{F_\bz}
	\right) \rd \lambda \rd \bq' \right| \\
	& \le \dfrac{1}{2\pi} \fint_\BZ \oint_{\sC} \left| \Tr_{L^2_\per(\WS)} \left( 
		\dfrac{1}{\lambda - H_{\bq'}} G_\bz \dfrac{1}{\lambda - H_{\bq' - \bz}} \overline{F_\bz}
	\right) \right| \rd \lambda \rd \bq'.
\end{align*}
On the other hand, it holds that
\begin{align*}
	\left| \Tr_{L^2_\per(\WS)} \left( 
		\dfrac{1}{\lambda - H_{\bq'}} G_\bz \dfrac{1}{\lambda - H_{\bq' - \bz}} \overline{F_\bz}
	\right) \right| &	= 
	\left| \Tr_{L^2_\per(\WS)} \left( 
		B_2(\lambda, \bq') \dfrac{1}{1 - \Delta^1} G_\bz  B_2(\lambda, \bq' - \bz)  \dfrac{1}{1 - \Delta^1}  \overline{F_\bz} \right) \right| \\
	& \quad \le C \left\| \dfrac{1}{1 - \Delta^1} G_\bz \right\|_{\fS_2} \left\| \dfrac{1}{1 - \Delta^1} \overline{F_\bz} \right\|_{\fS_2},
\end{align*}
where we used Lemma~\ref{lem:bounds_Bq} for the last inequality. Together with the periodic Kato-Seiler-Simon inequality (see Corollary~\ref{cor:periodicKSS}) and~\eqref{eq:controlFq}, we easily deduce that $\cL_\bz$ is bounded on $L^2_\per(\WS)$. We also deduce that there exists $C \in \R^+$ such that
\[
	\sup_{\bz \in \Omega_r + \ri [-A, A]^3} \left\| \cL_\bz \right\|_{\cB(L^2_\per)} \le C.
\]

Let us prove the analyticity. It is enough to prove that for all $\bz \in \Omega_r + \ri [-A,A]^3$, the operator $\partial_{z_j} \cL_\bz$ is bounded. We obtain
\begin{align}
	\bra f | \partial_{z_j} \cL_\bz g \ket = & \dfrac{-1}{2 \ri \pi} \fint_\BZ \oint_\sC \Tr_{L^2_\per(\WS)} \left( 
		\dfrac{1}{\lambda - H_{\bq'}} \left[ \partial_{z_j}  G_\bz\right] \dfrac{1}{\lambda - H_{\bq' - \bz}} \overline{F_\bz}
	\right) \rd \lambda \rd \bq' \label{eq:partialL1} \\
		&  +  \dfrac{-1}{2 \ri \pi} \fint_\BZ \oint_\sC \Tr_{L^2_\per(\WS)} \left( 
		\dfrac{1}{\lambda - H_{\bq'}}  G_\bz \dfrac{1}{\lambda - H_{\bq' - \bz}}  \left[ \partial_{z_j}  \overline{F_\bz} \right]
	\right) \rd \lambda \rd \bq' \label{eq:partialL2} \\
		& + \dfrac{-1}{2 \ri \pi} \fint_\BZ \oint_\sC \Tr_{L^2_\per(\WS)} \left( 
		\dfrac{1}{\lambda - H_{\bq'}}  G_\bz \left[ \partial_{z_j} \dfrac{1}{\lambda - H_{\bq' - \bz}} \right] \overline{F_\bz}
	\right) \rd \lambda \rd \bq'. \label{eq:partialL3}
\end{align}
Since it holds
\[
	\partial_{z_j} (\sqrt{v_c})_\bz =  \sqrt{4 \pi} \sum_{\bk \in \RLat} | e_\bk \ket  \dfrac{ z_j + k_j} {|\bz + \bk |^3}   \bra e_\bk |,
\]
we obtain as in~\eqref{eq:boundFq} that there exists $C_r \in \R^+$ such that
\begin{equation*}
	\forall \bz \in \Omega_r + \ri [-A,A]^3, \quad \left\| \partial_{z_j}  F_\bz \right\|_{L^2_{\per}} \le  C_r \left\| f \right\|_{L^2_{\per}}.
\end{equation*}
Using estimates similar than previously, we deduce that the operators defined in~\eqref{eq:partialL1} and~\eqref{eq:partialL2} are bounded. Finally, we notice that
\[
	 \partial_{z_j} \dfrac{1}{\lambda - H_{\bq' - \bz}} = \left[ \dfrac{1}{\lambda - H_{\bq' - \bz}} \left(P_j^1 - z_j\right) \right] \dfrac{1}{\lambda - H_{\bq' - \bz}},
\]
where the operator under brackets is bounded. Altogether, we deduce that $\partial_{z_j} \cL_\bz$ is bounded, so that $\bz \mapsto \cL_\bz$ is analytic.

\subsection{Proof of Lemma~\ref{lem:propcL}}
\label{sec:proof:propcL}

The first point is proved in a similar way than the proof of Lemma~\ref{lem:analLq}. Let us prove the second point. It holds
\[
	\Lambda_\bq = \bra e_\bnull | \cL_\bq e_\bnull \ket = 
	\dfrac{-1}{2 \ri \pi} \dfrac{4\pi}{| \bq |^2} \dfrac{1}{| \WS |} \fint_\BZ \oint_\sC \Tr_{L^2_\per(\WS)} \left( 
		\dfrac{1}{\lambda - H_{\bq'}}  \dfrac{1}{\lambda - H_{\bq' - \bq}}
	\right) \rd \lambda \rd \bq' .
\]
Using the spectral representation of $H_\bq$ in~\eqref{eq:spectralHq}, performing the $\oint_\sC$ integration using the Cauchy residual theorem and using~\eqref{eq:symmetries} leads, after some manipulations, to
\begin{equation} \label{eq:Lqkk'}
	\Lambda_\bq 
	 =
	 \dfrac{ 8 \pi}{| \WS | \ | \bq |^2} \sum_{n \le N < m} \fint_\BZ \dfrac{ \left| \left\bra u_{m, \bq' - \bq} | u_{n, \bq'} \right\ket \right|^2 }{| \varepsilon_{m, \bq - \bq'} - \varepsilon_{n, \bq'}|} \rd \bq'.
\end{equation}
Let us develop the numerator of~\eqref{eq:Lqkk'}. We use the identity
\[
	\varepsilon_{n, \bq'} \left\bra u_{m, \bq' - \bq} | u_{n, \bq'}  \right\ket
	= \left\bra u_{m, \bq' - \bq} | H_{\bq'} u_{n, \bq'} \right\ket
	= \left\bra H_{\bq'}  u_{m, \bq' - \bq} |  u_{n, \bq'} \right\ket.
\]
Together with the fact that $H_{\bq'} = H_{\bq' - \bq} - 2 \ri \bq \cdot \nabla + (| \bq' |^2 - |\bq - \bq'|^2)/2$, this leads to
\begin{equation} \label{eq:numerator}
 	\left\bra u_{m, \bq' - \bq} | u_{n, \bq'} \right\ket
 	= \dfrac{  \bq \cdot \left\bra u_{m, \bq -\bq'} | \left( - \ri \nabla^1 \right)  u_{n, \bq'} \right\ket}{\varepsilon_{n, {\bq'}} - \varepsilon_{m, {\bq - \bq'}} + \frac{| \bq |^2}{2} - \bq \cdot \bq'}.
\end{equation}
We now choose $r_1 > 0$ such that
\begin{equation} \label{eq:why_r1}
	\forall \bq \in \cB(\bnull, r_1), \quad \forall \bq' \in \BZ, \quad 
	\left|  \frac{| \bq |^2}{2} - \bq \cdot \bq'  \right| < \frac{g}{2},
\end{equation}
where we recall that $g > 0$ is the gap of the rHF system. Together with~\eqref{eq:gap}, this implies that the denominator of~\eqref{eq:numerator} is negative and away from $\bnull$ for all $\bq, \bq' \in \BZ$. As a result, we deduce from~\eqref{eq:Lqkk'} that $\Lambda_\bq = | \bq |^{-2}\bq^T M_1(\bq) \bq$, with
\[
	M_1(\bq)
	:= 
	\dfrac{ 8 \pi}{| \WS |} \sum_{n \le N < m} \fint_\BZ \dfrac{ \bra u_{n, \bq'} | (- \ri \nabla^1) u_{m, \bq' - \bq} \ket \bra u_{m, \bq' - \bq} | (- \ri \nabla^{1,T}) u_{n, \bq'} \ket  }{\left( \varepsilon_{m, {\bq' - \bq}} - \varepsilon_{n, {\bq'}} - \frac{| \bq |^2}{2} + \bq \cdot \bq' \right)^2 | \varepsilon_{m, {\bq' - \bq}} - \varepsilon_{n, {\bq'}} |} \rd \bq'
\]
which is~\eqref{eq:def:Aq}. We easily deduce from this formula that  $M_1(\bq)$ is a $3 \times 3$ hermitian matrix, and that $M_1(-\bq) = M_1(\bq)$. 

\medskip

Note that we cannot prove directly that $M_1(\bq)$ admits an analytic extension, since the maps $\bq \mapsto \varepsilon_{n, \bq}$ and  $\bq \mapsto u_{n, \bq}$ are not smooth in general. Let us provide an alternative formula for $M_1$. The idea is to \textit{undo} the Cauchy integration. More specifically, from~\eqref{eq:why_r1} and~\eqref{eq:gap}, we deduce that
\[
	\forall \bq \in \cB(\bnull, r_1), \quad \forall \bq' \in \BZ, \quad  \varepsilon_{m, \bq' - \bq} - \frac{| \bq |^2}{2} + \bq \cdot \bq' > \varepsilon_F > \varepsilon_{n, \bq'},
\]
so that, for all $n \le N < m$,
\[
	\dfrac{1}{\left( \varepsilon_{m, \bq' - \bq} - \varepsilon_{n, \bq'} + \bq \cdot \bq' - \frac{| \bq |^2}{2} \right)} = 
	\dfrac{-1}{2 \ri \pi} \oint_\sC \dfrac{1}{\left( \lambda_1 - \varepsilon_{m, \bq' - \bq} - \bq \cdot \bq' + \frac{| \bq |^2}{2} \right)(\lambda_1 - \varepsilon_{n, \bq'})}  \rd \lambda_1.
\]
We plug this expression in~\eqref{eq:def:Aq}, and get
\begin{align*}
	& M_1(\bq) = \dfrac{8 \pi}{|\WS|} \sum_{n \le N < m} \oint_{\sC} \oint_{\sC} \oint_{\sC} \fint_{\BZ} \rd \bq' \rd \lambda \rd \lambda_1 \rd \lambda_2
	 \bra u_{n, \bq'} | (- \ri \nabla^1) u_{m, \bq' - \bq} \ket \bra u_{m, \bq' - \bq} | (- \ri \nabla^{1,T}) u_{n, \bq'} \ket \times \\
	 & \quad \times 
	 \left[ \left( \lambda - \varepsilon_{m, \bq' - \bq} \right)   \left(  \lambda - \varepsilon_{n, \bq'} \right)
	\prod_{j\in \{1,2 \}}  \left(  \lambda_j - \varepsilon_{m, \bq' - \bq} - \bq \cdot \bq' + \frac{| \bq |^2}{2} \right)  \left( \lambda_j - \varepsilon_{n, \bq'}\right)  \right]^{-1}.
\end{align*}
Using the spectral representation~\eqref{eq:spectralHq}, this is also
\begin{align*}
	& M_1(\bq) = \dfrac{4 \pi}{|\WS|} \oint_{\sC} \oint_{\sC} \oint_{\sC} \fint_{\Gamma^\ast}
	 \Tr_{L^2_\per(\WS)} \left(  (-\ri \nabla^1)  \dfrac{1}{ (\lambda - H_{\bq'})  \left( \lambda_1 - H_{\bq' }\right) \left( \lambda_2 - H_{\bq' } \right)} (-\ri \nabla^{1,T})   \right) \\
	& \qquad \left.      \dfrac{ 1 }{(\lambda - H_{\bq' - \bq}) \left(\lambda_1 - H_{\bq' - \bq} - \bq \cdot \bq' + \frac{| \bq |^2}{2} \right) \left(\lambda_2 - H_{\bq' - \bq} - \bq \cdot \bq' + \frac{| \bq |^2}{2} \right)} 
		 \right. 
	   \rd \bq' \rd \lambda \rd \lambda_1 \rd \lambda_2.
\end{align*}
From this last expression, and following the same arguments as in the proof of Lemma~\ref{lem:analLq}, we see that $M_1(\cdot)$ admits an analytical extension on $\cB(\bnull, r_1) + \ri [-A, A]^3$. 

\medskip

We finally prove the third point of the lemma. For $f \in L^2_{0,\per}$, we denote by $F_\bq := (\sqrt{w_{c}})_\bq f$. 
Using similar techniques than before, we obtain
\begin{align*}
	\bra l_\bq | f \ket & = \bra e_\bnull | \cL_\bq f \ket =
	\dfrac{-1}{2 \ri \pi} \dfrac{(4\pi)^{1/2}}{| \bq |} \dfrac{1}{| \WS |^{1/2}} \fint_\BZ \oint_\sC \Tr_{L^2_\per(\WS)} \left( 
		\dfrac{1}{\lambda - H_{\bq'}} F_\bq \dfrac{1}{\lambda - H_{\bq' - \bq}}
	\right) \rd \lambda \rd \bq'  \\
	&
	= \dfrac{2(4\pi)^{1/2}}{| \bq |} \dfrac{1}{| \WS |^{1/2}} \sum_{n \le N < m}\fint_\BZ 
	\dfrac{\bra u_{n, \bq'}  | u_{m, \bq' - \bq} \ket \bra u_{m, \bq' - \bq} | F_\bq u_{n, \bq'} \ket  }{| \varepsilon_{m, \bq' - \bq} - \varepsilon_{n, \bq'} |} \rd \bq' \\
	&
	= -2(4\pi)^{1/2} \dfrac{1}{| \WS |^{1/2}} \dfrac{\bq^T}{| \bq |} \sum_{n \le N < m}\fint_\BZ 
	\dfrac{ \bra u_{n, \bq'}  | (-\ri \nabla^1) u_{m, \bq' - \bq} \ket \bra u_{m, \bq' - \bq} | F_\bq u_{n, \bq'} \ket }{| \varepsilon_{m, \bq' - \bq} - \varepsilon_{n, \bq'} | \left( \varepsilon_{m, {\bq' - \bq}} - \varepsilon_{n, {\bq'}} - \frac{| \bq |^2}{2} + \bq \cdot \bq' \right)} \rd \bq' \\
	&
	= \dfrac{\bq^T}{| \bq |} \bra \bb(\bq) | f \ket,
\end{align*}
where $\bb(\bq)$ was defined in~\eqref{eq:def:bb}. The fact that $\bb$ admits an analytical extension on $\Omega_r + \ri [-A,A^3]$ is proved similarly as in the proof of the analyticity of $A(\bz)$.

\subsection{Proof of Lemma~\ref{lem:A0}}
\label{sec:proof:A0}

Let us first prove~\eqref{eq:def:A0}. From~\eqref{eq:def:Aq}, it holds that
\begin{align} \label{eq:Anull}
	M_1(\bnull) & = \dfrac{8 \pi}{| \WS |} \sum_{n \le N < m} \fint_{\bq' \in \BZ} \dfrac{\bra  u_{n, \bq'} | (- \ri \nabla^1) u_{m, \bq'} \ket \bra u_{m, \bq'} | (-\ri \nabla^{1,T}) u_{n, \bq'} \ket}{|  \varepsilon_{m, \bq'} - \varepsilon_{n,\bq'}|^3} \rd \bq' \nonumber \\
		& =  \dfrac{8 \pi}{| \WS |} \sum_{n \le N < m} \fint_{\bq' \in \BZ} \dfrac{\bra  u_{n, \bq'} | ( - \ri \nabla^1 + \bq') | u_{m, \bq'} \ket \bra u_{m, \bq'} | ( - \ri \nabla^1 + \bq')^T | u_{n, \bq'} \ket}{|  \varepsilon_{m, \bq'} - \varepsilon_{n,\bq'}|^3} \rd \bq'.
\end{align}
For all $a, b \in \R$, with $a \neq b$, it holds that
\[
	\dfrac{1}{(\lambda - a)^2 (\lambda - b)^2 } = \dfrac{1}{(a - b)^2} \left( \dfrac{1}{(\lambda-a)^2} + \dfrac{1}{(\lambda - b)^2} + \dfrac{2}{(a-b)} \left(\dfrac{1}{\lambda - b} - \dfrac{1}{\lambda - a} \right) \right).
\]
We integrate this equality over $\sC$, and use the Cauchy residual formula to get (we suppose for simplicity $a \notin \sC$ and $b \notin \sC$)
\begin{equation} \label{eq:nonTrivialCauchy}
	\dfrac{1}{2 \ri \pi} \oint_{\sC} \dfrac{1}{(\lambda - a)^2 (\lambda - b)^2} \rd \lambda = 	\left\{ \begin{array}{lll}
			0 & \text{if} & a \ \text{and} \ b \ < \varepsilon_F \quad \text{or} \quad a \ \text{and} \ b \ge \varepsilon_F \\
			\dfrac{2}{|a - b|^3} & \text{if} & b < \varepsilon_F < a \quad \text{or} \quad a < \varepsilon_F < b,
		\end{array} \right.
\end{equation}
and~\eqref{eq:nonTrivialCauchy} is also true for $a = b$. Hence,~\eqref{eq:Anull} can be rewritten as
\begin{align*}
	M_1(\bnull) & = \dfrac{4 \pi}{2 \ri \pi | \WS |} \sum_{n,m \in \N} \oint_{\sC} \fint_{\bq' \in \BZ} \dfrac{\bra  u_{n, \bq'} | ( - \ri \nabla^1 + \bq') | u_{m, \bq'} \ket \bra u_{m, \bq'} | ( - \ri \nabla^1 + \bq')^T | u_{n, \bq'} \ket}{(\lambda - \varepsilon_{n, \bq'})^2 (\lambda - \varepsilon_{m, \bq'})^2} \rd \bq' \rd \lambda \\
		& = \dfrac{4 \pi}{2 \ri \pi | \WS |} \oint_{\sC} \fint_{\bq' \in \BZ} \Tr_{L^2_\per(\WS)} \left( \dfrac{1}{( \lambda - H_{\bq'})^2} (- \ri \nabla + \bq') \dfrac{1}{( \lambda - H_{\bq'})^2}  (- \ri \nabla + \bq')^T \right) \rd \bq' \rd \lambda \\
		& = \dfrac{4 \pi}{2 \ri \pi | \WS |} \oint_{\sC} \VTr \left( \dfrac{1}{(\lambda - H_0)^2} ( - \ri \nabla)  \dfrac{1}{(\lambda - H_0)^2} ( - \ri \nabla^T)  \right) \rd \lambda,
\end{align*}
where we used~\eqref{eq:VTr} in the last equality. This proves~\eqref{eq:def:A0}. 

\medskip

We now suppose that the crystal is isotropic cubic, and prove that $M_1(\bnull)$ is proportional to the identity matrix. An easy calculation shows that if a matrix $P$ satisfies
\begin{equation}\label{eq:relation_iso}
P=S_1^TPS_1\quad \text{and} \quad P=S_2^T P S_2,
\end{equation}
where $S_1$ and $S_2$ were introduced in Definition~\ref{def:isotropic_cubic}, then $P$ is proportional to the identity matrix. Let us show that $M_1(\bnull)$ satisfies~\eqref{eq:relation_iso}. 
Let $S$ be either $S_1$ or $S_2$. 
We introduce the operator 
\[
	\begin{array}{lrll}
		\cT : & L^2_\per(\WS) & \to & L^2_\per(\WS) \\
			& f & \mapsto & f(S\bx).
	\end{array}
\]
It holds that $ \nabla \cT= S^T\cT\nabla $ and that $H_0 \cT = \cT H_0$. Together with~\eqref{eq:def:A0}, we arrive at
\begin{align*}
	 M_1(\bnull) & = \dfrac{4\pi}{2 \ri \pi | \WS |} \oint_{\sC} \VTr \left[ \left( \cT^{-1} \dfrac{1}{(\lambda - H_0)^2} \cT \right) \left( \cT^{-1} \nabla \cT \right) \left( \cT^{-1} \dfrac{1}{(\lambda - H_0)^2}  \cT \right) \left( \cT^{-1} \nabla^T \cT \right) \right] \rd \lambda \\
	 & = \dfrac{2 \pi}{2 \ri \pi | \WS |} \oint_{\sC} \VTr \left[ \dfrac{1}{(\lambda - H_0)^2} (S^T \nabla)  \dfrac{1}{(\lambda - H_0)^2}  (\nabla^T S) \right] \rd \lambda \\
	 & = S^T M_1(\bnull) S.
\end{align*}
The result follows.

\subsection{Proof of Lemma~\ref{lem:analDq}}
\label{sec:proof:analDq}

The fact that $(1 + \cL_\bq)$ is invertible comes from the fact that $\cL_\bq$ is a bounded positive self-adjoint operator. For $\bz = \bq + \ri \by$, we have
\[
	\left( 1 + \cL_{\bq + \ri \by} \right) = (1 + \cL_\bq) \left( 1 +  (1 + \cL_\bq)^{-1} \left( \cL_{\bq + \ri \by}  - \cL_\bq \right) \right) )
\]
According to Lemma~\ref{lem:analLq}, the map $\bz \in \cL_\bz$ is analytic on $\Omega_r + \ri [-A, A]^3$. As a result, there exists $A' > 0$ such that
\[
	\forall \bq \in 2 \BZ, \quad \forall \by \in [-A', A']^3, \quad \left\| \cL_{\bq + \ri \by} - \cL_\bq \right\|_{\cB(L^2_\per)} \le \frac12.
\]
As a result, it holds $\left\|  (1 + \cL_\bq)^{-1} \left( \cL_{\bq + \ri \by}  - \cL_\bq \right) \right\|_{\cB(L^2_\per)} \le 1/2 < 1$, so that $\left( 1 + \cL_{\bq + \ri \by} \right)$ is invertible on $2 \BZ + \ri [-A', A']^3$, with
\[
	\forall \bq \in 2 \BZ, \quad \forall \by \in [-A', A']^3, \quad \left\| \left( 1 + \cL_{\bq + \ri \by} \right) \right\|_{\cB(L^2_\per)} \le 2.
\]
Using the covariant property~\eqref{eq:rotation}, we deduce that $\left( 1 + \cL_{\bq + \ri \by} \right)$ is invertible on $\R^3 + \ri [-A', A']^3$ with similar bounds. Finally, since the map $\bz \mapsto ( 1 + \cL_\bz)$ is analytic from $\Omega_r + \ri [-A', A']^{-1}$, then so is the map $\bz \mapsto ( 1 + \cL_\bz)^{-1}$ (see for instance~\cite[Chapter 7, §1.1]{Kato2012}).


\subsection{Proof of Lemma~\ref{lem:eqFMinusFL}}
\label{sec:proof:FMinusFL}

For $\bQ \neq \bnull$, it holds that $(\sqrt{v_c})_\bQ = (\sqrt{v_c^L})_\bQ$. Also, since the support of $\nu$ is in $\WS_{L^\supp}$, we get from~\eqref{eq:def:cZ} and~\eqref{eq:def:cZL} that, for $L \ge L^* (\ge L^\supp)$,
\[
	\nu_{L,\bQ} (\bx) = \sum_{\bR \in L \Lat} \re^{-\ri \bQ \cdot (\bx + \bR)} \nu(\bx + \bR) 
	=\re^{-\ri \bQ \cdot \bx } \nu(\bx)
	= \nu_\bQ(\bx).
\]
As a result, for $\bQ \in \Lambda_L \setminus \{ \bnull \}$, it holds that
\begin{align}
	 \left| F_\nu(\bQ) - F_\nu^L(\bQ) \right| & = \left| \left\bra \left[ (1 + \cL_\bQ)^{-1} - (1 + \widetilde{\cL^L_\bQ})^{-1} \right]  (\sqrt{v_c})_\bQ \nu_\bQ, (\sqrt{v_c})_\bQ \nu_\bQ  \right\ket \right| \nonumber \\
	 & \le \left\| (1 + \cL_\bQ)^{-1} - (1 +  \widetilde{\cL^L_\bQ})^{-1} \right\|_{\cB(L^2_\per)}
	  \left\| (\sqrt{v_c})_\bQ \nu_\bQ \right\|_{L^2_\per}^2. \label{eq:controlF-FL}
\end{align}
From Lemma~\ref{lem:LQ_LLQ} and the fact that $\cL_\bQ$ and $\cL_\bQ^L$ are positive bounded operators, we obtain that there exists $C \in \R^+$ and $\alpha >0$ such that, for all $L \ge L^*$ and all $\bQ \in \Lambda_L \setminus \{ \bnull \}$,
\begin{align*}
	\left\| (1 + \cL_\bQ)^{-1} - (1 +  \widetilde{\cL^L_\bQ})^{-1} \right\|_{\cB(L^2_\per)}
	& = \left\| (1 + \cL_\bQ)^{-1} \left(  \widetilde{\cL^L_\bQ} - \cL_\bQ \right)  (1 +  \widetilde{\cL^L_\bQ})^{-1} \right\|_{\cB(L^2_\per)} \\
	& \le  \left\| \left(  \widetilde{\cL^L_\bQ} - \cL_\bQ \right)  \right\|_{\cB(L^2_\per)} \le C \re^{-\ri \alpha L}.
\end{align*}
Together with~\eqref{eq:controlF-FL}, we obtain
\begin{align*}
	\left| \dfrac{1}{L^3} \sum_{\bQ \in \Lambda_L \setminus \{ \bnull \}} \left( F_\nu(\bQ) - F_\nu^L(\bQ) \right) \right| 
		& \le C \re^{-\ri \alpha L} \dfrac{1}{L^3} \sum_{\bQ \in \Lambda_L \setminus \{ \bnull \}} \left\| (\sqrt{v_c})_\bQ \nu_\bQ \right\|_{L^2_\per}^2 
\end{align*}
As in~\eqref{eq:boundFq}, there exists $C \in \R^+$ such that, for all $L \ge L^*$ , all $\bQ \in \Lambda_L \setminus \{ \bnull \}$, and all $\nu \in \cB(\eta)$, it holds that
\[
	\left\| (\sqrt{v_c})_\bQ \nu_\bQ \right\|_{L^2_\per}^2 \le \dfrac{C}{|\bQ|^2} \left\|  \nu_\bQ \right\|_{L^2_\per} \le C L^2 \left\|  \nu_\bQ \right\|_{L^2_\per},
\]
so that
\[
	\dfrac{1}{L^3} \sum_{\bQ \in \Lambda_L \setminus \{ \bnull \}} \left\| (\sqrt{v_c})_\bQ \nu_\bQ \right\|_{L^2_\per}^2 
	\le C \dfrac{L^2}{L^3}\sum_{\bQ \in \Lambda_L \setminus \{ \bnull \}} \left\| \nu_\bQ \right\|_{L^2_\per}^2 
	\le C L^2 \left\| \nu_L \right\|_{L^2_\per(\WS_L)}^2  = C L^2 \left\| \nu \right\|_{L^2(\R^3)}^2.
\]
The result easily follows.

\subsection{Proof of Lemma~\ref{lem:F1}}
\label{sec:proof:F1}

Since $\nu$ is compactly supported, we deduce that the map $\bq \mapsto \nu_\bq$ is analytic. Together with Lemma~\ref{lem:analDq} and the definition of $\Psi_\per$ in~\eqref{eq:Psi}, we get that $F_{\nu,1} \in C^{\infty}(\R^3)$ and $F_{\nu,1}(\bnull) = 0$. In particular, the missing $\bQ = \bnull$ term in the Riemann sum of~\eqref{eq:F1} can be restored. Also, from the covariant identity~\eqref{eq:rotation} and the periodicity of $\Psi_\per$, we obtain that $F_{\nu,1}$ is $\RLat$-periodic.

\medskip

Finally, we notice that for all $p_1, p_2, p_3 \in \N$, it holds
\[
	\left\| \partial_{q_1}^{p_1} \partial_{q_2}^{p_2} \partial_{q_3}^{p_3} \nu_\bq \right\|_{L^2_\per}
	= \left\| x_1^{p_1} x_2^{p_2} x_3^{p_3 } \nu(\bx) \right\|_{L^2(\R^3)} \le \left( |\WS|_\infty L^{\supp}\right)^{p_1 + p_2 + p_3} \left\| \nu \right\|_{L^2(\R^3)}^2,
\]
where we used the fact that $\nu$ is compactly supported in $L^\gap \WS$, and where we denoted by $| \WS |_{\infty} := \sup \{ | \bx |_\infty, \ \bx \in \WS \}$. The result then follows from the theory of the convergence of Riemann sums for smooth periodic functions (see Lemma~\ref{lem:RiemannR3} below).


\section*{Appendices}
\begin{appendices}
\section{The periodic Kato-Seiler-Simon inequality}

The Kato-Seiler-Simon inequality in $L^p(\R^d)$ spaces is well-understood~\cite{Simon2005}. In our case, we need a periodic version of this inequality. In particular, we prove that the constant of continuity in this case depends on the size of the supercell. We introduce
\[
	\ell^p(L^{-1}\RLat) := \left\{ \left( g^L_\bk \right)_{\bk \in L^{-1}\RLat}, \quad \left\| g^L \right\| := 
	\left( \sum_{\bk \in L^{-1}\RLat} \left| g^L_\bk \right|^p \right)^{1/p} < \infty \right\}.
\]
For $\left( g^L_\bk \right)_{\bk \in L^{-1}\RLat} \in \ell^p(L^{-1}\RLat)$, $p\ge 2$, we denote by $g^L(-\ri \nabla)$ the operator on $L^2_\per(\WS_L)$ defined by
\[
	\forall h \in L^2_\per(\WS_L), \quad \forall \bk \in L^{-1} \RLat, \quad c_\bk^L \left( g^L(-\ri \nabla) h \right) = g^L_\bk c_\bk^L (h).
\]
\begin{lemma} [Periodic Kato-Seiler-Simon inequality]
\label{lem:periodicKSS}
	Let $p \ge 2$, $f^L \in L^p_\per(\WS_L)$ and $\left( g^L_\bk \right)_{\bk \in L^{-1}\RLat} \in \ell^p(L^{-1}\RLat)$. Then the operator $f^L(\bx) g^L(- \ri \nabla)$ is in the Schatten class $\fS_p(L^2_\per(\WS_L))$, and it holds
	\begin{equation} \label{eq:periodicKSS}
		\left\| f^L(\bx) g^L(- \ri \nabla) \right\|_{\fS_p(L^2_\per(\WS_L))} \le \dfrac{1}{|\WS_L|^{1/p}} \| f^L \|_{L^p_\per(\WS_L)} \| g^L \|_{\ell^p(L^{-1}\RLat)}.
	\end{equation}
\end{lemma}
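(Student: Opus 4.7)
The plan is to follow Simon's classical proof of the Kato-Seiler-Simon inequality on $\R^d$ (see~\cite{Simon2005}), adapting it to the periodic setting. The argument proceeds by interpolating, via Stein's complex interpolation theorem for Schatten classes, between the two endpoint cases $p = 2$ and $p = \infty$.

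First, I would treat the endpoint $p = 2$ by a direct Hilbert-Schmidt computation in the Fourier basis $(e_\bk)_{\bk \in L^{-1}\RLat}$ defined as in~\eqref{eq:def:FourierBasis} (for the supercell). Writing any $h \in L^2_\per(\WS_L)$ as $h = \sum_\bk c^L_\bk(h) e_\bk$, the operator acts as $f^L(\bx) g^L(-\ri \nabla) h = \sum_\bk g^L_\bk c^L_\bk(h) f^L e_\bk$, so
\[
	\left\| f^L g^L(-\ri \nabla) \right\|_{\fS_2^L}^2 = \sum_{\bk \in L^{-1}\RLat} |g^L_\bk|^2 \left\| f^L e_\bk \right\|_{L^2_\per(\WS_L)}^2 = \dfrac{1}{|\WS_L|} \|f^L\|_{L^2_\per(\WS_L)}^2 \|g^L\|_{\ell^2(L^{-1}\RLat)}^2,
\]
since $|e_\bk|^2 = 1/|\WS_L|$ pointwise. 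This gives~\eqref{eq:periodicKSS} at $p = 2$, including the crucial $|\WS_L|^{-1/p}$ factor. Second, for the endpoint $p = \infty$ (where $\fS_\infty = \cB(L^2_\per(\WS_L))$), one has the elementary operator-norm bound
\[
	\left\| f^L g^L(-\ri \nabla) \right\|_{\cB(L^2_\per(\WS_L))} \le \left\| f^L \right\|_{L^\infty_\per(\WS_L)} \left\| g^L \right\|_{\ell^\infty(L^{-1}\RLat)},
\]
which is the $p = \infty$ version of~\eqref{eq:periodicKSS} (with $|\WS_L|^{-1/\infty} = 1$).

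To interpolate between these, I would introduce the analytic family of operators
\[
	T_z := F_z(\bx) \, G_z(-\ri \nabla), \qquad F_z(\bx) := |f^L(\bx)|^{pz/2} \sgn(f^L(\bx)), \quad G_z(\bk) := |g^L_\bk|^{pz/2} \sgn(g^L_\bk),
\]
for $z$ in the strip $0 \le \Re z \le 1$, with the convention $0^0 = 0$, so that the set $\{f^L \neq 0\}$ and $\{g^L_\bk \neq 0\}$ produce well-defined bounded analytic maps into the appropriate spaces. On the line $\Re z = 0$, both $|F_z|$ and $|G_z|$ are bounded by $1$, so the $p = \infty$ endpoint gives $\|T_z\|_{\cB(L^2_\per(\WS_L))} \le 1$. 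On the line $\Re z = 1$, we have $\|F_z\|_{L^2_\per(\WS_L)} = \|f^L\|_{L^p_\per(\WS_L)}^{p/2}$ and $\|G_z\|_{\ell^2(L^{-1}\RLat)} = \|g^L\|_{\ell^p(L^{-1}\RLat)}^{p/2}$, so the $p = 2$ computation gives
\[
	\left\| T_z \right\|_{\fS_2^L} \le \dfrac{1}{|\WS_L|^{1/2}} \|f^L\|_{L^p_\per(\WS_L)}^{p/2} \|g^L\|_{\ell^p(L^{-1}\RLat)}^{p/2}.
\]
Stein's complex interpolation theorem for Schatten classes, applied at $z_0 = 2/p \in [0,1]$ with the relation $1/p = (2/p)/2 + (1-2/p)/\infty$, then yields
\[
	\left\| T_{2/p} \right\|_{\fS_p^L} \le \left( \dfrac{1}{|\WS_L|^{1/2}} \|f^L\|_{L^p_\per(\WS_L)}^{p/2} \|g^L\|_{\ell^p(L^{-1}\RLat)}^{p/2} \right)^{2/p} \cdot 1^{1-2/p} = \dfrac{1}{|\WS_L|^{1/p}} \|f^L\|_{L^p_\per(\WS_L)} \|g^L\|_{\ell^p(L^{-1}\RLat)},
\]
which is exactly~\eqref{eq:periodicKSS} since $T_{2/p} = f^L \cdot g^L(-\ri \nabla)$.

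The main technical point, and really the only one that requires care, is verifying the admissibility hypotheses for Stein interpolation: that $z \mapsto T_z$ is a uniformly bounded analytic family of operators on the strip with appropriate decay at infinity in the imaginary direction. This is handled exactly as in the non-periodic proof~\cite{Simon2005} by first reducing to the case of simple functions $f^L$ and finitely supported sequences $g^L$ (where the family is entire and polynomially bounded), and then extending by density using the endpoint bounds. The only genuinely new feature compared to the whole-space case is the explicit $|\WS_L|^{-1/p}$ factor, which arises from the $|\WS_L|^{-1/2}$ normalization of the Fourier basis in the $p = 2$ computation and is carried faithfully through the interpolation.
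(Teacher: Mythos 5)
Your proof is correct and follows essentially the same approach as the paper: the paper also establishes the endpoints $p=2$ (by computing the Hilbert--Schmidt norm of the kernel, which is the same calculation as your Fourier-basis computation) and $p=\infty$ (elementary operator-norm bound), and then invokes ``classical interpolation arguments'' without spelling them out. You simply make explicit the Stein complex-interpolation family that the paper leaves implicit, and you correctly track the $|\WS_L|^{-1/p}$ factor through the interpolation.
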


\begin{proof}[Proof of Lemma~\ref{lem:periodicKSS}]
It is enough to prove the lemma for $p=2$ and $p=\infty$, the other cases being deduced from classical interpolation arguments. We start with $p=2$. The kernel of the operator $f^L(\bx) g^L(- \ri \nabla)$ is
\[
	K(\bx, \by) := \left[ f^L(\bx) g^L(- \ri \nabla) \right](\bx, \by) = \dfrac{1}{|\WS_L|} f^L(\bx) \sum_{\bk \in L^{-1}\RLat} g_\bk^L \re^{\ri \bk \cdot(\bx - \by)}.
\]
A straightforward calculation leads to
\begin{align*}
	\left\| f^L(\bx) g^L(- \ri \nabla) \right\|_{\fS_2(L^2_\per(\WS_L))}^2 = \int_{(\WS_L)^2 } \left| K(\bx, \by)\right|^2 = \dfrac{1}{|\WS_L|}  \| f^L \|_{L^2_\per(\WS_L)}^2 \| g^L \|_{\ell^2(L^{-1}\RLat)}^2,
\end{align*}
from which we deduce the result for $p=2$. For $p=\infty$, we notice that $f^L(\bx)$ is an operator of norm $\| f \|_{L^\infty}$, and that $g^L(- \ri \nabla)$ is an operator of norm $\| g^L \|_{\ell^\infty(L^{-1}\RLat)}$. The result follows.

\end{proof}

In the case where $g^L$ represents $(1 - \Delta_L)^{-1}$, or equivalently,
\[
	\forall \bk \in L^{-1}\RLat, \quad g^L_\bk = \dfrac{1}{1 + |\bk|^2},
\]
it holds that $\left( g^L_\bk \right)_{\bk \in L^{-1}\RLat} \in \ell^p(L^{-1}\RLat)$ for $p > 3/2$. Thanks to Lemma~\ref{lem:RiemannR3} applied to the function $(1 + | \cdot |^2)^{-p} \in W^{s,1}(\R^3)$, for all $s > 3/2$,we obtain
\begin{equation} \label{eq:normGL}
	\| g^L \|_{\ell^p(L^{-1}\RLat)}^p =
	\sum_{\bk \in \RLat} \left( \dfrac{1}{1 + |\bk/L|^2} \right)^p =
	 \dfrac{L^3}{| \BZ |} \left( \int_{\R_3} \dfrac{1}{(1 + | \bq |^2)^p} \right) + o(L^{3}).
\end{equation}

We deduce the following useful corollary.

\begin{corollary}
\label{cor:periodicKSS}
For all $p \ge 2$, there exists $C_p$ such that for all $L\in \N^*$ and all $V \in L^p_\per(\WS_L)$, the operator $(1 - \Delta^L)^{-1}V$ is in $\fS_p(L^2_\per(\WS_L))$ with
\[
	\left\| (1 - \Delta^L)^{-1}V \right\|_{\fS_p(L^2_\per(\WS_L))} \le C_p \| V \|_{L^p_\per(\WS_L)}.
\]
\end{corollary}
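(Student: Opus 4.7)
\medskip
\noindent\textbf{Proof proposal.} The strategy is simply to apply Lemma~\ref{lem:periodicKSS} with the multiplier $f^L = \overline{V}$ and the symbol $g^L_\bk := (1+|\bk|^2)^{-1}$, and then to show that the $|\WS_L|^{-1/p}$ prefactor in~\eqref{eq:periodicKSS} exactly cancels the growth of $\|g^L\|_{\ell^p(L^{-1}\RLat)}$ as $L\to\infty$. First I would use the fact that $\fS_p(L^2_\per(\WS_L))$ is invariant under taking adjoints, together with the self-adjointness of $(1-\Delta^L)^{-1}$ and the equality $\|\overline{V}\|_{L^p_\per(\WS_L)} = \|V\|_{L^p_\per(\WS_L)}$, to reduce the bound to controlling $\|\overline{V}(1-\Delta^L)^{-1}\|_{\fS_p}$, which is of the form treated by Lemma~\ref{lem:periodicKSS}.

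The core step is then to establish a uniform (in $L\in\N^\ast$) bound of the form $\|g^L\|_{\ell^p(L^{-1}\RLat)}^p \le C_p \, L^3$ for $p\ge 2$. Since $p\ge 2 > 3/2$, the function $\bq\mapsto (1+|\bq|^2)^{-p}$ belongs to $L^1(\R^3)$. The quantity
\[
\frac{|\BZ|}{L^3}\sum_{\bk \in L^{-1}\RLat} \frac{1}{(1+|\bk|^2)^{p}}
\]
is a Riemann sum associated with $\int_{\R^3}(1+|\bq|^2)^{-p}\,\rd \bq$ on the lattice $L^{-1}\RLat$ whose primitive cell has volume $L^{-3}|\BZ|$. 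By the asymptotic estimate~\eqref{eq:normGL}, this sum converges as $L\to\infty$, and since the integrand is smooth, bounded and integrable, the sequence is uniformly bounded in $L$ (the small values of $L$ contribute only finitely many terms, all of which are bounded by $1$). This yields a constant $C_p\in\R^+$, independent of $L$, such that $\|g^L\|_{\ell^p(L^{-1}\RLat)} \le C_p\, L^{3/p}$.

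Combining this with Lemma~\ref{lem:periodicKSS} and the identity $|\WS_L|^{1/p} = L^{3/p}|\WS|^{1/p}$, we get
\[
\left\|\overline{V}(1-\Delta^L)^{-1}\right\|_{\fS_p(L^2_\per(\WS_L))}
\le \frac{1}{|\WS_L|^{1/p}}\|V\|_{L^p_\per(\WS_L)}\,\|g^L\|_{\ell^p(L^{-1}\RLat)}
\le \frac{C_p}{|\WS|^{1/p}}\,\|V\|_{L^p_\per(\WS_L)},
\]
and the adjoint reduction from the first step concludes the proof.

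I do not expect a genuine obstacle here; the only mildly delicate point is justifying the uniform-in-$L$ control of $\|g^L\|_{\ell^p(L^{-1}\RLat)}$. The asymptotic~\eqref{eq:normGL} only gives the leading-order behaviour, so one must separately observe that the Riemann sums are monotone-controlled by the integral plus boundary terms (or simply use that $(1+|\cdot|^2)^{-p}\in L^1(\R^3)$ together with standard pointwise bounds of the summand by an integrable function), in order to absorb the small-$L$ regime into the same constant $C_p$.
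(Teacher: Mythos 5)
Your proof is correct and matches the argument the paper sketches in the paragraph leading up to the corollary: apply Lemma~\ref{lem:periodicKSS} with $g^L_\bk=(1+|\bk|^2)^{-1}$, show $\|g^L\|_{\ell^p(L^{-1}\RLat)}^p\le C_p L^3$ uniformly in $L$, and observe that the $|\WS_L|^{-1/p}=|\WS|^{-1/p}L^{-3/p}$ prefactor in~\eqref{eq:periodicKSS} exactly absorbs the $L^{3/p}$ growth; the adjoint reduction you spell out is the standard way to pass from $f^L(\bx)\,g^L(-\ri\nabla)$ to $g^L(-\ri\nabla)\,f^L(\bx)$ in $\fS_p$ and is left implicit in the paper.

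The one point you flag as ``mildly delicate'' --- the uniformity in $L$ of $\|g^L\|_{\ell^p}$ --- is in fact already settled by Lemma~\ref{lem:RiemannR3} as stated there: that lemma gives the quantitative estimate $|I(f)-I_L(f)|\le C_s\|f\|_{W^{s,1}(\R^3)}\,L^{-s}$ valid for \emph{every} $L\in\N^*$, not merely an asymptotic. Applied to $f(\bq)=(1+|\bq|^2)^{-p}$, which for $p\ge 2$ lies in $W^{s,1}(\R^3)$ for every integer $s$, it yields
\[
\|g^L\|_{\ell^p(L^{-1}\RLat)}^p \;=\; L^3\, I_L(f) \;\le\; L^3\left( I(f) + C_s\|f\|_{W^{s,1}(\R^3)} \right)
\qquad\text{for all } L\ge 1,
\]
with no separate treatment of the small-$L$ regime. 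The $o(L^3)$ phrasing in~\eqref{eq:normGL} simply under-states what Lemma~\ref{lem:RiemannR3} actually gives. Your fallback --- pointwise comparison of the summand with an integrable majorant --- would also work, but is unnecessary here, and the parenthetical remark in your second paragraph (``the small values of $L$ contribute only finitely many terms, all of which are bounded by $1$'') is a bit garbled: for each fixed $L$ the sum over $L^{-1}\RLat$ is still infinite, so what one really needs is that each $\|g^L\|_{\ell^p}^p$ is finite (which holds since $p>3/2$) together with the boundedness of the sequence $L^{-3}\|g^L\|_{\ell^p}^p$.
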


\section{Convergence of Riemann sums}
We recall in this appendix some classical results about the convergence of Riemann sums. We also prove some non standard results when the integrand is a singular function. We introduce the standard multi-indices notations $\alpha = (\alpha_1, \alpha_2, \alpha_3)$, $ | \alpha | = \sum_{i=1}^3 \alpha_i$, $\alpha ! = \alpha_1 ! \alpha_2 ! \alpha_3 !$, $\bq^\alpha = q_1^{\alpha_1} q_2^{\alpha_2} q_3^{\alpha_3}$, and $\partial_\bq^\alpha = \partial_{q_1}^{\alpha_1} \partial_{q_2}^{\alpha_2} \partial_{q_3}^{\alpha_3}$.
 
 \medskip
 
The following lemma was proved in \text{e.g.}~\cite{GL2015}.

\begin{lemma} \label{lem:RiemannAnalytic} $\,$
	For any $A > 0$, there exist $C \in \R^+$ and $\alpha > 0$ such that for all functions $f: \R^3 + \ri [-A,A]^3 \to \C$ that are analytic on $\R^3 + \ri [-A,A]^3$ and that satisfies $f(\bz + \bk) = f(\bz)$ for all $\bz \in \R^3 + \ri [-A,A]^3$ and all $\bk \in \RLat$, it holds that
	\[
		\left| \fint_{\BZ} f(\bq) \rd \bq - \dfrac{1}{ L^3} \sum_{\bQ \in \Lambda_L} f \left( \bQ \right)  \right| \le C \re^{- \alpha L}.
	\]
\end{lemma}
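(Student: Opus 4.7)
The plan is to combine Fourier analysis on the $\RLat$-periodic function $f$ with the analyticity on the complex strip. Expand
\[
 f(\bq) = \sum_{\bR \in \Lat} \hat{f}_\bR \, e^{\ri \bR \cdot \bq}, \qquad \hat{f}_\bR := \fint_\BZ f(\bq) \, e^{-\ri \bR \cdot \bq} \rd \bq,
\]
with modes indexed by the dual lattice $\Lat$ of $\RLat$. Three ingredients are needed: (i) exponential decay $|\hat f_\bR| \le M(f) \, e^{-\alpha_0 |\bR|}$, with $M(f) := \sup_{\bz \in \R^3 + \ri[-A,A]^3}|f(\bz)|$; (ii) a discrete character-orthogonality identity on $\Lambda_L$ that forces the Riemann sum to pick up only the Fourier modes in $L\Lat$; (iii) summing the exponentially small tail.

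For (i), change variables $\bq = \sum_i s_i \ba_i^* \in \BZ \leftrightarrow \bs \in [-\tfrac12, \tfrac12)^3$: this turns $\RLat$-periodicity of $f$ into $\Z^3$-periodicity of $\tilde f(\bs) := f(\sum_i s_i \ba_i^*)$, and the analyticity of $f$ on the Cartesian strip transfers to analyticity of $\tilde f$ in each $s_j$ on $\R + \ri [-A', A']$ for some $A' > 0$ depending only on $A$ and $\Lat$. Writing $\bR = \sum_j n_j \ba_j$ with $n_j \in \Z$, we have $\bR \cdot \bq = 2\pi \sum_j n_j s_j$, and shifting the $s_j$-contours coordinatewise by $\pm A' \,\mathrm{sgn}(n_j)$ (the lateral pieces cancel by $\Z^3$-periodicity of $\tilde f \cdot e^{-2\pi \ri \bn \cdot \bs}$) yields $|\hat f_\bR| \le M(f) \, e^{-2\pi A' (|n_1|+|n_2|+|n_3|)}$; equivalence of norms between the integer coordinates $(n_j)$ and $\bR$ then produces step (i). For (ii), parametrize $\bQ = \sum_i (k_i/L) \ba_i^* \in \Lambda_L$ with $k_i$ in a range of $L$ consecutive integers (see~\eqref{eq:def:LambdaL}); using $\ba_i \cdot \ba_j^* = 2\pi \delta_{ij}$ together with the elementary identity $\frac{1}{L}\sum_{k=0}^{L-1} e^{2\pi \ri n k/L} = \mathds{1}(n \in L\Z)$ coordinate by coordinate,
\[
 \frac{1}{L^3} \sum_{\bQ \in \Lambda_L} e^{\ri \bR \cdot \bQ} = \mathds{1}(\bR \in L\Lat).
\]
Substituting the Fourier expansion of $f$ (the sum-swap being legitimated by the absolute convergence from (i)) gives the Poisson-type identity $\frac{1}{L^3} \sum_{\bQ \in \Lambda_L} f(\bQ) = \sum_{\bR \in L\Lat} \hat{f}_\bR$, whereas $\fint_\BZ f(\bq) \rd \bq = \hat{f}_\bnull$.

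For (iii), the difference to be estimated is thus exactly $\sum_{\bR \in L\Lat \setminus \{\bnull\}} \hat f_\bR$, and writing $\bR = L \bR'$ with $\bR' \in \Lat \setminus \{\bnull\}$,
\[
 \Bigl| \sum_{\bR \in L\Lat \setminus \{\bnull\}} \hat f_\bR \Bigr| \le M(f) \sum_{\bR' \in \Lat \setminus \{\bnull\}} e^{-\alpha_0 L |\bR'|} \le C \, M(f) \, e^{-\alpha L}
\]
for any $\alpha < \alpha_0 \min\{|\bR'| : \bR' \in \Lat \setminus \{\bnull\}\}$ and $L$ bounded below, the constant $C$ depending only on $A$ and $\Lat$. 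There is no serious obstacle here: the change of variables to lattice coordinates and the finite-group orthogonality on $L^{-1}\RLat/\RLat \simeq (\Z/L\Z)^3$ are standard. A minor subtlety worth noting is that the constant $C$ in the statement of the lemma must implicitly carry an $L^\infty$-type norm $M(f)$ of $f$ on the strip, which is consistent with the way the lemma is applied in practice (for example through Lemma~\ref{lem:analK} in the proof of Lemma~\ref{lem:LQ_LLQ}).
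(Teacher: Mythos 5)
Your proof is correct and follows the standard Poisson-summation strategy: expand the $\RLat$-periodic $f$ in its lattice Fourier series, establish exponential decay of the coefficients by contour shifting (using the analyticity on the strip), use the finite-group orthogonality on $\Lambda_L$ to show the Riemann sum picks out exactly the modes in $L\Lat$, and sum the geometric tail. This is precisely the technique the paper uses explicitly (in the real-variable $W^{p,1}$ setting) for Lemma~\ref{lem:RiemannR3}, adapted here to the periodic/analytic case, and is the argument one expects to find in the cited reference~\cite{GL2015} for Lemma~\ref{lem:RiemannAnalytic}. All three steps are sound: the linear change of variables to $[-\tfrac12,\tfrac12)^3$ sends the Cartesian strip onto a smaller strip in the lattice coordinates, the lateral boundary pieces cancel by $\Z^3$-periodicity of $\tilde f(\bs)\,\re^{-2\pi\ri\bn\cdot\bs}$, and the orthogonality identity holds for any window of $L$ consecutive integers (not only $\{0,\dots,L-1\}$), so the shifted index range in~\eqref{eq:def:LambdaL} is harmless. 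You are also right to flag that the constant $C$ must carry a norm such as $M(f) = \sup_{\bz \in \R^3 + \ri[-A,A]^3}|f(\bz)|$; the lemma as literally printed cannot be scale-invariant in $f$, and indeed in every application in the paper (via Lemma~\ref{lem:analK} and its relatives) such a sup bound is established first, so this reading is the intended one.
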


%

We will also need results on the whole space $\R^3$. We introduce the usual Sobolev space
\[
	W^{p,s}(\R^3) \left\{ f \in L^s(\R^3), \quad \left\| f \right\|_{W^{p,s}(\R^3)} := \sum_{| \alpha | \le p } \left\| \partial_\bq^\alpha f \right\|_{L^s(\R^3)} < \infty \right\}.
\]
We also introduce the following notation for clarity.
\begin{equation} \label{eq:def:Ix}
	I(f) := \dfrac{1}{| \BZ |}  \int_{\R^3} f(\bq) \rd \bq 
	\quad \text{and} \quad
	I_\lambda(f) := \dfrac{1}{\lambda^3}  \sum_{\bk \in \RLat} f \left( \dfrac{\bk}{\lambda} \right)
	\quad \text{for} \quad
	\lambda \in (0, \infty).
\end{equation}
\begin{lemma} [Convergence of Riemann sum in the whole space] \label{lem:RiemannR3}
	For all $p > 3/2$, there exists $C_p \in \R^+$ such that, for all $f \in W^{p, 1}(\R^3)$, it holds that
	\begin{equation} \label{eq:RiemannL1}
	\forall L \in \N^*,\quad	\left| I (f) - I_L(f) \right| \le C_p \dfrac{\left\| f \right\|_{W^{p,1}(\R^3)}}{L^p}.
	\end{equation}
\end{lemma}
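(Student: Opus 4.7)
The plan is to combine the Poisson summation formula with Fourier-side bounds derived from the $W^{p,1}$ norm. First, I would observe that $L^{-1}\RLat$ is a full-rank lattice in $\R^3$ with fundamental cell of volume $|\BZ|/L^3$ and dual lattice $L\Lat$ (since $\ba_k \cdot \ba_l^\ast = 2\pi \delta_{kl}$). Applying Poisson summation to a Schwartz function and passing to the stated regularity class by density gives
\[
	I_L(f) = \frac{1}{L^3}\sum_{\bk \in L^{-1}\RLat} f(\bk) = \frac{(2\pi)^{3/2}}{|\BZ|} \sum_{\bR \in L\Lat} \widehat{f}(\bR).
\]
The $\bR = \bnull$ contribution equals $(2\pi)^{3/2}|\BZ|^{-1} \widehat{f}(\bnull) = |\BZ|^{-1}\int_{\R^3} f = I(f)$, so the quantity to estimate reduces to
\[
	I_L(f) - I(f) = \frac{(2\pi)^{3/2}}{|\BZ|} \sum_{\bR \in L\Lat \setminus \{\bnull\}} \widehat{f}(\bR).
\]

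Next I would bound $\widehat{f}(\bR)$ pointwise by the Sobolev norm. For any multi-index $\alpha$ with $|\alpha| \le p$, integration by parts yields $(\ri\bR)^\alpha \widehat{f}(\bR) = \widehat{\partial^\alpha f}(\bR)$, and hence $|\bR^\alpha|\,|\widehat{f}(\bR)| \le (2\pi)^{-3/2}\|\partial^\alpha f\|_{L^1(\R^3)}$. Summing over multi-indices of total order $p$ and using the comparison $|\bR|^p \le C_p \sum_{|\alpha|=p}|\bR^\alpha|$ produces $|\widehat{f}(\bR)| \le C_p \|f\|_{W^{p,1}}(1 + |\bR|)^{-p}$. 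Since the minimum nonzero length in $\Lat$ is a positive constant $d_\Lat$, every $\bR \in L\Lat \setminus \{\bnull\}$ satisfies $|\bR| \ge d_\Lat L$; rescaling $\bR = L\bR'$ with $\bR' \in \Lat\setminus\{\bnull\}$ then gives
\[
	\left| I_L(f) - I(f) \right| \le \frac{C \|f\|_{W^{p,1}}}{L^p} \sum_{\bR' \in \Lat \setminus \{\bnull\}} \frac{1}{|\bR'|^p},
\]
which is the desired bound provided the last series is finite.

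The main obstacle lies precisely in that last series. Lattice points in $\R^3$ accumulate with volume growth $R^3$, so $\sum_{\bR' \in \Lat \setminus \{\bnull\}} |\bR'|^{-p}$ converges only for integer $p > 3$, not merely $p > 3/2$. To reach the full claimed range I would sharpen the Fourier bound in one of two ways: either reinterpret $W^{p,1}$ as a Bessel-potential space so that $(1 + |\bR|^2)^{p/2}\widehat{f}(\bR) = \widehat{(1 - \Delta)^{p/2} f}(\bR)$ provides decay of order $(1+|\bR|)^{-p}$ together with square-summability obtained via Cauchy--Schwarz; or split the dyadic shells of $L\Lat$ and pair the pointwise $\widehat{f}$ bound against the cardinality $\#\{\bR \in L\Lat : |\bR| \sim 2^j L\} \lesssim 2^{3j}$, trading two powers of $|\bR|$ for one power of the number of terms. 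In any case, for every integer $p > 3$ the estimate is immediate from the plan above, and intermediate rates follow by Riesz--Thorin-type interpolation against the trivial bound $|I_L(f) - I(f)| \le 2\|f\|_{L^1}/|\BZ|$.
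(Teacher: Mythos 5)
Your core plan --- Poisson summation on the lattice $L\Lat$ dual to $L^{-1}\RLat$ together with the pointwise Fourier decay $|\widehat{f}(\omega)| \le C\,\|f\|_{W^{p,1}}\,(1+|\omega|^p)^{-1}$ --- is exactly the paper's proof, including the reduction of $I_L(f) - I(f)$ to the tail $\sum_{\bR \in \Lat \setminus \{\bnull\}} \widehat{f}(L\bR)$. You have also correctly isolated the weak point: the lattice series $\sum_{\bR \ne \bnull}|\bR|^{-p}$ converges only for $p > 3$. This is not a defect you have introduced; the paper's own proof has precisely the same constraint (it even asserts $\widehat{f} \in L^1(\R^3)$ from the stated decay, which already requires $p > 3$). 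The threshold $p>3/2$ in the statement is an imprecision of the paper; every application in the text takes $p \in \{4,5\}$, so nothing downstream is affected, and with the threshold read as $p > 3$ your argument and the paper's coincide.

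Where you go astray is in the paragraph of proposed fixes for $p \in (3/2,3]$; none of them works. Passing to a Bessel-potential reading of $W^{p,1}$ still yields only the pointwise bound $(1+|\omega|)^{-p}$, and the Cauchy--Schwarz factoring $|\widehat{f}(L\bR)| = (1+L|\bR|)^{-s}\,(1+L|\bR|)^{s}|\widehat{f}(L\bR)|$ relocates rather than removes the obstruction, since $\sum_{\bR\ne\bnull}(1+L|\bR|)^{-2s}$ again needs $s>3/2$ while the pointwise decay bound requires $p - s > 3/2$, giving $p>3$ once more (and $L$-independent $\ell^2$-summability of $(1+L|\bR|)^{s}\widehat{f}(L\bR)$ over the lattice is not available from $f \in W^{p,1}$ alone). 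The dyadic-shell count $\#\{\bR\in L\Lat : |\bR|\sim 2^jL\}\lesssim 2^{3j}$ paired with the decay $(2^jL)^{-p}$ gives $L^{-p}\sum_j 2^{(3-p)j}$, convergent only for $p>3$; no gain. Finally, the proposed interpolation endpoint $|I_L(f)-I(f)| \lesssim \|f\|_{L^1}$ is false: $I_L$ is a weighted sum of point evaluations and is therefore unbounded on $L^1$ (a narrow bump of unit $L^1$ mass centred at a node of $L^{-1}\RLat$ makes $I_L(f)$ arbitrarily large). Drop that last paragraph and read the lemma with $p>3$; the remainder of your proof is the paper's argument verbatim.
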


\begin{remark}
	The constant $C_p$ may depend on the lattice $\RLat$, but the speed of convergence is independent of the choice of the lattice $\RLat$.
\end{remark}


\begin{proof}[Proof of Lemma~\ref{lem:RiemannR3}]
	Let $\widehat{f}$ be the Fourier transform of $f$. Since $f \in W^{p, 1}(\R^3)$, we deduce that it holds
	\begin{equation} \label{eq:hatf}
		\left| \widehat{f}(\omega) \right| \le (2 \pi)^{-3/2} \dfrac{\left\| f \right\|_{W^{p,1}}}{1 + |\omega |^p}.
	\end{equation}
	We deduce that $\widehat{f} \in L^1(\R^3)$, so that $f$ is continuous. In particular, the point-wise evaluations $f(\bk/L)$ in $I_L(f)$ (see~\eqref{eq:def:Ix}) are well-defined. On the other hand, according to the Poisson summation formula, it holds that
	\[
		I_L(f) = \dfrac{( 2 \pi)^{3/2}}{| \BZ |} \sum_{\bR \in \Lat} \widehat{f} (L \bR),
	\]
	so that
	\[
		I_L(f) - I(f)  = I_L(f) - \dfrac{( 2 \pi)^{3/2}}{| \BZ |} \widehat{f}(\bnull) = \dfrac{( 2 \pi)^{3/2}}{| \BZ |} \sum_{\bR \in \Lat \setminus \{ \bnull \}} \widehat{f}(L \bR) 
		\le \dfrac{1}{| \BZ | L^p} \sum_{\bR \in \Lat \setminus \{ \bnull \}} \dfrac{\left\| f \right\|_{W^{p,1}}}{| \bR |^p},
	\]
	where the last inequality comes from~\eqref{eq:hatf}.
	The proof follows.
\end{proof}

We now study functions $f(\bq)$ that have a singularity as $\bq$ goes to $\bnull$. We introduce the notation
\begin{equation} \label{eq:Ix0}
	\forall \lambda > 0, \quad I_\lambda^0(f) := \dfrac{1}{\lambda^3} \sum_{\bQ \in \RLat \setminus \{ \bnull \}} f \left( \dfrac{\bQ}{\lambda} \right).
\end{equation}
Note that if $f$ is continuous in $\bq = \bnull$, then it holds
\begin{equation} \label{eq:Ix0-Ix}
	I_\lambda(f) = I_\lambda^0(f) + \dfrac{f(\bnull)}{\lambda^3},
\end{equation}
so that $I_\lambda(f) = I_\lambda^0(f)$ if $f(\bnull) = 0$. Also, if  $f(\bnull) \neq 0$, we expect a $L^{-3}$ rate of convergence for the difference $I(f) - I_L^0(f)$.  For $r_1 > 0$, and $p \in \N$, we denote by $C^p(\cB(\bnull, r_1))$ the Banach space of functions that are $p$ times continuously differentiable on $\cB(\bnull, r_1)$, endowed with the norm
\[
	\left\| f \right\|_{C^p(\cB(\bnull, r_1))} := \sup_{\bq \in \cB(\bnull, r_1)} \sup_{| \alpha| \le p} \left| \partial_\bq^\alpha f(\bx ) \right|.
\]
Our main result is the following.

\begin{lemma} \label{lem:singularRiemann}
	Let $r_1 > 0$ and $\bq \mapsto M(\bq)$ be a $C^4$ function on  $\cB(\bnull, r_1)$ to the space of $3 \times 3$ positive definite hermitian matrices satisfying $M(\bq) \ge 1$ and $M(-\bq) = M(\bq)$ for all $\bq \in \cB(\bnull, r_1)$. Let also $\Psi$ be a radial $C^\infty(\R^3)$ function such that $\Psi(\bq) = 1$ if $|\bq| \le r_1/2$ and $\Psi(\bq) = 0$ if $|\bq| \ge r_1$. 
	Then, there exists $C \in \R^+$ such that for all functions $g \in C^4(\cB(\bnull, r_1))$, it holds that
	\[
		\forall L \in \N^*, \quad 
		\left| I(f \Psi) - I_L^0(f \Psi) -  \fa \dfrac{g(\bnull)}{L} \right| \le C\dfrac{\left\| g \right\|_{C^4(\cB(\bnull, r_1))} }{L^3} 
	\]
	where $f(\bq) := \dfrac{g(\bq)}{\bq^T M(\bq) \bq}$ for $\bq \neq \bnull$, and where
	\[
		\fa = \sum_{\bk \in \RLat} \fint_{\BZ} \left( \dfrac{1}{(\bk + \bq)^T M(\bnull) (\bk + \bq)} - \dfrac{\mathds{1}(\bk \neq \bnull)}{\bk^T M(\bnull) \bk} \right) \rd \bq.
	\]
\end{lemma}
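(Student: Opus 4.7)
The plan is to decompose $f\Psi$ into a principal singular piece and a remainder of controlled regularity, handling each separately. Write
\[
	f(\bq)\Psi(\bq) = \Phi_0(\bq) + R(\bq), \qquad \Phi_0(\bq) := \frac{g(\bnull)\Psi(\bq)}{\bq^T M(\bnull)\bq}.
\]
A direct computation gives
\[
	R(\bq) = \Psi(\bq) \cdot \frac{g(\bq)\,\bq^T M(\bnull)\bq - g(\bnull)\,\bq^T M(\bq)\bq}{\bigl(\bq^T M(\bq)\bq\bigr)\bigl(\bq^T M(\bnull)\bq\bigr)}.
\]
Since $g\in C^4$ and $M\in C^4$ with $M(-\bq)=M(\bq)$ (so $M(\bq)-M(\bnull)=O(|\bq|^2)$), the numerator vanishes to order $|\bq|^3$ and $R=O(1/|\bq|)$ near the origin.

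For the remainder $R$, I would split it according to parity in $\bq$. The odd part of $R$ (coming from the odd part of $g$ near $\bnull$, divided by the even quantity $\bq^T M(\bq)\bq$) is odd; since $\Psi$ is radial and the point set $\RLat\setminus\{\bnull\}$ is symmetric under $\bk\mapsto -\bk$, its contribution to both $I(R)$ and $I_L^0(R)$ vanishes exactly. For the even part, one pushes the Taylor expansions of $g$ and $M$ one order further, extracting the explicit bounded but angularly discontinuous piece $\tfrac12\bq^T H_g\bq/\bq^T M(\bnull)\bq$ (with $H_g$ the Hessian of $g$ at $\bnull$) plus a genuinely $C^0$ remainder of size $O(|\bq|)$ whose derivatives of order $\le p$ are controlled by $\|g\|_{C^4}$. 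A variant of Lemma~\ref{lem:RiemannR3} (in the spirit of Lemma~\ref{lem:F1} but allowing a bounded angular singularity at one point) then gives $I(R)-I_L^0(R) = O(\|g\|_{C^4}/L^3)$.

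For the principal part, rescaling $\bv=L\bq$ converts the difference into
\[
	L\bigl[I(\Phi_0)-I_L^0(\Phi_0)\bigr] = g(\bnull)\sum_{\bk\in\RLat}\left[\fint_\BZ \frac{\Psi((\bk+\bu)/L)}{(\bk+\bu)^T M(\bk+\bu)}\,\rd\bu - \mathds{1}(\bk\neq\bnull)\frac{\Psi(\bk/L)}{\bk^T M\bk}\right].
\]
Splitting the lattice into the bulk $\{|\bk|\le Lr_1/2\}$ (where $\Psi\equiv 1$ so each summand equals the $\bk$-th term of $\fa$), a thin transition annulus of width $\sim L$, and the tail $\{|\bk|\ge Lr_1\}$ (where $\Psi\equiv 0$), one recognizes the bulk contribution as a partial sum of $\fa$. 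The key point is then to show that the transition and tail corrections combine to $O(L^{-2})$, which together with the prefactor $1/L$ yields the desired $O(L^{-3})$ error.

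The main obstacle is precisely this last step: a naive estimate of the tail $\sum_{|\bk|>Lr_1}a_\bk$ gives only $O(L^{-1})$ (each $a_\bk$ is $O(|\bk|^{-4})$ by a Taylor expansion under $\fint_\BZ$, and summing to the tail gives $O(L^{-1})$), which would translate to only $O(L^{-2})$ after dividing by $L$. Getting the sharp $O(L^{-3})$ must come from a cancellation between tail and transition: one rewrites their combined contribution as a Riemann-sum-minus-integral error for the smooth function $\bv\mapsto (1-\Psi(\bv/L))/\bv^T M\bv$, and exploits the $C^\infty$ character of $\Psi$ together with Poisson summation (whose applicability is salvaged by absorbing the long-range $1/|\bv|^2$ behaviour into the integral and using the evenness of $M$ to kill the first-order Poisson corrections) to gain two extra orders in $1/L$. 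This Ewald-type argument is the heart of the proof, while the decomposition and parity arguments above are the routine surrounding framework.
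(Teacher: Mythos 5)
Your global architecture (isolate the leading $g(\bnull)/\bq^T M(\bnull)\bq$ singularity, kill the odd pieces by parity and symmetry, treat what remains by Riemann-sum estimates) agrees with the paper's, so the direction is right. But the two steps you flag as "the heart" are exactly the ones you leave unproven, and in both cases what you assert is either unsubstantiated or confused.

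For the remainder $R$: after discarding the odd part you are left (at leading order) with a bounded, $0$-homogeneous, angularly discontinuous function near $\bnull$ (your $\tfrac12\bq^T H_g\bq/\bq^T M(\bnull)\bq$ together with the companion term $-g(\bnull)\,\bq^T M_2(\bq)\bq/(\bq^T M(\bnull)\bq)^2$ coming from expanding the denominator). Your claim that "a variant of Lemma~\ref{lem:RiemannR3} allowing a bounded angular singularity at one point" gives $O(\|g\|_{C^4}/L^3)$ is exactly the gap: Lemma~\ref{lem:RiemannR3} runs on Poisson summation and a $W^{p,1}$ hypothesis which this function does not satisfy near $\bnull$, and there is no obvious modification. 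This is the crux of the whole lemma, and the paper devotes a dedicated argument to it: set $F_g(x) := x^3\bigl[I(f_2\Psi)-I_x^0(f_2\Psi)\bigr]$ for the $0$-homogeneous piece $f_2$, exploit the $0$-homogeneity to show $F_g(x)-F_g(2x) = x^3\bigl[I(f_2\Phi)-I_x(f_2\Phi)\bigr]$ with $\Phi := \Psi - \Psi(\cdot/2)$ smooth and supported away from $\bnull$ (so that this quantity \emph{is} controlled by Lemma~\ref{lem:RiemannR3}), and then telescope dyadically to bound $F_g$ uniformly. That doubling/cascade argument is what produces both the $L^{-3}$ rate and the linear dependence on $\|g\|_{C^4}$; without it your estimate of $R$ is unjustified.

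For the principal part $\Phi_0$: your bulk/transition/tail picture is a reasonable heuristic, but the claimed rescue via "Poisson summation ... using the evenness of $M$ to kill the first-order Poisson corrections" does not hold up. The matrix here is the constant $M(\bnull)$, so "evenness of $M$" is vacuous, and Poisson summation is delicate for the non-integrable tail $1/\bv^T M\bv$. The paper avoids all of this: after reducing to $M=\bbI_3$ by a linear change of variables (and to a radial cutoff $\Phi$ up to $O(L^{-\infty})$ using Lemma~\ref{lem:RiemannR3} on the smooth piece), it decomposes the cell-by-cell difference as $S_{1,L}+S_{2,L}$ using $\Phi = 1 + (\Phi-1)$, where $S_{1,L}$ equals $\fa/L$ \emph{exactly} by a scaling identity and $S_{2,L}$ is shown to be $O(L^{-4})$ after a Taylor expansion of $h(|\bQ+\bx|)$, the two a priori $O(L^{-2})$ contributions $S_{2,L}^1$ and $S_{2,L}^2$ cancelling under a spherical-coordinates computation (this is where radial symmetry after the $M\to\bbI$ reduction is actually used). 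Your proposal identifies neither the exact scaling identity nor the cancellation mechanism, so as written the principal-part estimate is also a genuine gap rather than a routine verification.
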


\begin{proof}
We perform a Taylor expansion for $f$. In order to do so, we first do the Taylor expansion for $g$ and $M$. We write
\[
	g(\bq) = g(\bnull) + g_1(\bq) + g_2(\bq) + g_3(\bq) + \widetilde{g_4}(\bq)
\]
where we set, for $N \in \{ 1, \ldots, 4 \}$,
\[
	g_N(\bq) := \sum_{| \alpha | = N} \left( \dfrac{1}{\alpha !} \partial_{\bq}^\alpha g(\bq) \big|_{\bq = \bnull} \right) \bq^\alpha
	\quad \text{and} \quad
	\widetilde{g_{N+1}} = g - \sum_{n=1}^N g_N.
\]
Note that $g_N$ is an homogeneous polynomial of degree $N$ and that it holds
\[
	\sup_{| \beta | \le 4} \left| \partial_\bq^\beta \widetilde{g_4}(\bq) \right| \le C \left\| g \right\|_{C^4(\cB(\bnull, r_1))} |\bq|^{4 - \beta}.
\]
Similarly, we write
\[
	M(\bq) = M(\bnull) + M_1(\bq) + M_2(\bq) + M_3(\bq) + \widetilde{M_4}(\bq) 
\]
where $M_N(\bq)$ is a $3 \times 3$ matrix-valued homogeneous polynomial of degree $N$. In the sequel, we write $M := M(\bnull)$ for clarity.
After some calculations, we obtain
\begin{align} \label{eq:f}
	 f(\bq)  = 
	f_0(\bq) + f_1(\bq) + f_2(\bq) + f_3(\bq) +  \widetilde{f_4}(\bq),
\end{align}
with
\begin{align*}
	f_0(\bq) & = \dfrac{g (\bnull)}{\bq^T M \bq}  \\
	f_1(\bq) & =   \dfrac{1}{\bq^T M \bq} \left(g_1(\bq) - g (\bnull) \dfrac{\bq^T A_1(\bq) \bq}{\bq^T M \bq} \right) \\
	f_2(\bq) & = \dfrac{1}{\bq^T M  \bq}  \left( - g(\bnull) \dfrac{\bq^T A_2(\bq) \bq}{\bq^T M \bq} + g(\bnull) \dfrac{\left( \bq^T A_1(\bq) \bq \right)^2}{\left( \bq^T M \bq \right)^2} - g_1(\bq) \dfrac{\bq^T A_1(\bq) \bq}{\bq^T M \bq} + g_2(\bq) \right) \\
	f_3(\bq) & = \dfrac{1}{\bq^T M \bq} \left(
	g_3(\bq) - g_2(\bq) \dfrac{\bq^T M_1(\bq) \bq }{\bq^T M \bq }  
	- g_1(\bq) \dfrac{\bq^T M_2(\bq) \bq }{\bq^T M \bq } + g_1(\bq) \left(  \dfrac{\bq^T M_1(\bq) \bq }{\bq^T M \bq } \right)^3 \right. \\
	& \quad \left. - g(\bnull)  \dfrac{\bq^T M_3(\bq) \bq }{\bq^T M \bq } 
	+ 2 g(\bnull)  \dfrac{\bq^T M_1(\bq) \bq \, \bq^T M_2(\bq) \bq }{( \bq^T M \bq )^2}  - g(\bnull) \dfrac{ \left( \bq^T A_1(\bq) \bq \right)^3}{\left( \bq^T M \bq \right)^3}  \right),
\end{align*}
and where $\widetilde{f_4} := f - f_1 -  f_2 - f_3$ satisfies
\begin{equation} \label{eq:widetildef4}
	\sup_{ | \beta | \le 4} \left| \partial_\bq^\beta \widetilde{f_4}(\bq) \right| \le C \left\| g \right\|_{C^4(\cB(\bnull, r_1))}  |\bq|^{2 - \beta},
\end{equation}
for some constant $C \in \R^+$ independent of $g$. The idea of the proof is to write
\begin{equation} \label{eq:TaylorForRiemann}
	I(f \Psi) - I_L^0(f \Psi) = \sum_{k=0}^3 I(f_k \Psi) - I_L^0(f_k \Psi) + \left( I(\widetilde{f_4} \Psi ) - I_L^0( \widetilde{f_4} \Psi)  \right),
\end{equation}
and to evaluate each part of the right-hand side. We begin with the remainder terms.
\begin{lemma}
There exists $C \in \R^+$ such that, for all $g \in C^4(\cB(\bnull, r_1))$, it holds that 
\[
	\left| I \left( \widetilde{f_4} \Psi \right)  - 
		I_L^0 \left( \widetilde{f_4} \Psi \right)\right| \le 
		\dfrac{C}{L^4}   \left\| g \right\|_{C^4(\cB(\bnull, r_1))}.
\]
\end{lemma}
\begin{proof}
From~\eqref{eq:widetildef4} we deduce that the function $\widetilde{f_4} \Psi$ is in $W^{4,1}(\R^3)$. Also, it holds that $\left( \widetilde{f_4} \Psi \right)(\bnull) = 0$. We deduce from Lemma~\ref{lem:RiemannR3} and~\eqref{eq:Ix0-Ix} that
\[
	\left| I \left( \widetilde{f_4} \Psi \right)  - 
		I_L^0 \left( \widetilde{f_4} \Psi \right)\right| = \left| I \left( \widetilde{f_4} \Psi \right)  - 
		I_L \left( \widetilde{f_4} \Psi \right)\right| \le C_p \dfrac{\left\| \widetilde{f_4} \Psi \right\|_{W^{4,1}}}{L^4} 
		\le C \dfrac{\left\| g \right\|_{C^4(\cB(\bnull, r_1))}}{L^4} .
\]
\end{proof}

On the other hand, from the symmetries $f_1(-\bq) = -f_1(\bq)$ and $f_3(-\bq) = -f_3(\bq)$, and the fact that $\Psi$ is radial, we easily obtain the following result.
\begin{lemma}
	It holds that
	\[
		I  \left( {f_{1}}\Psi \right) = I  \left( {f_{3}}\Psi \right) = 0
		\quad \text{and}
		\quad
		\forall L \in \N^*, \quad I_L^0 \left( {f_{1}}\Psi \right)=  I_L^0 \left( {f_{3}}\Psi \right)= 0.
	\]
\end{lemma}

It remains to study the $f_0$ and $f_2$ terms. We start with the $f_2$ term. 
\begin{lemma}
	There exists $C \in \R^+$ such that, for all $g  \in C^4(\cB(\bnull, r_1))$, it holds that
	\[
		\left| I \left( \widetilde{f_2} \Psi \right)  - 
		I_L \left( \widetilde{f_2} \Psi \right)\right| \le C \dfrac{\left\| g \right\|_{C^4(\cB(\bnull, r_1))} }{L^3} .
	\]
\end{lemma}

\begin{proof}
We introduce the function
\[
	F_g(x) := x^3 \left[ I(f_2 \Psi) - I_x^0(f_2 \Psi) \right].
\]
Our goal is to prove that $F_g$ is uniformly bounded by $C \left\| g \right\|_{C^4(\cB(\bnull, r_1))}$.
Since $\Psi$ is compactly supported, the sum~\eqref{eq:Ix0} for $I_x^0(f_2 \Psi)$ is finite for all $x \in (0, \infty)$. Moreover, since both $f_2$ and $\Psi$ are continuous away from zero, so is $F_g$. 

\medskip

It holds that $f_2(\lambda \bq) = f_2(\bq)$ for all $\lambda \in \R$ and all $\bq \in \R^3$. As a result, the change of variable $\by = 2\bq$ leads to
\begin{align*}
	x^3 I(f_2 \Psi) - (2x)^3 I(f_2 \Psi) & = \dfrac{x^3}{| \BZ |} \int_{\R^3} f_2(\bq) \Psi(\bq) \rd \bq 
	+ \dfrac{8 x^3}{8 | \BZ |} \int_{\R^3} f_2(\by) \Psi \left( \dfrac{\by}{2} \right) \rd \by \\
		& = \dfrac{x^3}{| \BZ |} \int_{\R^3} f_2(\bq) \Phi(\bq) \rd \bq = x^3  I(f_2 \Phi),
\end{align*}
where we set $\Phi(\bq) = \Psi(\bq) - \Psi(\bq/2)$. We also get
\[
	x^3 I_x^0 (f_2 \Psi) - (2x)^3 I_x^0(f_2 \Psi) = x^3 I_x^0(f_2 \Phi) = x^3 I_x(f_2 \Phi),
\]
where we used the fact that $(f_2 \Phi)(\bnull) = 0$ for the last equality. Altogether, we obtain
\[
	F_g(x) - F_g(2x) = x^3 \left( I(f_2 \Phi) - I_x(f_2 \Phi) \right).
\]
Since $\Phi$ is a $C^\infty(\R^3)$ function with support contained in $\cB(\bnull, 2r) \setminus \cB(\bnull, r/2)$, we easily deduce that $f_2 \Phi$ is a $C^\infty(\R^3)$ compactly supported function. Together with Lemma~\ref{lem:RiemannR3} with $p = 5$, we deduce that there exist $C,C' \in \R^+$ such that, for all $g \in C^4(\cB(\bnull, r_1))$,
\[
	\left| F_g(x) - F_g(2x) \right| \le C' \dfrac{\left\| f_2 \Psi \right\|_{W^{5,1}(\R^3)}}{x^2} \le C \dfrac{\left\| g \right\|_{C^4(\cB(\bnull, r_1))}}{x^2}.
\]
Let $K_g :=  \sup_{x \in [1,2]} \left| F_g(x) \right|$, so that, for all $g \in C^4(\cB(\bnull, r_1))$, it holds $| K_g|  \le K \left\| g \right\|_{C^4(\cB(\bnull, r_1))}$ for some constant $K \in \R^+$ independent of $g$. Let $x \ge 1$ and $k \in \N^*$ be chosen such that $2^k \le x \le 2^{k+1}$. We obtain from a simple cascade argument that
\begin{align*}
	\left| F_g(x)  \right| & \le \sum_{l = 0}^{k-1} \left| F_g \left( \dfrac{x}{2^{l}} \right) - F_g \left( \dfrac{x}{2^{l+1}} \right) \right| + \left| F_g \left( \dfrac{x}{2^k}\right) \right| 
	\le \left\| g \right\|_{C^4(\cB(\bnull, r_1))} \left( C \sum_{l=0}^{k-1} \left( \dfrac{2^l}{x} \right) + K_g \right) \\
		& \le \left\| g \right\|_{C^4(\cB(\bnull, r_1))} \left( C \sum_{m=0}^{\infty} \left( \dfrac{1}{2^m} \right) + K_g \right),
\end{align*}
where we used the fact that $x^{-1} \le 2^{-k}$ and performed the change of variable $m = k-l$ in the last inequality. The result follows.
\end{proof}

It remains to study the convergence for $f_0$, hence the difference
\[
	I(f_0 \Psi )  - I_L^0(f_0 \Psi) = 
		g(\bnull)	\left( \dfrac{1}{| \BZ |} \int_{\R^3} \dfrac{\Psi(\bq)}{\bq^T M \bq} \rd \bq - 
		\dfrac{1}{L^3} \sum_{\bQ \in L^{-1}\RLat \setminus \{ \bnull \}}\dfrac{\Psi(\bQ)}{\bQ^T M \bQ}
		\right).
\]

\begin{lemma}
For all $3 \times 3$ hermitian matrix $M$ satisfying $M \ge \bbI$, and for any $\Psi \in C^{\infty}(\R^3)$ that is compactly supported in $\BZ$ and that satisfies $\Psi(\bq) = 1$ for all $| \bq | \le r$ for some $r > 0$, it holds that
\begin{equation} \label{eq:forPsi}
	\dfrac{1}{| \BZ |} \int_{\R^3} \dfrac{\Psi(\bq)}{\bq^T M \bq} \rd \bq - 
		\dfrac{1}{L^3} \sum_{\bQ \in L^{-1}\RLat \setminus \{ \bnull \}}\dfrac{\Psi(\bQ)}{\bQ^T M \bQ} = \dfrac{\fa}{L} + O\left( \dfrac{1}{L^4} \right),
\end{equation}
where
\[
	\fa =  \sum_{\bk \in \RLat} \fint_{\BZ} \left( \dfrac{1}{(\bk + \bq)^T M (\bk + \bq)} - \dfrac{\mathds{1}(\bk \neq \bnull)}{\bk^T M \bk} \right) \rd \bq.
\]
\end{lemma}

\begin{proof}

By introducing $\widetilde{\Psi}(\bq) := \Psi(\sqrt{M}^{-1} \bq)$, $\widetilde{\RLat} = \sqrt{M} \RLat$ and $ \widetilde{\BZ}  = \sqrt{M} \BZ$, so that $|\widetilde{\BZ}| = \sqrt{\det{M}} | \BZ |$, we obtain by a simple change of variable that
\[
	\dfrac{1}{| \BZ |} \int_{\R^3} \dfrac{\Psi(\bq)}{\bq^T M \bq} \rd \bq - 
		\dfrac{1}{L^3} \sum_{\bQ \in L^{-1}\RLat \setminus \{ \bnull \}}\dfrac{\Psi(\bQ)}{\bQ^T M \bQ}
	= \dfrac{1}{| \widetilde{\BZ} |} \int_{\R^3} \dfrac{\widetilde{\Psi}(\bq)}{ | \bq |^2} \rd \bq - 
		\dfrac{1}{L^3} \sum_{\bQ \in L^{-1} \widetilde{\RLat} \setminus \{ \bnull \}}\dfrac{\widetilde{\Psi}(\bQ)}{ |\bQ |^2},
\]
and that
\[
	\fa = \sum_{\widetilde{\bk} \in \widetilde{\RLat}} \fint_{\widetilde{\BZ}} \left( \dfrac{1}{|\widetilde{\bk} + \widetilde{\bq} |^2 } - \dfrac{\mathds{1}(\widetilde{\bk} \neq \bnull)}{|\widetilde{\bk}|^2} \right) \rd \widetilde{\bq},
\]
so that it is enough to prove the result for the special case $M = \mathbb{I}_3$. We consider this case in the sequel. For a function $g$, we introduce the following notation for clarity:
\[
	J(g) := \dfrac{1}{| \BZ |} \int_{\R^3} \dfrac{g(\bq)}{| \bq |^2} \rd \bq
	\quad \text{and} \quad
	J_x^0(g) :=  \sum_{\bQ \in \widetilde{\RLat} \setminus \{ \bnull \}}\dfrac{ g \left( \bQ/x \right)}{ | \bQ/x |^2},
\]
so that the left-hand side of~\eqref{eq:forPsi} is also $J(\Psi) - J_L^0(\Psi)$. Let $\Phi$ be a radial $C^\infty(\R^3)$ function such that $\Phi(\bq) = 1$ for $| \bq | \le r/2$ and $\Phi(\bq) = 0$ for $| \bq | \ge r$. By writing $\Psi = \Psi \Phi + \Psi (1 - \Phi) = \Phi + \Psi (1 - \Phi)$, we obtain
\[
	J(\Psi) - J_x^0(\Psi) = \left( J(\Phi) - J_L^0(\Phi) \right) + \left( J(\Psi (1 - \Phi)) - J_L^0(\Psi (1 - \Phi)) \right).
\]
The function $\bq \mapsto \Psi(\bq)(1 - \Phi(\bq))/ \ |\bq |^2$ is a $C^\infty(\R^3)$ compactly supported function. Hence, from Lemma~\ref{lem:RiemannR3}, we obtain that for all $p \in \N^*$, it holds that
\begin{equation} \label{eq:PsiPhi}
	\left| \left( J(\Psi) - J_x^0(\Psi) \right) - \left( J(\Phi) - J_L^0(\Phi) \right) \right| = O(L^{-p}).
\end{equation}
As a result, we deduce that it is enough to prove~\eqref{eq:forPsi} for $\Psi = \Phi$ a radial function. Let us now evaluate the difference $S_L := J(\Phi) - J_L^0(\Phi)$. It holds that
\begin{align}
	S_L & := \dfrac{1}{| {\BZ} |} \int_{\R^3} \dfrac{{\Phi}(\bq)}{ | \bq |^2} \rd \bq - 
		\dfrac{1}{L^3} \sum_{\bQ \in L^{-1} {\RLat} \setminus \{ \bnull \}}\dfrac{{\Phi}(\bQ)}{ |\bQ |^2} \nonumber \\
		&	 = \dfrac{1}{| \BZ|} \sum_{\bQ \in L^{-1} \RLat} \int_{L^{-1}  \BZ} \left( \dfrac{\Phi(\bQ + \bx)}{ \left| \bQ + \bx \right|^2} - \dfrac{\Phi(\bQ) \mathds{1}(\bQ \neq \bnull)}{  \left| \bQ \right|^2} \right) \rd \bx \nonumber \\
	&  = \dfrac{1}{| \BZ|} \sum_{\bQ \in L^{-1} \RLat} \int_{L^{-1}\BZ} \left( \dfrac{1}{ \left| \bQ + \bx \right|^2} - \dfrac{\mathds{1}(\bQ \neq \bnull)}{ \left| \bQ \right|^2} \right) \rd \bx \label{eq:Lm1} \\
	& \quad + \dfrac{1}{| \BZ|} \sum_{\bQ \in L^{-1}\RLat} \int_{L^{-1}\BZ} \left( \dfrac{\Phi(\bQ + \bx) - 1}{ \left| \bQ + \bx \right|^2} - \dfrac{(\Phi(\bQ) -1)}{ \left| \bQ \right|^2} \right) \rd \bx. \label{eq:Lm2}
\end{align}

The proof that~\eqref{eq:Lm1} and~\eqref{eq:Lm2} are indeed convergent series will be given in the sequel. We will note $S_{1,L}$ and $S_{2,L}$ the terms in~\eqref{eq:Lm1} and~\eqref{eq:Lm2} respectively, so that $S_L = S_{1,L} + S_{2,L}$. Let us first evaluate $S_{1,L}$. The change of variable $\bk = L\bQ$ and $\bq = L \bx$ leads formally to
\begin{equation} \label{eq:expression_fa}
	S_{1,L}
	 = \dfrac{1}{L} \sum_{\bk \in \RLat} \fint_{\BZ} \left( \dfrac{1}{|\bk + \bq|^2} - \dfrac{\mathds{1}(\bk \neq \bnull)}{| \bk |^2} \right) \rd \bq = \dfrac{\fa}{L}.
\end{equation}
From the so-called multipole expansion, we get, for $\bk \neq \bnull$ and $\bq \in \BZ$, that there exists $C \in \R^+$ such that, for all $\bk \neq \bnull$, and all $\bq \in \BZ$,
\begin{equation} \label{eq:def:F1}
	F_1(\bk, \bq) := \dfrac{1}{|\bk + \bq|^2} - \dfrac{1}{| \bk|^2} + 2 \dfrac{\bk^T \bq}{|\bk|^4} 
\end{equation}
satisfies $\left| F_1(\bq, \bk) \right| \le C \bq^2 | \bk |^{-4}$. As a result, we obtain
\[
	\left| \fint_{\BZ} \left( \dfrac{1}{ \left| \bk + \bq \right|^2} - \dfrac{\mathds{1}(\bk \neq \bnull)}{|\bk|^2} \right) \right| = \left|  \fint_{\BZ} F_1(\bk, \bq)  \rd \bq \right|
	 \le \left( C \fint_{\BZ} | \bq |^2 \rd \bq \right) \dfrac{1}{| \bk |^4},
\]
so that the left-hand side of~\eqref{eq:expression_fa} is indeed a convergent series, and $S_{1,L} = L^{-1} \fa$. 

\medskip

We now study $S_{2,L}$. The fact that $S_{2,L}$ is a convergent series is proved similarly than for $S_{1,L}$. We introduce the function
\[
	h(\rho) := \dfrac{ (\Phi(\rho) -1) }{ \rho^2}.
\]
 It is not difficult to see that $h$ is a $C^\infty(\R)$ function such that $h(\rho) = 0$ for $\rho \le r/2$, and that $h^{(n)}(\rho) = O( \rho ^{-(2+ n)})$  for all $n \in \N$. The Taylor expansion of $h(| {\bQ} + {\bx}|)$ near $\bQ$ leads to,
\begin{align*}
	h(| {\bQ} + {\bx}|) & = h( |{\bQ}|) + h'( |{\bQ}|) \dfrac{{\bQ} \cdot {\bx}}{| {\bQ} |}
		+ \dfrac{h''(| {\bQ} |)}{2} \dfrac{({\bQ} \cdot {\bx})^2}{| {\bQ} |^2} 
		+ \dfrac{h'(| {\bQ}|)}{2} \left( \dfrac{| {\bx} |^2}{| {\bQ}|} - \dfrac{({\bQ} \cdot {\bx})^2}{| {\bQ} |^3} \right) \\
		& \quad	+ P_3({\bQ}, {\bx}) + H({\bQ}, {\bx}),
\end{align*}
where, for all ${\bQ}$, $P_3({\bQ}, {\bx})$ is an homogeneous polynomial of degree $3$ in the variables $(x_1, x_2, x_3)$, and where $H({\bQ}, {\bx})$  satisfies an inequality of the type
\begin{equation} \label{eq:boundH}
	H({\bQ}, {\bx}) \le C \dfrac{ | {\bx} |^4}{ \left( 1 + | {\bQ} | \right)^{5}}.
\end{equation}
Since $h'( |{\bQ}|) {\bQ} \cdot {\bx} / | {\bQ} |$ and $P_3(\bq, \bx)$ are odd functions in the variable $\bx$, we deduce that $S_{2,L} = S_{2,L}^1 + S_{2,L}^2 + S_{2,L}^3$, where
\begin{align}
	S_{2,L}^1 & =
	\dfrac{1}{| \BZ|} \sum_{\bQ \in L^{-1}\RLat}  \int_{L^{-1}\Gamma^\ast} \left( \dfrac{h''(| {\bQ} |)}{2} \dfrac{(  {\bQ} \cdot {\bx})^2}{| {\bQ}|^2} \right) \rd {\bx}  \label{eq:h''} \\
	S_{2,L}^2 & = \dfrac{1}{| \BZ|} \sum_{\bQ \in L^{-1}\RLat}  \int_{L^{-1} \Gamma^\ast}  \left( \dfrac{h'(| {\bQ}|)}{2} \left( \dfrac{| {\bx} |^2}{| {\bQ}|} - \dfrac{({\bQ} \cdot {\bx})^2}{| {\bQ} |^3} \right) \right) \rd {\bx}  \label{eq:h'} \\
	S_{2,L}^3 & = \dfrac{1}{| \BZ|} \sum_{\bQ \in L^{-1}\RLat}  \int_{L^{-1} \Gamma^\ast}  H({\bQ}, \bx) \rd {\bx}. \label{eq:remainder}
\end{align}
We first consider $S_{2,L}^3$ defined in~\eqref{eq:remainder}. From~\eqref{eq:boundH}, it holds that, 
\[
	\left| \int_{L^{-1} \Gamma^\ast}  H({\bQ}, \bx) \rd {\bx} \right| \le 
	\dfrac{C}{\left( 1 + | {\bQ} | \right)^5} \int_{L^{-1} \BZ} | {\bx} |^4 \rd {\bx}
	 \le \dfrac{C}{\left( 1 + | {\bQ} | \right)^5} \left( \int_{\BZ} | \bq |^4 \rd \bq \right) \dfrac{1}{L^7}.
\]
The map $\bq \mapsto {( 1 + | \bq |)^{-5}}$ satisfies the hypothesis of Lemma~\ref{lem:RiemannR3} for all $p > 3/2$, so that
\begin{equation} \label{S2L3}
	 \left| S_{2,L}^3 \right|
	\le 
	\dfrac{1}{L^4} \dfrac{C}{L^3} \sum_{\bQ \in L^{-1}\RLat} \dfrac{1}{\left( 1 + | {\bQ} | \right)^5} 
	 = \dfrac{1}{L^4} \int_{\R^3} \dfrac{\rd \bq}{(1 + | \bq |)^{-5}} + O(L^{-p-4}) = O(L^{-4}).
\end{equation}
We now consider the term $S_{2,L}^1$ defined in~\eqref{eq:h''}. We denote, for $1 \le j \le 3$,
\[
	m_{j} := \fint_{\BZ} q_j^2 \rd \bq, 
	\quad \text{and} \quad
	m := \sum_{j=1}^3 m_{j} = \fint_{\BZ} | \bq |^2 \rd \bq.
\]
It holds
\[
	\dfrac{1}{| \BZ|} \int_{L^{-1} \BZ} \left( \dfrac{h''(| {\bQ} |)}{2} \dfrac{(  {\bQ} \cdot {\bx})^2}{| {\bQ}|^2} \right) \rd {\bx} 
	= 
	\dfrac{1}{L^5} \sum_{j=1}^3 m_{j}  \left( \dfrac{h''(| {\bQ}| )}{2} \dfrac{{Q_j}^2}{| {\bQ}|^2} \right).
\]
The map $\bx \mapsto h''(| \bx | ) x_j^2 / (2 | \bx |^2)$ satisfies the hypothesis of Lemma~\ref{lem:RiemannR3} for all $p > 3/2$, so that
\[
	\dfrac{1}{L^3} \sum_{{\bQ} \in L^{-1} \RLat} \dfrac{h''(| {\bQ}|)}{2} \dfrac{{Q_j}^2}{| {\bQ}|^2}
	= \dfrac{1}{| \BZ |}\int_{\R^3} \dfrac{h''(| \bx |)}{2} \dfrac{x_j^2}{| \bx |^2} \rd \bx + O(L^{-p}).
\]
We evaluate this last integral using spheric coordinates. It holds that
\[
	\int_{\R^3} \dfrac{h''(| \bx |)}{2} \dfrac{x_j^2}{| \bx |^2} \rd \bx = \pi \int_0^\pi \rd \theta \cos^2(\theta) \sin(\theta) \int_0^\infty h''(r) r^2 \rd r 
	= \dfrac{4 \pi}{3} \int_0^\infty h(r) \rd r.
\]
Altogether, we obtain
\begin{equation} \label{eq:S2L1}
	S_{2,L}^1 = \dfrac{m}{L^2} \dfrac{4 \pi}{3 | \BZ |} \left( \int_0^\infty h(r) \rd r \right) + O(L^{-p}).
\end{equation}
We finally consider the term $S_{2,L}^2$ defined in~\eqref{eq:h'}. Similar calculations leads to
\begin{equation} \label{eq:S2L2}
	S_{2,L}^2 = -\dfrac{m}{L^2} \dfrac{4 \pi}{3 | \Gamma^\ast|} \int_0^\infty h(r) \rd r + O(L^{-p}).
\end{equation}
As a result, from~\eqref{eq:S2L1} and~\eqref{eq:S2L1}, we obtain that $ \left| S_{2,L}^1 + S_{2,L}^2 \right| = O(L^{-p})$. Together with~\eqref{S2L3}, one finally get $S_{2,L} = O(L^{-4})$, which was the desired result.

\end{proof}

\end{proof}

\section*{Acknowledgments}
We are grateful to professors \'E. Cancès and M. Lewin for their help and support throughout this work. We also thank professor A. Ern for discussions on the convergence of Riemann sums.

\end{appendices}

\bibliographystyle{plain}

\bibliography{Charged_defects}

\end{document}